\newtheorem{theorem}{Theorem}[subsection]
\newtheorem{fact}[theorem]{Fact}
\newtheorem{lemma}[theorem]{Lemma}
\newtheorem{definition}[theorem]{Definition}
\newtheorem{proposition}[theorem]{Proposition}
\newtheorem{claim}[theorem]{Claim}
\newtheorem{remark}[theorem]{Remark}
\newtheorem{procedure}[theorem]{Procedure}
\newcommand{\opt}[0]{{\ensuremath{\sf{opt}}}}
\renewcommand{\d}{\delta}
\numberwithin{equation}{section}
\numberwithin{table}{section}
\renewcommand{\widetilde}{\tilde}
\newcommand{\R}{\ensuremath{\mathbb R}}
\newcommand{\F}{\ensuremath{\mathcal F}}
\newcommand{\E}[1]{{\mathbb{E}}\left[#1\right]}
\newcommand{\poly}{\operatorname{poly}}
\newcommand{\junk}[1]{}
\newcommand{\ol}{\overline}
\def\adj{\operatorname{adj}}
\providecommand{\norm}[1]{\left\lVert#1\right\rVert}
\def\b1{{\bf 1}}
\def\eps{{\epsilon}}
\def\R{\mathbb{R}}
\def\U{{\cal U}}
\def\V{{\cal V}}
\def\v{{\widetilde{v}}}
\def\AA{{A^{\circ 2}}}
\def\AAt{{(A^{(t)})^{\circ 2}}}
\def\BB{{B^{\circ 2}}}
\def\d{{\frac{d}{dt}}}
\def\diag{\operatorname{diag}}
\def\opt{\sf{opt}}
\def\tr{\operatorname{tr}}
\def\per{\operatorname{per}}
\global\long\def\defeq{\triangleq}
\global\long\def\capa{{\rm cap}}
\global\long\def\E{\mathbb{E}}
\global\long\def\P{\mathbb{P}}
\global\long\def\R{\mathbb{R}}
\newcommand{\inner}[2]{\langle #1, #2 \rangle}
\DeclareMathOperator{\rank}{rank}
\DeclareMathOperator{\dist}{{\rm dist}^2}
\DeclareMathOperator{\distance}{{\rm dist}}
\title{The Paulsen Problem, Continuous Operator Scaling,\\ and Smoothed Analysis}
\author{Tsz Chiu Kwok\footnote{University of Waterloo. Supported by NSERC Discovery Grant 2950-120715 and NSERC Accelerator Supplement 2950-120719. Email: \href{mailto:tckwok0@gmail.com}{tckwok0@gmail.com}},~~~~~
Lap Chi Lau\footnote{University of Waterloo. Supported by NSERC Discovery Grant 2950-120715 and NSERC Accelerator Supplement 2950-120719. Email: \href{mailto:lapchi@uwaterloo.ca}{lapchi@uwaterloo.ca}},~~~~~
Yin Tat Lee\footnote{University of Washington and Microsoft Research. Supported by NSF award CCF-1740551.  Part of this work was done while visiting University of Waterloo. Email: \href{mailto:yintat@uw.edu}{yintat@uw.edu}},~~~~~
Akshay Ramachandran\footnote{University of Waterloo. Supported by NSERC Discovery Grant 2950-120715 and NSERC Accelerator Supplement 2950-120719. Email: \href{mailto:a5ramachandran@uwaterloo.ca}{a5ramachandran@uwaterloo.ca}}}
\date{}
\begin{document}

\begin{titlepage}
\def\thepage{}
\thispagestyle{empty}

\maketitle

\begin{abstract}
The Paulsen problem is a basic open problem in operator theory:
Given vectors $u_1, \ldots, u_n \in \R^d$ that are $\eps$-nearly satisfying the Parseval's condition and the equal norm condition,
is it close to a set of vectors $v_1, \ldots, v_n \in \R^d$ that exactly satisfy the Parseval's condition and the equal norm condition?
Given $u_1, \ldots, u_n$, the squared distance (to the set of exact solutions) is defined as $\inf_{v} \sum_{i=1}^n \norm{u_i - v_i}_2^2$ where the infimum is over the set of exact solutions.
Previous results show that the squared distance of any $\eps$-nearly solution is at most $O(\poly(d,n,\eps))$ and there are $\eps$-nearly solutions with squared distance at least $\Omega(d\eps)$.
The fundamental open question is whether the squared distance can be independent of the number of vectors $n$.

We answer this question affirmatively by proving that the squared distance of any $\eps$-nearly solution is $O(d^{13/2} \eps)$.
Our approach is based on a continuous version of the operator scaling algorithm and consists of two parts.
First, we define a dynamical system based on operator scaling and use it to prove that the squared distance of any $\eps$-nearly solution is $O(d^2 n \eps)$.
Then, we show that by randomly perturbing the input vectors, the dynamical system will converge faster and the squared distance of an $\eps$-nearly solution is $O(d^{5/2} \eps)$ when $n$ is large enough and $\eps$ is small enough.
To analyze the convergence of the dynamical system, we develop some new techniques in lower bounding the operator capacity, a concept introduced by Gurvits to analyze the operator scaling algorithm.

\end{abstract}

\end{titlepage}

\thispagestyle{empty}

\tableofcontents

\newpage

\section{Introduction}

A set of $n$ vectors $v_1, \ldots, v_n \in \R^d$ is called an equal norm Parseval frame if it satisfies the Parseval's condition and the equal norm condition:
\begin{equation} \label{e:Parseval}
\sum_{i=1}^n v_i v_i^T = I_d \quad {\rm and} \quad
\norm{v_i}_2^2 = \frac{d}{n} {\rm~for~} 1 \leq i \leq n,
\end{equation} 
where $I_d$ is the $d \times d$ identity matrix.
A set of $n$ vectors $u_1, \ldots, u_n \in \R^d$ is an $\eps$-nearly equal norm Parseval frame if
\begin{equation} \label{e:epsParseval}
(1-\eps)I_d \preceq \sum_{i=1}^n u_i u_i^T \preceq (1+\eps)I_d
\quad {\rm and} \quad
(1-\eps) \frac{d}{n} \leq \norm{u_i}_2^2 \leq (1+\eps) \frac{d}{n}.
\end{equation}
Given two sets of vectors $U=\{u_i\}_{i=1}^n$ and $V=\{v_i\}_{i=1}^n$, the squared distance between them is defined as 
\begin{equation} \label{e:sqdist-vector}
\dist(U,V) = \sum_{i=1}^n \norm{u_i-v_i}_2^2.
\end{equation}
Let $\cal F$ be the set of equal norm Parseval frames.
Given a set of vectors $U=\{u_i\}_{i=1}^n$,
the squared distance to the set of equal norm Parseval frame is defined as
\begin{equation} \label{e:sqdist-frame}
\dist(U,{\cal F}) = \inf_{V \in {\cal F}} \dist(U,V).
\end{equation}
The Paulsen problem asks how close is an $\eps$-nearly equal norm Parseval frame to an equal norm Parseval frame.

\begin{definition}[the Paulsen problem] \label{d:Paulsen}
The Paulsen problem asks what is the best function $f(d,n,\eps)$ so that
\[
\dist(U,{\cal F}) \leq f(d,n,\eps)
\]
for any set of $d$-dimensional vectors $U=\{u_1, \ldots, u_n\}$ that forms an $\eps$-nearly equal norm Parseval frame.
\end{definition}

The fundamental open question of the Paulsen problem is whether $f(d,n,\eps)$ can be independent of the number of vectors $n$ and only dependent on the dimension $d$ and the error $\eps$~\cite{survey,projection}.

\subsection{History and Motivations}

The Paulsen problem has been open for over fifteen years despite receiving quite a bit of attention~\cite{survey,road,auto,projection}.
It has been listed as a major open problem in frame theory in the literature (see e.g.~\cite{survey,Mixon,Hilbert}).

An equal norm Parseval frame (also known as an unit-norm tight frame) is a natural generalization of an orthonormal basis.
It is used as an overcomplete basis (see the introductions in the books~\cite{Christensen,edited-book,overcompleteness}) and has various applications in signal processing and communication theory,
including noise and erasure reduction~\cite{HolmesPaulsen,erasures,graphs,Vershynin}, quantization robustness~\cite{GKK,BPY,quantization}, and digital fingerprinting~\cite{fingerprinting}.
For some applications in signal processing~\cite{Grassmannian,HolmesPaulsen} and quantum information theory~\cite{symmetric-quantum,existence-ETF}, equal norm Parseval frames with additional properties such as Grassmannian frames (which minimize the maximal inner product) and equiangular frames (in which the inner products are the same) are needed to provide optimal performance.

These applications motivate the ``frame design'' questions of constructing equal norm Parseval frames.
It is known that equal norm Parseval frames exist for any $d \leq n$.
However, it is difficult to construct equal norm Parseval frames, with only a few algebraic constructions known (e.g. truncation of Discrete Fourier transform matrices, vertices of the Platonic solids, constructions from groups; see the survey~\cite{Waldron}). 
On the other hand, it is known that the set of equal norm Parseval frames contains manifold of nontrivial dimensions~\cite{manifold}, and so the algebraic methods only produce a few examples from the continuum of the set of all equal norm Parseval frames~\cite{auto}.

Besides algebraic constructions, researchers have also used numerical methods to construct equal norm Parseval frames.
It is much easier to construct ``nearly'' equal norm Parseval frames as a set of random equal-norm vectors is nearly Parseval with high probability.
Tropp et al.~\cite{Tropp} proposed alternating projection algorithms to construct equal norm Parseval frames and equiangular frames from these nearly equal norm Parseval frames.
They show positive experimental results and some partial convergence analysis.
Holmes and Paulsen~\cite{HolmesPaulsen} studied the optimal parameters for Grassmannian frames which are even harder to construct.
They construct some nearly equal norm Parseval frames with small maximal inner product, and ask the question whether these are good estimates of the optimal parameters for Grassmannian frames.
This work led Paulsen to ask a number of people whether a nearly equal norm Parseval frame is always close to an equal norm Parseval frame (if so then their estimates are accurate), and eventually it is known as the Paulsen problem first formally stated in~\cite{road}.

Proving a good upper bound for the Paulsen problem would give us a firm foundation to work with nearly equal norm Parseval frames, both in theory and in applications.
Indeed, our method can be seen as a continuous version of the alternating projection algorithm of Tropp et al.~\cite{Tropp}, and our results can be viewed as a rigorous justification of the numerical approach for constructing equal norm Parseval frames.
We hope that our techniques will be useful to the difficult open question of constructing equiangular equal norm Parseval frames, as Tropp et al.~\cite{Tropp} also proposed an alternating projection algorithm for constructing these frames.

\subsection{Previous Work on the Paulsen Problem}

A compactness argument shows that the function $f$ in the Paulsen problem must exist~\cite{survey}.
There are simple examples showing that $f(d,n,\eps) \geq d \eps$~\cite{projection}.

Bodmann and Casazza~\cite{road} proved that $f(d,n,\eps) \leq O(d^{18} n^{4} \eps^2)$ when $d$ and $n$ are relatively prime.
Their approach is to analyze a dynamical system that improves the closeness to the equal norm condition while ensuring that the Parseval's condition is satisfied.
Casazza, Fickus, and Mixon~\cite{auto} proved that $f(d,n,\eps) \leq O(d^{42} n^{14} \eps^2)$ when $d$ and $n$ are relatively prime,
and they extended this result to the general case and proved that $f(d,n,\eps) \leq O(d^{13/7} n \eps^{2/7})$.
Their approach is to analyze a gradient descent algorithm that improves the closeness to the Parseval's condition while ensuring that the equal norm condition is satisfied.

Cahill and Casazza~\cite{projection} showed that the Paulsen problem is equivalent to another fundamental and deep problem in operator theory called the projection problem:
Find the best function $g(d,n,\eps)$ such that the following holds.
Given an $n$-dimensional orthonormal basis $e_1, \ldots, e_n \in \R^n$ and
a projection $P$ of rank $d$ that satisfies
\[
(1-\eps) \frac{d}{n} \leq \norm{Pe_i}^2_2 \leq (1+\eps) \frac{d}{n} {\rm~~for~} 1 \leq i \leq n,
\]
there is a projection $Q$ with $\norm{Qe_i}_2^2 = \frac{d}{n}$ for $1 \leq i \leq n$ and 
\[
\sum_{i=1}^n \norm{Pe_i - Qe_i}_2^2 \leq g(d,n,\eps).
\]
Cahill and Casazza~\cite{projection} proved that $f(d,n,\eps)$ and $g(d,n,\eps)$ are within a factor $2$ of each other.

\subsection{Results and Techniques}

Our main result is a proof that the function is independent of the number of vectors.

\begin{theorem} \label{t:main}
For any set $U$ of $d$-dimensional vectors that is an $\eps$-nearly equal norm Parseval frame,
\[\dist(U,{\cal F}) \leq O(d^{13/2} \eps).\]
\end{theorem}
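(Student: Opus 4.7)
The plan is to combine a deterministic continuous operator scaling argument with a smoothed analysis step and then balance the two resulting bounds as a function of $n$.

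First, I would construct a continuous-time flow $U^{(t)} = \{u_i^{(t)}\}$ inspired by Gurvits's operator scaling algorithm. At each time $t$, let $A^{(t)} = \sum_i u_i^{(t)}(u_i^{(t)})^T$ and $b_i^{(t)} = \|u_i^{(t)}\|_2^2$. Intuitively, the flow applies an infinitesimal left scaling by (a power of) $(A^{(t)})^{-1}$ to pull the frame operator toward $I_d$, together with per-vector rescalings pulling each $b_i^{(t)}$ toward $d/n$. Taking the infinitesimal-step limit yields an ODE on $U^{(t)}$. By construction the only stationary points of this flow are equal norm Parseval frames, so $\dist(U,\mathcal{F})$ is controlled by the total arclength of the trajectory, namely $\int_0^\infty \|\dot U^{(t)}\|\,dt$.

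The next step is to bound this arclength via an energy argument. I would track a potential $\Phi(U^{(t)})$ measuring simultaneous deviation from Parseval and equal-norm (for instance $\|A^{(t)} - I_d\|_F^2 + \sum_i (b_i^{(t)} - d/n)^2$) and show $\frac{d}{dt}\Phi \leq -c\,\Phi/\kappa$, where $\kappa$ is a spectral-gap-like quantity attached to the scaling map. This is exactly where Gurvits's operator capacity enters: lower bounds on $\capa$ of certain completely positive operators built from $U^{(t)}$ translate into lower bounds on the decay rate of $\Phi$, hence into upper bounds on arclength via Cauchy--Schwarz. Since the initial $\Phi$ is $O(n\eps^2) + O(d^2\eps^2)$ and $\kappa$ can be bounded in terms of a generic (worst-case) capacity estimate, this should yield the first bound $\dist(U,\mathcal{F}) \leq O(d^2 n \eps)$.

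To remove the $n$-dependence when $n$ is large, I would introduce a Gaussian perturbation of variance tuned to $d$ and $\eps$, argue that the perturbation itself contributes only $O(\poly(d)\eps)$ to the squared distance, and then re-analyze the flow on the perturbed input. The payoff of smoothing is a high-probability lower bound on the operator capacity along the \emph{entire} trajectory, not just at the initial configuration. This is the main obstacle I anticipate: because the trajectory is itself a function of the random perturbation, one needs a uniform-in-time capacity estimate (perhaps via a covering argument over possible intermediate configurations, or by showing the flow cannot enter low-capacity regions without large energy drop). Once such a uniform bound is in place, the arclength estimate sharpens to $O(d^{5/2}\eps)$, valid when $n \geq n_0(d)$ and $\eps \leq \eps_0(d)$.

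Finally, I would balance the two bounds. For $n \leq d^{9/2}$ the deterministic estimate already gives $O(d^2 \cdot d^{9/2}\cdot \eps) = O(d^{13/2}\eps)$, while for $n \geq d^{9/2}$ (and small $\eps$) the smoothed bound $O(d^{5/2}\eps)$ is even smaller. Choosing the crossover at $n = d^{9/2}$ and noting that the statement is vacuous for $\eps$ larger than a constant yields the theorem.
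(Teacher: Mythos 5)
Your high-level plan closely mirrors the paper's: a continuous operator-scaling flow, a potential functional whose decay rate is governed by operator capacity, an arclength bound via Cauchy--Schwarz yielding $O(d^2 n \eps)$, a Gaussian perturbation step to decouple the capacity lower bound from $n$, and a final balancing. Two places where your sketch diverges from what the paper actually does are worth flagging.

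First, on the ``uniform-in-time capacity estimate.'' You correctly identify that a one-shot perturbation only controls capacity at the initial configuration and that this is the main obstacle. You propose to fix this either with a covering argument over intermediate configurations or by showing the flow cannot enter low-capacity regions. The paper does neither: instead it \emph{re-perturbs} infinitely often. Procedure~\ref{proc:path} runs the flow only until $\Delta$ has halved, then applies a fresh Gaussian perturbation at the current configuration (with variance $\sigma^2 \asymp \sqrt{d\,\Delta}/n$, shrinking geometrically), and repeats. Each segment then only needs the capacity lower bound at its own starting point, the per-segment arclengths form a decreasing geometric series, and one sums them. This iterated-perturbation trick is cleaner than a covering argument would be, since the continuum of possible trajectories makes covering arguments painful in the operator setting; it also explains why the perturbation variance has to be retuned to the current $\Delta$ rather than set once.

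Second, the balancing at the end is not quite where the $d^{13/2}$ comes from in the paper. The smoothed bound holds under two hypotheses: $n \gtrsim d^4$ \emph{and} $\Delta \lesssim d^{-9}$, equivalently $\eps \lesssim d^{-11/2}$. When $n \lesssim d^4$, the deterministic bound gives $O(d^2 n \eps) = O(d^6 \eps)$, not $O(d^{13/2}\eps)$; and when $\eps \gtrsim d^{-11/2}$ (which is \emph{not} the same as $\eps$ ``larger than a constant''), the theorem is not vacuous, but the trivial bound $\dist(U,\F) = O(d)$ can be rewritten as $O(d \cdot d^{11/2} \eps) = O(d^{13/2}\eps)$. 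So the dominant term in the paper's bound comes from the $\eps$-threshold case, not from choosing the $n$-crossover at $d^{9/2}$. Your crossover choice happens to yield the correct final exponent, but it misattributes the source of the $d^{13/2}$, and the ``vacuous for $\eps$ larger than a constant'' remark papers over a polynomial-in-$d$ threshold that genuinely needs the trivial $O(d)$ bound to close.
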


There is a very natural approach towards solving the Paulsen problem.
Given an $\eps$-nearly equal norm Parseval frame $u_1, \ldots, u_n \in \R^d$,
we alternately fix the Parseval condition (by setting $u_i \gets (\sum_{i=1}^n u_i u_i^T)^{-\frac{1}{2}} u_i$) and the equal norm condition (by scaling $u_i$ so that $\norm{u_i}_2^2 = \frac{d}{n}$),
until both conditions are satisfied exactly and we keep track of the sum of the movement of these operations.
We observe that this natural alternating algorithm is a special case of the operator scaling algorithm studied in~\cite{gurvits,operator}; see Section~\ref{s:background}.
So, this alternating algorithm will converge under some mild condition~\cite{gurvits,operator}, and also there are closed-form formulas for the movement of each operation~\cite{survey}.
But the problem of this approach is that the sum of the movement could be very large, as the path to an exact solution could zig-zag between the alternating operations; see Subsection~\ref{ss:discrete}.

Our approach is based on a continuous version of the operator scaling algorithm~\cite{gurvits,operator}.
There are two main parts.
To avoid the zig-zag movement, we define a dynamical system based on the (discrete) operator scaling algorithm, so that the two alternating operations are combined into one and the movement is continuous.
The dynamical system satisfies some very nice identities.
Using these identities and the concept of operator capacity defined by Gurvits~\cite{gurvits} in analyzing the convergence of the operator scaling algorithm, we can bound the total movement of our dynamical system given a nearly equal norm Parseval frame.

\begin{theorem}[informal] \label{t:dynamical}
Given a set of $d$-dimensional vectors $U=\{u_1,\ldots,u_n\}$ that forms an $\eps$-nearly equal norm Parseval frame,
there is a dynamical system that transforms $U$ into a set of $d$-dimensional vectors $V=\{v_1,\ldots,v_n\}$ that forms an equal-norm Parseval frame and $\dist(U,V) \leq O(d^2 n \eps)$.
\end{theorem}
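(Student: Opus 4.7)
The plan is to write down a continuous-time analogue of the discrete ``alternately fix Parseval, then fix norms'' scaling algorithm, bound the integrated squared speed of its trajectory by the change of Gurvits' operator capacity along the flow, and then relate that capacity change to the initial error $\eps$. Concretely, let $A(t)=\sum_i u_i(t)\, u_i(t)^T$ and $s_i(t)=\|u_i(t)\|_2^2$, and consider the ODE
\begin{equation*}
\dot u_i(t) \;=\; -\tfrac12\bigl(A(t) - I_d\bigr)\, u_i(t) \;+\; \tfrac12\left(\tfrac{d/n}{s_i(t)} - 1\right) u_i(t),
\end{equation*}
which is the first-order expansion of the composed update $u_i\mapsto A^{-1/2}u_i$ followed by $u_i\mapsto\sqrt{d/n}\,u_i/\|u_i\|$. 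The first routine task is to show that this flow is well-defined on a compact set bounded away from degenerate frames and that it converges to an equal-norm Parseval frame $V$ as $t\to\infty$.

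Next I would introduce operator capacity as a Lyapunov function. Associate to $U$ the completely positive map $\Phi_U(X)=\sum_i x_i\, u_i u_i^T$ on diagonal matrices $X=\diag(x_1,\ldots,x_n)$, and let $\capa(U)$ be a suitably normalized Gurvits capacity of $\Phi_U$. Equal-norm Parseval frames are exactly the capacity-maximizers of $\capa$ (among frames of a fixed total squared norm), so $-\log\capa$ is a natural potential. The heart of the argument is a pointwise identity of the form
\begin{equation*}
\frac{d}{dt}\log\capa\bigl(U(t)\bigr) \;\geq\; c\sum_i \bigl\|\dot u_i(t)\bigr\|_2^2
\end{equation*}
for an absolute constant $c>0$, which I would prove by differentiating $\log\capa$ at its current optimizer $x^\ast(t)$, invoking the envelope theorem to ignore the change of $x^\ast$, and matching the resulting gradient terms to the two drift terms of the ODE; the sign of $c$ reflects that our flow is aligned with the natural gradient of $\log\capa$.

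Once this identity is in hand the remainder is a controlled Cauchy--Schwarz computation. Integrating the identity bounds the total kinetic energy,
\begin{equation*}
\int_0^\infty \sum_i \|\dot u_i(t)\|_2^2 \,dt \;\leq\; \tfrac1c\bigl(\log\capa(V)-\log\capa(U(0))\bigr) \;=\; O(d\eps),
\end{equation*}
where the last estimate follows from a determinant perturbation bound: since $A(0)$ is within $\eps$ of $I_d$ in operator norm and the $s_i(0)$ are within a $(1\pm\eps)$ factor of $d/n$, the starting log-capacity is within $O(d\eps)$ of its maximum. Combining this with a bound $T=O(dn)$ on the effective time needed for the ODE to reach $V$, obtained by linearizing the flow around $V$ and noting that the slowest modes are the equal-norm modes, Cauchy--Schwarz yields
\begin{equation*}
\dist(U,V) \;=\; \sum_i\Bigl\|\!\int_0^T \dot u_i\,dt\Bigr\|_2^2 \;\leq\; T\int_0^T\sum_i\|\dot u_i\|_2^2\,dt \;=\; O(d^2 n\eps),
\end{equation*}
which is the asserted bound.

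The main obstacle is the capacity identity of the second step. The capacity is an infimum of a nonconvex-looking expression over an $n$-dimensional scaling cone, and translating its rate of change into a clean lower bound on the Euclidean energy $\sum_i\|\dot u_i\|_2^2$ requires new quantitative lower bounds on $\capa(U)$ in terms of local geometric quantities of the frame $U$---precisely the lower bounds on operator capacity that the abstract promises. Making these estimates independent of $n$ and only polynomial in $d$, and simultaneously controlling the $n$-dependence of the convergence time $T$, is where the technical heart of the argument will lie.
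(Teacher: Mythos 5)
Your plan has the right flavor—continuous operator scaling, capacity as the organizing quantity, Cauchy--Schwarz to turn a kinetic-energy bound and a time bound into a distance bound—but it inverts the role that capacity actually plays, and that inversion is where the argument would break.

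The paper's dynamical system is normalized so that the infinitesimal scaling generators are \emph{traceless} (that is, $\tr(sI_m - m\sum_j U_j U_j^T) = \tr(sI_n - n\sum_j U_j^T U_j) = 0$). As a consequence, the operator capacity is \emph{exactly conserved} along the flow (Lemma~\ref{l:cap-unchanged}). Your proposed Lyapunov inequality $\frac{d}{dt}\log\capa \geq c\sum_i \|\dot u_i\|^2$ therefore cannot hold for the paper's flow: the left side is identically zero while the right side is positive away from fixed points. For your own (differently normalized) ODE there is no obvious reason the inequality holds with a universal constant either, and you yourself flag it as the ``technical heart'' without giving a proof; this is not a routine envelope-theorem computation. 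The paper avoids the issue entirely by using a different pair of quantities. The decaying Lyapunov quantity is $\Delta$, and the key ingredient is the \emph{exact identity} $\frac{d}{dt}\Delta^{(t)} = -4\sum_i \|\dot U_i\|_F^2$ (Lemma~\ref{l:Delta'}), proved in a few lines of trace manipulation. This makes the total kinetic energy $\int\sum_i\|\dot U_i\|^2\,dt = \Delta^{(0)}/4$ a triviality. Conserved capacity is then used statically: together with the lower bound $\capa(\U) \geq s(\U) - mn\sqrt{\Delta(\U)/2}$ and the identity $\frac{d}{dt}s = -2\Delta$ (Lemma~\ref{l:s'}), it bounds how long $\Delta$ can stay large. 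Your claim that the convergence time $T = O(dn)$ follows by ``linearizing the flow around $V$'' is the second gap: the flow takes infinite time to reach $V$, and the paper obtains the time bound from the capacity lower bound, not from a local spectral estimate. The fix is a dyadic decomposition (Proposition~\ref{p:half}, Theorem~\ref{t:total-movement}): apply Cauchy--Schwarz separately on each interval where $\Delta$ halves, and sum the resulting geometric series. Your numerics would come out right, but the ingredients you invoke—a differential capacity-growth inequality and a finite effective time from linearization—are neither established nor straightforward. The real technical work in the paper is the static capacity lower bound $\capa(\U) \geq s(\U) - mn\sqrt{\Delta(\U)/2}$ (Theorem~\ref{t:cap_lower_bound}), obtained by reduction to matrix capacity and a combinatorial argument via permanents; a ``determinant perturbation bound'' will not produce this.
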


We prove Theorem~\ref{t:dynamical} in the more general operator setting instead of just the frame setting as in the Paulsen problem;
see the introduction in Section~\ref{s:dynamical}.
We believe that the operator setting is of independent interest,
e.g. it is closely related to the Brascamp-Lieb constants that will be discussed in Subsection~\ref{ss:related}.

Using the dynamical system for an arbitrary $\eps$-nearly equal norm Parseval frame, the analysis in Theorem~\ref{t:dynamical} is tight and the dependency on $n$ is unavoidable.
Our intuition is that the set of $\eps$-nearly equal norm Parseval frames with large total movement in our dynamical system is small.
The second part in our approach is a smoothed analysis~\cite{smoothed} of continuous operator scaling.
We prove that if we randomly perturb an arbitrary $\eps$-nearly equal norm Parseval frame appropriately (in a dependent manner), then the perturbed frame is simultaneously close to the original frame and an equal norm Parseval frame with high probability, by showing that its total movement in our dynamical system is independent of $n$.

\begin{theorem}[informal] \label{t:perturb}
Given a set of $d$-dimensional vectors $U=\{u_1,\ldots,u_n\}$ that forms an $\eps$-nearly equal norm Parseval frame with $n \gg d^4$ and $\eps \ll 1/d^{11/2}$,
we can perturb $U$ to obtain $\tilde U$ such that $\dist(U,\tilde{U}) \leq O(d^{5/2} \eps)$ and furthermore $\dist(\tilde U,{\cal F}) \leq O(\sqrt{d} \eps)$ by using the dynamical system in Theorem~\ref{t:dynamical}.
\end{theorem}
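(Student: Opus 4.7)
The plan is to design a random perturbation $\tilde U$ of $U$ such that two things hold with high probability: first, $\dist(U,\tilde U) = O(d^{5/2}\eps)$; and second, running the dynamical system of Theorem~\ref{t:dynamical} on $\tilde U$ produces total squared movement $O(\sqrt{d}\,\eps)$, so that $\dist(\tilde U,{\cal F}) = O(\sqrt{d}\,\eps)$. The main Theorem~\ref{t:main} then follows by combining the two via a triangle-type bound $\dist(U,{\cal F}) \leq 2\,\dist(U,\tilde U) + 2\,\dist(\tilde U,{\cal F})$ in the regime $n \gg d^{4}$ and $\eps \ll 1/d^{11/2}$, while the complementary regime is handled directly by Theorem~\ref{t:dynamical}.

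For the perturbation I would use independent Gaussians: set $\hat u_i = u_i + g_i$ where each $g_i \in \R^d$ has i.i.d.\ coordinates of variance $\sigma^2 = \Theta(\eps/n)$, and then apply a single deterministic correction (for instance $\tilde u_i = c_i \bigl(\sum_j \hat u_j \hat u_j^{T}\bigr)^{-1/2}\hat u_i$, with scalars $c_i$ equalizing the norms) to guarantee that $\tilde U$ is again an $O(\eps)$-nearly equal norm Parseval frame. Standard Gaussian concentration controls both $\sum_i \|g_i\|^2$ and the spectrum of $\sum_i g_i g_i^{T}$, and propagating these bounds through the correction step yields $\dist(U,\tilde U) = O(d^{5/2}\eps)$ with high probability for the chosen $\sigma^2$. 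This is the ``dependent manner'' of perturbation mentioned in the informal statement.

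The substantive step is showing that the dynamical system applied to $\tilde U$ has $n$-independent total movement. From the proof of Theorem~\ref{t:dynamical}, the total movement is controlled by an integral along the trajectory of a quantity like $1/\capa$, where $\capa$ is the operator capacity of Gurvits. A worst-case $\eps$-nearly equal norm Parseval frame can drive this integral up to $\Theta(d^2 n \eps)$, so the point of the perturbation is to rule out the near-degenerate frames at which $\capa$ becomes very small. Concretely, I would prove a smoothed capacity lower bound: with high probability over $g_1,\dots,g_n$, for every subspace $W \subseteq \R^d$ the perturbed frame satisfies $\sum_i \langle \tilde u_i, P_W \tilde u_i\rangle \gtrsim \dim(W)/d$, using anti-concentration for quadratic forms in Gaussians together with a union bound over a $\delta$-net of the Grassmannians of all dimensions. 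This yields $\capa(\tilde U) \geq 1/\poly(d)$ independent of $n$.

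The main obstacle is that the movement integral requires capacity lower bounds along the entire trajectory, not only at the starting frame $\tilde U$. I would handle this by exploiting the scaling invariances of operator scaling: the capacity transforms by explicit determinantal factors under the left/right actions that generate the dynamical flow, so a lower bound at $\tilde U$ transports along the whole trajectory with loss at most polynomial in $d$. Feeding $\capa \geq 1/\poly(d)$ into the movement integral of Theorem~\ref{t:dynamical} then produces the desired $O(\sqrt{d}\,\eps)$ bound on $\dist(\tilde U,{\cal F})$, completing the smoothed-analysis argument. The conditions $n \gg d^{4}$ and $\eps \ll 1/d^{11/2}$ arise precisely as the thresholds at which the Gaussian perturbation of scale $\sigma^{2} = \Theta(\eps/n)$ dominates the structure of $U$ strongly enough for both the net-and-union-bound and the anti-concentration argument to succeed.
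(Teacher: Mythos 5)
Your high-level outline (perturb, then prove a stronger capacity bound for the perturbed frame, then feed it through the movement analysis of Theorem~\ref{t:dynamical}) matches the paper's strategy, but several of the concrete steps you propose would fail, and the most important technical ingredient is missing.

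The first problem is the perturbation itself. Adding i.i.d.\ Gaussians $g_i$ to each $u_i$ does not keep $\Delta$ bounded: the first-order cross terms $\sum_i u_i g_i^T$ and $\langle u_i, g_i\rangle$ make $\Delta(\hat U)$ much larger than $\Delta(U)$ with high probability, and the paper explicitly flags this as the reason independent noise fails. Your post-hoc correction $(\sum_j \hat u_j\hat u_j^T)^{-1/2}$ followed by norm equalization does not fix it either: the norm-equalizing scalars $c_i$ spoil the Parseval condition again, and in any case the composite operation is not an orthogonal projection of the noise, so there is no clean way to propagate anti-concentration through it. The paper instead draws the noise from a Gaussian conditioned to lie in the codimension-$(d^2+n)$ subspace $L_1\cap L_2$ that kills both families of cross terms exactly, which is what lets it prove $\Delta(W)=O(\Delta(U))$ (Proposition~\ref{p:Delta}); the constraint $n\gg d^4$ and $\eps\ll 1/d^{11/2}$ are exactly what is needed for that estimate to go through. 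Also, $\sigma^2=\Theta(\eps/n)$ is too small: the paper uses $\sigma^2\approx\sqrt{d\,\Delta(U)}/n=\Theta(d^{3/2}\eps/n)$, and the larger scale is forced by the requirement $\alpha\gtrsim\sqrt{d\Delta}/n$ in the maintenance step (Proposition~\ref{p:maintain}); with your smaller $\sigma^2$ the pseudorandom property would not survive the flow.

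The second and more fundamental gap is the form of the capacity bound. You aim for $\capa(\tilde U)\geq 1/\poly(d)$ via a net over subspaces and anti-concentration. Even granting that, it is far too weak: the movement of the dynamical system from time $0$ to $\infty$ equals $s^{(0)}-\capa$ (capacity is exactly invariant by Lemma~\ref{l:cap-unchanged}), and here $s^{(0)}=d$. So $\capa\geq 1/\poly(d)$ only gives squared movement $O(d)$, which is the trivial bound. What is actually needed is a bound of the form $\capa(\tilde U)\geq s(\tilde U)-O(\sqrt{\Delta(\tilde U)/d})$, i.e.\ capacity within an additive $O(\sqrt{d}\,\eps)$ of $d$. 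A $1/\poly(d)$ lower bound says nothing of that precision, and a Grassmannian-net argument does not obviously produce such a tight additive bound. The paper obtains it by reducing to a $d\times n$ nonnegative matrix (Proposition~\ref{p:reduction-capacity} and Lemma~\ref{l:simple}), showing that matrix is $(\alpha,\beta)$-pseudorandom after the perturbation (Proposition~\ref{p:perturb-pseudorandom}), and then proving a combinatorial convergence-rate lower bound $-\d\Delta\gtrsim\alpha n\Delta$ (Proposition~\ref{p:combinatorial}), which converts to a sharp capacity lower bound via Proposition~\ref{p:cap-from-Delta'}. This last step — using the dynamical system itself to lower bound capacity — is the key new idea of the smoothed analysis and has no analogue in your sketch.

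Finally, the ``transport along the trajectory'' step is not quite right as stated. Capacity is conserved exactly, but the useful inequality $\capa\geq s^{(t)}-p\sqrt{\Delta^{(t)}}$ degrades as $\Delta^{(t)}\to 0$ if $p$ is fixed, because the right-hand side approaches $s^{(t)}$ which is also decreasing. A single perturbation lets you bound movement only until $\Delta$ halves; the paper then re-perturbs and repeats (Procedure~\ref{proc:path}), which is where the geometric decay and the final $O(\sqrt{d}\,\eps)$ bound come from. Your argument would need a similar iterated re-perturbation scheme.
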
 

This solves the problem when $n$ is large enough.
Together with Theorem~\ref{t:dynamical},
we obtain Threom~\ref{t:main} by using Theorem~\ref{t:dynamical} when $n$ is small; see Subsection~\ref{ss:together}.

To prove Theorem~\ref{t:perturb}, we develop some new techniques 
to analyze the convergence of the dynamical system in the perturbed frame.
Gurvits~\cite{gurvits} defined a notion called the operator capacity to analyze the operator scaling algorithm.
Recently, the operator scaling algorithm is used to design a polynomial time algorithm to solve the non-commutative rank problem~\cite{non-commutative}, while the key is a new lower bound on the operator capacity.
We find an interesting connection between our dynamical system and the operator capacity.
We use it to prove a better lower bound on the operator capacity in the perturbed instance, which implies a faster convergence rate in the perturbed instance.
We discuss some implications of our results to related work on operator scaling in the following subsection, including bounds on the optimal constants in Brascamp-Lieb inequalities and on the running time of fast algorithms for matrix scaling.

\subsection{Related Work on Frame Scaling, Operator Scaling, and Matrix Scaling} \label{ss:related}

Scaling a frame into an equal norm Parseval frame, and more generally, scaling an operator into a doubly stochastic operator (see Subsection~\ref{ss:operator}) has various applications in theoretical computer science.
Sometimes they go under different names such as radial isotropic positions in machine learning~\cite{HardtMoitra}, and geometric conditions in Brascamp-Lieb inequalities~\cite{Ball,Barthe,Brascamp-Lieb}.

An early application of frame scaling is discovered by Forster~\cite{Forster}, who showed that a set of $n$ vectors $v_1, \ldots, v_n \in \R^d$ can always be scaled (see Definition~\ref{d:scaling}) to an equal norm Parseval frame if every subset of $d$ vectors is linearly independent, and he used this result to derive a lower bound on the sign rank of the Hadamard matrix with applications in proving communication complexity lower bounds.
We note that Forster's scaling result was proved earlier in a more general setting by Gurvits and Samorodnitsky~\cite{GS} in their work of approximating mixed discriminants, and is also implicit in the work of Barthe~\cite{Barthe} in proving Brascamp-Lieb inequalities.
A recent application of frame scaling is found by Hardt and Moitra~\cite{HardtMoitra} in robust subspace discovery.

Operator scaling was introduced by Gurvits~\cite{gurvits} in an attempt to design a deterministic polynomial time algorithm for polynomial identity testing, and he used it to solve the special case when the commutative rank of a symbolic matrix is equal to its non-commutative rank (e.g. this includes the linear matroid intersection problem over reals).
Recently, Garg, Gurvits, Oliveira, and Wigderson~\cite{operator} improved Gurvits' analysis to prove that the alternating algorithm for operator scaling can be used to compute the non-commutative rank of a symbolic matrix in polynomial time.
Subsequently, the alternating algorithm for operator scaling is used by the same group~\cite{Brascamp-Lieb} to obtain a polynomial time algorithm to compute the optimal constants in Brascamp-Lieb inequalities, which we will elaborate more below as it is related to our work.

The Brascamp-Lieb inequalities~\cite{BL} and their reversed form established by Barthe~\cite{Barthe} are general classes of inequalities with important applications in functional analysis and convex geometry (e.g. including Nelson's hypercontractivity inequality and the Brunn-Minkowski inequality as special cases).
The optimal constants for thses inequalities are determined by Ball~\cite{Ball} assuming the geometric condition (which is a condition similar to that in John's ellipsoid theorem).
Garg, Gurvits, Oliveira and Wigderson~\cite{Brascamp-Lieb} show that the Brascamp-Lieb constants are equivalent to the capacity of an operator by a simple transformation, in which the geometric condition corresponds exactly to the doubly stochastic condition.
Therefore, the algorithm in~\cite{operator} can be employed to scale the input to satisfying the geometric condition so as to compute the optimal constant.
For our smoothed analysis in Section~\ref{s:smoothed}, we develop a new technique to proving a lower bound on the operator capacity and thus an upper bound on the Brascamp-Lieb constant.  In particular, this implies improved bounds on the Brascamp-Lieb constants for perturbed instances in the rank-one case (which is the case that Brascamp and Lieb proved in~\cite{BL}). 
See~\cite{Brascamp-Lieb} and the references therein for applications of these bounds to non-linear Brascamp-Lieb inequalities.

Matrix scaling~\cite{sinkhorn} is a well-studied special case of operator scaling.
It has applications in numerical analysis, in approximating permanents~\cite{LSW} and in combinatorial geometry~\cite{DGOS}.
Very recently, much faster algorithms are developed for matrix scaling by two independent research groups~\cite{Cohen,ALOW}.
Cohen, Madry, Tsipras and Vladu~\cite{Cohen} obtain an algorithm for matrix scaling with running time $\widetilde{O}(m \log \kappa \log^2(1/\eps))$, where $m$ is the number of nonzeros in the input matrix, $\kappa$ is the ratio between the largest and the smallest entries in the optimal scaling solution, and $\eps$ is the error parameter of the output.
Note that the algorithm is near linear time when $\kappa$ is bounded by a polynomial in $m$, but in general it could be exponentially large.
Not much is known about upper bounding $\kappa$ for specific instances, except when the input matrix is strictly positive~\cite{KK}.
Our techniques for smoothed analysis in Section~\ref{s:smoothed} provides a new way to bound $\kappa$; see Remark~\ref{r:condition}.
In particular, this implies that the algorithm in~\cite{Cohen} is near linear time in a pseudorandom instance as defined in Definition~\ref{d:pseudorandom} (not necessarily strictly positive).

To summarize, our techniques developed in solving the Paulsen problem provides new tools in bounding the mathematical quantities involved in scaling problems such as the operator capacity and $\kappa$ about optimal scaling solutions.
See the second half of Subsection~\ref{ss:overview} for an overview of these techniques.
These provide a new perspective to look at those quantities using the parameters in our dynamical system.
Currently, our smoothed analysis is tailored for the Paulsen problem, but we believe that it can be extended to the more general operator setting and also to more natural conditions (rather than just the psuedorandom condition in Definition~\ref{d:pseudorandom}) to prove useful results about other problems solved by scaling techniques.

\subsection{Organization and Overview}

We first review the background of operator scaling and see that the Paulsen problem is a special case in the operator framework in Section~\ref{s:background}.
We will also introduce the matrix scaling problem in Section~\ref{s:background} as this is a key intermediate problem in our proof of the second part.
We would like to mention that many results in this paper are first proved in the simpler matrix setting and then generalized to the operator setting.

We divide the proof of Theorem~\ref{t:main} into two sections.
In Section~\ref{s:dynamical}, we define our dynamical system based on operator scaling and prove Theorem~\ref{t:dynamical}.
The results in this section works in the more general operator setting.
We discover some nice formulas for the dynamical system to analyze its convergence.
And we establish a close connection between the operator capacity lower bound and the squared distance bound for the Paulsen problem.

In Section~\ref{s:smoothed}, we analyze the perturbation step to prove Theorem~\ref{t:perturb}.
Using a known reduction that we will discuss in Section~\ref{s:background} and see the proof in Subsection~\ref{ss:capLB},
we reduce the operator capacity lower bound to a matrix capacity lower bound.
Using some probabilistic arguments, we will show that by perturbing the vectors, the corresponding matrix will have some pseudorandom property.
Then we use a combinatorial argument to show that the pseudorandom property will imply a fast convergence of our dynamical system for matrix scaling.
Interestingly, we show that the fast convergence of our dynamical system will imply a stronger matrix capacity lower bound, and this leads to a bound on Paulsen problem without any dependency on the number of vectors.
We remark that the perturbation step only applies in the frame setting (rather than the general operator setting).

The proof ideas described so far are of high level.
We will give a more concrete technical overview in each section after the appropriate background is covered; see Subsection~\ref{ss:outline} and Subsection~\ref{ss:overview}.

\section{The Paulsen Problem, Operator Scaling, and Matrix Scaling} \label{s:background}

In this section, we first describe a natural alternating algorithm to solve the Paulsen problem in Subsection~\ref{ss:alternating}.
Then, we describe the operator scaling problem and the operator scaling algorithm in Subsection~\ref{ss:operator}, and then see that it captures the natural alternating algorithm for the Paulsen problem as a special case in Subsection~\ref{ss:reduction}.
The operator scaling algorithm is analyzed in~\cite{gurvits,operator} and we will use their tools in solving the Paulsen problem.
In Subsection~\ref{ss:capacity}, we describe the notion of operator capacity introduced by Gurvits~\cite{gurvits} and explain how it is used in analyzing the convergence of the operator scaling algorithm.
The operator capacity will be important in analyzing our continuous operator scaling algorithm.
Finally, in Subsection~\ref{ss:matrix}, we describe the matrix scaling problem and the notion of matrix capacity and explain how it is used as an intermediate step in proving lower bounds for the operator capacity.
We will also use the matrix capacity as an intermediate step to prove a stronger lower bound for the operator capacity for a perturbed instance.

We remark that the definitions and the results in this section will not be directly used in later proofs.
We will define formally what we need in Section~\ref{s:dynamical} and in Section~\ref{s:smoothed}, which will be slightly different from previous work.

\subsection{Alternating Algorithm for the Paulsen Problem} \label{ss:alternating}

There is a natural algorithm towards solving the Paulsen problem.
Let $U^{(0)} = \{u_1^{(0)},\ldots,u_n^{(0)}\}$ be the initial $\eps$-nearly equal norm Parseval frame, with
\[
(1-\eps)I_d \preceq \sum_{i=1}^n u_i^{(0)} {u_i^{(0)}}^T \preceq (1+\eps)I_d
\quad {\rm and} \quad
(1-\eps) \frac{d}{n} \leq \norm{u_i^{(0)}}_2^2 \leq (1+\eps) \frac{d}{n}.
\]
Given $U^{(t)}=\{u_1^{(t)},\ldots,u_n^{(t)}\}$ for some non-negative integer $t$,
we define $S^{(t)} = \sum_{i=1}^n u_i^{(t)} {u_i^{(t)}}^T$.
If $S^{(0)}$ is singular, then there is a zero eigenvalue and thus $\eps=1$.
Similarly, if $\norm{u_i^{(0)}}=0$ for some $i$, then we also have $\eps=1$.
In these cases, the Paulsen problem is trivial as we can just output an arbitrary equal norm Parseval frame $V = \{v_1,\ldots,v_n\}$ and 
\[\dist(U,{\cal F}) \leq \dist(U,V) = \sum_{i=1}^n \norm{u_i^{(0)} - v_i}_2^2 
\leq \sum_{i=1}^n (2\norm{u_i^{(0)}}_2^2 + 2\norm{v_i}_2^2)
= \sum_{i=1}^n O(\frac{d}{n}) = O(d) = O(d\eps).\]
Henceforth, we assume that $S^{(0)}$ is non-singular and $\norm{u_i^{(0)}} \neq 0$ for $1 \leq i \leq n$.
Then, we define 
\[
u_i^{(t+1)} = (S^{(t)})^{-\frac{1}{2}} u_i^{(t)} \quad {\rm and} \quad
u_i^{(t+2)} = u_i^{(t+1)}/\norm{u_i^{(t+1)}}.
\]
Note that $S^{(t)}$ remains to be non-singular and $\norm{u_i^{(t)}} \neq 0$ for $1 \leq i \leq n$, and so these vectors are well-defined.
By construction, it is easy to check that $U^{(t+1)}$ satisfies the Parseval condition and $U^{(t+2)}$ satisfies the equal norm condition (although $\norm{u_i^{(t+2)}}_2^2 = 1$ instead of $\norm{u_i^{(t+2)}}_2^2 = \frac{d}{n}$) for every even number $t$.
We would like to show that $U^{(t+1)}$ will converge to an equal norm Parseval frame for some even number $t$.
We observe that this alternating algorithm is a special case of the operator scaling algorithm in the next subsection.

\subsection{Operator Scaling} \label{ss:operator}

The operator scaling problem is defined by Gurvits~\cite{gurvits}.
Given $m \times n$ matrices $U_1, \ldots, U_k$,
the operator scaling problem is to find an $m \times m$ matrix $L$ and an $n \times n$ matrix $R$ such that if we set $V_i = L U_i R$ for $1 \leq i \leq k$ then
\begin{equation} \label{e:conditions}
\sum_{i=1}^k V_i {V_i}^T = I_{m} \quad {\rm and} \quad
\sum_{i=1}^k {V_i}^T V_i = \frac{m}{n} I_{n}.
\end{equation} 
The operator scaling algorithm studied in~\cite{gurvits,operator} is the natural alternating algorithm.
Let $\U^{(0)} = \{U^{(0)}_1, \ldots, U^{(0)}_k\}$ be the initial $m \times n$ matrices.
Given $\U^{(t)} = \{U^{(t)}_1, \ldots, U^{(t)}_k\}$ for an even number $t$,
we define 
\[
L^{(t)} = \sum_{i=1}^k U^{(t)}_i (U^{(t)}_i)^T \quad {\rm and} \quad
U_i^{(t+1)} = \big(L^{(t)}\big)^{-\frac{1}{2}} U_i^{(t)},
\]
and 
\[
R^{(t+1)} = \sum_{i=1}^k (U^{(t+1)}_i)^T U^{(t+1)}_i \quad {\rm and} \quad
U_i^{(t+2)} = U_i^{(t+1)} \big(R^{(t+1)}\big)^{-\frac{1}{2}}.
\]
We assume that $L^{(0)}$ is non-singular and $R^{(1)}$ is non-singular.
Then $L^{(t)}$ and $R^{(t+1)}$ remain to be non-singular and so these matrices are well-defined.
By construction, it is easy to check that $\U^{(t+1)}$ satisfies the first condition in~(\ref{e:conditions}) that 
\[
\sum_{i=1}^k U^{(t+1)}_i (U^{(t+1)}_i)^T 
= \sum_{i=1}^k \big(L^{(t)}\big)^{-\frac{1}{2}} U_i^{(t)} (U^{(t)}_i)^T \big(L^{(t)}\big)^{-\frac{1}{2}}
= \big(L^{(t)}\big)^{-\frac{1}{2}} L^{(t)} \big(L^{(t)}\big)^{-\frac{1}{2}}
= I_{m},
\] 
and $\U^{(t+2)}$ satisfies the (scaled) second condition in~(\ref{e:conditions}) that 
\[
\sum_{i=1}^k {U^{(t+2)}_i}^T U^{(t+2)}_i 
= \sum_{i=1}^k \big(R^{(t+1)}\big)^{-\frac{1}{2}}  (U_i^{(t+1)})^T  U_i^{(t+1)} \big(R^{(t+1)}\big)^{-\frac{1}{2}}
= \big(R^{(t+1)}\big)^{-\frac{1}{2}} R^{(t+1)} \big(R^{(t+1)}\big)^{-\frac{1}{2}}
= I_{n},
\]
for every even number $t$.
When the two matrices  
\[L := \prod_{t {\rm~even}} \big(L^{(t)}\big)^{-\frac12} \quad {\rm and} \quad 
R := \prod_{t {\rm~odd}} \big(R^{(t)}\big)^{-\frac12}\] 
converge, they are the required scaling matrices.

\subsection{Reducing The Paulsen Problem to Operator Scaling} \label{ss:reduction}

We observe that the natural alternating algorithm for the Paulsen problem is a special case of the operator scaling algorithm.
For each vector $u^{(0)}_i \in \R^d$, 
we associate an $d \times n$ matrix $U^{(0)}_i$ in which the $i$-th column is $u^{(0)}_i$ and all other columns are zero.
Given $U^{(0)} = \{u^{(0)}_1, \ldots, u^{(0)}_n\}$,
we apply the operator scaling algorithm to these matrices $\U^{(0)} = (U^{(0)}_1, \ldots, U^{(0)}_n)$.
In this reduction to operator scaling, we have $k:=n$ and $m:=d$.

By the definition of $U^{(0)}_i$, it is easy to check that
\[ 
L^{(0)}=\sum_{i=1}^n U^{(0)}_i (U^{(0)}_i)^T
=\sum_{i=1}^n u^{(0)}_i (u^{(0)}_i)^T=S^{(0)},
\]
and
\[R^{(1)} {\rm~is~the~} n \times n {\rm~diagonal~matrix~with~} R^{(1)}_{i,i} = \norm{u^{(1)}_i}_2^2,\]
and thus the operator scaling algorithm applies to $\U^{(0)}$ corresponds exactly to the alternating algorithm applies to $U^{(0)}$.
Inductively, we see that $\U^{(t)}$ corresponds exactly to $U^{(t)}$, and so the alternating algorithm for the Paulsen problem is a special case of the operator scaling algorithm.

Suppose the operator scaling algorithm converges to a solution $\V = \{V_1, \ldots, V_n\}$ that satisfies the two conditions in~(\ref{e:conditions}).
Each $V_i$ has nonzero entries only in the $i$-th column,
and we let $v_i$ be the $i$-th column of $V_i$.
The first condition implies the Parseval condition
\[\sum_{i=1}^k V_i {V_i}^T = I_{m}
\implies 
\sum_{i=1}^n v_i {v_i}^T = I_{d},\]
and the second condition implies the equal norm condition
\[\sum_{i=1}^k {V_i}^T V_i = \frac{m}{n} I_{n}
\implies 
\norm{v_i}_2^2 = \frac{d}{n} {\rm~for~} 1 \leq i \leq n.
\]
Therefore, the solution that the operator scaling algorithm converges to corresponds to a solution to the Paulsen problem.
This is the general approach that we will take to study the Paulsen problem. 

We remark that this reduction can be used to derive Forster's theorem~\cite{Forster}, which has an interesting application in proving communication complexity lower bound, from the results in operator scaling~\cite{GS} that we will describe in the next subsection.

\subsection{Operator Capacity and Convergence of Operator Scaling} \label{ss:capacity}

To analyze the convergence of the operator scaling algorithm, Gurvits~\cite{gurvits} defined the following important notion of operator capacity.
Note that this definition is for square matrices in~\cite{gurvits} and is extended to rectangular matrices in~\cite{Brascamp-Lieb}.
We will define a rectangular version for our purpose in Section~\ref{s:dynamical}.
Given $\U = \{U_1, \ldots, U_k\}$ where each $U_i \in \R^{n \times n}$, the operator $T_{\U}: \R^{n \times n} \to \R^{n \times n}$ is defined as $T_{\U}(X) = \sum_{i=1}^k U_i X U_i^T$.
The capacity of the operator $T_{\U}$ is defined as
\[
\capa(\U) := \inf_X\{ \det(T_{\U}(X))~|~X \succeq 0 {\rm~and~} \det(X)=1 \}.
\]
The capacity is used as a potential function to analyze the convergence of the operator scaling algorithm.
The following arguments in this subsection are from~\cite{gurvits,operator}.

Firstly, we establish an upper bound on the capacity.
When $\tr(\sum_{i=1}^k U_i U_i^T) = n$ (for instance when one of the two conditions in~(\ref{e:conditions}) is satisfied), 
then $\capa(\U)$ is always upper bounded by one because
(from Proposition 2.8 of~\cite{operator})
\begin{equation} \label{e:capUB}
\capa(\U) \leq \det(T_{\U}(I)) \leq (\tr(T_{\U}(I))/n)^n \leq 1,
\end{equation}
where the second inequality follows from the AM-GM inequality.

Secondly, we can keep track of the change of the capacity during the operator scaling algorithm.
It is clear from the definition that capacity is multiplicative.
Given $\U^{(t)}$, Proposition 2.7 from~\cite{operator} shows that
\begin{equation} \label{e:change}
\capa (\U^{(t+2)}) = \frac{\capa (\U^{(t)})}{\det(L^{(t)}) \cdot \det(R^{(t+1)})}.\end{equation}
To measure how close $\U = \{U_1, \ldots, U_k\}$ is to satisfying the two conditions in~(\ref{e:conditions}), it is defined
\[
\Delta(\U) = \tr[(\sum_{i=1}^k U_i U_i^T-I_n)^2] + \tr[(\sum_{i=1}^k U_i^T U_i - I_n)^2],
\]
so that $\Delta(\U) = 0$ if and only if $\U$ satisfies the two conditions in~(\ref{e:conditions}).
Using the formula of the change of the capacity,
it is shown in Lemma 5.2 of~\cite{operator} that the capacity is increased by a constant factor when $\Delta(\U) \geq 1$ and
\[
\capa(\U^{(t+2)}) \geq \capa(\U^{(t)}) \cdot \exp(\Delta(\U^{(t)}))
\quad {\rm~when~} \Delta(\U) \leq 1. 
\]

Finally, to show that the operator scaling algorithm converges in polynomial time, we need to establish a good lower bound on the initial operator capacity, and this is the key step in analyzing the operator scaling algorithm.
It is shown in Lemma 3.4 of~\cite{operator} that if the second condition $\sum_{i=1}^k U_i^T U_i = I_n$ is satisfied then 
\begin{equation} \label{e:operator-cap}
\capa(\U) \geq (1-\sqrt{n \cdot \Delta(\U)})^n.
\end{equation}

Now, with the upper bound, the lower bound and the change of the capacity, we can analyze the number of iterations for $\Delta(\U^{(t)}) \leq \delta$ for any $0 < \delta \leq 1$.
Suppose $\Delta(\U^{(t)}) > \delta$ for $0 \leq t \leq 2T$,
then 
\[
\capa(\U^{(2T)}) \geq \capa(\U^{(0)}) \cdot \exp(\delta \cdot T) 
\geq (1-\sqrt{n \cdot \Delta(\U^{(0)})})^n \cdot \exp(\delta \cdot T),
\]
and this would imply that 
\[\capa(\U^{(2T)}) > 1
\quad {\rm for} \quad
T = \Omega(n^{3/2} \sqrt{\Delta(\U^{(0)})} / \delta) 
\quad {\rm and} \quad
\Delta(\U^{(0)}) \leq \frac{1}{2n},
\]
contradicting that $\capa(\U^{(2T)}) \leq 1$.
To summarize, the operator capacity provides an indirect way to argue that $\Delta(\U^{(t)})$ would converge to zero as long as the initial capacity is positive.

It is then an important question to characterize when the operator capacity is positive.
Gurvits~\cite{gurvits} proved that $\capa(\U) > 0$ if and only if the operator $T_{\U}$ is rank non-decreasing, i.e. $\rank(T_{\U}(P)) \geq \rank(P)$ for any $P \succeq 0$.
Recently, Garg, Gurvits, Oliveria and Wigderson~\cite{operator} proved that when $T_{\U}$ is rank non-decreasing, then $\capa(\U) \geq \exp(-\poly(n,k))$ assuming the bit complexity of the input is bounded.
This implies that the operator scaling algorithm gives a polynomial time algorithm to determine whether $T_{\U}$ is rank non-decreasing, and this implies the first polynomial time algorithm for computing the non-commutative rank of a symbolic matrix.
For the Paulsen problem, we can assume that $\Delta(\U^{(0)})$ is small enough and thus the initial capacity is positive in Section~\ref{s:dynamical},  
and we will perturb the input so that the intial capacity is positive in Section~\ref{s:smoothed}. 

We will discuss the proof of the operator capacity lower bound (\ref{e:operator-cap}) in the next subsection,
which is based on a connection to matrix scaling and matrix capacity that we will define.
We will need to extend and improve this lower bound for the proof of Theorem~\ref{t:perturb}.

\subsection{Matrix Scaling and Matrix Capacity} \label{ss:matrix}

The operator scaling algorithm and the operator capacity are motivated by the corresponding definitions in a simpler matrix setting.

We call a non-negative matrix $A \in \R^{m \times n}$ doubly balanced if all the row sums are equal and all the column sums are equal.
Given a non-negative matrix $A$, the matrix scaling problem is to find an $m \times m$ diagonal matrix $L$ and an $n \times n$ diagonal matrix $R$ such that $LAR$ is doubly balanced.
The matrix scaling algorithm by Sinkhorn~\cite{sinkhorn} is the natural algorithm that alternatively scales the rows to have the same sum and then scales the columns to have the same sum until the resulting matrix becomes (close enough to) doubly balanced.
In the square case, it is known that (see e.g.~\cite{LSW}) the permanent of $A$ is positive if and only if there exist two sequences of positive diagonal matrices $L_i$ and $R_i$ such that $\lim_{i \to \infty} L_i A R_i$ is doubly stochastic.

There are many different analyses of the convergence of the matrix scaling algorithm.
There is an analysis~\cite{LSW} which is the same as the analysis of the operator scaling algorithm outlined in the previous subsection, using the permanent (which is the same as the capacity up to a simple transformation) as a potential function.
The capacity of a matrix is first explicitly stated in~\cite{operator},
although Gurvits and Yianilos~\cite{GY} considered an equivalent notion in measuring progress. 
Note that the definition is for square matrices, and we will define a rectangular version for our purpose in Section~\ref{s:dynamical}.
Given a non-negative matrix $A \in \R^{n \times n}$,
its capacity is defined as 
\[
\capa(A) = \inf_x \{ \prod_{i=1}^n (Ax)_i~|~\prod_{i=1}^n x_i = 1 {\rm~and~} x > 0\}.
\]
A matrix is called row-balanced if its row sums are the same.
Given a row-balanced matrix $A$ with average column sum one, 
we let $c_j$ be the sum of the $j$-th column and define
\[
\Delta(A) = \sum_{j=1}^n (c_j - 1)^2.
\]
It is proved in Lemma 3.2 of~\cite{operator} that for a row-balanced matrix $A$ with average column sum one, 
\begin{equation} \label{e:matrix-cap}
\capa(A) \geq (1 - \sqrt{n \cdot \Delta(A)})^n.
\end{equation}
The proof of the operator capacity lower bound (\ref{e:operator-cap}) in Lemma 3.4 of~\cite{operator} is through a reduction to this matrix capacity lower bound, which we will extend to our setting in Proposition~\ref{p:reduction-capacity}.

In the (discrete) operator scaling algorithm described in Subsection~\ref{ss:operator}, we can assume that one of the two conditions in~(\ref{e:conditions}) is satisfied and this simplifies the proofs.
In the continuous operator scaling algorithm that we will define in Section~\ref{s:dynamical}, typically both conditions are not satisfied and so we will need to slightly generalize the proof of (\ref{e:matrix-cap}) and the reduction from the operator capacity to the matrix capacity to prove an analogous statement of (\ref{e:operator-cap}).
This will be done in the Subsection~\ref{ss:capLB}.
We will also define a matrix version of the Paulsen problem in Section~\ref{ss:matrix-Paulsen}.
This will provide a new way to prove stronger matrix capacity lower bound in a perturbed instance.

\section{The Operator Paulsen Problem and Continuous Operator Scaling} \label{s:dynamical}

We consider the following generalization of the Paulsen problem to the operator setting.
We refer to a set of of matrices $\U = \{U_1, \ldots, U_k\}$ as an operator.

\begin{definition}[doubly balanced and doubly stochastic operator] \label{d:DS}
An operator $\V = \{V_1, \ldots, V_k\}$ where $V_i \in \R^{m \times n}$ for $1 \leq i \leq k$ is called doubly balanced if 
\[
\sum_{i=1}^k V_i V_i^T = cnI_m \quad {\rm~and~} \quad 
\sum_{i=1}^k V_i^T V_i = cmI_n
\]
for some scalar $c \geq 0$, and it is called doubly stochastic in the case when $c=1/n$, i.e.
\[
\sum_{i=1}^k V_i V_i^T = I_m \quad {\rm~and~} \quad 
\sum_{i=1}^k V_i^T V_i = \frac{m}{n} I_n.
\]
\end{definition}

\begin{definition}[$\eps$-nearly doubly stochastic operator] \label{d:epsDS}
An operator $\U = \{U_1, \ldots, U_k\}$ where $U_i \in \R^{m \times n}$ for $1 \leq i \leq k$ is called $\eps$-nearly doubly stochastic if 
\[
(1-\eps)I_m \preceq \sum_{i=1}^k U_i U_i^T \preceq (1+\eps) I_m 
\quad {\rm~and~} \quad 
(1-\eps)\frac{m}{n} I_n \preceq \sum_{i=1}^k U_i^T U_i \preceq (1+\eps) \frac{m}{n} I_n.
\]
\end{definition}

\begin{definition}[distance and squared distance] \label{d:distance}
Given $\U = \{U_1, \ldots, U_k\}$ and $\V = \{V_1, \ldots, V_k\}$ where $U_i,V_i \in \R^{m \times n}$ for $1 \leq i \leq k$, 
the squared distance between $\U$ and $\V$ is defined as
\[
\dist(\U,\V) := \sum_{i=1}^k \norm{U_i - V_i}_F^2,
\]
where $\norm{.}_F$ is the Frobenius norm of the matrix. 
The distance between $\U$ and $\V$ is defined as
\[
\distance(\U,\V) := \sqrt{\dist(\U,\V)} = \sqrt{\sum_{i=1}^k \norm{U_i - V_i}_F^2}.
\]
\end{definition}

\begin{definition}[the operator Paulsen problem] \label{d:operator-Paulsen}
Given $\U = \{U_1, \ldots, U_k\}$ where $U_i \in \R^{m \times n}$ for $1 \leq i \leq k$ that is $\eps$-nearly doubly stochastic,
the operator Paulsen problem asks what is the best function $h(k,n,m,\eps)$ so that 
\[
\inf_{\V} \dist(\U,\V) 
\leq h(k,n,m,\eps),
\]
where the infimum $\V$ is over the sets of matrices which are doubly stochastic.
\end{definition}

The main theorem in this section is that $h(k,n,m,\eps) \leq O(m^2 n \eps)$ which we will prove in Theorem~\ref{t:operator-Paulsen} in Subsection~\ref{ss:operator-Paulsen}.
Our approach is to define a dynamical system and use it to find a scaling solution defined as follows.

\begin{definition}[operator scaling] \label{d:scaling}
Given an operator $\U = \{U_1, \ldots, U_k\}$ where $U_i \in \R^{m \times n}$ for $1 \leq i \leq k$, we say $\V = \{V_1, \ldots, V_k\}$ is a scaling of $\U$ if there exist 
\[
L \in \R^{m \times m} \quad {\rm and} \quad 
R \in \R^{n \times n} \quad {\rm such~that~} V_i = L \cdot U_i \cdot R {\rm~for~} 1 \leq i \leq k.
\]
\end{definition}

This will imply Theorem~\ref{t:dynamical}, using a similar reduction as described in Subsection~\ref{ss:reduction}.

\subsubsection*{Organization and Overview}

We first see in Subsection~\ref{ss:discrete} that the natural attempt to use the (discrete) operator scaling algorithm described in Subsection~\ref{ss:operator} to solve the Paulsen problem would not work directly.
This motivates us to define a dynamical system based on the operator scaling algorithm in Subsection~\ref{ss:continuous}.
In Subsection~\ref{ss:outline},
we prove our main technical result that given $\U^{(0)}$ that is $\eps$-nearly doubly stochastic, the dynamical system will produce $\U^{(\infty)}$ that is doubly balanced with $\dist(\U^{(0)},\U^{(\infty)}) \leq O(m^2n \eps)$,
assuming some formulas for the dynamical system and a lower bound on the operator capacity.
Then, we derive the formulas in Subsection~\ref{ss:formulas},
and we prove the operator capacity lower bound in Subsection~\ref{ss:capLB}.
Finally, we do some preprocessing and postprocessing to obtain a bound on the operator Paulsen problem in Subsection~\ref{ss:operator-Paulsen}, and use a reduction similar to that in Subsection~\ref{ss:reduction} to obtain a bound on the Paulsen problem and prove Theorem~\ref{t:dynamical} in Subsection~\ref{ss:Paulsen}.

\subsection{Discrete Operator Scaling} \label{ss:discrete}

The natural first attempt to the Paulsen problem is to use the analysis in Subsection~\ref{ss:capacity} to bound the distance by the ``total movement'' in the operator scaling algorithm.
In this subsection, we restrict to the original frame setting of the Paulsen problem, and argue that this natural attempt of using discrete operator scaling would not work directly.

Given a set of $d$-dimensional vectors $U^{(0)} = \{u^{(0)}_1, \ldots, u^{(0)}_n\}$ that forms an $\eps$-nearly equal norm Parseval frame,
we apply the operator scaling algorithm (which is the natural alternating algorithm in Subsection~\ref{ss:alternating}) to obtain $U^{(2T)} = \{u^{(2T)}_1, \ldots, u^{(2T)}_n\}$.
As proved in Lemma~\ref{l:triangle} in the Subsection~\ref{ss:outline}, we can bound 
\[
\distance(U^{(0)},U^{(2T)}) = 
\sqrt{\sum_{i=1}^n \norm{u^{(2T)}-u^{(0)}}_2^2} 
\leq \sum_{t=1}^{2T} \sqrt{\sum_{i=1}^n \norm{u^{(t)}-u^{(t-1)}}_2^2}
= \sum_{t=1}^{2T} \distance(U^{(t)},U^{(t-1)}).
\]
It is shown in Proposition 5 of~\cite{survey} that $\dist(U^{(t)},U^{(t-1)})) \leq d\eps^2$ if $U^{(t-1)}$ is an $\eps$-nearly equal norm Parseval frame, and so we can bound $\distance(U^{(0)},U^{(2T)})$ by $O(\eps T\sqrt{d})$.
It remains to bound $T$ for $\Delta(U^{(2T)}) \leq \Delta(U^{(0)})/2$.
Using the analysis in Subsection~\ref{ss:capacity},
we need to set $T = \Theta(\poly(n,d)/\sqrt{\Delta(U^{(0)})})$.
It follows from the definition that $\Delta(U^{(0)}) = \Theta(\poly(n,d) \cdot \eps^2)$, and thus we can bound 
\[\distance(U^{(0)},U^{(2T)}) 
= O(\eps T \sqrt{d}) 
= O(\eps \poly(n,d)/\sqrt{\Delta(U^{(0)})})
= O(\poly(n,d)),\]
but the $\eps$ got cancelled.
Note that this bound is worse than the trivial bound that $\dist(U,\F) = O(d)$ as shown in Subsection~\ref{ss:alternating}.
We remark that the analysis as stated in Subsection~\ref{ss:capacity} only holds when $m=n$, but it can be extended to the case when $m \neq n$ as we assumed here.

\subsection{Continuous Operator Scaling} \label{ss:continuous}

The problem of the analysis in the previous subsection is that there are examples in which the alternating steps zigzag (i.e. $\distance(U^{(t+2)},U^{(t)})$ is small but $\distance(U^{(t+2)},U^{(t+1)})$ and $\distance(U^{(t+1)},U^{(t)})$ are large), and so bounding $\distance(U^{(0)},U^{(2T)})$ by $\sum_{t=1}^{2T} \distance(U^{(t)},U^{(t-1)})$ gives a poor bound,
but we do not know how to bound $\distance(U^{(t+2)},U^{(t)})$ directly.

Our idea is to define a continuous version of the operator scaling algorithm so that the two alternating steps are combined into one step and the movement is continuous, so that we can still bound the distance $\distance(U^{(0)},U^{(T)})$ by the total movement $\int_0^T {\sqrt{\sum_{i=1}^n \norm{\frac{d}{dt} u_i^{(t)}}_2^2} dt}$.
We note that the idea of combining two steps into one is also used in previous work (e.g. in Forster's work~\cite{Forster}), but we are not aware of previous work that considers a dynamical system for operator scaling.

We define our dynamical system in the more general operator setting.
There is a time $t$ in the evolution of the matrices, but we will drop the superscript to ease our notation whenever it is clear from the context.

\begin{definition}[size of an operator] \label{d:size}
Given $\U = (U_1, \ldots, U_n)$, let 
\[
s(\U) := \sum_{i=1}^k \norm{U_i}_F^2 = \tr(\sum_{i=1}^k U_i U_i^T) = \tr(\sum_{i=1}^k U_i^T U_i)
\]
be the size of the operator.
We use the shorthand $s$ when the system $\U$ is clear from the context.
\end{definition}

Our dynamical system is defined by the following differential equation.

\begin{definition}[dynamical system from operator scaling] \label{d:dynamical}
The following differential equation describes how $\U^{(t)} = \{U^{(t)}_1, \ldots, U^{(t)}_k\}$ changes over time:
\[
\d U_i := (sI_m - m\sum_{j=1}^k U_j {U_j}^T) U_i + U_i(sI_n - n\sum_{j=1}^k {U_j}^T U_j) \quad {\rm for~} 1 \leq i \leq k.
\]
\end{definition}

Let us informally see that the dynamical system is a continuous version of the operator scaling algorithm.
Consider one step $U_i \gets (\sum_{j=1}^k U_j U_j^T)^{-\frac12} U_i$ in the operator scaling algorithm.
If we move continuously, since $\frac{m}{s} \sum_{j=1}^k U_j U_j^T \approx I$, the dynamical system should update 
\[
U_i \gets (\frac{m}{s} \sum_{j=1}^k U_j U_j^T)^{-\frac12 dt} U_i 
\approx (I_m + \frac12 (I_m - \frac{m}{s} \sum_{j=1}^k U_j U_j^T) dt)U_i
\implies
\d U_i \approx \frac{1}{2} (I_m - \frac{m}{s} \sum_{j=1}^k U_j U_j^T) U_i,
\]
and similarly for another step 
\[
U_i \gets U_i(\frac{n}{s} \sum_{j=1}^k U_j^T U_j)^{-\frac12 dt} 
\approx U_i(I_n + \frac12 (I_n - \frac{n}{s} \sum_{j=1}^k U_j U_j^T))
\implies 
\d U_i \approx \frac{1}{2} U_i (I_n - \frac{n}{s} \sum_{j=1}^k U_j^T U_j).
\]
We arrive at Definition~\ref{d:dynamical} by combining the two steps and scaling appropriately. 
An important property in the definition is that
\[\tr(sI_m - m\sum_{j=1}^k U_j U_j^T)=\tr(sI_n - n\sum_{j=1}^k U_j^T U_j)=0,\] 
which leads to some nice formulas for the dynamical system as described in Subsection~\ref{ss:outline} and Subsection~\ref{ss:formulas} for the analysis to go through.

The following definition is the key parameter in our analysis.
We can think of $\eps$ as an $\ell_{\infty}$-error bound of the input,
and the following $\Delta$ as an $\ell_2$-error bound.
Indeed, we will work with $\Delta$ as the error measure in all our proofs, and only use the relation that $\Delta \leq 2m^2\eps^2$ as shown in Lemma~\ref{l:Delta-eps} to draw the conclusion.
We note that previous work in operator scaling~\cite{gurvits,operator} also uses a very similar quantity as the error measure, but in our definition the size is involved and the normalization is slightly different.

\begin{definition}[$\Delta$ of an operator] \label{d:Delta}
We measure the progress of our dynamical system by the following quantity:
\[
\Delta(\U) = \frac{1}{m} \tr[(sI_m - m\sum_{i=1}^k U_i U_i^T)^2] + \frac{1}{n} \tr[(sI_n - n\sum_{i=1}^k U_i^T U_i)^2],
\]
which is zero if and only if~~$\U$ is doubly balanced. 
The two conditions in~(\ref{e:conditions}) are scaled appropriately so that $m$ and $n$ are symmetric.
We use the shorthand $\Delta^{(t)}$ for $\Delta(\U^{(t)})$ when $\U$ is clear from the context.
\end{definition}

Another motivation for our dynamical system is that it moves in the direction that minimizes $\Delta(\U)$.
We do not include a proof of this claim as it will not be used elsewhere.

\subsection{Total Movement of Dynamical System} \label{ss:outline}

We prove our main technical result of this section in this subsection.
Assuming some formulas for the dynamical system in Subsection~\ref{ss:formulas} and a lower bound on the operator capacity in Subsection~\ref{ss:capLB},
we will prove that given $\U^{(0)}$ that is $\eps$-nearly stochastic, the dynamical system will produce $\U^{(\infty)}$ that is doubly balanced with $\dist(\U^{(0)},\U^{(\infty)}) \leq O(m^2n \eps)$.

First, as stated in Subsection~\ref{ss:discrete}, we will bound the squared distance of $\U$ to the set of doubly balanced operators by the total movement in our dynamical system.

\begin{lemma} \label{l:triangle}
Let $\U^{(0)}$ be the input operator to the dynamical system and $\U^{(T)}$ be the operator in the dynamical system at time $T$, we have
\[
\distance(\U^{(T)},\U^{(0)}) \leq \int_{0}^T \sqrt{\sum_{i=1}^k \norm{\d U_i^{(t)}}_F^2} dt.
\]
\end{lemma}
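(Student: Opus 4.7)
The statement is just the arc-length inequality in a Euclidean space: the endpoint-to-endpoint distance is at most the length of the curve. Concretely, the squared distance $\dist(\U,\V)=\sum_{i=1}^k \|U_i-V_i\|_F^2$ is the square of the Euclidean norm on the product space $\R^{m\times n}\times\cdots\times\R^{m\times n}$ (with the Frobenius inner product on each factor), so $\U^{(t)}$ traces out a smooth curve in this Hilbert space and the integrand on the right-hand side is exactly the speed $\|\tfrac{d}{dt}\U^{(t)}\|$.

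The plan is to apply the fundamental theorem of calculus componentwise, followed by Minkowski's integral inequality (the triangle inequality for the $L^2$-type norm under integration). First, for each $i$, write
\[
U_i^{(T)}-U_i^{(0)} = \int_0^T \tfrac{d}{dt} U_i^{(t)}\, dt,
\]
valid because the differential equation in Definition~\ref{d:dynamical} makes $U_i^{(t)}$ continuously differentiable in $t$. Next, combine these identities over $i$ and take Frobenius norms:
\[
\sqrt{\sum_{i=1}^k \|U_i^{(T)}-U_i^{(0)}\|_F^2}
=\Bigl\| \int_0^T \tfrac{d}{dt}\U^{(t)}\, dt \Bigr\|,
\]
where $\|\cdot\|$ denotes the product Frobenius norm. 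The generalized Minkowski (integral triangle) inequality then gives
\[
\Bigl\|\int_0^T \tfrac{d}{dt}\U^{(t)}\, dt\Bigr\| \;\leq\; \int_0^T \Bigl\|\tfrac{d}{dt}\U^{(t)}\Bigr\|\, dt
= \int_0^T \sqrt{\sum_{i=1}^k \|\tfrac{d}{dt} U_i^{(t)}\|_F^2}\, dt,
\]
which is exactly the claimed bound after recalling $\distance(\U^{(T)},\U^{(0)})=\sqrt{\dist(\U^{(T)},\U^{(0)})}$.

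There is essentially no obstacle: the only mild point is justifying the integral triangle inequality in our Hilbert space, which follows either by a direct Cauchy-Schwarz argument (for any fixed unit vector $W$ in the product space, $\inner{\int_0^T \dot\U\,dt}{W}=\int_0^T \inner{\dot\U}{W}\,dt \leq \int_0^T \|\dot\U\|\,dt$, and then taking the supremum over unit $W$) or by discretizing via Riemann sums and applying the ordinary triangle inequality for the Euclidean norm on the product space, then passing to the limit. Either approach is routine, so the entire lemma reduces to noting the geometric picture and invoking these standard facts.
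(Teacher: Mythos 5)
Your proof is correct and is essentially the same as the paper's: the paper concatenates $U_1,\ldots,U_k$ into a single $m\times nk$ matrix and applies the fundamental theorem of calculus together with the integral triangle (Minkowski) inequality for the Frobenius norm, which is exactly the product-Hilbert-space argument you describe.
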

\begin{proof}
The proof follows from the triangle inequality of the Frobenius norm.
Given $\U = \{U_1, \ldots, U_k\}$, let $U = [U_1 | U_2 | \ldots | U_k]$ be the $m \times nk$ matrix which is the concatenation of the $k$ matrices.
Then,
\[
\distance(\U^{(T)},\U^{(0)})
= \sqrt{\sum_{i=1}^k \norm{U_i^{(T)}-U_i^{(0)}}_F^2}
= \norm{U^{(T)} - U^{(0)}}_F
= \norm{\int_{0}^T \d U^{(t)} dt}_F
\]
\[
\leq \int_0^T \norm{\d U^{(t)}}_F dt
= \int_0^T \sqrt{ \sum_{i=1}^k \norm{\d U_i^{(t)}}_F^2 } dt.
\]
\end{proof}

To analyze the convergence of the dynamical system, the operator capacity will play an important role.
The capacity of a square operator is defined in~\cite{gurvits} and the capacity of a rectangular operator is defined in~\cite{Brascamp-Lieb}.
Our definition of operator capacity is basically the same.
It is normalized in such a way that the capacity is upper bounded by one when the size of the operator is equal to one.

\begin{definition}[operator capacity] \label{d:capacity}
Given an operator $\U = \{U_1, \ldots, U_k\}$ where each $U_i \in \R^{m \times n}$, we define the capacity of $\U$ as
\[
\capa(\U) = \inf_{X \succeq 0} \frac{m \det(\sum_{i=1}^k U_i X U_i^T)^{1/m}}{\det(X)^{1/n}}.
\]
\end{definition}

As in Subsection~\ref{ss:capacity}, we will show an upper bound and a lower bound on operator capacity, and also keep track of the change of capacity in order to argue about the decrease of $\Delta$ over time.
First, we see that the capacity is always upper bounded by the size,
whose proof is basically the same to that of~(\ref{e:capUB}).

\begin{lemma} \label{l:capUB}
Given an operator $\U=\{U_1,\ldots,U_k\}$ where each $U_i \in \R^{m \times n}$, we have
\[\capa(\U) \leq s({\U}).\]
\end{lemma}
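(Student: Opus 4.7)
The plan is to choose a specific test matrix in the infimum defining $\capa(\U)$ and then apply the AM-GM inequality, mirroring the argument sketched in equation~(\ref{e:capUB}).

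Specifically, I would evaluate the infimum at $X = I_n$. Since $\det(I_n)^{1/n} = 1$, we immediately obtain
\[
\capa(\U) \;\leq\; m \det\Bigl(\sum_{i=1}^k U_i U_i^T\Bigr)^{1/m}.
\]
The matrix $M := \sum_{i=1}^k U_i U_i^T$ is positive semidefinite, so its $m$ eigenvalues $\lambda_1,\ldots,\lambda_m$ are nonnegative. By the AM-GM inequality,
\[
\det(M)^{1/m} \;=\; \Bigl(\prod_{j=1}^m \lambda_j\Bigr)^{1/m} \;\leq\; \frac{1}{m}\sum_{j=1}^m \lambda_j \;=\; \frac{\tr(M)}{m}.
\]
Combining the two inequalities and recalling from Definition~\ref{d:size} that $\tr(M) = \tr(\sum_i U_i U_i^T) = s(\U)$ yields $\capa(\U) \leq s(\U)$, as desired.

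There is no real obstacle here: the only subtlety is to verify that $X = I_n$ is a legitimate choice in the infimum (it is positive semidefinite and has $\det(X)^{1/n}$ well-defined and equal to $1$), and to note that if $M$ is singular then $\det(M)^{1/m} = 0$ and the bound holds trivially. The normalization in Definition~\ref{d:capacity} — the factor of $m$ and the $1/m$-th power on the determinant — is exactly what makes AM-GM produce the tight bound $s(\U)$ rather than some other function of the size.
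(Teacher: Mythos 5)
Your proof is correct and is essentially identical to the paper's own argument: both plug $X = I_n$ into the infimum defining $\capa(\U)$ and then apply the AM-GM inequality to bound $\det(\sum_i U_i U_i^T)^{1/m}$ by $\tr(\sum_i U_i U_i^T)/m = s(\U)/m$. The extra remarks about $I_n$ being a legitimate test matrix and the singular case are fine but unnecessary.
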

\begin{proof}
The proof follows from the AM-GM inequality:
\begin{eqnarray*}
\capa(\U) = \inf_{X \succeq 0} \frac{m \det(\sum_{i=1}^k U_i X U_i^T)^{1/m}}{\det(X)^{1/n}}
\leq m \det(\sum_{i=1}^k U_i U_i^T)^{\frac{1}{m}}
\leq m \Big(\tr(\sum_{i=1}^k U_i U_i^T)/m\Big)^{m \times \frac{1}{m}} = s({\U}).
\end{eqnarray*}
\end{proof}

One can show that $\capa(\U) = s(\U)$ when $\U$ is doubly balanced, but we do not need this fact in our proof.
In the following, we state the facts that we need for our proof.

{\bf Proof Steps:}
It turns out that there are very nice formulas of our dynamical system which can be used to bound the right hand side of Lemma~\ref{l:triangle}.
We will prove 
\begin{enumerate}[(i)]
\item in Lemma~\ref{l:s'} that 
\[\d s^{(t)} = -2\Delta^{(t)},\]
which in particular implies that the size of the operator is decreasing over time;

\item in Lemma~\ref{l:Delta'} that 
\[\d \Delta^{(t)} = -4(\sum_{i=1}^k \norm{\d U_i}_F^2),\]
which in particular implies that $\Delta^{(t)}$ is decreasing over time;

\item in Lemma~\ref{l:cap-unchanged} that $\capa(\U^{(t)})$ is unchanged over time;
\item and in Theorem~\ref{t:cap_lower_bound} that $\capa(\U) \geq s(\U) - mn\sqrt{\Delta(\U)}$, which implies that
\[
s^{(T)} \geq \capa(\U^{(T)}) = \capa(\U^{(t)}) \geq s^{(t)} - mn\sqrt{\Delta^{(t)}} \quad {\rm for~any~} T \geq t \geq 0,
\]
where the first inequality is by Lemma~\ref{l:capUB}.
\end{enumerate}

Using these lemmas, we can analyze the total movement of the dynamical system before $\Delta^{(T)} \leq \Delta^{(0)}/2$.
As in Subsection~\ref{ss:capacity},
we will show that the operator capacity provides an indirect way to argue that $\Delta^{(t)}$ converges to zero in the following proposition.
One can view the following proposition as reducing the total movement bound to the capacity lower bound, and we will make this more explicit in Section~\ref{s:smoothed} (see Proposition~\ref{p:movement-size}).

\begin{proposition} \label{p:half}
For $t \geq 0$ with $\Delta^{(t)} > 0$,
let $T$ be the first time that $\Delta^{(T)} = \Delta^{(t)}/2$.
Then
\[
T \leq t+\frac{mn}{\sqrt{\Delta^{(t)}}} \quad {\rm and} \quad
\dist(\U^{(T)}, \U^{(t)}) \leq 2mn\sqrt{\Delta^{(t)}}.
\]
\end{proposition}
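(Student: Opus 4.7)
The plan is to combine the four ingredients (i)--(iv) in a fairly direct way, using two separate estimates on the interval $[t,T]$. First I would exploit the monotonicity provided by (ii): since $\frac{d}{dt}\Delta^{(\tau)} \le 0$, the hypothesis that $T$ is the \emph{first} time $\Delta$ is halved means $\Delta^{(\tau)} \ge \Delta^{(t)}/2$ for every $\tau \in [t,T]$. Plugging this into (i), integration gives
\[
s^{(t)} - s^{(T)} \;=\; 2\int_t^T \Delta^{(\tau)}\, d\tau \;\ge\; (T-t)\,\Delta^{(t)}.
\]
On the other hand, capacity invariance (iii) combined with Lemma~\ref{l:capUB} and the lower bound (iv) yields
\[
s^{(T)} \;\ge\; \capa(\U^{(T)}) \;=\; \capa(\U^{(t)}) \;\ge\; s^{(t)} - mn\sqrt{\Delta^{(t)}},
\]
so $s^{(t)} - s^{(T)} \le mn\sqrt{\Delta^{(t)}}$. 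Comparing the two estimates gives $(T-t)\Delta^{(t)} \le mn\sqrt{\Delta^{(t)}}$, i.e. $T - t \le mn/\sqrt{\Delta^{(t)}}$, which is the first bound.

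For the squared-distance bound, I would start from Lemma~\ref{l:triangle} and apply the Cauchy--Schwarz inequality in the $\tau$-variable:
\[
\dist(\U^{(T)},\U^{(t)}) \;\le\; \bigg(\int_t^T \sqrt{\textstyle\sum_i \|\tfrac{d}{d\tau} U_i^{(\tau)}\|_F^2}\, d\tau\bigg)^2 \;\le\; (T-t) \int_t^T \sum_i \|\tfrac{d}{d\tau} U_i^{(\tau)}\|_F^2\, d\tau.
\]
By (ii), the inner integral equals $\tfrac14(\Delta^{(t)} - \Delta^{(T)}) = \Delta^{(t)}/8$, and using the time bound already established, the product is at most $(mn/\sqrt{\Delta^{(t)}})\cdot \Delta^{(t)}/8 = mn\sqrt{\Delta^{(t)}}/8 \le 2mn\sqrt{\Delta^{(t)}}$.

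There is no real obstacle here, as all the heavy lifting has been offloaded to the four stated facts; the only point that needs some care is making sure the monotonicity of $\Delta$ on $[t,T]$ is used correctly so that the lower bound $\Delta^{(\tau)} \ge \Delta^{(t)}/2$ is valid throughout the interval, which is where the choice of $T$ as the \emph{first} halving time is essential. The genuine work is in proving items (i)--(iv), particularly the capacity lower bound of Theorem~\ref{t:cap_lower_bound}; once those are in hand, this proposition is essentially a two-line consequence.
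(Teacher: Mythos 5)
Your proposal is correct and follows essentially the same route as the paper: use monotonicity of $\Delta$ from (ii) together with (i) to lower bound $s^{(t)}-s^{(T)}$ by $(T-t)\Delta^{(t)}$, use (iii)+(iv)+Lemma~\ref{l:capUB} to upper bound it, compare to get the time bound, and then bound the squared distance via Lemma~\ref{l:triangle}, Cauchy--Schwarz, and (ii). The only differences are cosmetic (direct estimate rather than proof by contradiction for the time bound; squaring before applying Cauchy--Schwarz), and in fact your bookkeeping yields the tighter constant $mn\sqrt{\Delta^{(t)}}/8$, which still implies the stated bound.
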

\begin{proof}
The assumption implies that $\Delta^{(\tau)} > \Delta^{(t)}/2$ for $t \leq \tau < T$,
and thus it follows from point~(i) that
\[\frac{d}{d\tau} s^{(\tau)} = -2\Delta^{(\tau)} < -\Delta^{(t)} \quad {\rm for~} t \leq \tau < T.\]
The capacity lower bound thus allows us to conclude that 
\[T \leq t + \frac{mn}{\sqrt{\Delta^{(t)}}},\]
as otherwise 
\[s^{(T)} = s^{(t)} + \int_{t}^T \frac{d}{d\tau} s^{(\tau)} d\tau < s^{(t)} - (T-t)\Delta^{(t)} < s^{(t)} - mn\sqrt{\Delta^{(t)}}, 
\]
contradicting point~(iv).
Therefore,
\begin{eqnarray*}
\distance(\U^{(T)},\U^{(t)}) 
& \leq & \int_{t}^T \sqrt{\sum_{i=1}^k \norm{\frac{d}{d\tau} U_i^{(\tau)}}_F^2} d\tau
\\
& = & 2\int_t^T \sqrt{-\frac{d}{d\tau} \Delta^{(\tau)}} d\tau
\\
& \leq & 2 \sqrt{ \int_t^T (-\frac{d}{d\tau} \Delta^{(\tau)}) d\tau \cdot \int_t^T 1 d\tau}\\
& = & 2 \sqrt{-(\Delta^{(T)}-\Delta^{(t)})(T-t)}\\ 
& \leq & \sqrt{2mn \sqrt{\Delta^{(t)}}},
\end{eqnarray*}
where the first inequality is by Lemma~\ref{l:triangle},
the first equality is by point~(ii),
the second inequality is by Cauchy-Schwarz,
and the last inequality is by the bound on $T$ above and the assumption that $\Delta^{(T)} = \Delta^{(t)}/2$.
Squaring both sides proves the lemma.
\end{proof}

Using this argument repeatedly will give us a decreasing geometric sequence and we can prove the main technical result in this section.

\begin{theorem} \label{t:total-movement}
Given any operator $\U^{(0)} = \{U_1^{(0)}, \ldots, U_k^{(0)}\}$ where $U_i \in \R^{m \times n}$ for $1 \leq i \leq k$,
the dynamical system in Definition~\ref{d:dynamical} will move $\U^{(0)}$ to $\U^{(\infty)}$ such that
\[
\Delta(\U^{(\infty)}) = 0 \quad {\rm and} \quad
\dist(\U^{(\infty)}, \U^{(0)}) \leq O(mn \sqrt{\Delta^{(0)}}).
\]
\end{theorem}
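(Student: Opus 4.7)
\textbf{Proof plan for Theorem~\ref{t:total-movement}.} The strategy is to iterate Proposition~\ref{p:half} to form a geometric cascade of halvings of $\Delta$, and to sum the resulting distance contributions via the triangle inequality. Define inductively $t_0 := 0$ and, provided $\Delta^{(t_j)} > 0$, let $t_{j+1}$ be the first time $t > t_j$ at which $\Delta^{(t)} = \Delta^{(t_j)}/2$; by Proposition~\ref{p:half} this time exists and $t_{j+1} - t_j \leq mn/\sqrt{\Delta^{(t_j)}}$. (If $\Delta^{(t_j)} = 0$ for some finite $j$ we simply truncate the sequence and the argument simplifies.) In particular $\Delta^{(t_j)} = \Delta^{(0)}/2^j$, which decays geometrically.

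Proposition~\ref{p:half} gives the per-segment squared distance bound $\dist(\U^{(t_{j+1})}, \U^{(t_j)}) \leq 2mn\sqrt{\Delta^{(t_j)}}$, hence
\[
\distance(\U^{(t_{j+1})}, \U^{(t_j)}) \leq \sqrt{2mn} \cdot (\Delta^{(0)})^{1/4} \cdot 2^{-j/4}.
\]
Since $\distance$ is a genuine metric (Frobenius norm on the concatenated matrix, as in the proof of Lemma~\ref{l:triangle}), the triangle inequality together with the convergent geometric series $\sum_{j \geq 0} 2^{-j/4} = O(1)$ yields
\[
\distance(\U^{(t_N)}, \U^{(0)}) \leq \sum_{j=0}^{N-1} \distance(\U^{(t_{j+1})}, \U^{(t_j)}) \leq O\bigl(\sqrt{mn}\,(\Delta^{(0)})^{1/4}\bigr)
\]
uniformly in $N$. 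This bound is finite and independent of $N$, so the partial trajectories $\{\U^{(t_N)}\}$ form a Cauchy sequence in Frobenius norm and hence converge to a limit $\U^{(\infty)}$. Squaring gives the desired bound $\dist(\U^{(\infty)}, \U^{(0)}) \leq O(mn\sqrt{\Delta^{(0)}})$.

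It remains to verify $\Delta(\U^{(\infty)}) = 0$. By point~(ii) of the list in Subsection~\ref{ss:outline}, $\Delta^{(t)}$ is non-increasing in $t$, so $\Delta^{(t)} \leq \Delta^{(t_j)} = \Delta^{(0)}/2^j$ for all $t \geq t_j$; letting $j \to \infty$ gives $\Delta^{(t)} \to 0$. Since $\Delta$ is a continuous function of $\U$ and $\U^{(t)} \to \U^{(\infty)}$ (in Frobenius norm), continuity forces $\Delta(\U^{(\infty)}) = 0$, i.e.\ $\U^{(\infty)}$ is doubly balanced.

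The only mildly delicate point is that the cumulative time $\sum_j (t_{j+1} - t_j) \leq \sum_j mn \cdot 2^{j/2}/\sqrt{\Delta^{(0)}}$ diverges, so the limit $\U^{(\infty)}$ is attained only in the $t \to \infty$ sense and not at a finite parameter; however, this causes no difficulty because the argument above bounds \emph{total movement} rather than total time, and total movement is precisely what we need. The main conceptual work has already been done in the four ingredients (i)--(iv) listed in Subsection~\ref{ss:outline} and used in Proposition~\ref{p:half}: the size identity $\d s = -2\Delta$, the energy identity $\d \Delta = -4\sum_i \|\d U_i\|_F^2$, the invariance of capacity under the flow, and the capacity lower bound $\capa(\U) \geq s(\U) - mn\sqrt{\Delta(\U)}$. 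Given these, Theorem~\ref{t:total-movement} follows from the geometric-cascade telescoping described above.
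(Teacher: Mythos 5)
Your proof is correct and follows essentially the same route as the paper: iterate Proposition~\ref{p:half} to define halving times, bound each segment's distance, and sum the resulting geometric series via the triangle inequality. The only difference is that you spell out the Cauchy-sequence convergence and continuity argument that the paper leaves implicit; this is a harmless (and arguably welcome) elaboration, not a different approach.
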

\begin{proof}
If $\Delta^{(0)}=0$, then $\U^{(0)}$ is already doubly balanced and we are done.
Otherwise, for $j \geq 0$,
let $T_j$ be the first time when $\Delta(\U^{(T_j)}) = 2^{-j} \Delta(\U^{(0)})$.
By Proposition~\ref{p:half},
\[
T_j 
\leq \sum_{l=1}^j \frac{mn}{\sqrt{\Delta^{(T_{l-1})}}} 
= \sum_{l=1}^j \frac{mn}{\sqrt{2^{-l} \Delta^{(0)}}} 
= O(\frac{2^{j/2} mn}{\sqrt{\Delta^{(0)}}}),
\]
and so $\Delta(\U^{(t)}) \to 0$ as $t \to \infty$ as $\Delta^{(t)}$ is decreasing over time. 
Furthermore, by Proposition~\ref{p:half},
\[
\distance(\U^{(T_j)},\U^{(0)})
\leq \sum_{l=1}^j \distance(\U^{(T_l)},\U^{(T_{l-1})})
\leq \sum_{l=1}^j \sqrt{2mn \sqrt{\Delta^{(T_{l-1})}}}
\leq O(\sqrt{mn\sqrt{\Delta^{(0)}}}),
\]
where the last inequality follows as it is a sum of a decreasing geometric sequence.
Squaring both sides gives that
\[
\dist(\U^{(\infty)},\U^{(0)}) \leq O(mn \sqrt{\Delta^{(0)}}).
\]
\end{proof}

In Subsection~\ref{ss:formulas}, we will prove the first three items in the proof steps.
In Subsection~\ref{ss:capLB}, we will prove the operator capacity lower bound, which is similar to the proof of~(\ref{e:operator-cap}) in~\cite{operator}, by reducing to the corresponding matrix capacity lower bound.

\subsection{Formulas for the Dynamical System} \label{ss:formulas}

In this subsection, we will prove the first three items in the proof steps in Subsection~\ref{ss:outline}.
Recall from Definition~\ref{d:dynamical} that the dynamical system is
\[
\d U_i := (sI_m - m\sum_{j=1}^k U_j {U_j}^T) U_i + U_i(sI_n - n\sum_{j=1}^k {U_j}^T U_j) \quad {\rm for~} 1 \leq i \leq k,
\]
from Definition~\ref{d:size} that the size is
\[s := \tr(\sum_{i=1}^k U_i U_i^T) = \tr(\sum_{i=1}^k U_i^T U_i),\]
and from Definition~\ref{d:Delta} that 
\[
\Delta = \frac{1}{m} \tr[(sI_m - m\sum_{i=1}^k U_i U_i^T)^2] + \frac{1}{n} \tr[(sI_n - n\sum_{i=1}^k U_i U_i^T)^2].
\]
All the quantities change over time $t$, but we drop the superscript for ease of notation.
We will define some shorthands for the following proofs.
\begin{definition}[operator shorthand] \label{d:shorthand}
Define 
\[B_m = \sum_{i=1}^k U_i U_i^T, \quad C_m = sI_m - mB_m, 
\quad {\rm and} \quad
B_n = \sum_{i=1}^k U_i^T U_i, \quad C_n = sI_n - nB_n.
.\]
Note that $\tr(B_m) = \tr(B_n) = s$ and $\tr(C_m) = \tr(C_n) = 0$,
and $B_m,C_m \in \R^{m \times m}$ and $B_n,C_n \in \R^{n \times n}$ are symmetric.
Also,
\[
\d U_i = C_m U_i + U_i C_n \quad {\rm and} \quad
\Delta = \frac{1}{m} \tr(C_m^2) + \frac{1}{n} \tr(C_n^2).
\]
\end{definition}

\subsubsection*{Formula for the change of $s$}

We are ready to prove point (i) in Subsection~\ref{ss:outline}.

\begin{lemma} \label{l:s'}
\[\d s = -2\Delta.\]
\end{lemma}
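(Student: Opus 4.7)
The plan is to compute $\frac{d}{dt} s$ directly from the definitions and then recognize the result as $-2\Delta$. Writing $s = \tr(\sum_i U_i U_i^T)$ and using linearity of the trace together with the product rule, I would first expand
\[
\d s = \sum_{i=1}^k \tr\bigl((\d U_i) U_i^T + U_i (\d U_i)^T\bigr) = 2 \sum_{i=1}^k \tr\bigl((\d U_i) U_i^T\bigr),
\]
where the second equality uses that $\tr(A^T) = \tr(A)$ and that $U_i(\d U_i)^T$ is the transpose of $(\d U_i) U_i^T$.

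Next, I would substitute the dynamical equation $\d U_i = C_m U_i + U_i C_n$ and split into two sums. By the cyclic property of the trace,
\[
\sum_{i=1}^k \tr(C_m U_i U_i^T) = \tr(C_m B_m) \quad \text{and} \quad \sum_{i=1}^k \tr(U_i C_n U_i^T) = \tr(C_n B_n),
\]
using the definitions $B_m = \sum_i U_i U_i^T$ and $B_n = \sum_i U_i^T U_i$ from Definition~\ref{d:shorthand}. So $\d s = 2\bigl(\tr(C_m B_m) + \tr(C_n B_n)\bigr)$.

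The final step is the key algebraic simplification that makes the identity clean. From $C_m = s I_m - m B_m$, I rewrite $B_m = (s I_m - C_m)/m$, so
\[
\tr(C_m B_m) = \tfrac{1}{m}\bigl(s \tr(C_m) - \tr(C_m^2)\bigr) = -\tfrac{1}{m}\tr(C_m^2),
\]
where the traceless property $\tr(C_m) = 0$ (recorded in Definition~\ref{d:shorthand}) is exactly what kills the first term; the analogous manipulation gives $\tr(C_n B_n) = -\tfrac{1}{n}\tr(C_n^2)$. Adding these and matching against $\Delta = \tfrac{1}{m}\tr(C_m^2) + \tfrac{1}{n}\tr(C_n^2)$ yields $\d s = -2\Delta$.

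I do not anticipate any real obstacle: the only subtlety is recognizing that the normalization chosen in Definition~\ref{d:dynamical} (subtracting $m\sum U_j U_j^T$ and $n\sum U_j^T U_j$ after multiplying by $s$) is precisely what ensures $\tr(C_m) = \tr(C_n) = 0$, which is what makes the cross term vanish and produces the exact coefficient $-2$ rather than something messier. This is the hidden reason the formula is so clean, and it is the one step I would flag explicitly in the write-up.
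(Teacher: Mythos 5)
Your proposal is correct and follows essentially the same route as the paper: both compute $\d s = 2\bigl(\tr(C_m B_m) + \tr(C_n B_n)\bigr)$ by substituting the dynamical equation and using cyclicity of trace, and both then invoke $\tr(C_m) = \tr(C_n) = 0$ to identify the result with $-2\Delta$. Your observation that the tracelessness of $C_m, C_n$ is what makes the identity clean is exactly the right thing to flag.
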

\begin{proof}
From the definition of the size and the shorthands,
\[
\d s = \d \tr(B_m) = \tr(\d B_m).
\]
We compute
\begin{eqnarray*}
\d B_m
& = &
\sum_{i=1}^k \big( U_i (\d U_i^T) + (\d U_i) U_i^T \big)
\\
& = &
\sum_{i=1}^k \Big( U_i(C_m U_i + U_i C_n)^T  + (C_m U_i + U_i C_n)U_i^T \Big)
\\
& = &
\sum_{i=1}^k \Big( U_i U_i^T C_m + U_i C_n U_i^T + C_m U_i U_i^T + U_i C_n U_i^T \Big)\\
& = & 
B_m C_m + C_m B_m + 2\sum_{i=1}^k U_i C_n U_i^T.
\end{eqnarray*}
Therefore,
\begin{eqnarray*}
\d s & = & \tr (\d B_m)
= \tr ( B_m C_m + C_m B_m + 2\sum_{i=1}^k U_i C_n U_i^T )
\\
& = & 2\tr (C_m B_m) + 2\sum_{i=1}^k \tr(C_n U_i^T U_i)
\\
& = & 2\tr(C_m B_m) + 2\tr(C_n B_n).
\end{eqnarray*}
On the other hand,
\begin{eqnarray*}
\Delta & = & \frac{1}{m} \tr(C_m^2) + \frac{1}{n} \tr(C_n^2)\\
& = & \frac{1}{m} \tr(C_m (sI_m - mB_m)) + \frac{1}{n} \tr(C_n (sI_n - nB_n))\\
& = & - \tr(C_m B_m) - \tr(C_n B_n),
\end{eqnarray*}
where the last equality holds because $\tr(C_m) = \tr(C_n) = 0$.
\end{proof}

\subsubsection*{Formula for the change of $\Delta$}

We proceed to point~(ii) in Subsection~\ref{ss:outline}.

\begin{lemma} \label{l:Delta'}
\[\d \Delta = -4(\sum_{i=1}^k \norm{\d U_i}_F^2).\]
\end{lemma}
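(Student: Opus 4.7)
The plan is to differentiate $\Delta = \tfrac{1}{m}\tr(C_m^2) + \tfrac{1}{n}\tr(C_n^2)$ directly and show that, after using the dynamical equation $\d U_i = C_m U_i + U_i C_n$ and the identities $\tr(C_m)=\tr(C_n)=0$, the result collapses to exactly $-4\sum_i \norm{\d U_i}_F^2$.

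First I would write
\[
\d\Delta = \frac{2}{m}\tr\bigl(C_m\,\d C_m\bigr) + \frac{2}{n}\tr\bigl(C_n\,\d C_n\bigr).
\]
Since $C_m = sI_m - mB_m$ and $\tr(C_m)=0$, the $\d s\cdot I_m$ contribution vanishes after taking the trace against $C_m$, leaving $\tr(C_m\,\d C_m) = -m\,\tr(C_m\,\d B_m)$; similarly on the $n$ side. So I reduce the computation to evaluating $\tr(C_m\,\d B_m)$ and $\tr(C_n\,\d B_n)$. The formula for $\d B_m$ was already derived inside the proof of Lemma~\ref{l:s'}, namely $\d B_m = B_mC_m + C_mB_m + 2\sum_i U_i C_n U_i^T$, and a symmetric computation gives $\d B_n = B_nC_n + C_nB_n + 2\sum_i U_i^T C_m U_i$. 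Substituting and using cyclicity of trace, together with the symmetry of $B_m,C_m,B_n,C_n$, yields
\[
\d\Delta \;=\; -4\,\tr(C_m^2 B_m) \;-\; 4\,\tr(C_n^2 B_n) \;-\; 8\sum_{i=1}^k \tr\bigl(U_i^T C_m U_i C_n\bigr).
\]

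Next I would expand the right-hand side. From $\d U_i = C_m U_i + U_i C_n$ and $C_m = C_m^T$, $C_n = C_n^T$,
\[
\norm{\d U_i}_F^2 = \tr\bigl(C_m^2 U_i U_i^T\bigr) + 2\,\tr\bigl(U_i^T C_m U_i C_n\bigr) + \tr\bigl(C_n^2 U_i^T U_i\bigr).
\]
Summing over $i$ and using $B_m=\sum_i U_iU_i^T$, $B_n = \sum_i U_i^T U_i$, I get
\[
\sum_{i=1}^k \norm{\d U_i}_F^2 = \tr(C_m^2 B_m) + \tr(C_n^2 B_n) + 2\sum_{i=1}^k \tr\bigl(U_i^T C_m U_i C_n\bigr),
\]
which is exactly $-\tfrac14 \d\Delta$. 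The two expressions match and the lemma follows.

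The calculation is entirely routine; the only mild obstacle is keeping the cross term $\sum_i \tr(U_i^T C_m U_i C_n)$ straight in both computations, since it arises with coefficient $2$ from each of $\tr(C_m\,\d B_m)$ and $\tr(C_n\,\d B_n)$ (giving a total factor of $8$ in $\d\Delta$) and with coefficient $2$ in $\sum_i\norm{\d U_i}_F^2$ (so contributing $8$ after multiplying by $-4$). Once one verifies that these coefficients agree and that the "diagonal" pieces $\tr(C_m^2 B_m)$ and $\tr(C_n^2 B_n)$ line up with the factor $-4$, the identity is immediate. No auxiliary results beyond Lemma~\ref{l:s'} and the symmetry $C_m^T=C_m$, $C_n^T=C_n$ are needed.
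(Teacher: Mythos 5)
Your proof is correct and follows essentially the same route as the paper's: differentiate $\Delta = \frac1m\tr(C_m^2)+\frac1n\tr(C_n^2)$, use $\tr(C_m)=\tr(C_n)=0$ to drop the $\d s$ contributions, substitute the expression for $\d B_m$ (and its analogue $\d B_n$) already computed in Lemma~\ref{l:s'}, and then match the three resulting terms against the expansion of $\sum_i\norm{\d U_i}_F^2$. The only cosmetic difference is that you write the cross term as $\tr(U_i^T C_m U_i C_n)$ whereas the paper uses the cyclically equivalent $\tr(C_m U_i C_n U_i^T)$; the bookkeeping of coefficients you describe is exactly the verification the paper carries out.
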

\begin{proof}
From the definition of $\Delta$ and the shorthands,
\begin{eqnarray*}
\d \Delta & = &
\frac{1}{m} \tr[C_m (\d C_m) + (\d C_m) C_m] + \frac{1}{n} \tr[C_n (\d C_n) + (\d C_n) C_n]
\\
& = & \frac{2}{m} \tr[(\d C_m) C_m] + \frac{2}{n} \tr[(\d C_n) C_n]
\\
& = & \frac{2}{m} \tr[\d(sI_m - mB_m) \cdot C_m] + \frac{2}{n} \tr[\d(sI_n - nB_n) \cdot C_n]
\\
& = & -2\tr[(\d B_m) C_m] - 2\tr[(\d B_n) C_n],
\end{eqnarray*}
where the last equality uses that $\tr(C_m) = \tr(C_n) = 0$.
Using the calculation of $\d B_m$ (and similarly $\d B_n$) in Lemma~\ref{l:s'},
we continue and use the cyclic property of trace to get
\begin{eqnarray*}
\d \Delta & = &
-2 \tr[ C_m (B_m C_m + C_m B_m + 2 \sum_{i=1}^k U_i C_n U_i^T) ]
-2 \tr[ C_n (B_n C_n + C_n B_n + 2 \sum_{i=1}^k U_i^T C_m U_i) ]
\\
& = &
-4 \big (\tr [C_m^2 B_m] + \tr[C_n^2 B_n] + 2 \tr[\sum_{i=1}^k C_m U_i C_n U_i^T] \big).
\end{eqnarray*}
On the other hand,
\begin{eqnarray*}
\sum_{i=1}^k \norm{\d U_i}_F^2
& = & \sum_{i=1}^k \tr[(\d U_i^T) (\d U_i)]
= \sum_{i=1}^k \tr[ (C_m U_i + U_i C_n)^T (C_m U_i + U_i C_n)]
\\
& = & 
\sum_{i=1}^k \Big( 
\tr[U_i^T C_m^2 U_i] + \tr[C_n U_i^T C_m U_i] + \tr[U_i^T C_m U_i C_n] + \tr[C_n U_i^T U_i C_n]
\Big)\\
& = & \sum_{i=1}^k \tr[C_m^2 U_i U_i^T] + 2\sum_{i=1}^k \tr[C_m U_i C_n U_i^T]
+ \sum_{i=1}^k \tr[C_n^2 U_i^T U_i]
\\
& = &
\tr[C_m^2 B_m] + \tr[C_n^2 B_n] + 2\tr[\sum_{i=1}^k C_m U_i C_n U_i^T].
\end{eqnarray*}
\end{proof}

\subsubsection*{Capacity Unchanged}

We prove point~(iii) in Subsection~\ref{ss:outline} that the capacity is unchanged over time.
We will show that $\U^{(t)}$ is a scaling of $\U^{(0)}$ at each time $t \geq 0$ as defined in Definition~\ref{d:scaling}, 
and we will argue that the scaling matrices have determinant one and thus the operator capacity is unchanged.
There are different ways to prove this.
One nice way is to use product integration~\cite{Slavik}.
We will prove it directly using elementary calculus starting with the following lemma.

\begin{lemma} \label{l:jacobi}
Suppose $C^{(t)} \in \R^{m \times m}$ is Lipschitz continuous over $t$.
Then the unique solution to the differential equation $\d X^{(t)} = C^{(t)} X^{(t)}$ with initial condition $X^{(0)} = I_m$ satisfies for any $t \ge 0$,
\[
\det(X^{(t)})
= \exp\big(\int_0^t \tr(C^{(\tau)}) d\tau\big).
\]
Similarly, suppose $C^{(t)} \in \R^{n \times n}$ is Lipschitz continuous over $t$.
Then the unique solution to the differential equation $\d Y^{(t)} = Y^{(t)} C^{(t)}$ with initial condition $Y^{(0)} = I_n$ satisfies for any $t \geq 0$,
\[
\det(Y^{(t)})
= \exp\big(\int_0^t \tr(C^{(\tau)}) d\tau\big).
\]
\end{lemma}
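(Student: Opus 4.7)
The plan is to reduce the matrix ODE to a scalar ODE for $\det(X^{(t)})$ via Jacobi's formula, and then solve the scalar ODE explicitly. Existence and uniqueness of a $C^1$ solution $X^{(t)}$ on $[0,\infty)$ follows from the Picard--Lindel\"of theorem, since the right-hand side $C^{(t)}X$ is Lipschitz in $X$ uniformly on bounded time intervals whenever $C^{(t)}$ is Lipschitz in $t$. So we may assume $X^{(t)}$ is well-defined and differentiable, with $X^{(0)} = I_m$.

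The key step is Jacobi's formula for the derivative of the determinant, which states that for any differentiable matrix-valued function $X^{(t)}$,
\[
\frac{d}{dt}\det(X^{(t)}) \;=\; \tr\Big(\adj(X^{(t)})\cdot \frac{d}{dt}X^{(t)}\Big),
\]
where $\adj(\cdot)$ denotes the adjugate. Substituting $\frac{d}{dt}X^{(t)} = C^{(t)} X^{(t)}$ and using $\adj(X)\, X = \det(X)\, I_m$ together with the cyclic property of trace, we get
\[
\frac{d}{dt}\det(X^{(t)}) \;=\; \tr\big(\adj(X^{(t)})\, C^{(t)} X^{(t)}\big) \;=\; \tr\big(X^{(t)} \adj(X^{(t)})\, C^{(t)}\big) \;=\; \det(X^{(t)}) \cdot \tr(C^{(t)}).
\]
This is a linear scalar ODE in $\det(X^{(t)})$ with initial condition $\det(X^{(0)}) = \det(I_m) = 1$. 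Using the standard integrating factor, the unique solution is
\[
\det(X^{(t)}) \;=\; \exp\Big(\int_0^t \tr(C^{(\tau)})\, d\tau\Big),
\]
as desired. Note that this argument does not require us to verify a priori that $X^{(t)}$ is invertible, which is convenient; in fact the formula implies $\det(X^{(t)}) > 0$, so invertibility comes out for free.

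For the second claim, the ODE $\frac{d}{dt}Y^{(t)} = Y^{(t)}\, C^{(t)}$ is handled symmetrically: taking transposes, $Z^{(t)} := (Y^{(t)})^T$ satisfies $\frac{d}{dt}Z^{(t)} = (C^{(t)})^T Z^{(t)}$ with $Z^{(0)} = I_n$, so the first part applies to give $\det(Z^{(t)}) = \exp(\int_0^t \tr((C^{(\tau)})^T)\, d\tau)$. Since $\det(Y^{(t)}) = \det(Z^{(t)})$ and $\tr((C^{(\tau)})^T) = \tr(C^{(\tau)})$, the result follows. I do not anticipate any real obstacle here; the only care needed is in invoking Picard--Lindel\"of to get a bona fide solution on all of $[0,\infty)$, which is standard given the Lipschitz hypothesis on $C^{(t)}$.
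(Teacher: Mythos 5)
Your proof is correct and takes essentially the same route as the paper: Jacobi's formula reduces the matrix ODE to the scalar ODE $\frac{d}{dt}\det(X^{(t)}) = \tr(C^{(t)})\det(X^{(t)})$, which is then solved uniquely via the standard existence/uniqueness theory. The only cosmetic difference is that for the $Y$-equation the paper just says the argument is analogous, whereas you explicitly reduce it to the $X$-case by transposing; both are fine.
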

\begin{proof}
By the Jacobi's formula, we have
\[
\d \det(X^{(t)})
= \tr[(\d X^{(t)}) \adj(X^{(t)})]
= \tr[C^{(t)} X^{(t)} \adj(X^{(t)})]
= \tr[C^{(t)}] \det(X^{(t)}).
\]
Hence $z^{(t)} = \det(X^{(t)})$ satisfies $\d z^{(t)} = \tr(C^{(t)}) z^{(t)}$ and $z^{(0)} = 1$.
Since $\tr(C^{(t)})$ is Lipschitz continuous over $t$, there is a unique solution to the differential equation $\d z^{(t)} = \tr(C^{(t)}) z^{(t)}$ and $z^{(0)} = 1$ by standard theory (e.g. see Theorem~2.1 of~\cite{Bjo}).
On the other hand, $z^{(t)} = \exp(\int_0^t \tr(C^{(\tau)}) d\tau)$ also satisfies the differential equation as
\[
\d\exp(\int_0^t \tr(C^{(\tau)}) d\tau) 
=\Big( \d \int_0^t \tr(C^{(\tau)}) d\tau \Big) \Big( \exp(\int_0^t \tr(C^{(\tau)}) d\tau) \Big)
= \tr(C^{(t)}) \cdot \exp(\int_0^t \tr(C^{(\tau)}) d\tau).
\]
Therefore,
\[
\det(X^{(t)})
= z^{(t)}
= \exp(\int_0^t \tr(C^{(\tau)}) d\tau).
\]
The proof of the statement for $Y$ is analogous.
\end{proof}

We use the above lemma to prove that $U_i^{(t)}$ in our dynamical system in Definition~\ref{d:dynamical} is always an operator scaling as defined in Definition~\ref{d:scaling}.

\begin{lemma} \label{l:unitary-scaling}
Let $U_i^{(t)} \in \R^{m \times n}$ be the solution to our dynamical system in Definition~\ref{d:dynamical}:
\[\d U_i^{(t)} = C_m^{(t)} U_i^{(t)} + U_i^{(t)} C_n^{(t)}.\]
Then we can write $U_i^{(t)} = X^{(t)} U_i^{(0)} Y^{(t)}$ for $1 \leq i \leq k$ where $X^{(t)} \in \R^{m \times m}$ and $Y^{(t)} \in \R^{n \times n}$ are independent of $i$ and furthermore $\det(X^{(t)}) = \det(Y^{(t)}) = 1$ for all $t \ge 0$.
\end{lemma}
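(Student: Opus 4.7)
The plan is to exhibit $X^{(t)}$ and $Y^{(t)}$ explicitly as solutions of matrix ODEs driven by $C_m^{(t)}$ and $C_n^{(t)}$, and then invoke Lemma~\ref{l:jacobi} together with the traceless identities $\tr(C_m^{(t)}) = \tr(C_n^{(t)}) = 0$ noted in Definition~\ref{d:shorthand} to pin down the determinants.

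Concretely, first I would define $X^{(t)} \in \R^{m \times m}$ as the unique solution to
\[
\tfrac{d}{dt} X^{(t)} = C_m^{(t)} X^{(t)}, \qquad X^{(0)} = I_m,
\]
and $Y^{(t)} \in \R^{n \times n}$ as the unique solution to
\[
\tfrac{d}{dt} Y^{(t)} = Y^{(t)} C_n^{(t)}, \qquad Y^{(0)} = I_n.
\]
Because $C_m^{(t)}$ and $C_n^{(t)}$ are polynomial in the entries of the $U_i^{(t)}$, they are smooth (and hence locally Lipschitz) in $t$, so both ODEs have unique solutions by standard existence and uniqueness theory. Next I would verify that $\tilde U_i^{(t)} := X^{(t)} U_i^{(0)} Y^{(t)}$ satisfies the same ODE as $U_i^{(t)}$: by the product rule,
\[
\tfrac{d}{dt} \tilde U_i^{(t)}
= (\tfrac{d}{dt} X^{(t)}) U_i^{(0)} Y^{(t)} + X^{(t)} U_i^{(0)} (\tfrac{d}{dt} Y^{(t)})
= C_m^{(t)} \tilde U_i^{(t)} + \tilde U_i^{(t)} C_n^{(t)},
\]
and $\tilde U_i^{(0)} = U_i^{(0)}$. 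Uniqueness of the solution to Definition~\ref{d:dynamical} then forces $U_i^{(t)} = X^{(t)} U_i^{(0)} Y^{(t)}$ for all $t \geq 0$ and every $i$, as desired. Since $X^{(t)}$ and $Y^{(t)}$ are defined independently of the index $i$, this is precisely the scaling structure required by Definition~\ref{d:scaling}.

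For the determinant claim, I apply Lemma~\ref{l:jacobi} to the two ODEs above to get
\[
\det(X^{(t)}) = \exp\Big(\int_0^t \tr(C_m^{(\tau)}) \, d\tau\Big), \quad
\det(Y^{(t)}) = \exp\Big(\int_0^t \tr(C_n^{(\tau)}) \, d\tau\Big).
\]
By Definition~\ref{d:shorthand}, $\tr(C_m^{(\tau)}) = \tr(s^{(\tau)} I_m - m B_m^{(\tau)}) = m s^{(\tau)} - m \tr(B_m^{(\tau)}) = 0$, and analogously $\tr(C_n^{(\tau)}) = 0$, for every $\tau \geq 0$. Substituting into the two integrals yields $\det(X^{(t)}) = \det(Y^{(t)}) = e^0 = 1$, completing the proof.

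The only mildly delicate step is ensuring the hypotheses of Lemma~\ref{l:jacobi} and of ODE uniqueness are met: I expect this to be the main technical point, since a priori one must know that the solution $U_i^{(t)}$ to Definition~\ref{d:dynamical} exists for all $t \geq 0$ so that $C_m^{(t)}, C_n^{(t)}$ remain Lipschitz in $t$. Local existence is immediate from smoothness of the vector field, and global existence is guaranteed because $s^{(t)}$ is monotonically decreasing by Lemma~\ref{l:s'} (so the $U_i^{(t)}$ stay bounded in Frobenius norm), which prevents finite-time blowup.
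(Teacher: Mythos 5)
Your proposal is correct and follows essentially the same approach as the paper: construct $X^{(t)}$ and $Y^{(t)}$ as solutions to $\frac{d}{dt}X = C_m X$ and $\frac{d}{dt}Y = Y C_n$ with identity initial conditions, verify $X^{(t)} U_i^{(0)} Y^{(t)}$ satisfies the same ODE, conclude by uniqueness, and use Lemma~\ref{l:jacobi} with the traceless identities to get $\det = 1$. One minor improvement over the paper's write-up: your global-existence justification via Lemma~\ref{l:s'} (monotonicity of $s^{(t)}$ bounds $\norm{U_i^{(t)}}_F$ and prevents blowup) is cleaner than the paper's terse claim that ``each entry of $\d U_i^{(t)}$ is also bounded,'' which implicitly presupposes boundedness of $\U^{(t)}$ without spelling out why.
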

\begin{proof}
Since $\U^{(0)} = \{U_1^{(0)}, \ldots, U_k^{(0)}\}$ is $\eps$-nearly doubly balanced as stated in Definition~\ref{d:epsDS},
each entry of $U_i^{(0)}$ is bounded.
Since $\d U_i^{(t)}$ is a polynomial in the entries of $\U^{(t)}$,
each entry of $\d U_i^{(t)}$ is also bounded.
By standard theory of differential equations (e.g. see Theorem~2.1 of~\cite{Bjo}),
our dynamical system has a unique solution given the initial value $\U^{(0)}$.

We now construct a solution that satisfies the dynamical system.
As $C_m^{(t)}$ and $C_n^{(t)}$ are Lipschitz continuous over $t$,
we can apply Lemma~\ref{l:jacobi} to get
the unique solution $X^{(t)}$ to the differential equation $\d X^{(t)} = C_m^{(t)} X^{(t)}$ with initial value $X^{(0)} = I_m$ satisfying 
\[\det(X^{(t)}) = \exp(\int_0^t \tr(C_m^{(\tau)}) d\tau),\]
and the unique solution $Y^{(t)}$ to the differential equation $\d Y^{(t)} = Y^{(t)} C_n^{(t)}$ with initial value $Y^{(0)} = I_n$ satisfying
\[\det(Y^{(t)}) = \exp(\int_0^t \tr(C_n^{(\tau)}) d\tau).\]
Recall from Definition~\ref{d:shorthand} that our dynamical system is defined in such a way that $\tr(C_m^{(\tau)})=\tr(C_n^{(\tau)})=0$ for every $\tau \geq 0$,
we have the important property that $\det(X^{(t)}) = \det(Y^{(t)}) = 1$ for all $t \geq 0$.
Consider $U_i^{(t)} = X^{(t)} U_i^{(0)} Y^{(t)}$.
It is clear that the differential equation 
\[\d U_i^{(t)} = (\d X^{(t)}) U_i^{(0)} Y^{(t)} + X^{(t)} U_i^{(0)} (\d Y^{(t)})
= C_m^{(t)} X^{(t)} U_i^{(0)} Y^{(t)} + X^{(t)} U_i^{(0)} Y^{(t)} C_n^{(t)} =
C_m^{(t)} U_i^{(t)} + U_i^{(t)} C_n^{(t)}\]
is satisfied with initial value $U_i^{(0)}$.
The lemma thus follows from the uniqueness of the solution of our dynamical system.
\end{proof}

We can conclude that the operator capacity is unchanged over time.
First, we prove an identity of the change of the capacity analogous to~(\ref{e:change}).
The proof of the following lemma is basically the same as the proof of Proposition 2.7 in~\cite{operator,gurvits}, adapted to our definition of capacity.

\begin{lemma} \label{l:cap-change}
Let $\U = \{U_1, \ldots, U_k\}$ and $\V = \{V_1, \ldots, V_k\}$ where $U_i,V_i \in \R^{m \times n}$ for $1 \leq i \leq k$ and $V_i = XU_iY$ for $1 \leq i \leq k$ for some $X \in \R^{m \times m}$ and $Y \in \R^{n \times n}$.
Then
\[
\capa(\V) = \det(X)^{2/m} \det(Y)^{2/n} \capa(\U).
\]
\end{lemma}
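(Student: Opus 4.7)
The plan is to substitute $V_i = X U_i Y$ directly into the definition of $\capa(\V)$ and then change variables in the infimum. To avoid clashing with the matrices $X,Y$ of the lemma statement, let me write the infimum variable in Definition~\ref{d:capacity} as $P$, so that
\[
\capa(\V) = \inf_{P \succeq 0} \frac{m \, \det\bigl(\sum_{i=1}^k V_i P V_i^T\bigr)^{1/m}}{\det(P)^{1/n}}.
\]
I will assume throughout that $X$ and $Y$ are invertible; this is the only regime in which the lemma is needed, since it is applied via Lemma~\ref{l:unitary-scaling} where $\det(X^{(t)}) = \det(Y^{(t)}) = 1$.

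The first step is the algebraic identity
\[
\sum_{i=1}^k V_i P V_i^T \;=\; \sum_{i=1}^k (X U_i Y) P (X U_i Y)^T \;=\; X \Bigl( \sum_{i=1}^k U_i (Y P Y^T) U_i^T \Bigr) X^T,
\]
which, upon taking determinants, gives
\[
\det\Bigl(\sum_{i=1}^k V_i P V_i^T\Bigr) \;=\; \det(X)^2 \cdot \det\Bigl(\sum_{i=1}^k U_i (Y P Y^T) U_i^T\Bigr).
\]
The second step is the change of variables $Q := Y P Y^T$. Since $Y$ is invertible, the map $P \mapsto Y P Y^T$ is a bijection of the positive semidefinite cone to itself, and $\det(P) = \det(Q)/\det(Y)^2$.

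Plugging both into the definition and pulling the constants $\det(X)^{2/m}$ and $\det(Y)^{2/n}$ outside the infimum yields
\[
\capa(\V) \;=\; \det(X)^{2/m} \det(Y)^{2/n} \inf_{Q \succeq 0} \frac{m \, \det\bigl(\sum_{i=1}^k U_i Q U_i^T\bigr)^{1/m}}{\det(Q)^{1/n}} \;=\; \det(X)^{2/m} \det(Y)^{2/n} \capa(\U),
\]
which is the claimed identity. There is essentially no obstacle in this proof beyond being careful about the bijectivity of $P \mapsto Y P Y^T$; degenerate cases where $X$ or $Y$ fails to be invertible are not needed for the application to Lemma~\ref{l:cap-unchanged}, since the dynamical system produces scaling matrices of determinant $1$.
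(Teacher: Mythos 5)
Your proof is correct and is essentially the same as the paper's: substitute $V_i = XU_iY$, pull $X$ out of the determinant in the numerator, and change variables in the infimum by conjugating with $Y$. The only difference is that you explicitly flag the invertibility of $X,Y$ (which the paper assumes silently when it writes $\det(Y^{-1}ZY^{-T})$), which is a small but welcome bit of care.
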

\begin{proof}
By the definition of operator capacity in Definition~\ref{d:capacity},
\begin{eqnarray*}
\capa(\V) & = &
\inf_{Z \succeq 0} \frac{\det \Big( \sum_{i=1}^k (X U_i Y) Z (X U_i Y)^T \Big)^{1/m}}{\det(Z)^{1/n}}
\\
& = &
\inf_{Z \succeq 0} \frac{\det \Big( \sum_{i=1}^k X U_i Y Z 
{Y}^T {U_i}^T {X}^T  \Big)^{1/m}}{\det(Z)^{1/n}}\\
& = &
\det(X)^{2/m} \inf_{Z \succeq 0}   \frac{\det \Big( \sum_{i=1}^k U_i Y Z {Y}^T {U_i}^T  \Big)^{1/m}}{\det(Z)^{1/n}}
\\
& = &
\det(X)^{2/m} \inf_{Z \succeq 0}  \frac{\det \Big( \sum_{i=1}^k U_i Z {U_i}^T  \Big)^{1/m}}{\det((Y)^{-1} Z ({Y}^{T})^{-1})^{1/n}}
\\ 
& = &
\det(X)^{2/m} \det(Y)^{2/n} 
\inf_{Z \succeq 0}  \frac{\det \Big( \sum_{i=1}^k U_i Z {U_i}^T  \Big)^{1/m}}{\det(Z)^{1/n}}
\\
& = & \det(X)^{2/m} \det(Y)^{2/n} \capa(\U).
\end{eqnarray*}
\end{proof}

It follows easily from Lemma~\ref{l:cap-change} that the capacity is unchanged over time as $\det(X)=\det(Y)=1$ from Lemma~\ref{l:unitary-scaling}.

\begin{lemma} \label{l:cap-unchanged}
For any $t \ge 0$, we have
\[
\capa(\U^{(t)}) = \capa(\U^{(0)}).
\]
\end{lemma}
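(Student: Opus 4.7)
The plan is to combine the two preceding lemmas directly. By Lemma~\ref{l:unitary-scaling}, the solution to the dynamical system at time $t$ can be written as $U_i^{(t)} = X^{(t)} U_i^{(0)} Y^{(t)}$ for $1 \leq i \leq k$, where $X^{(t)} \in \R^{m \times m}$ and $Y^{(t)} \in \R^{n \times n}$ are matrices independent of $i$ satisfying $\det(X^{(t)}) = \det(Y^{(t)}) = 1$. Thus $\U^{(t)}$ is a scaling of $\U^{(0)}$ (in the sense of Definition~\ref{d:scaling}) by matrices of determinant one.

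Now I would invoke Lemma~\ref{l:cap-change} with $X = X^{(t)}$, $Y = Y^{(t)}$, $\U = \U^{(0)}$, and $\V = \U^{(t)}$, which gives
\[
\capa(\U^{(t)}) = \det(X^{(t)})^{2/m} \cdot \det(Y^{(t)})^{2/n} \cdot \capa(\U^{(0)}) = 1 \cdot 1 \cdot \capa(\U^{(0)}) = \capa(\U^{(0)}).
\]
Since this holds for all $t \geq 0$, the conclusion follows.

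There is essentially no obstacle here: all the real work has already been done in Lemma~\ref{l:unitary-scaling} (where the key fact $\tr(C_m^{(t)}) = \tr(C_n^{(t)}) = 0$, guaranteed by the definition of the dynamical system, was used together with Jacobi's formula to force the scaling matrices to have unit determinant) and in Lemma~\ref{l:cap-change} (where the change-of-variables identity was established). The present lemma is simply the composition of those two results.
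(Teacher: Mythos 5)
Your proof is correct and follows exactly the same route as the paper: apply Lemma~\ref{l:unitary-scaling} to express $U_i^{(t)} = X^{(t)} U_i^{(0)} Y^{(t)}$ with $\det(X^{(t)}) = \det(Y^{(t)}) = 1$, then invoke Lemma~\ref{l:cap-change}. Nothing to add.
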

\begin{proof}
By Lemma~\ref{l:unitary-scaling}, there exist $X^{(t)} \in \R^{m \times m}$ and $Y^{(t)} \in \R^{n \times n}$ such that $U_i^{(t)} = X^{(t)} U_i^{(0)} Y^{(t)}$.
Therefore, by Lemma~\ref{l:cap-change},
we have $\capa(\U^{(t)}) = \det(X^{(t)})^{2/m} \det(Y^{(t)})^{2/n} \capa(\U^{(0)})$.
The lemma follows from Lemma~\ref{l:unitary-scaling} that
$\det(X^{(t)}) = \det(Y^{(t)}) = 1$.
\end{proof}

\subsection{Lower Bound on Operator Capacity} \label{ss:capLB}

In this subsection, we prove a lower bound on $\capa(\U)$ based on $\Delta(\U)$ in a similar form to~(\ref{e:operator-cap}).
As in~\cite{operator}, we do this by reducing the problem to proving a lower bound on matrix capacity as discussed in Subsection~\ref{ss:capacity} and Subsection~\ref{ss:matrix}.

\begin{remark}[comparsion with previous work]
We remark that all the results in this subsection can be seen as simple variants of the results about capacity in~\cite{gurvits,operator,Brascamp-Lieb}.
The following are some technical differences that do not allow us to use their results directly.
In the (discrete) operator scaling algorithm described in Subsection~\ref{ss:operator}, we can assume that one of the two conditions in Definition~\ref{d:DS} is satisfied and this simplifies the proofs.
In the continuous operator scaling algorithm that we defined in Definition~\ref{d:dynamical}, typically both conditions are not satisfied and so we will need to slightly generalize the proof of (\ref{e:matrix-cap}) and the reduction from the operator capacity to the matrix capacity to prove an analogous statement of (\ref{e:operator-cap}).
Also our normalizations are slightly different and so their proofs needed to be adapted. 
And some definitions (such as size of a matrix, rectangular matrix capacity) are not defined in previous work.
We will explain before each statement what are the differences with previous work.

The new ideas about lower bounding operator capacity are in Section~\ref{s:smoothed}, in which we have developed a technique to analyze capacity using our dynamical system.
\end{remark}

Since the capacity of an operator will be reduced to the capacity of a matrix, let us begin with the corresponding definitions of a matrix.

\begin{definition}[row sum and column sum] \label{d:matrix-sum}
Given a non-negative matrix $A \in \R^{m \times n}$, we define
\[
r_i(A) := \sum_{j=1}^n A_{ij}
\]
to be the $i$-th row sum of $A$ for $1 \leq i \leq m$ and
\[
c_j(A) := \sum_{i=1}^m A_{ij}
\]
to be the $j$-th column sum of $A$ for $1 \leq j \leq n$.
We use the shorthands $r_i$ and $c_j$ when $A$ is clear from the context.
\end{definition}

\begin{definition}[doubly balanced and doubly stochastic matrix] \label{d:DS-matrix}
A matrix $A \in \R^{m \times n}$ is doubly balanced if 
the row sums are the same and the column sums are the same, i.e.
$r_{i_1}(A) = r_{i_2}(A)$ for all $1 \leq i_1, i_2 \leq m$
and $c_{j_1}(A) = c_{j_2}(A)$ for all $1 \leq j_1, j_2 \leq n$.

A matrix $A \in \R^{m \times n}$ is doubly stochastic if it is doubly balanced and all the row sums are one.
\end{definition}

\begin{definition}[size of a matrix] \label{d:matrix-size}
Given a non-negative matrix $A \in \R^{m \times n}$, we define
\[
s(A) := \sum_{i=1}^m \sum_{j=1}^n A_{ij}
\]
to be the size of the matrix.
We use the shorthand $s$ when $A$ is clear from the context.
\end{definition}

The following is our definition of the capacity of a rectangular matrix.
It is normalized in such a way that the capacity is at most one when the size is equal to one.

\begin{definition}[matrix capacity] \label{d:matrix-capacity}
Given a non-negative matrix $A \in \R^{m \times n}$,
we define the capacity of $A$ as
\[
\capa(A) = \inf_{x > 0} \frac{m (\prod_{i=1}^m (Ax)_i)^{1/m}}{(\prod_{i=1}^n x_i)^{1/n}}.
\]
\end{definition}

We also define a measure of how close a matrix is to doubly balanced, which is similar to that in~\cite{operator} but with the size involved and a different normalization.

\begin{definition}[$\Delta$ of a matrix] \label{d:matrix-Delta}
Given a non-negative matrix $A \in \R^{m \times n}$, we define
\[
\Delta(A) = \frac{1}{m} \sum_{i=1}^m (s-mr_i)^2+ \frac{1}{n} \sum_{j=1}^n (s-nc_j)^2,
\]
where $s$ is the size of $A$ and $r_i, c_j$ are row and column sums of $A$.
Note that $\Delta(A)=0$ if and only if $A$ is doubly balanced.
We use the shorthand $\Delta$ when $A$ is clear from the context.
\end{definition}

The following lemma shows an upper bound on $\capa(A)$ using $s(A)$ and characterizes when it is tight.
In the square case, the lower bound $\capa(A) \ge s(A)$ when $A$ is doubly balanced is known in Lemma~3.2 in \cite{operator}.
We use the same idea to prove the result for general rectangular matrices and the proof is basically the same.

\begin{lemma} \label{l:matrix-capUB}
For any matrix $A \in \R^{m \times n}$, we have
\[
\capa(A) \leq s(A),
\]
and when $A$ is doubly balanced then $\capa(A)=s(A)$.
\end{lemma}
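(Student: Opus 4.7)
The plan is to obtain both parts by plugging specific choices into the definition of $\capa(A)$ and invoking AM-GM / Jensen's inequality in the standard way used for capacity bounds.

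For the upper bound, I would take the trial vector $x = \mathbf{1}$ (the all-ones vector). Then $(Ax)_i = r_i$ and $\prod_{j=1}^n x_j = 1$, so the definition of $\capa(A)$ gives
\[
\capa(A) \;\leq\; m \left(\prod_{i=1}^m r_i\right)^{1/m} \;\leq\; m \cdot \frac{1}{m}\sum_{i=1}^m r_i \;=\; s(A),
\]
where the second inequality is AM-GM. This handles the first claim with no assumption on $A$.

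For the equality when $A$ is doubly balanced, it remains to show $\capa(A) \geq s(A)$; i.e., for every $x > 0$,
\[
m \left(\prod_{i=1}^m (Ax)_i\right)^{1/m} \;\geq\; s(A)\,\left(\prod_{j=1}^n x_j\right)^{1/n}.
\]
The key step is to apply Jensen's inequality (concavity of $\log$) row by row with weights $p_{ij} = A_{ij}/r_i$ (assuming $r_i > 0$, which holds since $A$ is doubly balanced and we may assume $s(A) > 0$, else the inequality is trivial). This yields
\[
\log(Ax)_i \;=\; \log\!\sum_{j=1}^n p_{ij}\cdot \frac{A_{ij} x_j}{p_{ij}} \;\geq\; \sum_{j=1}^n p_{ij}\log\!\frac{A_{ij}x_j}{p_{ij}} \;=\; \log r_i + \sum_{j=1}^n \frac{A_{ij}}{r_i}\log x_j.
\]
Summing over $i$ and using the doubly balanced assumption ($r_i = s/m$ and $c_j = s/n$) gives
\[
\sum_{i=1}^m \frac{A_{ij}}{r_i} \;=\; \frac{m}{s}\, c_j \;=\; \frac{m}{n},
\]
so after dividing by $m$ and exponentiating I get exactly the desired inequality
\[
\left(\prod_{i=1}^m (Ax)_i\right)^{1/m} \;\geq\; \frac{s(A)}{m}\left(\prod_{j=1}^n x_j\right)^{1/n}.
\]

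There is no real obstacle here; the only mild care needed is the degenerate case where some row sum is zero. In that case one row of $A$ vanishes, $(Ax)_i = 0$ identically, and $\capa(A) = 0$; but then $A$ being doubly balanced forces $s(A) = 0$, so the equality still holds. The upper bound part requires no such care since the AM-GM step is valid for nonnegative $r_i$.
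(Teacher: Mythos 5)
Your proof is correct and takes essentially the same approach as the paper: the upper bound uses the all-ones test vector plus AM-GM, and the lower bound for doubly balanced $A$ applies concavity of $\log$ (Jensen) with row-normalized weights $A_{ij}/r_i$ — the paper's weights $mA_{ij}$ after rescaling to $s(A)=1$ are exactly the same thing. Your explicit handling of the degenerate zero-row-sum case is a small addition that the paper leaves implicit.
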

\begin{proof}
We first prove the upper bound by using the all-one vector $1_n$ as a test vector so that
\[
\capa(A) 
\leq m \Big(\prod_{i=1}^m (A 1_n)_i\Big)^{1/m}
\leq \sum_{i=1}^m (A 1_n)_i
= \sum_{i=1}^m \sum_{j=1}^n A_{ij} 
= s(A),
\]
where the second inequality is by the AM-GM inequality.

Next we consider a doubly balanced matrix $A$.
By scaling, we assume that $s(A)=1$, so that $\sum_{j=1}^n A_{ij} = 1/m$
and $\sum_{i=1}^m A_{ij} = 1/n$.
We consider the logarithm of the capacity
\begin{eqnarray*}
\log \capa(A) 
& = & \inf_{x>0, x \in \R^n} \Big( \log m + \frac{1}{m} \sum_{i=1}^m \log\big(\sum_{j=1}^n A_{ij}x_j \big) - \frac{1}{n} \sum_{j=1}^n \log x_j \Big)
\\
& = & \inf_{x>0, x \in \R^n} \Big( \frac{1}{m} \sum_{i=1}^m \log\big( \sum_{j=1}^n mA_{ij} x_j \big) - \frac{1}{n} \sum_{j=1}^n \log x_j \Big)
\\
& \geq & \inf_{x>0, x \in \R^n} \Big( \frac{1}{m} \sum_{i=1}^m \sum_{j=1}^n mA_{ij} \log x_j - \frac{1}{n} \sum_{j=1}^n \log x_j \Big)
\\
& = & \inf_{x>0, x \in \R^n} \Big(  \sum_{j=1}^n \sum_{i=1}^m A_{ij} \log x_j - \frac{1}{n} \sum_{j=1}^n \log x_j \Big)
\\
& = & 0,
\end{eqnarray*}
where the inequality is by the concavity of $\log$ and the assumption that $\sum_{j=1}^n m A_{ij} = 1$, and the last equality is by the assumption that $\sum_{i=1}^m A_{ij} = 1/n$.
This implies that $\capa(A)=1$ when $A$ is doubly balanced then $s(A)=1$,
and the lemma follows by reducing to the case when $s(A)=1$ by scaling.
\end{proof}

To prove a capacity lower bound on an operator,
we will reduce it to proving a lower bound on a rectangular matrix,
which will be further reduced to proving a lower bound on a square matrix,
which can be done by modifying previous techniques in~\cite{LSW,gurvits,operator}.

\subsubsection*{Reducing Operator Capacity to Matrix Capacity}

We show a reduction from the operator capacity to the matrix capacity.
This is similar to the proof of Lemma 3.4 in~\cite{operator} but there are some differences that do not allow us to use their result directly.
One difference is that it is assumed in~\cite{operator} that the second condition in Definition~\ref{d:DS} is satisfied, which holds during the execution of the discrete operator scaling algorithm.
In our dynamical system, however, typically both conditions in Definition~\ref{d:DS} are not satisfied and so we need to prove the reduction without this assumption.
Also, we consider rectangular matrices with slightly different definitions for $\Delta$ and capacity.

\begin{proposition} \label{p:reduction-capacity}
Given an operator $\U=\{U_1, \ldots, U_k\}$ with $U_i \in \R^{m \times n}$,
there is a non-negative matrix $A \in \R^{m \times n}$ with
\[\capa(A) \leq \capa(\U), \quad \Delta(A) \leq \Delta(\U), \quad {\rm and~} s(A) = s(\U).\]
Furthermore, if $\sum_{l=1}^k U_l U_l^T = pI_m$ for some $p \geq 0$, then $r_i(A)=p$ for $1 \leq i \leq m$, and similarly if $\sum_{l=1}^k U_l^T U_l = qI_n$ for some $q \geq 0$, then $c_j(A) = q$ for $1 \leq j \leq n$.
\end{proposition}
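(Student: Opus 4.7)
The plan is to pick a test matrix $X^*$ that attains (or approaches) the infimum in $\capa(\U)$, use its spectral decomposition together with the spectral decomposition of $\sum_l U_l X^* U_l^T$ to perform a two-sided orthogonal change of basis on $\U$, and define $A_{ij}$ as the sum of squared entries of the transformed matrices. The choice is engineered so that Hadamard's inequality $\det(M) \leq \prod_i M_{ii}$ for PSD $M$ becomes an \emph{equality} at the natural diagonal test vector for $A$, thereby matching the matrix capacity expression for $A$ at that test vector with the operator capacity expression for $\U$ at $X^*$.

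Concretely, by scale invariance normalize $\det(X^*) = 1$, write $X^* = V^* \diag(\lambda^*)(V^*)^T$ with $V^*$ orthogonal and $\lambda^* > 0$, and write $\sum_l U_l X^* U_l^T = P^T \diag(\mu^*) P$ for orthogonal $P$ and $\mu^* \geq 0$. Define $\tilde U_l := P U_l V^*$ and $A_{ij} := \sum_l (\tilde U_l)_{ij}^2 \geq 0$. Since $P$ and $V^*$ are orthogonal, $\norm{\tilde U_l}_F = \norm{U_l}_F$, and so $s(A) = \sum_l \norm{\tilde U_l}_F^2 = s(\U)$. The row sums of $A$ are $(P B_m P^T)_{ii}$ and the column sums are $((V^*)^T B_n V^*)_{jj}$; the standard fact $\sum_i M_{ii}^2 \leq \tr(M^2) = \sum_i \lambda_i(M)^2$ for symmetric $M$ combined with $\sum_i M_{ii} = \sum_i \lambda_i(M)$ then gives $\sum_i (s - m M_{ii})^2 \leq \tr((sI - mM)^2)$, which, applied to the row and column sum matrices, yields $\Delta(A) \leq \Delta(\U)$. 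The ``furthermore'' clauses follow because $B_m = pI_m$ forces $P B_m P^T = pI_m$ regardless of $P$, giving $r_i(A) = p$ for every $i$ (and analogously for $c_j$).

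For $\capa(A) \leq \capa(\U)$, observe that by construction
\[
\sum_l \tilde U_l \diag(\lambda^*) \tilde U_l^T = P \bpar{\sum_l U_l X^* U_l^T} P^T = \diag(\mu^*),
\]
which is diagonal, so Hadamard's inequality is tight: $\det(\diag(\mu^*)) = \prod_i \mu_i^* = \prod_i (A \lambda^*)_i$. Hence the matrix capacity expression for $A$ at $\lambda^*$ equals the operator capacity expression for $\U$ at $X^*$:
\[
\capa(A) \leq \frac{m \bpar{\prod_i (A \lambda^*)_i}^{1/m}}{\bpar{\prod_j \lambda_j^*}^{1/n}} = \frac{m \det(\sum_l U_l X^* U_l^T)^{1/m}}{\det(X^*)^{1/n}} = \capa(\U).
\]
The main obstacle is the case in which the infimum in $\capa(\U)$ is not attained by any positive-definite $X^*$ (for instance when $\capa(\U) = 0$), i.e., when the minimizer lies on the boundary of the PSD cone: one takes a sequence $X_n$ of near-minimizers, obtains matrices $A_n$ whose entries are uniformly bounded by $s(\U)$, and passes to a convergent subsequence. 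Because matrix capacity is only upper semi-continuous in $A$, transferring $\capa(A_n) \leq \capa(\U) + o(1)$ to $\capa(A) \leq \capa(\U)$ requires care; a clean way is to keep track of the witness vectors $\lambda_n^*$ and show that they admit (after suitable renormalization) a convergent subsequence whose limit $\lambda^*$ directly certifies $\capa(A) \leq \capa(\U)$.
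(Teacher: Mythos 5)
Your matrix $A$ coincides with the paper's: with $P$ and $V^*$ the eigenbases you chose, $\tilde U_l := P U_l V^*$ satisfies $(\tilde U_l)_{ij} = g_i^T U_l f_j$, so $A_{ij} = \sum_l (g_i^T U_l f_j)^2 = g_i^T\bigl(\sum_l U_l f_j f_j^T U_l^T\bigr) g_i$, which is exactly the paper's definition in~(\ref{e:A}). The verifications of $s(A) = s(\U)$, $\Delta(A) \leq \Delta(\U)$, the ``furthermore'' clauses, and the capacity bound via the test vector $\lambda^*$ all line up with the paper's; the two-sided orthogonal change of basis and the ``tight Hadamard'' heuristic for why $P$ should diagonalize $\sum_l U_l X^* U_l^T$ are a cleaner way to organize the same computations, but substantively this is the same proof. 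On the limiting step your caution is warranted and is in fact slightly sharper than the paper's treatment: the paper simply asserts that $\capa$, $\Delta$, $s$ are continuous, whereas as you note the infimum structure only gives upper semi-continuity for free, and it is lower semi-continuity of $\capa$ that is needed to pass $\capa(A_l)\le\capa(\U)+1/l$ to the limit. Your proposed workaround --- extract a convergent subsequence of witnesses $\lambda_l$ --- is the right instinct, but even after normalizing $\prod_j(\lambda_l)_j=1$ the witnesses can escape to the boundary of the positive orthant (some coordinates $\to 0$, others $\to\infty$), so to close the gap you still need an additional step: for instance, show that when $\per$ of the associated square matrix is positive, a permutation witness gives coercivity so that the infimum over $x$ can be restricted to a compact subset of the interior, and handle the degenerate case $\capa(A)=0$ separately, where upper semi-continuity together with $\capa\ge 0$ already gives continuity. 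These are real but fixable subtleties; aside from them, the proposal is a correct proof of the proposition.
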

\begin{proof}
Recall that
\[
\capa(\U) = \inf_{X \succeq 0} \frac{m \det(\sum_{l=1}^k U_l X U_l^T)^{1/m}}{\det(X)^{1/n}}.
\]
Let $X \in \R^{n \times n}$ be an approximate minimizer to this optimization problem such that 
\[
X \succ 0 \quad {\rm and} \quad \frac{m \det(\sum_{l=1}^k U_l X U_l^T)^{\frac{1}{m}}}{\det(X)^{1/n}} \leq \capa(\U) + \delta {\rm~for~some~} \delta>0.
\]
We consider the eigen-decomposition of the positive semidefinite matrices:
\[
X = \sum_{j=1}^n \lambda_j f_j f_j^T \quad {\rm and} \quad 
\sum_{l=1}^k U_l X U_l^T = \sum_{i=1}^m \sigma_i g_i g_i^T,
\]
where $\lambda_1, \ldots, \lambda_n > 0$ are the eigenvalues of $X$ with $f_1, \ldots, f_n \in \R^n$ an orthonormal set of eigenvectors,
and similarly $\sigma_1, \ldots, \sigma_m \geq 0$ are the eigenvalues of $X$ with $g_1, \ldots, g_m \in \R^m$ an orthonormal set of eigenvectors.
Since the determinant of a matrix is equal to the product of its eigenvalues, we have
\[
\frac{m (\prod_{i=1}^m \sigma_i)^{1/m}}{(\prod_{j=1}^n \lambda_j)^{1/n}}
= \frac{m \det(\sum_{l=1}^k U_l X U_l^T)^{\frac{1}{m}}}{\det(X)^{1/n}} \leq \capa(\U) + \delta.
\]
To reduce to matrix capacity, we write down the linear transformation $A$ from $\{\lambda_j\}_{j=1}^n$ to $\{\sigma_i\}_{i=1}^m$.
It follows from the eigen-decompositions that
\[
\sum_{i=1}^m \sigma_i g_i g_i^T =
\sum_{l=1}^k U_l (\sum_{j=1}^n \lambda_j f_j f_j^T) U_l^T =
\sum_{j=1}^n \lambda_j \sum_{l=1}^k U_l f_j f_j^T U_l^T.
\]
As $\{g_i\}_{i=1}^m$ is an orthonormal basis,
by multiplying both sides by $g_i^T$ on the left and $g_i$ on the right,
we see that
\[
\sigma_i = \sum_{j=1}^n \lambda_j \Big( g_i^T (\sum_{l=1}^k U_l f_j f_j^T U_l^T) g_i \Big) \quad {\rm for~} 1 \leq i \leq m. 
\]
Let $A \in \R^{m \times n}$ be the matrix with
\begin{equation} \label{e:A}
A_{ij} := g_i^T (\sum_{l=1}^k U_l f_j f_j^T U_l^T) g_i \quad {\rm for~} 1 \leq i \leq m, 1 \leq j \leq n.
\end{equation}
Note that $A$ is non-negative as $\sum_{l=1}^k U_l f_j f_j^T U_l^T \succeq 0$.
Let $\sigma \in \R^m$ be a vector with the $i$-th entry being $\sigma_i$ and $\lambda \in \R^n$ be a vector with the $j$-th entry being $\lambda_j$.
It follows from the definition of $A$ that
\[
\sigma = A \lambda.
\]
By the definition of matrix capacity in Definition~\ref{d:matrix-capacity} and
using $\lambda$ as a test vector, we get
\[
\capa(A) 
= \inf_{x > 0} \frac{m (\prod_{i=1}^m (Ax)_i)^{1/m}}{(\prod_{i=1}^n x_i)^{1/n}} 
\leq \frac{m (\prod_{i=1}^m (A \lambda)_i)^{1/m}}{(\prod_{j=1}^n \lambda_j)^{1/n}}
= \frac{m (\prod_{i=1}^m \sigma_i)^{1/m}}{(\prod_{j=1}^n \lambda_j)^{1/n}}
\leq \capa(\U) + \delta.
\]

Next we check the second claim that $\Delta(A) \leq \Delta(\U)$.
The $i$-th row sum of $A$ is
\[
r_i = \sum_{j=1}^n A_{ij} 
= \sum_{j=1}^n g_i^T \Big(\sum_{l=1}^k U_l f_j f_j^T U_l^T \Big) g_i
= g_i^T \Big( \sum_{l=1}^k U_l \big( \sum_{j=1}^n f_j f_j^T \big) U_l^T \Big) g_i
= g_i^T \Big( \sum_{l=1}^k U_l U_l^T \Big) g_i,
\]
where the last equality is because $\{f_j\}_{j=1}^n$ is an orthonormal basis.
In particular, if $\sum_{l=1}^k U_l U_l^T = pI_m$, then $r_i = p\norm{g_i}^2 = p$.
The $j$-th column sum of $A$ is
\begin{eqnarray*}
c_j 
& = & \sum_{i=1}^m A_{ij}
= \sum_{i=1}^m g_i^T \Big(\sum_{l=1}^k U_l f_j f_j^T U_l^T \Big) g_i
= \sum_{i=1}^m \tr[ \Big(\sum_{l=1}^k U_l f_j f_j^T U_l^T \Big) g_i g_i^T ]
\\
& = & \tr[ \Big(\sum_{l=1}^k U_l f_j f_j^T U_l^T \Big) \Big( \sum_{i=1}^m g_i g_i^T \Big)]
= \tr[ \Big(\sum_{l=1}^k U_l f_j f_j^T U_l^T \Big)] 
= f_j^T \Big( \sum_{l=1}^k U_l^T U_l \Big) f_j,
\end{eqnarray*}
where the second last equality is because $\{g_1\}_{i=1}^m$ is an orthonormal basis and the last equality is by the cyclic property of trace.
In particular, if $\sum_{l=1}^k U_l^T U_l = qI_n$, then $c_j = q\norm{f_j}^2 = q$.
The size of the matrix $A$ is
\[
s(A) = \sum_{i=1}^m r_i = \sum_{i=1}^m g_i^T \Big(\sum_{l=1}^k U_l U_l^T\Big) g_i
= \sum_{i=1}^m \tr[\Big( \sum_{l=1}^k U_l U_l^T \Big) g_i g_i^T]
= \tr[\Big( \sum_{l=1}^k U_l U_l^T \Big)] = s(\U),
\]
equal to the size of the operator $\U$ in Definition~\ref{d:size}.
Since $s(A) = s(\U)$, we will just use $s := s(A) = s(\U)$ in the following.
By Definition~\ref{d:matrix-Delta} and the above identities,
\begin{eqnarray*}
\Delta(A) 
& = & \frac{1}{m} \sum_{i=1}^m (s-mr_i)^2 + \frac{1}{n} \sum_{j=1}^n (s-nc_j)^2
\\
& = &
\frac{1}{m} \sum_{i=1}^m \Big(s - m g_i^T \Big( \sum_{l=1}^k U_l U_l \Big) g_i \Big)^2
+ \frac{1}{n} \sum_{j=1}^n \Big(s - n f_j^T \Big( \sum_{l=1}^k U_l U_l \Big) f_i \Big)^2
\\
& = &
\frac{1}{m} \sum_{i=1}^m \Big(g_i^T \Big( sI_m - m\sum_{l=1}^k U_l U_l \Big) g_i \Big)^2
+ \frac{1}{n} \sum_{j=1}^n \Big(f_j^T \Big( sI_n - n\sum_{l=1}^k U_l U_l \Big) f_i \Big)^2
\\
& \leq &
\frac{1}{m} \sum_{i=1}^m \Big(g_i^T \Big( sI_m - m\sum_{l=1}^k U_l U_l \Big)^2 g_i \Big)
+ \frac{1}{n} \sum_{j=1}^n \Big(f_j^T \Big( sI_n - n\sum_{l=1}^k U_l U_l \Big)^2 f_i \Big)
\\
& = &
\frac{1}{m} \tr[\Big( sI_m - m\sum_{l=1}^k U_l U_l \Big)^2]
+ \frac{1}{n} \tr[\Big( sI_n - n\sum_{l=1}^k U_l U_l \Big)^2]
\\
& = &
\Delta(\U),
\end{eqnarray*}
where the inequality follows from $g_i g_i^T \preceq I_m$ as $\norm{g_i}_2^2 \leq 1$, the second last equality follows from $\sum_{i=1}^m g_i g_i^T = I_m$ as $\{g_i\}_{i=1}^m$ is an orthonormal basis and the cyclic property of trace,
and the last equality is by the definition of $\Delta(\U)$ in Definition~\ref{d:Delta}.

For each $l \in \mathbb N$, let $A_l$ be the above constructed matrix when $\delta = 1 / l$, so that $\capa(A_l) \le \capa(\U) + 1 / l$, $\Delta(A_l) \le \Delta(\U)$, and $s(A_l) = s(\U)$.
Since $s(A_l) = s(\U)$ for all $i$, $A_l$ is uniformly bounded, and hence by the Bolzano-Weierstrass Theorem we can obtain a convergent subsequence $A_{l_i}$.
Let $A = \lim_{i \to \infty} A_{l_i}$.
As $\capa(B)$, $\Delta(B)$ and $s(B)$ are all continuous functions on $B$, we have 
\[\capa(A) = \lim_{i \to \infty} \capa(A_{l_i}) \le \capa(\U), 
\quad {\rm and} \quad \Delta(A) \leq \Delta(\U) 
\quad {\rm and} \quad s(A) = s(\U).
\]
\end{proof}

Proposition~\ref{p:reduction-capacity} allows us to establish operator capacity lower bound by proving matrix capacity lower bound, which is usually a relatively simpler problem.
For example, suppose we have a matrix capacity lower bound
$\capa(A) \geq s - mn\sqrt{\Delta(A)/2}$ as we will prove in Proposition~\ref{p:matrix-cap},
then from Proposition~\ref{p:reduction-capacity} we have
\[
\capa(\U) \geq \capa(A) \geq s - mn\sqrt{\Delta(A)/2} \geq s - mn\sqrt{\Delta(\U)/2}.\]
Hence, we will focus on proving a lower bound on matrix capacity.

\subsubsection*{Reducing Rectangular Matrices to Square Matrices}

To prove a lower bound on the capacity of a rectangular matrix,
we will further reduce it to proving a lower bound on the capacity of a square matrix.
Given a rectangular matrix $A \in \R^{m \times n}$,
we will construct a square matrix $B \in \R^{mn \times mn}$ such that $\capa(A)=\capa(B)$ and $\Delta(A)=\Delta(B)$ and $s(A)=s(B)$.
Our construction uses the tensor product which is defined as follows.

\begin{definition}[tensor product]
Let $X \in \mathbb R^{m \times n}$ and $Y \in \mathbb R^{p \times q}$.
The tensor product $X \otimes Y$ is the $mp \times nq$ block matrix:
\[
X \otimes Y
=
\begin{pmatrix}
X_{11} Y & \cdots & X_{1n} Y \\
\vdots & \ddots & \vdots \\
X_{m1} Y & \cdots & X_{mn} Y
\end{pmatrix}
.
\]
We index the rows by pairs of integers $(i, j) \in [m] \times [p]$ and the columns by pairs of integers $(k, l) \in [n] \times [q]$, so that we have $(X \otimes Y)_{(i, k), (j, l)} = X_{ij} Y_{kl}$.
We usually omit the brackets and write $(X \otimes Y)_{ik, jl}$ instead.
\end{definition}

Let $J_{n \times m}$ be the all-one $n \times m$ matrix.
We will consider the $mn \times mn$ dimensional matrix $B := A \otimes \frac1{mn} J_{n \times m}$.
The following lemma implies that $\capa(A)=\capa(B)$ and $\Delta(A) = \Delta(B)$ and $s(A)=s(B)$.

We note that a more general result is proved in~\cite{Brascamp-Lieb} for rectangular operators.
It is possible to reduce the matrix case to the operator case and use the result in~\cite{Brascamp-Lieb} to prove the following lemma.
Instead of presenting the reduction, 
we present a simpler direct proof for this special case.

\begin{lemma} \label{l:reduction-square}
Let $A \in \mathbb R^{m \times n}$ be a rectangular matrix and $J_{p \times q} \in \mathbb R^{p \times q}$ be the all-one matrix.
Then 
\[\capa(A) = \capa(A \otimes \frac1{pq} J_{p \times q})
\quad {\rm and} \quad
\Delta(A) = \Delta(A \otimes \frac{1}{pq} J_{p \times q})
\quad {\rm and} \quad
s(A) = s(A \otimes \frac{1}{pq} J_{p \times q}).
\]
\end{lemma}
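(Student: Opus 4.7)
Write $B = A \otimes \tfrac{1}{pq} J_{p \times q}$, so that $B \in \mathbb R^{mp \times nq}$ with $B_{(i,k),(j,l)} = \tfrac{1}{pq} A_{ij}$. The plan is to verify each of the three equalities in increasing order of difficulty, with the capacity identity being the substantive step.

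First I would handle $s(B) = s(A)$ and $\Delta(B) = \Delta(A)$ by direct computation. Summing the entries of $B$ factors as $\sum_{(i,k),(j,l)} \tfrac{1}{pq} A_{ij} = s(A)$, since the $(k,l)$-block has exactly $pq$ copies of $\tfrac{1}{pq}A_{ij}$. For the row and column sums, note that the $(i,k)$-row sum is $r_{(i,k)}(B) = \tfrac{1}{pq} \cdot q \cdot r_i(A) = \tfrac{1}{p} r_i(A)$, independent of $k$, and similarly $c_{(j,l)}(B) = \tfrac{1}{q} c_j(A)$. Substituting into Definition~\ref{d:matrix-Delta} for $B$, whose row dimension is $mp$ and column dimension is $nq$, the factor $mp \cdot r_{(i,k)}(B) = m \cdot r_i(A)$, and the outer prefactor $\tfrac{1}{mp}$ times the sum over the $p$ copies for each $i$ produces $\tfrac{1}{m}$; the column term is analogous. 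Hence $\Delta(B) = \Delta(A)$ termwise.

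For the capacity identity, the key structural observation is that although $B$ has $nq$ columns, $(By)_{(i,k)}$ does not depend on $k$: indeed $(By)_{(i,k)} = \tfrac{1}{pq} \sum_j A_{ij} \sum_l y_{(j,l)}$. Setting $z_j := \tfrac{1}{q} \sum_l y_{(j,l)}$, we obtain $(By)_{(i,k)} = \tfrac{1}{p}(Az)_i$, and therefore
\[
\frac{mp \bigl(\prod_{(i,k)} (By)_{(i,k)}\bigr)^{1/(mp)}}{\bigl(\prod_{(j,l)} y_{(j,l)}\bigr)^{1/(nq)}}
= \frac{m \bigl(\prod_i (Az)_i\bigr)^{1/m}}{\bigl(\prod_{(j,l)} y_{(j,l)}\bigr)^{1/(nq)}}.
\]
Now AM--GM applied to each block gives $\bigl(\prod_l y_{(j,l)}\bigr)^{1/q} \le z_j$, so $\bigl(\prod_{(j,l)} y_{(j,l)}\bigr)^{1/(nq)} \le (\prod_j z_j)^{1/n}$, which \emph{decreases} the denominator and shows the displayed ratio is at least $\operatorname{cap}(A)$. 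Taking infimum over $y > 0$ yields $\operatorname{cap}(B) \ge \operatorname{cap}(A)$.

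For the reverse inequality $\operatorname{cap}(B) \le \operatorname{cap}(A)$, I would use the diagonal test vector: given $x > 0$ in $\R^n$, set $y_{(j,l)} := x_j$ for all $l$. Then $z_j = x_j$, and the AM--GM inequality above is tight, so the ratio for $y$ in the $B$-problem equals the ratio for $x$ in the $A$-problem. Taking infimum over $x$ shows $\operatorname{cap}(B) \le \operatorname{cap}(A)$. Combining the two inequalities gives $\operatorname{cap}(B) = \operatorname{cap}(A)$.

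The main obstacle is not computational but conceptual: one must notice that the all-ones block forces $(By)_{(i,k)}$ to be independent of $k$, collapsing the $nq$-variable optimization for $B$ into an effective $n$-variable optimization, with AM--GM providing the precise ``averaging'' that makes the reduction variable $z_j$ behave correctly in both the numerator and denominator. Once this is seen, all three identities follow from routine manipulation.
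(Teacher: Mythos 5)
Your proposal is correct, and the substantive step --- the inequality $\capa(B) \ge \capa(A)$ via the averaging $z_j = \tfrac1q\sum_l y_{(j,l)}$, the observation that $(By)_{(i,k)} = \tfrac1p(Az)_i$, and AM--GM on each column block --- is exactly the argument used in the paper. Where you genuinely diverge is in the easy direction $\capa(B) \le \capa(A)$: you plug in the ``constant-in-$l$'' test vector $y_{(j,l)} := x_j$, note that both AM--GM and the block identity become equalities, and read off $\capa(B) \le \capa(A)$ directly. The paper instead proves the more general submultiplicativity fact $\capa(A\otimes C) \le \capa(A)\,\capa(C)$ for arbitrary $C$ (via the product test vector $x \otimes y$), and then specializes to $C = \tfrac1{pq}J_{p\times q}$ using $\capa(\tfrac1{pq}J) \le s(\tfrac1{pq}J) = 1$. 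Your route is shorter and self-contained; the paper's route costs a little more work but establishes a reusable tensor-product inequality for capacity, and frees it from the particular structure of the all-ones block. The $s$ and $\Delta$ computations are the same in both. No gaps.
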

\begin{proof}
Let $B := A \otimes \frac1{pq} J_{p \times q}$.
We will prove that $\capa(B) \leq \capa(A)$ and $\capa(A) \leq \capa(B)$.
The first part holds for any tensor product, while the second part uses that $B$ is a tensor product of $A$ and a scaled version of $J$.

To prove the first part, we will prove that for any two rectangular matrices $A \in \R^{m \times n}$ and $C \in \R^{p \times q}$, we have that $B := A \otimes C$ satisfies $\capa(B) \leq \capa(A) \times \capa(C)$.
Fix $\delta > 0$.
Let $x \in \R^n$ and $y \in \R^q$ be approximate minimizers in the capacity of $A$ and $C$, that is
\[
\capa(A) = \inf_{z>0} \frac{m\prod_{i = 1}^m (Az)_i^{1/m}}{\prod_{j = 1}^n z_j^{1/n}} \leq \frac{m\prod_{i = 1}^m (Ax)_i^{1/m}}{\prod_{j = 1}^n x_j^{1/n}}
\leq \capa(A) + \delta,
\]
and similarly
\[
\capa(C) = 
\inf_{z>0} \frac{p\prod_{k = 1}^p (Cz)_k^{1/p}}{\prod_{l = 1}^q z_l^{1/q}}
\leq \frac{p \prod_{k = 1}^p (Cy)_k^{1/p}}{\prod_{l = 1}^q y_l^{1/q}}
\le \capa(B) + \delta.
\]
By considering the vector $x \otimes y \in \R^{nq}$, we have
\begin{align*}
\capa(A \otimes C)
& = \inf_{z \in \R^{nq}} \frac{mp \prod_{i = 1}^{m} \prod_{k=1}^p ((A \otimes C)z)_{ik}^{1/{mp}}}{\prod_{j = 1}^{n} \prod_{l=1}^q z_{jl}^{1/{nq}}} 
\le \frac{mp \prod_{i = 1}^m \prod_{k = 1}^p ((A \otimes C)(x \otimes y))_{ik}^{1/{mp}}}{\prod_{j = 1}^n \prod_{l = 1}^q (x \otimes y)_{jl}^{1/{nq}}} 
\\
& = \frac{mp \prod_{i = 1}^m \prod_{k = 1}^p (Ax \otimes Cy)_{ik}^{1/{mp}}}{\prod_{j = 1}^n \prod_{l = 1}^q (x \otimes y)_{jl}^{1/{nq}}} 
= \Big(\frac{m \prod_{i = 1}^m (Ax)_i^{1/m}}{\prod_{j = 1}^n x_j^{1/n}}\Big) \times \Big(\frac{p \prod_{k = 1}^p (Cy)_k^{1/p}}{\prod_{l = 1}^q y_l^{1/q}}\Big) 
\\
& \le (\capa(A) + \delta) \times (\capa(C) + \delta).
\end{align*}
Taking $\delta \to 0$ proves that $\capa(B) \leq \capa(A) \times \capa(C)$.
Since $\capa(\frac{1}{pq} J_{p \times q}) \leq s(\frac{1}{pq} J_{p \times q}) = 1$ by Lemma~\ref{l:matrix-capUB},
we have $\capa(B) \leq \capa(A)$, proving the first part.

To prove the second part, 
we show that $\capa(B) + \delta \ge \capa(A)$ for any $\delta > 0$.
Let $x \in \mathbb R^{n \times q}$ be an approximate minimizer to capacity of $B$ such that
\[
\capa(B) = 
\inf_{z>0} \frac{mp\prod_{i = 1}^m \prod_{k = 1}^p (Bz)_{ik}^{1/{mp}}}{\prod_{j = 1}^n \prod_{l = 1}^q z_{jl}^{1/{nq}}}
\leq \frac{mp\prod_{i = 1}^m \prod_{k = 1}^p (Bx)_{ik}^{1/{mp}}}{\prod_{j = 1}^n \prod_{l = 1}^q x_{jl}^{1/{nq}}}
\le \capa(B) + \delta.
\]
We define $\bar x \in \mathbb R^n$ be such that for $1 \leq j \leq n$,
\[
\bar x_j = \frac1q \sum_{l = 1}^q x_{jl}.
\]
We use the special property of $B$ to show that
\begin{align*}
(Bx)_{ik}
& = ((A \otimes \frac1{pq} J_{p \times q})x)_{ik} 
= \sum_{j = 1}^n \sum_{l = 1}^q (A \otimes \frac1{pq} J_{p \times q})_{ik, jl} \cdot x_{jl} \\
& = \sum_{j = 1}^n \sum_{l = 1}^q (A_{ij}) \frac1{pq} x_{jl} 
= \frac1p \sum_{j = 1}^n A_{ij} \big(\frac1q \sum_{l = 1}^q x_{jl}\big)
= \frac1p \sum_{j = 1}^n A_{ij} \bar x_j 
= \frac1p (A \bar x)_i.
\end{align*}
Therefore, using $\bar x$ as a test vector for capacity of $A$, we have
\begin{align*}
\capa(B) + \delta & 
\geq mp \frac{\prod_{i = 1}^m \prod_{k = 1}^p (Bx)_{ik}^{1/{mp}}}{\prod_{j = 1}^n \prod_{l = 1}^q x_{jl}^{1/{nq}}}
 = m \frac{\prod_{i = 1}^m (A \bar x)_{i}^{1/m}}{\prod_{j = 1}^n \prod_{l = 1}^q x_{jl}^{1/{nq}}}
\ge m \frac{\prod_{i = 1}^m (A \bar x)_{i}^{1/m}}{\prod_{j = 1}^n \bar x_j^{1/n}}
\ge \capa(A),
\end{align*}
where the second last inequality follows from the AM-GM inequality
that $\bar x_j = \frac{1}{q} \sum_{l=1}^q x_{jl} \ge \prod_{l = 1}^q x_{jl}^{1/q}$ for $1 \leq j \leq n$.
Taking $\delta \to 0$ proves that $\capa(A) \leq \capa(B)$, and thus $\capa(A)=\capa(B)$.

Next we prove that $\Delta(A) = \Delta(B)$, where we only need the property that $\frac{1}{pq} J_{p \times q}$ is a doubly balanced matrix with $s(\frac{1}{pq} J_{p \times q})=1$.
Let $B:= A \otimes C$ where $C$ is a doubly balanced matrix with $s(C)=1$.
Let $1_d$ be the $d$-dimensional all-one vector.
Note that $C 1_q = \frac{1}{p} 1_p$ and $1_p^T C = \frac{1}{q} 1_q^T$.
For any $1 \leq i \leq m$ and $1 \leq k \leq p$, the row sum
\[
r_{ik}(B)
= (B (1_n \otimes 1_q))_{ik}
= ((A 1_n) \otimes (C 1_q))_{ik}
= \frac1p (A 1_n)_i
= \frac1p r_i(A).
\]
Similarly, for any $1 \leq j \leq n$ and $1 \leq l \leq q$, the column sum
\[
c_{jl}(B)
= ((1_m \otimes 1_p)^T B)_{jl}
= ( (1_m^T A) \otimes (1_p^T C))_{jl}
= \frac1q (1_m^T A)_j
= \frac1q c_j(A).
\]
Also $s(B) = \sum_{i = 1}^m \sum_{k = 1}^p r_{ik}(B) = \sum_{i = 1}^m r_i(A) = s(A)$.
Hence,
\begin{align*}
\Delta(B)
& = \frac1{mp} \sum_{i = 1}^m \sum_{k = 1}^p (s(B) - mp \cdot r_{ik}(B))^2 + \frac1{nq} \sum_{j = 1}^n \sum_{l = 1}^q (s(B) - nq \cdot c_{jl}(B))^2 \\
& = \frac1{mp} \sum_{i = 1}^m \sum_{k = 1}^p (s(B) - m \cdot r_i(A))^2 + \frac1{nq} \sum_{j = 1}^n \sum_{l = 1}^q (s(B) - n \cdot c_j(A))^2 \\
& = \frac1m \sum_{i = 1}^m (s(A) - m r_i(A))^2 + \frac1n \sum_{j = 1}^n (s(A) - n c_j(A))^2 \\
& = \Delta(A)
\end{align*}
\end{proof}

Using Lemma~\ref{l:reduction-square}, we can reduce the problem of proving a lower bound on $\capa(A)$ using $\Delta(A)$ for a rectangular matrix $A$ to proving a lower bound on $\capa(B)$ using $\Delta(B)$ for a square matrix $B$, where $B := A \otimes \frac{1}{mn} J_{n \times m}$.
Hence we will focus on proving a lower bound on matrix capacity for square matrices.

\begin{remark}
\label{r:matrix_size_depends_on_gcd}
We can also take $B$ to be a smaller matrix when $m$ and $n$ are not relatively prime.
Suppose $g = \gcd(m, n)$ is the greatest common divisor of $m$ and $n$.
Then we would set $B$ to be $A \otimes \frac{g^2}{mn} J_{n / g \times m / g}$, which is an $(mn / g) \times (mn / g)$ square matrix.
This will be a more efficient reduction which implies stronger bounds in our theorems.
\end{remark}

\subsubsection*{Lower Bound on Matrix Capacity for Square Matrices}

The following proofs follow the same approach in~\cite{operator}.
Again we could not directly apply their proofs, as they assumed that the row sums are equal, which holds in discrete operator scaling but not in continuous operator scaling.

We will use the following well-known facts from~\cite{gurvits,LSW}.
\begin{fact}[\cite{gurvits,LSW}] \label{f:permanent}
Given an $n \times n$ non-negative matrix $B$,
the permanent of $B$ is defined as
\[
\per(B) := \sum_{\pi \in S_n} \prod_{i=1}^n B_{i,\pi(i)},
\]
where $\pi$ is over all permutations of $n$ elements.
It is known that $\per(B) > 0$ if and only if $\capa(B) > 0$.

One can consider the bipartite graph $G=(U,V;E)$ associated to $B$,
where $X=\{1,\ldots,n\}$ and $Y=\{1,\ldots,n\}$ and $ij \in E$ if and only $B_{ij}>0$.
Since $B$ is non-negative,
it follows from the definition that $\per(B)>0$ if and only if $G$ has a perfect matching.
By Hall's theorem for bipartite matching, $\per(B)=0$ if and only if there exist subsets $X' \subseteq X$ and $Y' \subseteq Y$ such that $|X'|+|Y'|>n$ and $B_{ij}=0$ for all $i \in X'$ and $j \in Y'$.
\end{fact}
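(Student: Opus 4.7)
The fact bundles together three claims: the equivalence between positivity of the permanent and positivity of the capacity, the equivalence between positivity of the permanent and the existence of a perfect matching in the associated bipartite graph, and a Hall-type combinatorial characterization of when $\per(B) = 0$. My plan is to handle the middle claim essentially by definition, use Hall's theorem (or equivalently König's theorem) for the third, and spend most of the effort on the first claim.

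The second claim is immediate from the definition of $\per(B) = \sum_{\pi \in S_n} \prod_i B_{i,\pi(i)}$: since $B$ is non-negative, every term in the sum is non-negative, so $\per(B) > 0$ if and only if there exists some permutation $\pi$ with $B_{i,\pi(i)} > 0$ for every $i$, which is precisely a perfect matching in the bipartite graph $G$. For the third claim, I would invoke Hall's theorem in its König-complement form: $G$ fails to have a perfect matching if and only if there is a vertex cover of size strictly less than $n$, say consisting of $X'' \subseteq X$ and $Y'' \subseteq Y$ with $|X''| + |Y''| < n$; taking the complements $X' = X \setminus X''$ and $Y' = Y \setminus Y''$ gives $|X'|+|Y'| > n$, and the vertex-cover property forces $B_{ij} = 0$ for every $i \in X', j \in Y'$. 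Reversing the argument shows that such $X', Y'$ obstruct any perfect matching.

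For the main claim $\per(B) > 0 \iff \capa(B) > 0$, I would argue each direction separately. For the forward direction, expand the product
\[
\prod_{i=1}^n (Bx)_i = \prod_{i=1}^n \sum_{j=1}^n B_{ij} x_j = \sum_{f : [n] \to [n]} \prod_{i=1}^n B_{i,f(i)} x_{f(i)}.
\]
All summands are non-negative since $B \geq 0$ and $x > 0$, so restricting the sum to bijections $f = \pi \in S_n$ gives a lower bound
\[
\prod_{i=1}^n (Bx)_i \;\geq\; \sum_{\pi \in S_n} \prod_i B_{i,\pi(i)} x_{\pi(i)} \;=\; \per(B) \cdot \prod_{j=1}^n x_j.
\]
Substituting into the definition of $\capa(B)$ yields $\capa(B) \geq n \cdot \per(B)^{1/n}$, which is positive whenever $\per(B) > 0$. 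For the reverse direction I would prove the contrapositive: assume $\per(B) = 0$, apply the Hall-obstruction statement just established to obtain $X', Y'$ with $|X'|+|Y'| > n$ and $B_{ij} = 0$ for $i \in X', j \in Y'$, and exhibit a family of test vectors showing $\capa(B) = 0$. Set $x_j = t^{-a}$ for $j \in Y'$ and $x_j = t^{b}$ for $j \notin Y'$, with $a,b > 0$ fixed. For $i \in X'$, the sum $(Bx)_i = \sum_{j \notin Y'} B_{ij} t^b = O(t^b)$ since the $Y'$-coordinates are killed; for $i \notin X'$, we have $(Bx)_i = O(t^{-a})$. A direct computation shows
\[
\frac{\prod_i (Bx)_i}{\prod_j x_j} \;=\; O\big(t^{(a+b)(|X'|+|Y'|-n)}\big),
\]
and since $|X'|+|Y'|-n \geq 1$ the exponent is strictly positive, so sending $t \to 0^+$ drives the ratio to $0$ and forces $\capa(B) = 0$.

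The only part that requires any care is constructing the test vector in the reverse direction; everything else is either an AM-GM-style inequality, a definition-chasing exercise, or a standard invocation of Hall/König. The right heuristic in the test-vector construction is that the Hall obstruction blocks $|X'|$ rows from ``seeing'' the large coordinates in $Y'$, forcing those rows to be small, while the remaining $n-|X'|$ rows can only blow up as fast as the large coordinates shrink the denominator, and the strict inequality $|X'|+|Y'|>n$ is exactly what makes the numerator's decay dominate the denominator's.
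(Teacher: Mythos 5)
Your proof is correct, and you have in fact supplied more than the paper does: the paper states the capacity--permanent equivalence as a known fact cited to Gurvits and to Linial--Samorodnitsky--Wigderson and gives no argument for it, whereas you give a clean self-contained proof. Your forward direction (expand $\prod_i (Bx)_i$ over all maps $[n]\to[n]$, discard the non-bijective terms by non-negativity, and deduce $\capa(B)\ge n\,\per(B)^{1/n}$) is the standard argument. Your converse (take the Hall obstruction $X',Y'$ with $|X'|+|Y'|>n$, set $x_j=t^{-a}$ on $Y'$ and $x_j=t^b$ elsewhere, and compute that the objective is $O\bigl(t^{(a+b)(|X'|+|Y'|-n)}\bigr)\to 0$) is also correct, including the degenerate cases where some $(Bx)_i$ vanishes identically or $X'=X$. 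The one cosmetic difference on the combinatorial side is that you route through K\"onig's vertex-cover theorem, whereas the paper invokes Hall's theorem directly; the paper's phrasing corresponds to taking $Y'=Y\setminus N(X')$ for a Hall-violating set $X'$, which gives $|Y'|=n-|N(X')|>n-|X'|$. The two routes are equivalent and both are fine.
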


We first prove a lower bound on $\Delta$ when the capacity is zero,
without the assumption that $A$ is row-balanced.
The constant is slightly better than that in~\cite{LSW} and is tight,
and the proof is somewhat different.

\begin{lemma} \label{l:dist_lower_bound}
Let $A \in \R^{n \times n}$ be a non-negative matrix with $s = \sum_{ij} A_{ij} = n^2$.
If $\capa(A) = 0$, then $\Delta(A) \ge 2n^2$.
\end{lemma}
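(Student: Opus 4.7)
My plan is to use the characterization of zero capacity from Fact~\ref{f:permanent}: $\capa(A)=0$ iff $\per(A)=0$, which by Hall's theorem yields subsets $X',Y'\subseteq \{1,\dots,n\}$ with $|X'|=a$, $|Y'|=b$, $a+b>n$ (so $a+b\geq n+1$), such that $A_{ij}=0$ whenever $i\in X'$ and $j\in Y'$. I will then partition the matrix into the four blocks and let $M_1,M_2,M_3$ denote the total mass in $(X\setminus X')\times(Y\setminus Y')$, $(X\setminus X')\times Y'$, $X'\times(Y\setminus Y')$ respectively (the fourth block is zero). Since $s=n^2$, we have $M_1+M_2+M_3=n^2$; in particular $M_2+M_3\leq n^2$. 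Boundary cases $a=n$ or $b=n$ force an entire set of columns or rows to be zero and yield $\Delta(A)\geq n^3\geq 2n^2$ directly, so I can assume $1\leq a,b\leq n-1$.

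The main computation is a double application of Cauchy--Schwarz. Writing $\sum_{i\in X'}r_i=M_3$ and $\sum_{i\notin X'}r_i=M_1+M_2=n^2-M_3$, Cauchy--Schwarz gives
\[
\sum_{i\in X'}(n-r_i)^2\geq\frac{(an-M_3)^2}{a},\qquad \sum_{i\notin X'}(n-r_i)^2\geq\frac{((n-a)n-(n^2-M_3))^2}{n-a}=\frac{(an-M_3)^2}{n-a},
\]
so $\sum_i(n-r_i)^2\geq\frac{n(an-M_3)^2}{a(n-a)}$. The analogous bound holds for columns with $(bn-M_2)^2/(b(n-b))$. Plugging into $\Delta(A)=n\sum_i(n-r_i)^2+n\sum_j(n-c_j)^2$ gives
\[
\Delta(A)\geq\frac{n^2(an-M_3)^2}{a(n-a)}+\frac{n^2(bn-M_2)^2}{b(n-b)}.
\]

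Now I set $\alpha=an-M_3$ and $\beta=bn-M_2$; the constraint $M_2+M_3\leq n^2$ together with $a+b\geq n+1$ implies $\alpha+\beta\geq n(a+b-n)\geq n$. Minimizing the quadratic form $\frac{\alpha^2}{a(n-a)}+\frac{\beta^2}{b(n-b)}$ subject to $\alpha+\beta=n$ by Lagrange multipliers gives the minimum value $\frac{n^2}{a(n-a)+b(n-b)}$, so $\Delta(A)\geq\frac{n^4}{a(n-a)+b(n-b)}$. Finally I need the elementary bound $a(n-a)+b(n-b)\leq (n^2-1)/2$ whenever $a+b\geq n+1$ and $1\leq a,b\leq n-1$: writing $a(n-a)+b(n-b)=n(a+b)-a^2-b^2$ and using that $a^2+b^2\geq(a+b)^2/2$, we get $a(n-a)+b(n-b)\leq n(a+b)-(a+b)^2/2=(a+b)(2n-(a+b))/2\leq(n+1)(n-1)/2=(n^2-1)/2$ (the last step since $(a+b)(2n-(a+b))$ is maximized at the endpoint $a+b=n+1$ of the admissible range). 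This yields $\Delta(A)\geq\frac{2n^4}{n^2-1}\geq 2n^2$, completing the proof.

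The only mildly tricky step is spotting that Cauchy--Schwarz applied to \emph{both} $X'$ and its complement (and analogously for columns) combines cleanly because the two deviations from the mean $n$ have the same magnitude $|an-M_3|$; the rest is a constrained optimization that simplifies because the constraint $\alpha+\beta\geq n$ is a direct consequence of $a+b\geq n+1$ together with the total mass constraint.
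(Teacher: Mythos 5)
Your proof is correct and follows essentially the same approach as the paper: you invoke the Hall's theorem characterization of $\capa(A)=0$ to obtain the zero block, split $\Delta(A)$ by the Hall partition, and apply Cauchy--Schwarz to the deviations of the row and column sums within $X'$, $Y'$ and their complements (noting the two sums over a set and its complement have equal magnitude). The only difference is in the endgame: the paper crudely bounds $n\left(\tfrac1a + \tfrac1{n-a}\right) \geq 4$ and then uses $2u^2 + 2v^2 \geq (u+v)^2$, whereas you retain the exact coefficients $\tfrac{n^2}{a(n-a)}$, $\tfrac{n^2}{b(n-b)}$, minimize subject to $\alpha+\beta \geq n$, and bound $a(n-a)+b(n-b) \leq (n^2-1)/2$, yielding the marginally sharper constant $\tfrac{2n^4}{n^2-1}$.
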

\begin{proof}
By the first part of Fact~\ref{f:permanent},
we have $\capa(A) = 0$ if and only if $\per(A)=0$.
By the second part of Fact~\ref{f:permanent}, 
$\per(A) = 0$ if and only if there exist 
\[U \subseteq [n] \quad {\rm and} \quad V \subseteq [n] \quad {\rm such~that} \quad
|U| + |V| > n \quad {\rm and} \quad A_{U, V}=0.
\]
Let $T := |U| + |V| > n$.
The claim in this lemma is clearly true when $|U| = n$ or $|V| = n$, as this implies that some $r_i$ or some $c_j$ is equal to zero.
So in the following we assume $|U| < n$ and $|V| < n$.
From Definition~\ref{d:matrix-Delta} and using our assumption that $s=n^2$,
\begin{align*}
\Delta(A)
& = \frac1n \sum_{i=1}^n (s - n r_i)^2 + \frac1n \sum_{j=1}^n (s - n c_j)^2 
\\
& = n \sum_{i=1}^n (n - r_i)^2 + n \sum_{j=1}^n (n - c_j)^2 
\\
& = n \sum_{i \in U} (n - r_i)^2 + n \sum_{i \not \in U} (n - r_i)^2 
+ n \sum_{j \in V} (n - c_j)^2 + n \sum_{j \not \in V} (n - c_j)^2 
\\
& \ge \frac n{|U|} \Big(\sum_{i \in U} (n - r_i)\Big)^2 + \frac n{n - |U|} \Big(\sum_{i \not \in U} (n - r_i)\Big)^2 + \frac n{|V|} \Big(\sum_{j \in V} (n - c_j)\Big)^2 + \frac n{n - |V|} \Big(\sum_{j \not \in V} (n - c_j)\Big)^2,
\end{align*}
where the inequality follows from $n \sum_{i = 1}^n x_i^2 \ge (\sum_{i = 1}^n x_i)^2$ by the Cauchy-Schwarz inequality.
To bound the right hand side of the above inequality,
we divide the non-zero entries of $A$ into three groups.
Let 
\[\alpha = \sum_{i \in U, j \notin V} A_{ij}, \quad \beta = \sum_{i \notin U, j \in V} A_{ij}, \quad {\rm and~} \gamma = \sum_{i \not \in U, j \not \in V} A_{ij}
.\]
Then $\alpha + \beta + \gamma = s = n^2$ and the right hand side of the above inequality can be written as
\begin{align*}
&~\frac n{|U|} (n |U| - \alpha)^2 + \frac n{n - |U|} (n (n - |U|) - (\beta + \gamma))^2 + \frac n{|V|} (n |V| - \beta)^2 + \frac n{n - |V|} (n (n - |V|) - (\alpha + \gamma))^2 
\\
= &~\frac n{|U|} (n |U| - \alpha)^2 + \frac n{n - |U|} (n |U| - \alpha)^2 + \frac n{|V|} (n |V| - \beta)^2 + \frac n{n - |V|} (n |V| - \beta)^2.
\end{align*}
Since $n(\frac1x + \frac1{n-x}) = (\frac1x + \frac1{n - x}) (x + (n - x)) \ge 4$ by Cauchy-Schwarz,
the above line is at least
\[
4 (n |U| - \alpha)^2 + 4 (n |V| - \beta)^2
\geq 2 (n |U| - \alpha + n |V| - \beta)^2
\geq 2 (nT - n^2)^2 
\geq 2n^2,
\]
where the first inequality is by $2a^2+2b^2 \geq (a+b)^2$,
the second inequality is by the definition that $T=|U|+|V|$ and $n^2=\alpha+\beta+\gamma \geq \alpha+\beta$,
and the final inequality is by $T > n$ which follows from the assumption that $\capa(A)=0$ as we argued in the beginning of this proof.
\end{proof}

We are ready to derive a lower bound for the matrix capacity.
The following proof is similar to the proof of Lemma~3.2 in~\cite{operator}, which is based on an argument in Claim~3.3 of~\cite{LSW}.

\begin{proposition} \label{p:matrix-cap}
If $A \in \mathbb R^{n \times n}$ with $s=\sum_{ij} A_{ij}$, then
\[
\capa(A) \geq s - n\sqrt{\frac{\Delta(A)}2}.
\]
\end{proposition}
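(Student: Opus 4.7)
The plan is to use the homogeneity of both sides in $A$ to normalize $s = n^2$, reducing the claim to $\capa(A) \ge n^2 - n\sqrt{\Delta(A)/2}$. If $\capa(A) = 0$, then Lemma~\ref{l:dist_lower_bound} already gives $\Delta(A) \ge 2n^2$, whence $n\sqrt{\Delta(A)/2} \ge n^2 = s$ and the right-hand side is non-positive. So I may assume $\capa(A) > 0$, in which case every row sum $r_i$ is strictly positive.

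The next step, following Lemma~3.2 of~\cite{operator}, is to reduce to a row-balanced matrix by a diagonal scaling. Set $D_{ii} := n/r_i$ and $B := DA$; then $B$ is row-balanced with $s(B) = n^2$ and all row sums equal to $n$, and a short computation analogous to Lemma~\ref{l:cap-change} yields
\[
\capa(A) \;=\; \frac{(\prod_i r_i)^{1/n}}{n}\,\capa(B).
\]
For the row-balanced $B$ I would prove a stand-alone bound of the form $\capa(B) \ge n^2 - n\sqrt{\Delta_{\mathrm{col}}(B)/2}$, where $\Delta_{\mathrm{col}}(B) := \tfrac{1}{n}\sum_j (n^2 - n c_j(B))^2$ captures only the column-sum deviation (since row sums are already uniform). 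The proof applies Jensen's inequality one row at a time, using that $B_{ij}/n$ is a probability distribution on $j$, to control $\log(Bx)_i$ from below, then bounds the Jensen gap via a Pinsker / $\chi^2$-type estimate, and finally minimizes over $x$ with $\prod_j x_j = 1$ to extract the $\Delta_{\mathrm{col}}(B)$ term. This closely mirrors the derivation of~\eqref{e:matrix-cap} in~\cite{operator}, adapted to the normalization used here.

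To assemble the final bound, I would use a refined AM--GM estimate to upper bound $n - (\prod_i r_i)^{1/n}$ by a quantity controlled by the row-sum portion $\Delta_{\mathrm{row}}(A) := \tfrac{1}{n}\sum_i (n^2 - n r_i)^2$, and combine it with the row-balanced bound on $\capa(B)$ (noting also that $\Delta_{\mathrm{col}}(B)$ is bounded by $\Delta_{\mathrm{col}}(A)$ up to a factor reflecting the scaling by $D$). The main obstacle is producing the tight constant $\sqrt{2}$: naively chaining the row-scaling step with an off-the-shelf row-balanced bound such as~\eqref{e:matrix-cap} gives only $\capa(A) \ge s - c n \sqrt{\Delta(A)}$ with a weaker constant. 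I expect the correct constant to emerge by merging the row-deviation and column-deviation $\ell_2$ contributions into a single norm via a Cauchy--Schwarz step of the form $\alpha+\beta \le \sqrt{2(\alpha^2+\beta^2)}$---the same device that produces the factor of two in the proof of Lemma~\ref{l:dist_lower_bound}---rather than estimating the two contributions separately.
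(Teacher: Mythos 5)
Your proposal takes a genuinely different route from the paper, and it has a real gap.

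The paper does \emph{not} row-balance first and then handle columns. Instead it writes $A = \lambda B + (1-\lambda) C$ with $\lambda \in [0,1]$, where $B$ is doubly balanced with $s(B) = n^2$ and $\capa(C) = 0$ (this decomposition comes from repeatedly peeling off permutations, as in the Birkhoff/LSW argument). Because $B$ is doubly balanced, $s(B)-nr_i(B) = 0 = s(B)-nc_j(B)$ for all $i,j$, so $\Delta(A) = (1-\lambda)^2 \Delta(C)$ falls out of the definition with no loss. Lemma~\ref{l:dist_lower_bound} then gives $\Delta(C) \ge 2n^2$, hence $\lambda \ge 1-\sqrt{\Delta(A)/(2n^2)}$, and monotonicity of capacity (entrywise $A \ge \lambda B$) plus $\capa(B)=s(B)=n^2$ give the stated inequality. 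The decomposition thus controls the row and column deviations \emph{simultaneously} and the factor~$2$ comes straight from Lemma~\ref{l:dist_lower_bound}; no row-balancing, no Jensen gap, no AM--GM refinement.

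The gap in your route is that you rely on two estimates you explicitly leave open and that are actually problematic. First, the row-scaling $B = DA$ with $D_{ii} = n/r_i$ distorts the column sums multiplicatively by the $D_{ii}$'s, and you only assume $r_i > 0$ (from $\capa(A)>0$), so the $D_{ii}$ can be arbitrarily large; the claim that ``$\Delta_{\mathrm{col}}(B)$ is bounded by $\Delta_{\mathrm{col}}(A)$ up to a factor reflecting the scaling by $D$'' therefore cannot be made quantitative without further hypotheses, and in fact it fails to give any bound independent of the row imbalance. Second, you acknowledge that naively chaining the AM--GM gap with a row-balanced capacity bound such as~\eqref{e:matrix-cap} only yields $\capa(A) \ge s - cn\sqrt{\Delta(A)}$ with a worse constant, and propose that a Cauchy--Schwarz merge of $\Delta_{\mathrm{row}}$ and $\Delta_{\mathrm{col}}$ will recover the $\sqrt{2}$; but the two contributions arise from structurally different places (a scaling prefactor and a separate capacity bound), so the clean $\alpha+\beta \le \sqrt{2(\alpha^2+\beta^2)}$ merge is not available, and you give no argument that the constant actually closes. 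The convex-decomposition proof avoids both problems entirely, which is precisely why the paper uses it.
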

\begin{proof}
We first assume that $s=n^2$ and prove that $\capa(A) \geq n^2 - n\sqrt{\Delta(A)/2}$, and then we derive the general case by scaling.
We write $A$ as a sum of a doubly balanced matrix and a non-balanced part
such that
\[A = \lambda B + (1 - \lambda) C {\rm~for~} \lambda \in [0, 1], 
\quad \sum_{ij} B_{ij} = \sum_{ij} C_{ij} = n^2, 
\quad B {\rm~is~doubly~balanced~and~} \capa(C)=0.
\]
Such a decomposition exists by repeatedly removing permutations from $A$ as in~\cite{LSW,operator}.

Since $B$ is doubly balanced and $\sum_{ij} B_{ij} = n^2$,
we have $r_i(B) = n$ for $1 \leq i \leq n$ and $c_j(B) = n$ for $1 \leq j \leq n$.
This implies that
\begin{eqnarray*}
\Delta(A) & = & 
\frac{1}{n} \sum_{i=1}^n \Big(n^2 -  n\big(\lambda r_i(B) + (1-\lambda) r_i(C) \big) \Big)^2 + \frac{1}{n} \sum_{j=1}^n \Big( n^2 -  n\big( \lambda c_j(B) + (1-\lambda) c_j(C) \big) \Big)^2
\\
& = &
\frac{1}{n} \sum_{i=1}^n \Big((1-\lambda)(n^2 - nr_i(C)) \Big)^2 + \frac{1}{n} \sum_{j=1}^n \Big( (1-\lambda)(n^2 - nc_j(C)) \Big)^2
\\
& = &
(1-\lambda)^2 \Delta(C) 
\\
& \geq & 
2(1-\lambda)^2 n^2,
\end{eqnarray*}
where the inequality is from Lemma~\ref{l:dist_lower_bound} as $\sum_{ij} C_{ij}=n^2$ and $\capa(C)=0$.
This implies that 
\[\lambda \ge 1 - \sqrt{\frac{\Delta(A)}{2n^2}}.\] 
It follows from Definition~\ref{d:matrix-capacity} that
\[
\capa(A)
= \capa(\lambda B + (1-\lambda)C)
\geq \capa(\lambda B)
= \lambda \capa(B)
= \lambda n^2
\ge n^2(1 - \sqrt{\frac{\Delta(A)}{2n^2}})
= n^2 - n\sqrt{\frac{\Delta(A)}{2}},
\]
as $\capa(B)=s(B)=n^2$ by Lemma~\ref{l:matrix-capUB}.
This proves the lemma for the case when $s=n^2$.
For the general case, we can reduce to the special case by considering 
$(n^2 / s) A$ and get
\[
\capa(A)
= \frac s{n^2} \capa(\frac{n^2}s A)
\geq \frac s{n^2} \Big(n^2 - n\sqrt{\frac{\Delta((n^2/s)A)}{2}}\Big)
= s - \frac{s}{n} \frac{n^2}{s} \sqrt{\frac{\Delta(A)}{2}}
= s - n\sqrt{\frac{\Delta(A)}2}.
\]
\end{proof}

The matrix capacity lower bound for rectangular matrices follows, which we will use in Section~\ref{s:smoothed}.

\begin{proposition} \label{p:matrix-capLB}
If $A \in \mathbb R^{m \times n}$ with $s=\sum_{ij} A_{ij}$, then
\[
\capa(A) \geq s - mn\sqrt{\frac{\Delta(A)}2}.
\]
\end{proposition}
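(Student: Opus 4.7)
The plan is to reduce the rectangular case directly to the square case already handled in Proposition~\ref{p:matrix-cap}, using the tensor product reduction of Lemma~\ref{l:reduction-square}. Concretely, given $A \in \R^{m \times n}$, I would form the square matrix
\[
B \;:=\; A \otimes \tfrac{1}{mn} J_{n \times m} \;\in\; \R^{mn \times mn}.
\]
By Lemma~\ref{l:reduction-square} (applied with $p = n$ and $q = m$, so that $\frac{1}{pq} J_{p \times q} = \frac{1}{mn} J_{n \times m}$), the three invariants transfer exactly:
\[
\capa(B) = \capa(A), \qquad \Delta(B) = \Delta(A), \qquad s(B) = s(A).
\]

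Now $B$ is a square matrix of order $N := mn$, so Proposition~\ref{p:matrix-cap} applies and gives
\[
\capa(B) \;\ge\; s(B) \;-\; N \sqrt{\tfrac{\Delta(B)}{2}} \;=\; s(B) \;-\; mn\sqrt{\tfrac{\Delta(B)}{2}}.
\]
Substituting the identifications from Lemma~\ref{l:reduction-square} into this bound yields
\[
\capa(A) \;=\; \capa(B) \;\ge\; s(B) - mn\sqrt{\tfrac{\Delta(B)}{2}} \;=\; s(A) - mn\sqrt{\tfrac{\Delta(A)}{2}},
\]
which is exactly the claimed bound.

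There is no genuine obstacle here, since all the work has been done: the reduction to a square matrix via tensoring with a suitably normalized all-ones matrix was already established in Lemma~\ref{l:reduction-square}, and the lower bound for square matrices was proved in Proposition~\ref{p:matrix-cap}. The only mild subtlety worth checking is the normalization: the factor $\tfrac{1}{mn}$ in front of $J_{n \times m}$ is the one that makes $\tfrac{1}{mn} J_{n \times m}$ doubly balanced with size one, which is exactly the hypothesis under which Lemma~\ref{l:reduction-square} preserves $\capa$, $\Delta$, and $s$. As noted in Remark~\ref{r:matrix_size_depends_on_gcd}, one could instead tensor with $\tfrac{g^2}{mn} J_{n/g \times m/g}$ where $g = \gcd(m,n)$ to obtain a sharper bound $\capa(A) \ge s - (mn/g)\sqrt{\Delta(A)/2}$; but for the statement as written, the simple $g=1$ reduction suffices.
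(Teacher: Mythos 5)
Your proof is correct and follows exactly the same route as the paper: apply Lemma~\ref{l:reduction-square} to form the square matrix $B = A \otimes \tfrac{1}{mn} J_{n\times m}$, then invoke the square-case bound from Proposition~\ref{p:matrix-cap}. The extra remarks about normalization and the $\gcd$ refinement are accurate but not needed for the statement as given.
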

\begin{proof}
Given a non-negative matrix $A \in \R^{m \times n}$,
we apply the reduction in Lemma~\ref{l:reduction-square} to construct a non-negative square matrix $B \in \R^{mn \times mn}$ with $\capa(B)=\capa(A)$, $\Delta(B)=\Delta(A)$ and $s(B)=s(A)$.
Applying Proposition~\ref{p:matrix-cap} on $B$, we have
\[
\capa(A) = \capa(B) \geq s(B) - mn\sqrt{\frac{\Delta(B)}{2}} = s(A)-mn\sqrt{\frac{\Delta(A)}{2}}.
\]
\end{proof}

We can finally prove point (iv) in Subsection~\ref{ss:outline}.

\begin{theorem} \label{t:cap_lower_bound}
Let $\U = \{U_1, \ldots, U_k\}$ with $U_i \in \R^{m \times n}$ for $1 \leq i \leq k$.
\[
\capa(\U) \geq s(\U) - mn\sqrt{\frac{\Delta(\U)}{2}}.
\]
\end{theorem}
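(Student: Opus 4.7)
The plan is to combine the two main technical results already established in this subsection, namely the reduction from operator capacity to matrix capacity and the matrix capacity lower bound for rectangular matrices. There is essentially nothing new to prove; the statement is the operator-capacity analogue of Proposition~\ref{p:matrix-capLB}, and the bridge between the two settings is exactly Proposition~\ref{p:reduction-capacity}.

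First, I would apply Proposition~\ref{p:reduction-capacity} to the operator $\U$ to obtain a non-negative rectangular matrix $A \in \R^{m \times n}$ satisfying the three comparison inequalities
\[
\capa(A) \leq \capa(\U), \quad \Delta(A) \leq \Delta(\U), \quad s(A) = s(\U).
\]
Next, I would apply Proposition~\ref{p:matrix-capLB} (the rectangular matrix capacity lower bound, which itself is obtained from Proposition~\ref{p:matrix-cap} via the square-matrix reduction in Lemma~\ref{l:reduction-square}) to get
\[
\capa(A) \geq s(A) - mn\sqrt{\frac{\Delta(A)}{2}}.
\]

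Chaining these gives
\[
\capa(\U) \geq \capa(A) \geq s(A) - mn\sqrt{\frac{\Delta(A)}{2}} \geq s(\U) - mn\sqrt{\frac{\Delta(\U)}{2}},
\]
where the last inequality uses $s(A) = s(\U)$ together with monotonicity of $-\sqrt{\cdot}$ applied to $\Delta(A) \leq \Delta(\U)$. This completes the proof.

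Since the two inputs to the argument are already in hand, there is no genuine obstacle left. If one were encountering the result for the first time without those propositions, the main obstacle would have been the reduction in Proposition~\ref{p:reduction-capacity}: choosing the right test vector $\lambda$ (the eigenvalues of the minimizer $X$ for operator capacity) and defining the matrix $A_{ij} = g_i^T(\sum_l U_l f_j f_j^T U_l^T) g_i$ so that the row/column sums of $A$ collect the expressions appearing in $\Delta(\U)$, and then controlling the loss in $\Delta$ using $g_i g_i^T \preceq I_m$ and $\sum_i g_i g_i^T = I_m$. But all of this has already been carried out, so the theorem is an immediate corollary.
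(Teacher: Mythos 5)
Your proof is correct and follows exactly the same route as the paper: reduce to a matrix $A$ via Proposition~\ref{p:reduction-capacity}, apply the rectangular matrix capacity lower bound of Proposition~\ref{p:matrix-capLB}, and chain the inequalities using $s(A)=s(\U)$ and $\Delta(A)\leq\Delta(\U)$. There is nothing to add.
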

\begin{proof}
Given $\U$, 
we apply the reduction in Proposition~\ref{p:reduction-capacity}
to construct a non-negative matrix $A \in \R^{m \times n}$ with $\capa(A) \leq \capa(\U)$ and $\Delta(A) \leq \Delta(\U)$ and $s(A) = s(\U)$.
Applying Proposition~\ref{p:matrix-capLB}, we have
\[
\capa(\U) \geq \capa(A) \geq s(A) - mn\sqrt{\frac{\Delta(A)}{2}} \geq s(\U) - mn\sqrt{\frac{\Delta(\U)}{2}}.
\]
\end{proof}

\subsubsection*{Tight Example}

We give an example to show that the capacity lower bound $\capa(A) \geq s(A) - mn\sqrt{\frac{\Delta(A)}{2}}$ is tight.

\begin{lemma} \label{l:tight}
There is an infinite sequence of $m_k \times n_k$ matrices $A_k$ with 
\[\capa(A_k)=0, \quad s(A_k)=1, \quad {\rm and} \quad \Delta(A_k)=(2+o(1))\frac{1}{m_k^2n_k^2}.
\]
\end{lemma}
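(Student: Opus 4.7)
The plan is to exhibit an explicit family $A_k \in \R^{m_k \times n_k}$ with $m_k = 2k+1$ and $n_k = 2k-1$, so that $\gcd(m_k, n_k) = 1$. Coprimality is essential: it will allow a Hall-type zero block with deficit exactly $1$, which is what makes the rectangular bound $\capa(A) \ge s(A) - mn\sqrt{\Delta(A)/2}$ of Proposition~\ref{p:matrix-capLB} attainable up to a $(1+o(1))$ factor.

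Concretely, partition $[m_k] = U \cup \bar U$ with $|U| = k$ and $|\bar U| = k+1$, and $[n_k] = V \cup \bar V$ with $|V| = k$ and $|\bar V| = k-1$. Define $A_k$ as a block matrix whose entries are $0$ on $U \times V$ and on $\bar U \times \bar V$, a positive constant $a$ on $U \times \bar V$, and a positive constant $b$ on $\bar U \times V$. The first step will be to verify $\capa(A_k) = 0$. By Lemma~\ref{l:reduction-square} and Fact~\ref{f:permanent}, it suffices to exhibit a Hall violation in the bipartite graph associated to the blowup $B := A_k \otimes \tfrac{1}{m_k n_k} J_{n_k \times m_k}$. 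Taking the left-hand set $S := U \times [n_k]$, its neighborhood in $B$ is $\{(j,l) : j \notin V\}$, so the Hall deficit is $|S| - |N(S)| = k(2k-1) - (k-1)(2k+1) = 1$.

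The second step is to optimize $a$ and $b$ to minimize $\Delta(A_k)$ subject to $s(A_k) = 1$. Because the block structure makes the row and column sums constant within each of $U, \bar U, V, \bar V$, setting $\alpha := k(k-1) a$ and $\beta := k(k+1) b = 1 - \alpha$ reduces $\Delta$ to a univariate quadratic
\[
\Delta(\alpha) \;=\; \frac{(k - (2k+1)\alpha)^2}{k(k+1)} \;+\; \frac{((2k-1)\alpha - (k-1))^2}{k(k-1)}.
\]
Minimizing and simplifying should yield the closed form $\Delta(A_k) = 1/(2k^2(4k^2 - 3))$. Since $m_k n_k = 4k^2 - 1$, we then obtain $\Delta(A_k) \cdot (m_k n_k)^2 = (4k^2 - 1)^2 / (2k^2 (4k^2 - 3)) = 2 + O(1/k^2)$, which is exactly $(2 + o(1))/(m_k^2 n_k^2)$.

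The main obstacle is recognizing why the construction must be genuinely rectangular rather than square. For square $n \times n$ matrices, the stronger bound $\capa(A) \ge s - n\sqrt{\Delta/2}$ of Proposition~\ref{p:matrix-cap} rules out $\Delta$ as small as $\Theta(1/n^4)$; the tight square examples only achieve $\Delta = \Theta(1/n^2)$. Matching the rectangular bound therefore forces one to exploit the mismatch between $m$ and $n$, and requiring $\gcd(m, n) = 1$ is the cleanest way to arrange a Hall violation of deficit exactly $1$: any larger deficit would push $\Delta$ up by an additional polynomial factor and prevent the ratio $\Delta \cdot (mn)^2$ from converging to $2$.
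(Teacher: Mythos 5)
Your construction is the transpose of the paper's (the paper takes $A_k$ to be $(2k-1)\times(2k+1)$ with zero blocks on the diagonal; you take $(2k+1)\times(2k-1)$ with the same structure), and since $\Delta$ and the vanishing of capacity are invariant under transposition this is essentially the same approach. Your reduction of $\Delta$ to the univariate quadratic in $\alpha$ and the resulting minimum $\Delta(A_k)=1/(2k^2(4k^2-3))=1/(8k^4-6k^2)$ agree exactly with the paper's brute-force computation, and your Hall-deficit verification of $\capa(A_k)=0$ via the blowup and Fact~\ref{f:permanent} is a correct (if slightly more detailed) version of the paper's one-line appeal to Subsection~\ref{ss:capLB}.
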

\begin{proof}
Let $A_k$ be a $(2k - 1) \times (2k + 1)$ matrix with block structure:
\[
A_k
=
\begin{pmatrix}
0_{k \times k} & x J_{k \times (k + 1)} \\
y J_{(k - 1) \times k} & 0_{(k - 1) \times (k + 1)},
\end{pmatrix}
\]
where $J_{m \times n}$ is the $m \times n$ all one matrix.
It is not difficult to see that $A_k$ has capacity zero using the results in Subsection~\ref{ss:capLB}.
We will choose $x$ and $y$ so that $s(A_k) = 1$ and $\Delta(A_k)$ is minimized.
See Appendix~\ref{a:tight} for the remaining details.
\end{proof}

\subsection{A Bound on the Operator Paulsen Problem} \label{ss:operator-Paulsen}

We will give a bound on the operator Paulsen problem by using our dynamical system.
First, we preprocess the input so that we can obtain a bound on $\Delta^{(0)}$ based on the error $\eps$ as defined in Definition~\ref{d:epsDS}.
Then, we apply our dynamical system to obtain a doubly balanced output, 
and we postprocess the output to obtain a doubly stochastic solution.

\subsubsection*{Preprocessing}

To apply Theorem~\ref{t:total-movement} on the Paulsen problem,
we need a bound on $\Delta^{(0)}$ based on the error $\eps$ as defined in Definition~\ref{d:epsDS}.
We first bound $\Delta^{(0)}$ in the following lemma with the additional assumption that $s^{(0)}=m$,
and then we show how to do a simple preprocessing to ensure that $s^{(0)}=m$ in the next lemma.

\begin{lemma} \label{l:Delta-eps}
Suppose $\U^{(0)}$ is $\eps$-nearly doubly stochastic as defined in Definition~\ref{d:epsDS}, and $s^{(0)}=m$.
Then 
\[
\Delta^{(0)} \leq 2m^2\eps^2.
\]
\end{lemma}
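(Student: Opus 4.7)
The plan is a direct computation exploiting the hypothesis $s^{(0)}=m$ to rewrite the two summands of $\Delta^{(0)}$ in a form where the $\eps$-nearly doubly stochastic inequalities translate immediately into eigenvalue bounds.

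First I would set $B_m := \sum_{i=1}^k U_i^{(0)}(U_i^{(0)})^T$ and $B_n := \sum_{i=1}^k (U_i^{(0)})^T U_i^{(0)}$. Substituting $s^{(0)}=m$ into Definition~\ref{d:Delta} gives
\[
\Delta^{(0)} \;=\; \frac{1}{m}\,\tr\bigl[(mI_m - mB_m)^2\bigr] + \frac{1}{n}\,\tr\bigl[(mI_n - nB_n)^2\bigr] \;=\; m\,\tr\bigl[(I_m - B_m)^2\bigr] + \frac{n^{2}}{n}\,\tr\Bigl[\bigl(B_n - \tfrac{m}{n}I_n\bigr)^2\Bigr].
\]
The point of this rewriting is that the $\eps$-nearly doubly stochastic hypothesis (Definition~\ref{d:epsDS}) says exactly that $I_m - B_m$ and $B_n - \frac{m}{n}I_n$ are small in the $\preceq$ order.

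Next I would translate these Loewner bounds into eigenvalue bounds. From $(1-\eps)I_m \preceq B_m \preceq (1+\eps)I_m$, every eigenvalue of $I_m - B_m$ lies in $[-\eps,\eps]$, so $\tr[(I_m - B_m)^2] \leq m\eps^2$. Similarly, $(1-\eps)\frac{m}{n}I_n \preceq B_n \preceq (1+\eps)\frac{m}{n}I_n$ shows that every eigenvalue of $B_n - \frac{m}{n}I_n$ lies in $[-\eps\frac{m}{n}, \eps\frac{m}{n}]$, whence $\tr[(B_n - \tfrac{m}{n}I_n)^2] \leq n\cdot \eps^2 \frac{m^2}{n^2} = \frac{m^2\eps^2}{n}$.

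Plugging these two bounds back into the displayed expression for $\Delta^{(0)}$ gives
\[
\Delta^{(0)} \;\leq\; m\cdot m\eps^2 \;+\; n\cdot \frac{m^2\eps^2}{n} \;=\; 2m^2\eps^2,
\]
which is the desired conclusion. There is no real obstacle in this proof: once one notices that the substitution $s^{(0)}=m$ makes both Loewner inequalities directly comparable to $I_m - B_m$ and $B_n - \frac{m}{n}I_n$, the estimate is a one-line eigenvalue calculation. The only thing worth being careful about is the factor of $n$ in the second term, where the $n^2$ from $(mI_n - nB_n)^2 = n^2(B_n - \tfrac{m}{n}I_n)^2$ combines with the per-eigenvalue bound $(\eps m/n)^2$ and the $n$ eigenvalues to yield exactly $nm^2\eps^2$, which is then divided by $n$ in the definition of $\Delta$.
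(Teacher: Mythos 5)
Your proof is correct and follows essentially the same route as the paper: both substitute $s^{(0)}=m$ and then convert the Loewner bounds of Definition~\ref{d:epsDS} into a trace bound $\tr[(C)^2]\leq (\text{dimension})\cdot(\text{eigenvalue bound})^2$ for each of the two terms. You just spell out the intermediate eigenvalue step that the paper compresses into a one-line $\tr[\cdot]\leq\tr[(m\eps I)^2]$ comparison.
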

\begin{proof}
The first condition of Definition~\ref{d:epsDS} implies that
\[\frac{1}{m} \tr[(mI_m - m\sum_{i=1}^k U_i U_i^T)^2]
\leq \frac{1}{m} \tr[(m\eps I_m)^2]
= m \tr[(\eps I_m)^2] = m^2 \eps^2.
\]
The second condition of Definition~\ref{d:epsDS} implies that
\[\frac{1}{n} \tr[(mI_n - n\sum_{i=1}^k U_i^T U_i)^2]
\leq \frac{1}{n} \tr[(m\eps I_n)^2] = \frac{m^2}{n} \tr[(\eps I_n)^2] = \frac{m^2}{n} n \eps^2 = m^2 \eps^2.
\]
Therefore, using $s^{(0)}=m$,
\[\Delta^{(0)} = \frac{1}{m} \tr[(mI_m - m\sum_{i=1}^k U_i U_i^T)^2] + 
\frac{1}{n} \tr[(mI_n - n\sum_{i=1}^k U_i^T U_i)^2]
\leq 2m^2\eps^2.\]
\end{proof}

We now describe the preprocessing step to satisfy $s^{(0)}=m$.

\begin{lemma} \label{l:preprocess}
Given $\U$ that is $\eps$-nearly doubly stochastic for $\eps \leq 1/2$ as in Definition~\ref{d:epsDS},
we can scale $\U$ to produce $\U^{(0)}$ such that
$\U^{(0)}$ is $O(\eps)$-nearly doubly stochastic and $s^{(0)}=m$
and $\dist(\U^{(0)},\U) = O(\eps^2 m)$.
\end{lemma}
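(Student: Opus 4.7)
The plan is to rescale all matrices in $\U$ by a single positive scalar $\alpha$ so as to force $s^{(0)} = m$, and then verify that this rescaling only perturbs everything by second order in $\eps$. Specifically, set $\alpha := \sqrt{m/s(\U)}$ and define $U_i^{(0)} := \alpha\, U_i$ for $1 \leq i \leq k$. Then by definition $s^{(0)} = \alpha^2 s(\U) = m$, which takes care of the size requirement for free.

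Next I would verify that $\U^{(0)}$ remains $O(\eps)$-nearly doubly stochastic. The hypothesis $(1-\eps) I_m \preceq \sum_i U_i U_i^T \preceq (1+\eps) I_m$ implies $s(\U) \in [(1-\eps)m,(1+\eps)m]$, hence $\alpha^2 \in [1/(1+\eps),\, 1/(1-\eps)]$. Multiplying the original two-sided bounds by $\alpha^2$ gives
\[
\frac{1-\eps}{1+\eps} I_m \preceq \sum_i U_i^{(0)} (U_i^{(0)})^T \preceq \frac{1+\eps}{1-\eps} I_m,
\]
and similarly for $\sum_i (U_i^{(0)})^T U_i^{(0)}$ with the prefactor $m/n$. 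For $\eps \leq 1/2$ both $(1-\eps)/(1+\eps)$ and $(1+\eps)/(1-\eps)$ lie within an additive $O(\eps)$ of $1$, so $\U^{(0)}$ is $O(\eps)$-nearly doubly stochastic as required.

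Finally I would bound the squared distance. By definition of the distance and the fact that scaling is uniform,
\[
\dist(\U^{(0)},\U) = \sum_{i=1}^k \norm{(\alpha-1)U_i}_F^2 = (\alpha-1)^2 s(\U).
\]
From $\alpha^2 - 1 = (m-s(\U))/s(\U)$ and $|s(\U)-m| \leq \eps m$ with $s(\U) \geq (1-\eps)m \geq m/2$, we get $|\alpha^2 - 1| \leq 2\eps$; since $\alpha \geq \sqrt{1/(1+\eps)} > 1/\sqrt{2}$, dividing by $\alpha + 1 > 1$ gives $|\alpha - 1| = O(\eps)$. Combined with $s(\U) \leq (1+\eps)m = O(m)$, this yields $\dist(\U^{(0)},\U) = O(\eps^2 m)$.

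There is no real obstacle here — the lemma is a routine normalization. The only mildly delicate point is the quadratic-in-$\eps$ distance bound: a naive linearization would only give $O(\eps\cdot m)$, but using $|\alpha-1| = |\alpha^2-1|/(\alpha+1)$ together with $s(\U) = \Theta(m)$ gives the required $O(\eps^2 m)$, which is what later arguments (e.g.\ Lemma~\ref{l:Delta-eps}, which needs $s^{(0)} = m$ exactly) rely on to be dominated by the total-movement bound from Theorem~\ref{t:total-movement}.
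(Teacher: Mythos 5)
Your proof is correct and follows essentially the same route as the paper: rescale every $U_i$ by the single scalar $\sqrt{m/s(\U)}$ to force $s^{(0)}=m$, observe that $s(\U)\in[(1-\eps)m,(1+\eps)m]$ makes the scalar $1+O(\eps)$, and then read off both the preserved near-double-stochasticity and the $\dist(\U^{(0)},\U)=(\sqrt{m/s}-1)^2 s(\U)=O(\eps^2 m)$ bound. Your explicit factorization $|\alpha-1|=|\alpha^2-1|/(\alpha+1)$ is just a slightly more spelled-out version of the paper's one-line $\sum_i O(\|\eps U_i\|_F^2)$ step.
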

\begin{proof}
Let $s:=s(\U)$ be the size of $\U$ as in Definition~\ref{d:size}.
We simply scale each entry in $\U$ by a factor of $\sqrt{m/s}$ to produce $\U^{(0)}$.
By construction,
\[s^{(0)} = \sum_{i=1}^k \norm{\sqrt{\frac{m}{s}} U_i}_F^2 
= \frac{m}{s} \sum_{i=1}^k \norm{U_i}_F^2 = m.
\]
Since $\U$ is $\eps$-nearly doubly stochastic as stated in Definition~\ref{d:epsDS},
we have $(1-\eps)m \leq s \leq (1+\eps)m$.
This implies that $1-O(\eps) \leq m/s \leq 1+O(\eps)$ when $\eps \leq 1/2$ and thus
\[\dist(\U,\U^{(0)}) =
\sum_{i=1}^k \norm{U_i - \sqrt{\frac{m}{s}} U_i}_F^2
= \sum_{i=1}^k O(\norm{\eps U_i}_F^2) = O(\eps^2 m).
\]
Finally, it is clear that $\U^{(0)}$ is still $O(\eps)$-nearly doubly stochastic as for instance
\[\sum_{i=1}^k U_i^{(0)} {U_i^{(0)}}^T 
= \frac{m}{s} \sum_{i=1}^k U_i U_i^T 
\preceq (1+O(\eps))I_m
\]
and similarly for the other condition in Definition~\ref{d:epsDS}.
\end{proof}

\subsubsection*{Applying the Dynamical System and Postprocessing}

We will apply Theorem~\ref{t:total-movement} to the preprocessed input $\U^{(0)}$ to obtain an doubly balanced output $\U^{(\infty)}$ with $\Delta(\U^{(\infty)})=0$.
We do a simple postprocessing on $\U^{(\infty)}$ to obtain a doubly stochastic solution $\V$, and then we bound the squared distance of the input $\U$ and the final output $\V$.

\begin{lemma} \label{l:postprocess}
Let $\U^{(0)}$ be $\eps$-nearly doubly stochastic with $s^{(0)}=m$.
Let $\U^{(\infty)}$ be the output of the dynamical system when given $\U^{(0)}$ as the input.
Then, we can move $\U^{(\infty)}$ to $\V$ such that $\V$ is doubly stochastic as in Definition~\ref{d:DS} and $\dist(\U^{(\infty)},\V) \leq O(mn\sqrt{\Delta^{(0)}})$.
\end{lemma}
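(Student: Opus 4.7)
The plan is to construct $\V$ by uniformly scaling $\U^{(\infty)}$, and then control the distance via the bound on how much the size $s^{(t)}$ could have shrunk along the flow, which is in turn controlled by the operator capacity lower bound.

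First, observe that since $\U^{(\infty)}$ is doubly balanced with $\Delta(\U^{(\infty)})=0$, there is a scalar $c \ge 0$ such that $\sum_i U_i^{(\infty)} (U_i^{(\infty)})^T = cn I_m$ and $\sum_i (U_i^{(\infty)})^T U_i^{(\infty)} = cm I_n$, and taking traces forces $c = s^{(\infty)}/(mn)$. Thus, up to a single global rescaling, $\U^{(\infty)}$ is already doubly stochastic: I would define $V_i := \sqrt{m/s^{(\infty)}}\,U_i^{(\infty)}$, and then a direct calculation verifies $\sum_i V_i V_i^T = I_m$ and $\sum_i V_i^T V_i = (m/n) I_n$, so $\V$ meets Definition~\ref{d:DS}.

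The distance is then computed directly:
\[
\dist(\U^{(\infty)},\V) \;=\; \bigl(1-\sqrt{m/s^{(\infty)}}\bigr)^{2}\,s^{(\infty)} \;=\; \bigl(\sqrt{m}-\sqrt{s^{(\infty)}}\bigr)^{2} \;\le\; \frac{(m-s^{(\infty)})^{2}}{m}.
\]
So the whole task reduces to bounding $m - s^{(\infty)}$. By Lemma~\ref{l:s'} the size is non-increasing, so $s^{(\infty)} \le s^{(0)} = m$. For the lower bound, I would combine the three key ingredients already established: $s^{(\infty)} \ge \capa(\U^{(\infty)})$ by Lemma~\ref{l:capUB}, $\capa(\U^{(\infty)}) = \capa(\U^{(0)})$ by Lemma~\ref{l:cap-unchanged}, and $\capa(\U^{(0)}) \ge s^{(0)} - mn\sqrt{\Delta^{(0)}/2} = m - mn\sqrt{\Delta^{(0)}/2}$ by Theorem~\ref{t:cap_lower_bound}. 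Chaining these yields $m - s^{(\infty)} \le mn\sqrt{\Delta^{(0)}/2}$, and plugging into the distance formula gives
\[
\dist(\U^{(\infty)},\V) \;\le\; \frac{(mn\sqrt{\Delta^{(0)}/2})^{2}}{m} \;=\; \frac{mn^{2}\Delta^{(0)}}{2}.
\]

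To reach the claimed $O(mn\sqrt{\Delta^{(0)}})$, one writes $\tfrac{mn^{2}\Delta^{(0)}}{2} = (mn\sqrt{\Delta^{(0)}}) \cdot \tfrac{n\sqrt{\Delta^{(0)}}}{2}$; in the regime where $n\sqrt{\Delta^{(0)}} = O(1)$, which is exactly the regime in which the capacity lower bound is nonvacuous, this is $O(mn\sqrt{\Delta^{(0)}})$. In the complementary regime $n\sqrt{\Delta^{(0)}} \gtrsim 1$, the target bound $mn\sqrt{\Delta^{(0)}} \gtrsim m$ is larger than the trivial diameter bound on a set of operators of size $\le m$, and one can simply take $\V$ to be any fixed doubly stochastic operator. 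The only delicate point is the degenerate case $s^{(\infty)}=0$, which by the lower bound above can only occur when $n\sqrt{\Delta^{(0)}/2} \ge 1$, and there the trivial choice of $\V$ again suffices. No real obstacle is expected; the entire argument is essentially a single capacity-based estimate followed by a scalar inequality.
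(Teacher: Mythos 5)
Your proposal is correct and follows essentially the same route as the paper: the same rescaling $V_i := \sqrt{m/s^{(\infty)}}\,U_i^{(\infty)}$, the same chain $s^{(\infty)} \ge \capa(\U^{(\infty)}) = \capa(\U^{(0)}) \ge m - mn\sqrt{\Delta^{(0)}/2}$ from Lemmas~\ref{l:capUB}, \ref{l:cap-unchanged} and Theorem~\ref{t:cap_lower_bound}, the same scalar inequality to convert $(\sqrt{m}-\sqrt{s^{(\infty)}})^2$ into the stated bound under $n\sqrt{\Delta^{(0)}}=O(1)$, and the same fallback to an arbitrary doubly stochastic $\V$ (covering $s^{(\infty)}=0$) when $n\sqrt{\Delta^{(0)}}\gtrsim 1$.
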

\begin{proof}
Denote $s:=s^{(\infty)}$.
We consider two cases.
The first case is when $s=0$.
We will argue that in this case the bound $O(mn\sqrt{\Delta^{(0)}})$ is trivial that setting $V$ to be any doubly stochastic operator will do.
First, we argue that $s=0$ implies that $\capa(\U^{(0)})=0$.
By Lemma~\ref{l:capUB} and Theorem~\ref{t:cap_lower_bound}, we have for any $t \geq 0$,
\[s^{(t)} \geq \capa(\U^{(t)}) \geq s^{(t)} - mn\sqrt{\Delta^{(t)}/2}.\]
Since $\Delta^{(\infty)}=0$ and the capacity is unchanged over time by Lemma~\ref{l:cap-unchanged},
we have 
\begin{equation} \label{e:s-infty}
s^{(\infty)} = \capa(\U^{(\infty)}) = \capa(\U^{(0)}),
\end{equation}
and thus $\capa(\U^{(0)})=0$ when $s=0$.
Then it follows from Theorem~\ref{t:cap_lower_bound} that
\[0=\capa(\U^{(0)}) \geq s^{(0)} - mn\sqrt{\Delta^{(0)}/2} = m-mn\sqrt{\Delta^{(0)}/2} \quad \implies \quad n\sqrt{\Delta^{(0)}} \geq \sqrt{2}.
\]
Let $\V$ be any doubly stochastic operator.
We still have
\[
\dist(\U^{(\infty)},\V) = \sum_{i=1}^k \norm{U_i^{(\infty)} - V_i}_F^2
\leq 2\sum_{i=1}^k \Big(\norm{U_i^{(\infty)}}_F^2 + \norm{V_i}_F^2\Big)
= O(m) = O(mn\sqrt{\Delta^{(0)}}).
\]
To summarize, this is the trivial case that moving $\U^{(0)}$ to any doubly stochastic operator will do.
Henceforth, we can assume that $n\sqrt{\Delta^{(0)}} < \sqrt{2}$.

The second case is when $s > 0$.
In this case, we simply scale each entry of $\U^{(\infty)}$ by a factor of $\sqrt{m/s}$ to produce $\V$, that is, we set $V_i = \sqrt{m/s} \cdot U_i^{(\infty)}$ for $1 \leq i \leq k$.
We can check that $\V = \{V_1, \ldots, V_k\}$ is doubly stochastic as $\Delta(\U^{(\infty)})=0$ implies
\[
sI_m - m\sum_{i=1}^k U_i^{(\infty)} {U_i^{(\infty)}}^T = 0
\quad \implies \quad
I_m - \sum_{i=1}^k V_i V_i^T = 0,
\]
and similarly $\Delta(\U^{(\infty)})=0$ implies
\[
sI_n - n\sum_{i=1}^k {U_i^{(\infty)}}^T U_i^{(\infty)} = 0
\quad \implies \quad
I_n - \frac{n}{m} \sum_{i=1}^k V_i^T V_i = 0.
\]
The squared distance between $\U^{(\infty)}$ and $\V$ is
\[
\dist(\U^{(\infty)}, \V) 
= \sum_{i=1}^k \norm{U_i^{(\infty)} - V_i}_F^2
= (\sqrt{\frac{m}{s}}-1)^2 \sum_{i=1}^k \norm{U_i^{(\infty)}}_F^2
= (\sqrt{m}-\sqrt{s})^2.
\]
To bound $s$, we use Lemma~\ref{l:capUB}, Lemma~\ref{l:cap-unchanged} and then Theorem~\ref{t:cap_lower_bound} to get 
\[
s = s^{(\infty)} \geq \capa(\U^{(\infty)}) = \capa(\U^{(0)}) \geq s^{(0)}-mn\sqrt{\Delta^{(0)}/2} = m - mn\sqrt{\Delta^{(0)}/2},
\]
and we can continue the above calculation and get
\[
\dist(\U^{(\infty)}, \V) 
\leq \Big(\sqrt{m}-\sqrt{m-mn\sqrt{\Delta^{(0)}/2}}~\Big)^2
= m\Big(1-\sqrt{1-n\sqrt{\Delta^{(0)}/2}}~\Big)^2
\leq O(mn\sqrt{\Delta^{(0)}}),
\]
where the last inequality follows from the inequality $1-\sqrt{1-x}\leq x$ for $0 \leq x \leq 1$ and the assumption in the second case that $n\sqrt{\Delta^{(0)}/2} < 1$.
\end{proof}

We can finally prove a bound on the operator Paulsen problem.

\begin{theorem} \label{t:operator-Paulsen}
Given any operator $\U$ that is $\eps$-nearly doubly stochastic as in Definition~\ref{d:epsDS},
we can move $\U$ to $\V$ such that
\[
\V {\rm~is~doubly~stochastic}
\quad {\rm and} \quad
\dist(\V, \U) \leq O(m^2 n \eps).
\]
\end{theorem}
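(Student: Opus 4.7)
The plan is to assemble the three preparatory results already established in this subsection into a chain of movements and then invoke the triangle inequality on the (Frobenius) distance $\distance$, squaring at the end.

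First, I would apply Lemma~\ref{l:preprocess} to rescale the input $\U$ to an operator $\U^{(0)}$ satisfying $s(\U^{(0)}) = m$ that is still $O(\eps)$-nearly doubly stochastic, with $\dist(\U, \U^{(0)}) = O(\eps^2 m)$. We may assume $\eps \leq 1/2$; otherwise the bound $O(m^2 n \eps)$ is trivially larger than the diameter of the natural set of candidate operators and we can simply output any fixed doubly stochastic $\V$. Because $s(\U^{(0)}) = m$, Lemma~\ref{l:Delta-eps} immediately yields $\Delta^{(0)} \leq 2m^2 \eps^2$, so that $\sqrt{\Delta^{(0)}} \leq O(m\eps)$; this is the bridge between the $\ell_\infty$-type error $\eps$ in Definition~\ref{d:epsDS} and the $\ell_2$-type error $\Delta$ that the dynamical system actually controls.

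Next, I would feed $\U^{(0)}$ into the dynamical system of Definition~\ref{d:dynamical} and invoke Theorem~\ref{t:total-movement} to obtain $\U^{(\infty)}$ with $\Delta(\U^{(\infty)}) = 0$ and $\dist(\U^{(0)}, \U^{(\infty)}) \leq O(mn\sqrt{\Delta^{(0)}}) \leq O(m^2 n \eps)$. At this point $\U^{(\infty)}$ is doubly balanced but not necessarily doubly stochastic, so I would apply Lemma~\ref{l:postprocess} to rescale to a doubly stochastic operator $\V$ with $\dist(\U^{(\infty)}, \V) \leq O(mn\sqrt{\Delta^{(0)}}) \leq O(m^2 n \eps)$.

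Finally, I would combine these three bounds via the triangle inequality applied to $\distance = \sqrt{\dist}$, writing
\[
\distance(\U,\V) \leq \distance(\U,\U^{(0)}) + \distance(\U^{(0)},\U^{(\infty)}) + \distance(\U^{(\infty)},\V),
\]
and then squaring, using $(a+b+c)^2 \leq 3(a^2+b^2+c^2)$, to obtain
\[
\dist(\U,\V) \leq 3\bigl(\dist(\U,\U^{(0)}) + \dist(\U^{(0)},\U^{(\infty)}) + \dist(\U^{(\infty)},\V)\bigr) = O(\eps^2 m) + O(m^2 n \eps) = O(m^2 n \eps).
\]
There is no real obstacle here since all the heavy lifting is done by Theorem~\ref{t:total-movement} and the capacity lower bound of Theorem~\ref{t:cap_lower_bound}; the only thing to be careful about is that the distance $\distance$ satisfies the triangle inequality (as it is a Frobenius norm on the concatenated matrix, cf.\ the proof of Lemma~\ref{l:triangle}) so that squaring gives the right order of bound, and that the preprocessing cost $O(\eps^2 m)$ is dominated by $O(m^2 n \eps)$ for $\eps \leq 1$.
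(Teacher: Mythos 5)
Your proof is correct and follows essentially the same route as the paper: preprocess via Lemma~\ref{l:preprocess}, bound $\Delta^{(0)}$ via Lemma~\ref{l:Delta-eps}, run the dynamical system via Theorem~\ref{t:total-movement}, postprocess via Lemma~\ref{l:postprocess}, and chain with the triangle inequality on $\distance$. Your explicit remark about dispatching the $\eps > 1/2$ case trivially is a sensible detail that the paper leaves implicit (Lemma~\ref{l:preprocess} does assume $\eps \leq 1/2$), but otherwise the argument is identical.
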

\begin{proof}
Given $\U$ that is $\eps$-nearly doubly stochastic,
we first apply the preprocessing step in Lemma~\ref{l:preprocess} 
to produce $\U^{(0)}$ such that $\U^{(0)}$ is $O(\eps)$-nearly doubly stochastic, $s^{(0)}=m$ and $\dist(\U^{(0)},\U) = O(m\eps^2)$.
By Lemma~\ref{l:Delta-eps}, since $s^{(0)}=m$ and $\U^{(0)}$ is $O(\eps)$-nearly doubly stochastic, we have $\Delta^{(0)} = O(m^2 \eps^2)$.
Then, by Theorem~\ref{t:total-movement}, we can apply our dynamical system on $\U^{(0)}$ to produce $\U^{(\infty)}$ such that $\Delta(\U^{(\infty)})=0$ and 
\[\dist(\U^{(0)}, \U^{(\infty)}) = O(mn\sqrt{\Delta^{(0)}}) = O(m^2n\eps).\]
Finally,
we apply Lemma~\ref{l:postprocess} to move $\U^{(\infty)}$ to $\V$ such that $\V$ is doubly stochastic and 
\[\dist(\U^{(\infty)},\V) \leq O(mn\sqrt{\Delta^{(0)}}) = O(m^2n\eps).\]
The distance between $\U$ and $\V$ is
\begin{eqnarray*}
\distance(\V,\U)
& \leq & \distance(\U^{(0)},\U) + \distance(\U^{(\infty)},\U^{(0)}) + \distance(\U^{(\infty)},\V)
\\
& \leq & O(\sqrt{m\eps^2}) + O(\sqrt{m^2 n\eps}) + O(\sqrt{m^2 n\eps}) = O(\sqrt{m^2 n\eps}),
\end{eqnarray*}
and this implies that $\dist(\V,\U) = O(m^2n\eps)$.
\end{proof}

\begin{remark}
As mentioned in Remark~\ref{r:matrix_size_depends_on_gcd}, we can obtain better bound when $g = \gcd(m, n) > 1$.
In such case, we have
\[
\capa(\U)
\ge s(\U) - \frac{mn}g \sqrt{\frac{\Delta(\U)}2}.
\]
Therefore both $\dist(\U^{(0)}, \U^{(\infty)})$ and $\dist(\U^{(\infty)}, \V)$, and hence $\dist(\V, \U)$, are at most $O(m^2 n \eps / g)$.
In particular, when $n$ is a multiple of $m$, our bound becomes $\dist(\V, \U) = O(m n \eps)$.
\end{remark}

\subsection{A Bound on the Paulsen Problem} \label{ss:Paulsen}

Recall that in the Paulsen problem, 
we are given a set of vectors $U=\{u_1, \ldots, u_n\}$ in $\R^d$ that forms an $\eps$-nearly equal norm Parseval frame as described in~(\ref{e:epsParseval}):
\[(1-\eps)I_d \preceq \sum_{i=1}^n u_i u_i^T \preceq (1+\eps)I_d
\quad {\rm and} \quad
(1-\eps) \frac{d}{n} \leq \norm{u_i}_2^2 \leq (1+\eps) \frac{d}{n},
\]
and the Paulsen problem is to find a function $f(d,n,\eps)$ such that there exists a set of vectors $V = \{v_1, \ldots, v_n\}$ in $\R^d$ that forms an equal norm Parseval frame as described in~(\ref{e:Parseval}):
\[
\sum_{i=1}^n v_i v_i^T = I_d \quad {\rm and} \quad
\norm{v_i}_2^2 = \frac{d}{n} {\rm~for~} 1 \leq i \leq n,
\]
and
\[
\dist(U,V) = \sum_{i=1}^k \norm{u_i - v_i}_2^2 \leq f(d,n,\eps).
\]

We will prove that $f(d,n,\eps) = O(d^2n\eps)$,
by following the same steps as in Subsection~\ref{ss:operator-Paulsen} and using a similar reduction to the operator setting as described in Subsection~\ref{ss:reduction}.

\begin{definition}[reduction from frame to operator] \label{d:reduction}
Given $U=\{u_1, \ldots, u_n\}$ where each $u_i \in \R^d$,
we define $\U=\{U_1, \ldots, U_n\}$ where each $U_i$ is a $d \times n$ matrix
with the $i$-th column of $U_i$ being $u_i$ and all other columns of $U_i$ being zero.
Note that $U$ is an $\eps$-nearly equal norm Parseval frame if and only if
$\U$ is $\eps$-nearly doubly stochastic.
\end{definition}

With this reduction, we can define the size of a frame, $\Delta$ of a frame and capacity of a frame using the definitions for operators.
We list them below for reference in the next section.

\begin{definition}[size, $\Delta$, capacity of a frame] \label{d:frame-parameters}
Given $U=\{u_1, \ldots, u_n\}$ where each $u_i \in R^d$,
we define
\[
s(U) = \sum_{i=1}^n \norm{u_i}_2^2 = \tr(\sum_{i=1}^n u_i u_i^T)
\]
as the size of the frame $U$,
\[
\capa(U) = \inf_{X \succeq 0} \frac{m \det(\sum_{i=1}^n X_{ii} u_i u_i^T)}{\det(X)^{1/n}}.
\]
as the capacity of the frame $U$, and
\[
\Delta(U) = \frac{1}{d} \tr\big((sI_d - d\sum_{i=1}^n u_i u_i^T)^2\big) + \frac{1}{n} \sum_{i=1}^n (s-n\norm{u_i}_2^2)^2.
\]
\end{definition}

We follow the proof of Theorem~\ref{t:operator-Paulsen} to obtain the same bound for the Paulsen problem.

\begin{theorem} \label{t:Paulsen}
Given any $U=\{u_1,\ldots,u_n\}$ where each $u_i \in \R^d$ that is an $\eps$-nearly equal norm Parseval frame, we can move $U$ to $V=\{v_1,\ldots,v_n\}$ where each $v_i \in \R^d$ such that
\[
V {\rm~is~an~equal~norm~Parseval~frame~} \quad {\rm and} \quad
\dist(U,V) \leq O(d^2n\eps).
\]
\end{theorem}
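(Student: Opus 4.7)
The plan is to reduce the Paulsen problem to the operator Paulsen problem (Theorem~\ref{t:operator-Paulsen}) via the reduction in Definition~\ref{d:reduction}, with the key additional observation being that the column-sparse structure of the operator associated to a frame is preserved throughout the entire construction used to prove the operator result.

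First I would form $\U = \{U_1, \ldots, U_n\}$ from the input frame $U = \{u_1, \ldots, u_n\}$ as in Definition~\ref{d:reduction}: each $U_i$ is the $d \times n$ matrix with $u_i$ in its $i$-th column and zeros elsewhere. Definition~\ref{d:reduction} already tells us that $\U$ is $\eps$-nearly doubly stochastic (with $m=d$). Applying Theorem~\ref{t:operator-Paulsen} then produces a doubly stochastic operator $\V = \{V_1, \ldots, V_n\}$ of $d \times n$ matrices with $\dist(\U, \V) \leq O(d^2 n \eps)$.

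The main (and really only nontrivial) step is to verify that $\V$ inherits the column-sparsity of $\U$, i.e.\ that each $V_i$ is nonzero only in its $i$-th column. The preprocessing step in Lemma~\ref{l:preprocess} and the postprocessing step in Lemma~\ref{l:postprocess} are global scalings, which obviously preserve the pattern. For the dynamical system in Definition~\ref{d:dynamical}, observe that if $U_j^{(t)}$ has only its $j$-th column nonzero for every $j$, then $U_j^T U_j$ has only the $(j,j)$ entry nonzero, so $\sum_j U_j^T U_j$ is diagonal, and hence $s I_n - n \sum_j U_j^T U_j$ is diagonal. Right-multiplying $U_i$ by this diagonal matrix merely rescales its $i$-th column and keeps the other columns zero; the left-multiplication term $(s I_m - m \sum_j U_j U_j^T) U_i$ trivially preserves the zero-column pattern. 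Thus $\d U_i^{(t)}$ respects the sparsity, and by the uniqueness of the ODE solution (as used in Lemma~\ref{l:unitary-scaling}) the sparsity holds for all $t \geq 0$.

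With the sparsity verified, I would define $v_i \in \R^d$ to be the $i$-th column of $V_i$ and set $V = \{v_1, \ldots, v_n\}$. The doubly stochastic conditions $\sum_i V_i V_i^T = I_d$ and $\sum_i V_i^T V_i = (d/n) I_n$ translate (exactly as in Subsection~\ref{ss:reduction}) into $\sum_i v_i v_i^T = I_d$ and $\norm{v_i}_2^2 = d/n$, so $V$ is an equal norm Parseval frame. Finally, since $U_i - V_i$ is supported only in its $i$-th column, where it equals $u_i - v_i$, we get $\dist(U, V) = \sum_i \norm{u_i - v_i}_2^2 = \sum_i \norm{U_i - V_i}_F^2 = \dist(\U, \V) \leq O(d^2 n \eps)$. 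No real obstacle arises beyond the sparsity-preservation check, which is purely structural and follows from the diagonality of $\sum_j U_j^T U_j$.
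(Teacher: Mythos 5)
Your approach is correct and is in fact the shorter alternative proof that the authors themselves flag at the end of Subsection~\ref{ss:Paulsen} (``One can use the techniques in Lemma~\ref{l:unitary-scaling} and work out the scaling matrices of the continuous operator scaling algorithm and see that $R$ is a diagonal matrix, and then we do not need Lemma~\ref{l:reduction-output}''). The paper instead runs the operator machinery to get a doubly stochastic scaling $\V = \{L U_i R\}$ with $L,R$ \emph{arbitrary}, and then invests in Lemma~\ref{l:reduction-output}, which shows one can replace $R$ by the positive diagonal matrix $D = (RR^T)^{1/2}$ while keeping $\V$ doubly stochastic and not increasing $\dist(\U,\V)$; that lemma requires an SVD/singular-value argument. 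Your route replaces that lemma with a purely structural observation: the column-sparse subspace is invariant under the vector field of Definition~\ref{d:dynamical} (because $\sum_j U_j^T U_j$ stays diagonal), so by uniqueness of the ODE solution the flow never leaves it, and the pre/post-scalings are entrywise. What you lose by going this way is exactly what the authors say they wanted: Lemma~\ref{l:reduction-output} proves the stronger statement that \emph{any} doubly stochastic operator scaling of a frame operator yields an equal norm Parseval frame at no greater cost, not just the one produced by this particular algorithm.

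One small gap you should close: your claim that ``the postprocessing step in Lemma~\ref{l:postprocess} [is a] global scaling'' is only true in the nontrivial case $s(\U^{(\infty)})>0$. In the trivial case $s(\U^{(\infty)})=0$, Lemma~\ref{l:postprocess} outputs an \emph{arbitrary} doubly stochastic operator, which need not be column-sparse. You should handle this branch separately (as the paper's proof of Theorem~\ref{t:Paulsen} does): when $s(\U^{(\infty)})=0$, the capacity bound forces $n\sqrt{\Delta^{(0)}} \geq \sqrt{2}$, hence $\eps = \Omega(1/(nd))$, and then simply outputting any equal norm Parseval frame already achieves $\dist(U,V) = O(d) = O(d^2 n\eps)$. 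With that case added, your argument is complete.
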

\begin{proof}
We apply the reduction in Definition~\ref{d:reduction} to obtain 
$\U = \{U_1, \ldots, U_n\}$ from $U = \{u_1,\ldots,u_n\}$,
where $\U$ is $\eps$-nearly doubly stochastic.
Then we follow the same steps as in Subsection~\ref{ss:operator-Paulsen}.

In the preprocessing step,
we scale $\U=\{U_1, \ldots, U_n\}$ as in Lemma~\ref{l:preprocess} to obtain $\U^{(0)} = \{U_1^{(0)}, \ldots, U_n^{(0)}\}$ such that $\U^{(0)}$ is $O(\eps)$-nearly doubly stochastic, $s(\U^{(0)})=d$, and $\dist(\U,\U^{(0)}) = O(d\eps^2)$.
By Lemma~\ref{l:Delta-eps}, we have $\Delta^{(0)} := \Delta(\U^{(0)}) \leq O(d^2 \eps^2)$. 
Note that this preprocessing step is simply entry-wise scaling.

In the main step from the dynamical system,
we apply Theorem~\ref{t:total-movement} on $\U^{(0)}$ to produce $\U^{(\infty)}$ such that $\Delta(\U^{(\infty)})=0$ and $\dist(\U^{(0)}, \U^{(\infty)}) = O(dn\sqrt{\Delta^{(0)}}) = O(d^2 n \eps)$.
By Lemma~\ref{l:unitary-scaling}, we know that $\U^{(\infty)}$ is an operator scaling of $\U^{(0)}$ as in Definition~\ref{d:scaling}. 

In the postprocessing step,
there are two cases to consider as in Lemma~\ref{l:postprocess}.
In the first case, $s(\U^{(\infty)})=0$.
As shown in Lemma~\ref{l:postprocess},
this implies that $n\sqrt{\Delta^{(0)}} \geq \sqrt{2}$,
and thus 
\[
\sqrt{2} \leq n\sqrt{\Delta^{(0)}} \leq O(nd\eps)
\quad \implies \quad
\eps \geq \Omega(1/nd).
\]
This is the trivial case as we can move $U$ to any equal norm Parseval frame $V$ and $\dist(U,V) = O(d) = O(d^2n\eps)$ as we have shown in Subsection~\ref{ss:alternating}.
In the second case when $s(\U^{(\infty)})>0$,
we can scale $\U^{(\infty)}$ to $\V$ such that $\V$ is doubly stochastic and $\dist(\U^{\infty},\V) \leq O(dn\sqrt{\Delta^{(0)}}) = O(d^2n\eps)$.
Note that in this case the scaling is simply entry-wise scaling.

Therefore, besides the trivial case, we obtain a scaling $\V$ of $\U$ such that $\dist(\V,\U) = O(d^2n\eps)$.
In Lemma~\ref{l:reduction-output}, we will show that given a doubly stochastic scaling $\V$ of $\U$, we can obtain an equal norm Parseval frame $V = \{v_1,\ldots,v_n\}$ such that
$\dist(U,V) \leq \dist(\U,\V) \leq O(d^2n\eps)$.
\end{proof}

The following lemma completes the proof of Theorem~\ref{t:Paulsen}, and thus Theorem~\ref{t:dynamical}.

\begin{lemma} \label{l:reduction-output}
Let $U=\{u_1,\ldots,u_n\}$ where $u_i \in \R^d$ for $1 \leq i \leq n$
and $\U=\{U_1,\ldots,U_n\}$ where $U_i \in \R^{d \times n}$ for $1 \leq i \leq n$ be as defined in Definition~\ref{d:reduction}.
Suppose $\V=\{V_1,\ldots,V_n\}$ where $V_i \in \R^{d \times n}$ is a scaling of $\U$, and $\V$ is doubly stochastic.
Then there exists an equal norm Parseval frame $V=\{v_1,\ldots,v_n\}$ where $v_i \in \R^d$ for $1 \leq i \leq n$ such that
$\dist(U,V) \leq \dist(\U,\V)$.
\end{lemma}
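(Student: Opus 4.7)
The plan is to exploit the column-support structure of the operators $\U$ defined in Definition~\ref{d:reduction}: each $U_i$ has nonzero entries only in its $i$-th column (equal to $u_i$). The goal is to show that each $V_i$ inherits the same column-support structure, so that we may simply take $v_i$ to be the $i$-th column of $V_i$.

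My first step will be to argue that the scaling matrix $R$ in $V_i = L U_i R$ may be taken to be diagonal, given how $\V$ is produced in the proof of Theorem~\ref{t:Paulsen}. The preprocessing of Lemma~\ref{l:preprocess} is a single entry-wise scaling. For the dynamical system, Lemma~\ref{l:unitary-scaling} expresses $U_i^{(t)} = X^{(t)} U_i^{(0)} Y^{(t)}$ with $\frac{d}{dt} Y^{(t)} = Y^{(t)} C_n^{(t)}$ and $Y^{(0)} = I_n$. Since each $U_j^{(0)}$ has only its $j$-th column nonzero, $(U_j^{(0)})^T U_j^{(0)}$ is diagonal, so $C_n^{(0)}$ is diagonal; provided each $U_j^{(t)}$ retains this column-support, the same holds at every time $t$. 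This is self-consistent along the flow, since $U_j^{(t)} = X^{(t)} U_j^{(0)} Y^{(t)}$ inherits the column-support of $U_j^{(0)}$ whenever $Y^{(t)}$ is diagonal, and $Y^{(t)}$ remains diagonal whenever $C_n^{(t)}$ is (the ODE $\frac{d}{dt} Y = YC_n$ preserves the diagonal subspace). The postprocessing of Lemma~\ref{l:postprocess} is again entry-wise scaling. Therefore $\V$ may be written $V_i = L U_i R$ with $R$ diagonal, and each $V_i$ has nonzero entries only in its $i$-th column.

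Given this reduction, define $v_i := V_i e_i \in \R^d$ so that $V_i = v_i e_i^T$. Then $\|U_i - V_i\|_F^2 = \|u_i - v_i\|_2^2$, yielding $\dist(U, V) = \dist(\U, \V)$. The two doubly stochastic conditions translate directly: $\sum_i V_i V_i^T = \sum_i v_i v_i^T = I_d$ is the Parseval condition; and $\sum_i V_i^T V_i$ is the diagonal matrix $\diag(\|v_1\|_2^2, \ldots, \|v_n\|_2^2)$, so $\sum_i V_i^T V_i = (d/n) I_n$ gives $\|v_i\|_2^2 = d/n$ for every $i$. Hence $V = \{v_1, \ldots, v_n\}$ is an equal norm Parseval frame and the claimed inequality holds (in fact, with equality).

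The main subtlety I anticipate is the justification that $R$ may be taken diagonal. If one tried to work with an arbitrary scaling $V_i = L U_i R$, one could write $V_i = (L u_i) \rho_i^T$, where $\rho_i$ denotes the $i$-th row of $R$, and the doubly stochastic conditions become $\sum_i \|\rho_i\|_2^2 (Lu_i)(Lu_i)^T = I_d$ and $\sum_i \|Lu_i\|_2^2 \rho_i \rho_i^T = (d/n) I_n$. These conditions are insufficient to force $\|V_i\|_F^2 = \|\rho_i\|_2^2 \|Lu_i\|_2^2 = d/n$ for each individual $i$, so the equal-norm property cannot be recovered from the abstract statement alone. The proof therefore must appeal to the specific construction of $\V$ described in the previous paragraph, which is where the essential work lies.
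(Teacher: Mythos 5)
Your argument is correct as far as it goes, but it does not prove the lemma as stated. The lemma asserts the conclusion for \emph{any} scaling $\V = \{L U_i R\}_{i=1}^n$ that is doubly stochastic, with $L$ and $R$ arbitrary. Your proof instead appeals to the particular way $\V$ is produced inside the proof of Theorem~\ref{t:Paulsen}: you track $Y^{(t)}$ through the dynamical system and argue that it stays diagonal, so that $R$ may be assumed diagonal from the outset. That reasoning is sound (it is in fact the ``alternative proof'' the paper itself mentions in the remark following Lemma~\ref{l:reduction-output}), but it proves a weaker, contextual claim --- that the specific $\V$ constructed by continuous operator scaling yields a frame --- not the general statement.

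Your final paragraph contains the real gap. You correctly observe that an arbitrary doubly stochastic scaling $\V$ need not satisfy $\norm{V_i}_F^2 = d/n$ for each $i$, and from this you conclude that ``the equal-norm property cannot be recovered from the abstract statement alone.'' That conclusion is wrong: the paper's proof does not try to read off the equal-norm property from $\V$ directly, but instead \emph{modifies} $\V$. It sets $D = (RR^T)^{1/2}$ and $\tilde V_i = L U_i D$, and shows three things: (a) since each $L U_i$ has only its $i$-th column nonzero, $\sum_i U_i^T L^T L U_i$ is a positive diagonal matrix, and the identity $\frac{d}{n}(RR^T)^{-1} = \sum_i U_i^T L^T L U_i$ forces $(RR^T)^{-1}$ and hence $D$ to be diagonal; (b) because $DD^T = RR^T$, the modified operator $\tilde\V$ is again doubly stochastic; (c) using the SVD $R = DZ$ with $Z$ orthogonal and the fact that trace is at most the sum of singular values, one gets $\dist(\U,\tilde\V) \le \dist(\U,\V)$. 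The equal-norm condition then follows for $\tilde\V$ because the diagonal structure forces $D_{ii}^2 \norm{L u_i}^2 = d/n$ for every $i$. This replacement step is exactly the missing idea in your write-up, and it is what makes the lemma hold in the generality claimed.

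So: your argument would suffice if the lemma were rephrased to quantify over scalings produced by the continuous flow, and you should be given credit for recognizing the column-support structure as the crux and for noticing that the dynamical system preserves diagonality of $Y^{(t)}$. But as a proof of the stated lemma it has a genuine hole, and the claim that the general statement is unprovable from the hypotheses is incorrect.
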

\begin{proof}
Since $\V$ is a scaling of $\U$, by Definition~\ref{d:scaling},
there exist $L \in \R^{d \times d}$ and $R \in \R^{n \times n}$ such that $V_i = L U_i R$ for $1 \leq i \leq n$. 
As $\V$ is doubly stochastic, we have
\[
\sum_{i=1}^n V_i V_i^T =
\sum_{i = 1}^n L U_i R R^T A_i^T L^T = I_d,
\quad {\rm and} \quad
\sum_{i=1}^n V_i^T V_i =
\sum_{i = 1}^n R^T U_i^T L^T L U_i R = \frac{d}{n} I_n.
\]
We claim that we can replace $R$ by a positive diagonal matrix $D$ so that $\tilde V_i = L U_i D$ is still doubly stochastic and $\dist(\U, \tilde \V) \le \dist(\U, \V)$.
Suppose this can be done, then we define 
$v_i$ to be the $i$-th column of $\tilde V_i$ and $V = \{v_1, \ldots, v_n\}$.
Notice that all other columns in $\tilde V_i$ are zeros,
as all but the $i$-th column of $U_i$ are zeros.
It is easy to check that 
\[\dist(U, V) = \dist(\U, \tilde \V) \leq \dist(\U,\V)
\quad {\rm and} \quad
\sum_{i=1}^n v_i v_i^T = I_d
\quad {\rm and} \quad
\sum_{i=1}^n v_i^T v_i = \frac{d}{n} I_n.
\] 
So $V$ is an equal norm Parseval frame with $\dist(U,V) \leq \dist(\U,\V)$ are we are done.

It remains to prove the claim.
Define $D = (R R^T)^{1/2}$ and $\tilde V_i = LU_iD$, so we have
\[
\sum_{i=1}^n \tilde V_i {\tilde V_i}^T
= \sum_{i = 1}^n L U_i D D^T U_i^T L^T
= \sum_{i = 1}^n L U_i R R^T U_i^T L^T
= I_d.
\]
It follows from $\sum_{i = 1}^n R^T U_i^T L^T L U_i R = (d/n) I_n$
that $(d/n) (R R^T)^{-1} = \sum_{i=1}^n U_i^T L^T L U_i$.
Note that $L U_i \in \mathbb R^{d \times n}$ has at most one non-zero column $L u_i$, 
and thus $\sum_{i=1}^n U_i^T L^T L U_i$ is a positive diagonal matrix,
which implies that $(RR^T)^{-1}$ and thus $D=(RR^T)^{1/2}$ are also positive diagonal matrices. 
Furthermore,
\[
\sum_{i=1}^n {\tilde V_i}^T \tilde V_i
= \sum_i D^T U_i^T L^T L U_i D
= \frac{d}{n} D^T (R R^T)^{-1} D
= \frac{d}{n} I_n.
\]
Therefore, we have that $D$ is a diagonal matrix and $\tilde \V = \{\tilde V_1, \ldots, \tilde V_n\}$ is doubly stochastic.

Finally, we check that $\dist(\U, \tilde \V) \le \dist(\U, \V)$.
Note that
\[
\dist(\U, \tilde \V)
= \sum_{i = 1}^n \| U_i - L U_i D \|_F^2
= \sum_{i = 1}^n \Big(\tr(U_i^T U_i) + \tr(L U_i D D^T U_i^T L^T) - 2 \tr(U_i^T L U_i D)\Big)
\]
and
\[
\dist(\U, \V)
= \sum_{i = 1}^n \| U_i - L U_i R \|_F^2
= \sum_{i = 1}^n \Big(\tr(U_i^T U_i) + \tr(L U_i R R^T U_i^T L^T) - 2 \tr(U_i^T L U_i R)\Big)
\]
Since $D D^T = R R^T$ by definition, it remains to prove that
\[
\tr\big(\sum_{i = 1}^n U_i^T L U_i D\big)
\ge \tr\big(\sum_{i = 1}^n U_i^T L U_i R\big).
\]
We consider the singular value decomposition of $R$.
By the definition that $D=(RR^T)^{1/2}$,
we see that the diagonal entries of $D$ are the singular values of $R$,
and so we can write the singular value decomposition of $R := XDY$,
where $X,Y \in \R^{n\times n}$ are orthonormal matrices.
Then $D^2 = R R^T = X D^2 X^T = (X D X^T)^2$.
By the uniqueness of PSD square root of PSD matrices (e.g., see Theorem~7.2.6 of \cite{HJ12}), we have $D = X D X^T$ and thus $X D = D X$.
Hence $R = X D Y = D X Y = D Z$ for some orthonormal matrix $Z$.
Therefore,
\[
\tr(\sum_{i = 1}^n U_i^T L U_i R)
= \tr(\sum_{i = 1}^n U_i^T L U_i D Z) \\
\le \sum_{j = 1}^n \sigma_j (\sum_{i = 1}^n U_i^T L U_i D Z) \\
= \sum_{j = 1}^n \sigma_j (\sum_{i = 1}^n U_i^T L U_i D),
\]
where $\sigma_j(A)$ are the singular values of $A$
and the inequality is by the fact that the trace is at most the sum of singular values (e.g., see Theorem~3.3.13 of \cite{HJ91}).
Since $\sum_{i = 1}^n U_i^T L U_i$ is diagonal, we have $\sum_{j = 1}^n \sigma_i(\sum_{i = 1}^n U_i^T L U_i D) = \tr(\sum_{i = 1}^n U_i^T L U_i D)$, and this completes the proof.
\end{proof}

We remark that there are alternative proofs of Theorem~\ref{t:Paulsen}.
One can use the techniques in Lemma~\ref{l:unitary-scaling} and work out the scaling matrices of the continuous operator scaling algorithm and see that $R$ is a diagonal matrix, and then we do not need Lemma~\ref{l:reduction-output} and the proof will be shorter.
We prefer to use Lemma~\ref{l:reduction-output} even though it is longer because it proves a stronger claim that any operator scaling (not just those from the discrete or continuous operator scaling algorithms) will give us a solution to the Paulsen problem.

\section{Improved Bound through Smoothed Analysis} \label{s:smoothed}

We note that the smoothed analysis only works in the Paulsen problem (not the operator Paulsen problem in Definition~\ref{d:operator-Paulsen}), so in this section we switch back to the frame setting of the Paulsen problem.
Recall from Definition~\ref{d:frame-parameters} for the corresponding definitions.

We first interpret what was done in Section~\ref{s:dynamical} as a reduction from the Paulsen problem to proving capacity lower bound.
Then we see why it could not be improved directly and we motivate the smoothed analysis as a way to go beyond the bound in Section~\ref{s:dynamical}.
Then we will give a detailed outline of the smoothed analysis and the organization of the rest of this section in Subsection~\ref{ss:overview}.

{\bf Reduction to Capacity Lower Bound:}
In Section~\ref{s:dynamical}, we have proved that using the dynamical system in Definition~\ref{d:dynamical}, if there is a capacity lower bound
\[\capa(U^{(t)}) \geq s(U^{(t)}) - p(d,n) \sqrt{\Delta(U^{(t)})} \quad {\rm~for~all~} t \geq 0
\]
where $p(d,n)$ is a function in $d$ and $n$,
then we can move an $\eps$-nearly equal norm Parseval frame $U$ 
to an equal norm Parseval frame $V$ with
\[\dist(U,V) 
\leq O\Big(p(d,n) \sqrt{\Delta(U^{(0)})}\Big)
\leq O\big(p(d,n) \cdot d\eps \big) 
\leq O(d^2 n \eps),
\]
where the second inequality is from Lemma~\ref{l:Delta-eps} and
the last inequality is from Theorem~\ref{t:cap_lower_bound} that shows $p(d,n) = O(dn)$.

In general, the matrix capacity lower bound $\capa(A) \geq s(A) - dn\sqrt{\Delta(A)/2}$ from Proposition~\ref{p:matrix-cap} is tight.
See Lemma~\ref{l:tight} for an example with $s(A)=d$, $\capa(A)=0$ and $\Delta(A)=2/n^2$.
This example implies that the analysis of Theorem~\ref{t:cap_lower_bound} is tight, since the proof is based on a reduction to matrix capacity lower bound.
The same example also shows that the analysis of the dynamical system in Theorem~\ref{t:total-movement} is tight, as we know that $s(U^{(\infty)}) = \capa(U^{(0)})$ from~(\ref{e:s-infty}) and thus the size of the frame has shrank much in our dynamical system and 
\[
\dist(U^{(0)},U^{(\infty)}) 
\geq \Omega(d)
\geq \Omega\big( dn \sqrt{\Delta(U^{(0)})} \big).
\]

{\bf Smoothed Analysis:}
Our intuition is that the instances with $\capa(U) \approx s(U) - dn\sqrt{\Delta(U)}$ are rare.
So our idea is to perturb the input instance and to prove a stronger capacity lower bound on the perturbed instance.
Using some probabilistic arguments, we will prove that with high probability a perturbation $W$ of $U$ satisfies $\Delta(W) \approx \Delta(U)$ and $\capa(W) \gg \capa(U)$.
To prove the lower bound of the capacity in the perturbed instance $W$,
we have developed an interesting method to prove matrix capacity lower bound using the results in our dynamical system.

\subsection{Overview and Organization} \label{ss:overview}

We will present an informal proof outline of Theorem~\ref{t:perturb} in this subsection.  
All the definitions will be formally defined in later subsections,
and all the statements will be formally stated and proved in later subsections.

Given an instance $U = \{u_1, \ldots, u_n\}$ of the Paulsen problem where $u_i \in \R^d$ for $1 \leq i \leq n$,
we would like to perturb $U$ to an instance $W = \{w_1, \ldots, w_n\}$ where $w_i \in \R^d$ for $1 \leq i \leq n$ so that
$U$ and $W$ are close, $\Delta(U) \approx \Delta(W)$ and there is a stronger capacity lower bound for $W$.

{\bf Perturbation:} The perturbation is informally described as follows.
For each $1 \leq i \leq n$, let $g_i$ be a $d$-dimensional vector where each entry is an independent Gaussian random variable $N(0,\sigma^2)$ with mean zero and variance $\sigma^2$.
We let
\begin{equation} \label{e:perturbation}
y_i = P_L(g_i) \quad {\rm and} \quad w_i = u_i + y_i \quad {\rm for~} 1 \leq i \leq n,
\end{equation}
where $L$ is a subspace of codimension $d^2 + n$ and $P_L$ is the orthogonal projection to the subspace $L$.
For technical reasons, we will normalize the vectors so that they have equal norm.
We will choose
\[
\sigma^2 \approx \frac{\sqrt{d\Delta(U)}}{n}.
\]
We will explain the choice of $\sigma^2$ later in this subsection.

{\bf Analysis of Perturbed Instances:}
We will prove that the perturbed instance $W=\{w_1,\ldots,w_n\}$ has the following properties with high probability.
\begin{enumerate}[(i)]
\item 
The squared distance between $U$ and $W$ is small in Proposition~\ref{p:movement}: 
\[\dist(U,W) 
\leq O(dn \sigma^2) 
\leq O(d^{3/2} \sqrt{\Delta(U)}),
\]
where the second inequality is by our choice of $\sigma^2$.
\item
Assuming $n$ is large enough and $\Delta$ is small enough, we bound the increase of $\Delta$ in Proposition~\ref{p:Delta} that
\[
\Delta(W) \leq O(\Delta(U)).
\]
This is the place where we need to use the subspace $L$ in the perturbation process to ensure that $\Delta(W)$ is bounded, and is also the bottleneck of the current proof that requires the assumptions on $n$ and $\Delta$.
\item
Assuming $n$ is large enough, we establish an improved capacity lower bound
in Theorem~\ref{t:capacity} that
\[
\capa(W) 
\geq s(W) - O(\sqrt{\frac{\Delta(W)}{d}}).
\]
This is the heart of the smoothed analysis, where we have removed the dependency on $n$ in the capacity lower bound.
\end{enumerate}

{\bf Paulsen Problem:}
From point (iii) and the reduction of the Paulsen problem to capacity lower bound discussed earlier, we expect that we can set $p(d,n)=1/\sqrt{d}$ and $\Delta(W)=O(\Delta(U))$ to bound the squared distance after the perturbation to be $O(\sqrt{\Delta(U)/d}) = O(\sqrt{d} \eps)$, independent of $n$ when $n$ is large enough.
This is eventually what we will prove.

One subtlety is that in Section~\ref{s:dynamical} we assume the capacity lower bound $\capa(U^{(t)}) \geq s(U^{(t)}) - O(dn\sqrt{\Delta^{(t)}})$ holds for all $t \geq 0$,
but the improved capacity lower bound in point (iii) only holds right after the perturbation and may not hold after we apply the dynamical system on the perturbed instance.

To fix this, we will do the perturbation step (infinitely) many times, following the framework in Proposition~\ref{p:half} and Theorem~\ref{t:total-movement}.
Let $U^{(0)} := U$ be the input to the Paulsen problem and $\Delta:=\Delta(U)$.
We perturb $U^{(0)}$ using the perturbation defined in~(\ref{e:perturbation}).
Then we apply the dynamical system on the perturbed instance until time $T_1$ where $\Delta(U^{(T_1)}) = \Delta/2$.
Then we perturb $U^{(T_1)}$ again, and apply the dynamical system on the perturbed instance until time $T_2$ where $\Delta(U^{(T_2)}) = \Delta/4$, and so on.
Using point (i) and modifying Proposition~\ref{p:half} and Theorem~\ref{t:total-movement}, we can prove that the movement in each step is geometrically decreasing, and it is sufficient to have the capacity lower bound only at time $T_j$ for all $j$ when $\Delta(U^{(T_j)}) = \Delta / 2^j$.

The above will give us the total movement after the perturbation is $O(\sqrt{d} \eps)$.
Note that the total movement in point~(i) is $O(d^{5/2} \eps)$, and this is the bound that we get for the Paulsen problem when $n$ is large enough and $\Delta$ is small enough.
So, in the current proof, the movement in the perturbation step is the bottleneck of the total movement.
The precise step-by-step procedure to move from the initial frame to an equal norm Parseval frame is described in Procedure~\ref{proc:path} of Subsection~\ref{ss:together}. 

{\bf Projection:}
A natural attempt for the perturbation is to add independent noise to each coordinate for each vector.
Unfortunately, it does not work as $\Delta(W)$ would become much bigger than $\Delta(U)$ with high probability.
The linear subspace $L$ consists of $d^2 + n$ linear constraints which are added to enforce that the ``cross terms/first order terms'' become zero to ensure that point (ii) holds.
This comes with the price of the additional assumption that $n \gg d^2$ for point (iii) to hold, basically because the linear subspace $L$ has codimension $d^2 + n$.

\subsubsection*{Matrix Capacity Lower Bound from Dynamical System}

Most of the work in Theorem~\ref{t:perturb} is to prove point (iii).
There are two main ingredients.

{\bf Pseudorandom Property:}
The first ingredient is to identify a pseudorandom property for a frame to have a stronger capacity lower bound.
Instead of doing it directly, we follow the reduction in Proposition~\ref{p:reduction-capacity} to consider the corresponding matrix $A$ defined in~(\ref{e:A}).
The pseudorandom property that we will use of a $d \times n$ matrix $A$ is that every column has at least one entry with value at least $\Omega(\sigma^2)$ and every row has almost all entries with value at least $\Omega(\sigma^2)$.
We will prove in Subsection~\ref{ss:pseudorandom} that after we do the perturbation on the vectors as described in~(\ref{e:perturbation}), the corresponding matrix $A$ defined in~(\ref{e:A}) of Proposition~\ref{p:reduction-capacity} has the pseudorandom property with high probability.
The proof of this lemma is quite technical,
and this is the step that we could not prove in the operator setting, 
and also we need the assumption that $n \gg d^2$ for the proof to go through.

{\bf Bounding Matrix Capacity using Dynamical System:}
The second ingredient is a new method to prove matrix capacity lower bound.
In Subsection~\ref{ss:outline}, we have seen that the capacity lower bound provides an indirect way to argue that $\Delta^{(t)}$ will converge to zero.
We prove the reverse direction to establish matrix capacity lower bound, that a fast convergence of $\Delta^{(t)}$ to zero implies a good capacity lower bound.

To do this, we define a matrix version of the Paulsen problem,
and also a dynamical system from matrix scaling to solve this problem.
We will show that the dynamical system will satisfy the same formulas as outlined in Subsection~\ref{ss:outline} and proved in Subsection~\ref{ss:formulas}.
Assuming the pesudorandom property of a matrix holds in the beginning,
we will show in Proposition~\ref{p:maintain} that it will hold during the execution of the dynamical system.
Proposition~\ref{p:maintain} requires a lower bound on $\sigma^2$ for the proof to go through, and this is the reason for our choice of $\sigma^2$, which is the bottleneck of the current proof as it requires a large movement in the perturbation process.

A key step is a combinatorial argument in Subsection~\ref{ss:combinatorial} that proves that there exists an absolute constant $\kappa$ such that 
\[
-\d \Delta^{(t)} \gtrsim \kappa \sigma^2 n \Delta^{(t)} {\rm~for~all~} t \geq 0
\quad \implies \quad
\Delta^{(t)} \lesssim \exp(\kappa \sigma^2 n t) \cdot \Delta^{(0)} {\rm~for~all~} t \geq 0,
\]
assuming the pesudorandom property of the matrix holds throughout the execution of the dynamical system.
This can be used to lower bound the matrix capacity using the following relations: 
\[
s^{(0)} - \capa(A) 
= s^{(0)} - s^{(\infty)}
= -\int_0^{\infty} \d s~dt
= \int_0^{\infty} 2\Delta^{(t)} dt
\lesssim \Delta^{(0)} \int_0^{\infty} 2\exp(-\kappa \sigma^2 n t) dt
= \frac{\Delta^{(0)}}{\kappa \sigma^2 n},
\]
where the first equality is by Proposition~\ref{p:matrix-cap} and
the third equality is by an identity analogous to that in Lemma~\ref{l:s'}.
This implies that
\[
\capa(A) \gtrsim s(A) - \frac{\Delta(A)}{\kappa \sigma^2 n} 
\quad \implies \quad
\capa(W) \gtrsim s(W) - \frac{\Delta(W)}{\kappa \sigma^2 n},
\] 
where the implication follows from the reduction in Proposition~\ref{p:reduction-capacity}.
This completes the outline of the proof of point (iii).

\subsubsection*{Organization}

The proof of the smoothed analysis can be roughly divided into two parts.
The first part is to prove a stronger matrix capacity lower bound assuming the pseudorandom property.
The second part is to study the perturbation process and proves that the perturbed instances satisfies the properties described above.
The two parts are mostly independent and so the reader can choose to read which part first.

For the first part, the proof is divided into three subsections.
In Subsection~\ref{ss:matrix-Paulsen},
we first define the matrix version of the Paulsen problem. 
Then, we define a dynamical system based on a continuous version of the matrix scaling algorithm to solve the problem.
Then, we prove the formulas of the dynamical system that are analogous to those in Subsection~\ref{ss:formulas}.
In Subsection~\ref{ss:Delta'}, we define the pseudorandom property of a matrix, and show that the pesudorandom property is maintained throughout the execution of the dynamical system.
And we prove that a lower bound on $-\d \Delta^{(t)}$ implies a lower bound on the matrix capacity as outlined above.
Finally, in Subsection~\ref{ss:combinatorial}, we prove the combinatorial lemma that the pseudorandom property of a matrix implies a lower bound on $-\d \Delta^{(t)}$.

For the second part, the proof is also divided into three subsections.
In Subsection~\ref{ss:perturb}, we define the perturbation process for the vectors.
Then we bound the total movement in the perturbation process as in point~(i), 
and we list the probabilistic tools for the rest of the analysis.
In Subsection~\ref{ss:Delta},
we prove point~(ii) by bounding the increase of $\Delta$ in the perturbation process.
Finally, in Subsection~\ref{ss:pseudorandom}, we prove point~(iii) that after the perturbation on the vectors, the corresponding matrix $A$ in~(\ref{e:A}) of Proposition~\ref{p:reduction-capacity} has the pseudorandom property.
Both proofs of point~(ii) and point~(iii) are quite involved.

We complete the proof in Subsection~\ref{ss:together}, 
where we describe a step-by-step procedure to move from the initial frame to an equal norm Parseval frame,
and we put all the pieces together and prove Theorem~\ref{t:perturb}.

\subsection{The Matrix Paulsen Problem and the Dynamical System} \label{ss:matrix-Paulsen}

We define the matrix Paulsen problem as an analog to the Paulsen problem in the matrix setting.
We would like to mention that it is through this simpler problem that we find our solution to the Paulsen problem.

Recall from Subsection~\ref{ss:capLB} the definitions of doubly balanced and doubly stochastic matrices, the row sum and the column sum, the size of a matrix, the capacity of a matrix, and the $\Delta$ of a matrix.

\begin{definition}[$\eps$-nearly doubly stochastic matrix] \label{d:matrix-epsDS}
A matrix $A \in \R^{m \times n}$ is $\eps$-nearly doubly stochastic if 
\[
1-\eps \leq r_i(A) \leq 1+\eps {\rm~for~} 1 \leq i \leq m
\quad {\rm and} \quad
(1-\eps) \frac{m}{n} \leq c_j(A) \leq (1+\eps) \frac{m}{n} {\rm~for~} 1 \leq j \leq n.
\]
\end{definition}

\begin{definition}[Hadamard product] \label{d:Hadamard-product}
Given two matrices $A,B \in \R^{m \times n}$,
the Hadamard product $A \circ B$ is an $m \times n$ matrix with $(A \circ B)_{ij} = A_{ij} B_{ij}$.
We use the shorthand $\AA$ to denote $A \circ A$.
\end{definition}

\begin{definition}[the matrix Paulsen problem] \label{d:matrix-Paulsen}
The matrix Paulsen problem asks for the best function $g(m,n,\eps)$ such that given any matrix $A \in \R^{m \times n}$ with $\AA$ being $\eps$-nearly doubly stochastic, 
\[
\inf_B \dist(A,B) := \inf_B \norm{A-B}_F^2 \leq g(m,n,\eps),
\]
where the infimum $B$ is over the set of matrices where $\BB$ is doubly stochastic.
\end{definition}

In the standard version of the matrix scaling problem, we are given a matrix $A \in \R^{m \times n}$ and the goal is to find a scaling so that the resulting matrix $B$ is doubly stochastic.
The corresponding definition of the matrix Paulsen problem should be to find a $B$ that is close to $A$ so that $B$ is doubly stochastic.
Note that in our definition of the matrix Paulsen problem in Definition~\ref{d:matrix-Paulsen}, our goal is to find a matrix $B$ that is close to $A$ so that $\BB$ (but not $B$) is doubly stochastic.
This definition is to be consistent with the Paulsen problem, where we change the vectors in $U = \{u_1, \ldots, u_n\}$ to $V = \{v_1, \ldots, v_n\}$ with small squared distance so that $\norm{v_i}_2^2$ are equal and $\sum_{i=1}^n v_i v_i^T = I$, where both requirements are about the {\em squares} of the vectors in $V$.

As in our approach for the operator Paulsen problem, we would like to find a scaling solution to the problem.

\begin{definition}[matrix scaling] \label{d:matrix-scaling}
Given a matrix $A \in \R^{m \times n}$, we say a matrix $B$ is a scaling of $A$ if there exist diagonal matrices
\[
L \in \R^{m \times m} \quad {\rm and} \quad
R \in \R^{n \times n} \quad
{\rm such~that~} B=LAR.
\]
\end{definition}

Our dynamical system is a continuous version of the alternating algorithm for matrix scaling in Subsection~\ref{ss:matrix}.
Note the similarity with the dynamical system for operator scaling as defined in Definition~\ref{d:dynamical}.

\begin{definition}[dynamical system for matrix scaling] \label{d:matrix-dynamical}
The following differential equation defines how $A^{(t)}$ changes over time:
\[
\d A_{ij} := \big(s(\AA) - mr_i(\AA)\big) A_{ij} + \big(s(\AA) - nc_j(\AA)\big) A_{ij}.
\]
\end{definition}

To measure the progress of the dynamical system, we keep track of 
\[\Delta(\AA) = \frac{1}{m} \sum_{i=1}^m (s(\AA) - mr_i(\AA))^2 
+ \frac{1}{n} \sum_{j=1}^n (s(\AA) - nc_j(\AA))^2,
\]
as $\Delta(\AA)=0$ if and only if $\AA$ is doubly balanced.
And we measure the size of the matrix as
\[s(\AA) = \sum_{i=1}^m \sum_{j=1}^n A_{ij}^2.
\]

We are going to prove the analogous statements for matrix scaling as in those statements for operator scaling in Subsection~\ref{ss:formulas}.
We will prove
\begin{enumerate}
\item 
in Lemma~\ref{l:matrix-s'} that
\[\d s(\AAt) = -2 \Delta(\AAt),\]
which implies that the size of the matrix $\AA$ is decreasing over time;
\item 
in Lemma~\ref{l:matrix-Delta'} that
\[\d \Delta(\AAt) = -4\sum_{i=1}^m \sum_{j=1}^n \Big(2s\big( \AAt \big)-nr_i\big(\AAt \big)-mc_j\big(\AAt \big)\Big)^2 \cdot (A_{ij}^{(t)})^2,\]
which implies that the $\Delta(\AAt)$ is decreasing over time;
\item
in Lemma~\ref{l:matrix-cap-unchanged} that
$\capa(\AAt)$ is unchanged over time.
\end{enumerate}

\begin{remark}
It is possible to reduce the matrix Paulsen problem to the operator Paulsen problem, by using one matrix for each entry.
Through this reduction, we can obtain the formulas in the matrix case as corollaries of the formulas in the operator case in Subsection~\ref{ss:formulas}.
In the following, we simply present the direct proofs of the formulas in the matrix case, as it is more straightforward than to present the reduction.
\end{remark}

\begin{definition}[matrix shorthand] \label{d:matrix-shorthand}
We use the shorthands 
\[s := s(\AA), \quad \Delta := \Delta(\AA), \quad 
r_i := r_i(\AA), \quad {\rm and~} c_j := c_j(\AA),\]
and write
\[s = \sum_{i=1}^m \sum_{j=1}^n A_{ij}^2, \quad r_i = \sum_{j=1}^n A_{ij}^2, 
\quad c_j = \sum_{i=1}^m A_{ij}^2,
\]
\[\d A_{ij} = (2s-mr_i-nc_j)A_{ij} \quad {\rm and} \quad
\Delta = \frac{1}{m} \sum_{i=1}^m (s-mr_i)^2 + \frac{1}{n} \sum_{j=1}^n (s-nc_j)^2.
\]
All the quantities change over time $t$, but we will drop the superscript for ease of notation.
\end{definition}

\subsubsection*{Formula for the change of $s$}

\begin{lemma} \label{l:matrix-s'}
\[
\frac{d}{dt} s =-2\Delta.
\]
\end{lemma}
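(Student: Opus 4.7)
The plan is to differentiate $s = \sum_{i,j} A_{ij}^2$ directly using the chain rule together with the defining equation $\frac{d}{dt} A_{ij} = (2s - mr_i - nc_j) A_{ij}$, and then reorganize sums using the identities $\sum_j A_{ij}^2 = r_i$, $\sum_i A_{ij}^2 = c_j$, and $\sum_i r_i = \sum_j c_j = s$. The main (trivial) obstacle is just bookkeeping to recognize the resulting expression as $-2\Delta$ after expanding the squares in the definition of $\Delta$.

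Concretely, I would first compute
\[
\frac{d}{dt} s \;=\; \sum_{i=1}^m \sum_{j=1}^n 2 A_{ij} \cdot \frac{d}{dt} A_{ij} \;=\; 2 \sum_{i=1}^m \sum_{j=1}^n (2s - mr_i - nc_j) A_{ij}^2.
\]
Distributing and using $\sum_j A_{ij}^2 = r_i$ and $\sum_i A_{ij}^2 = c_j$ together with $\sum_{i,j} A_{ij}^2 = s$, this simplifies to
\[
\frac{d}{dt} s \;=\; 4s^2 - 2m \sum_{i=1}^m r_i^2 - 2n \sum_{j=1}^n c_j^2.
\]

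Next I would expand the definition of $\Delta$ and use $\sum_i r_i = \sum_j c_j = s$ to obtain
\[
\Delta \;=\; \frac{1}{m} \sum_{i=1}^m (s - m r_i)^2 + \frac{1}{n} \sum_{j=1}^n (s - n c_j)^2 \;=\; -2 s^2 + m \sum_{i=1}^m r_i^2 + n \sum_{j=1}^n c_j^2,
\]
since the cross terms contribute $-2s \cdot s$ on each side and the constant terms contribute $s^2$ on each side. Comparing the two displays gives $\frac{d}{dt} s = -2\Delta$, which is the desired identity. This mirrors the first step of the analogous derivation for operators in Lemma~\ref{l:s'}, and the trace computations there collapse here to the simpler scalar manipulations above because everything is diagonal in the matrix setting.
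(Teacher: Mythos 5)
Your proof is correct and follows essentially the same direct computation as the paper: differentiate $s$ using the chain rule, substitute the dynamics, and simplify with $\sum_j A_{ij}^2 = r_i$, $\sum_i A_{ij}^2 = c_j$, $\sum_i r_i = \sum_j c_j = s$. The paper's version reaches $-2\Delta$ slightly more directly by using $\sum_i (s - mr_i) = \sum_j (s - nc_j) = 0$ to convert $2\sum_i (s-mr_i) r_i$ into $-\frac{2}{m}\sum_i (s-mr_i)^2$ without fully expanding both sides in terms of $s^2$, $\sum r_i^2$, $\sum c_j^2$, but this is only a bookkeeping difference.
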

\begin{proof}
By the definition of the dynamical system in Definition~\ref{d:matrix-dynamical},
\[\d A_{ij} = (s-mr_i+s-nc_j)A_{ij}.\]
Therefore,
\begin{align*}
\d s & 
= 2\sum_{i=1}^m \sum_{j=1}^n A_{ij} \d A_{ij} 
\\
& = 2\sum_{i=1}^m \sum_{j=1}^n (s-mr_i) A_{ij}^2 + 2\sum_{i=1}^m \sum_{j=1}^n (s-mc_j) A_{ij}^2
\\
& = 2 \sum_{i=1}^m (s-mr_i)r_i + 2\sum_{j=1}^n (s-nc_j)c_j
\\
& = -\frac{2}{m} \sum_{i=1}^m (s-mr_i)(-mr_i) + -\frac{2}{n} \sum_{j=1}^n (s-nc_j)(-nc_j)
\\ 
& = -\frac{2}{m} \sum_{i=1}^m (s-mr_i)^2 + -\frac{2}{n} \sum_{j=1}^n (s-nc_j)^2
\\
& = -2\Delta,
\end{align*}
where the second last equality follows from $\sum_{i=1}^m (s-mr_i) = \sum_{j=1}^n (s-nc_j) = 0$.
\end{proof}

\subsubsection*{Formula for change of $\Delta$}

\begin{lemma} \label{l:matrix-Delta'}
\[
\d \Delta = -4\sum_{i=1}^m \sum_{j=1}^n (2s-mr_{i}-nc_{j})^{2}\cdot A_{ij}^{2}.
\]
\end{lemma}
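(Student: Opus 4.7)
The plan is to differentiate the expression
\[
\Delta = \tfrac{1}{m}\sum_{i=1}^m (s-mr_i)^2 + \tfrac{1}{n}\sum_{j=1}^n (s-nc_j)^2
\]
directly via the chain rule, which gives
\[
\d \Delta = \tfrac{2}{m}\sum_{i=1}^m (s-mr_i)(\d s - m\,\d r_i) + \tfrac{2}{n}\sum_{j=1}^n (s-nc_j)(\d s - n\,\d c_j).
\]
My first step is to kill the $\d s$ contributions: since $\sum_i r_i = \sum_j c_j = s$, we have $\sum_i (s-mr_i) = 0$ and $\sum_j (s-nc_j) = 0$, so the $\d s$ terms vanish and we are left with
\[
\d \Delta = -2\sum_{i=1}^m (s-mr_i)\,\d r_i - 2\sum_{j=1}^n (s-nc_j)\,\d c_j.
\]

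Next I would compute $\d r_i$ and $\d c_j$ using $r_i = \sum_j A_{ij}^2$, $c_j = \sum_i A_{ij}^2$ together with the defining ODE $\d A_{ij} = (2s-mr_i-nc_j)\,A_{ij}$ from Definition~\ref{d:matrix-dynamical}. This yields
\[
\d r_i = 2\sum_{j=1}^n A_{ij}\,\d A_{ij} = 2\sum_{j=1}^n (2s - mr_i - nc_j)\,A_{ij}^2,
\]
and symmetrically $\d c_j = 2\sum_{i=1}^m (2s - mr_i - nc_j)\,A_{ij}^2$.

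Substituting these back and combining the two resulting double sums over $[m]\times[n]$, the coefficient of $(2s-mr_i-nc_j)\,A_{ij}^2$ becomes $(s-mr_i) + (s-nc_j) = 2s - mr_i - nc_j$, which is precisely the drift factor attached to $A_{ij}$ in the dynamical system. This completes the square and produces
\[
\d \Delta = -4\sum_{i=1}^m \sum_{j=1}^n (2s - mr_i - nc_j)^2\,A_{ij}^2,
\]
as claimed. I do not foresee any real obstacle: the identity is entirely formal once the $\d s$ terms are eliminated, and the only point that has to be noticed is that the symmetric form of the ODE in Definition~\ref{d:matrix-dynamical} is exactly what makes $(s-mr_i)+(s-nc_j)$ reassemble the full drift factor. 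This is the scalar shadow of the trace manipulation used in Lemma~\ref{l:Delta'}, and as an immediate bonus the identity shows $\d\Delta \le 0$, so $\Delta^{(t)}$ is monotonically non-increasing along the flow (matching point~(ii) of the proof outline).
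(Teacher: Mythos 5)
Your proof is correct and follows essentially the same route as the paper: differentiate $\Delta$ by the chain rule, drop the $\d s$ terms via $\sum_i (s-mr_i)=\sum_j (s-nc_j)=0$, expand $\d r_i$ and $\d c_j$ through $\d A_{ij}$, merge the two double sums so that the coefficient reassembles $2s-mr_i-nc_j$, and substitute the ODE. The only cosmetic difference is that you substitute the ODE when computing $\d r_i$, $\d c_j$, while the paper keeps $\d A_{ij}$ symbolic until the final line.
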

\begin{proof}
Starting from the definition $\Delta = \frac{1}{m} \sum_{i=1}^m (s-mr_i)^2 + \frac{1}{n} \sum_{j=1}^n (s-nc_j)^2$, we have
\begin{align*}
\d \Delta & = 
\frac{2}{m} \sum_{i=1}^m (s-mr_i)(\d s - m \d r_i) + \frac{2}{n} \sum_{j=1}^n (s-nc_j)(\d s - n \d c_j)
\\
& = -2 \sum_{i=1}^m (s-mr_i) \d r_i - 2 \sum_{j=1}^n (s-nc_j) \d c_j
\\
& = -4\sum_{i=1}^m \sum_{j=1}^n (s-mr_i) A_{ij} \d A_{ij} - 4 \sum_{i=1}^m \sum_{j=1}^n (s-nc_j) A_{ij} \d A_{ij}
\\
& = -4\sum_{i=1}^m \sum_{j=1}^n (s-mr_i+s-nc_j) A_{ij} \d A_{ij}
\\
& = -4\sum_{i=1}^m \sum_{j=1}^n (s-mr_i+s-nc_j)^2 A_{ij}^2,
\end{align*}
where the second equality follows from $\sum_{i=1}^m (s-mr_i) = \sum_{j=1}^n (s-nc_j) = 0$,
and the last equality follows from the definition of the dynamical system in Definition~\ref{d:matrix-dynamical}.
\end{proof}

\subsubsection*{Capacity Unchanged}

The proof is similar but simpler than that in Subsection~\ref{ss:formulas} as we have an explicit formula for $A_{ij}^{(t)}$, which is useful in later proofs.

\begin{lemma} \label{l:A-formula}
At time $T \geq 0$, we have
\[
A_{ij}^{(T)}=\exp\Big(\int_{0}^{T} (2s^{(t)}-mr_{i}^{(t)}-nc_{j}^{(t)}) dt\Big) \cdot A_{ij}^{(0)},
\]
and
\[
A^{(T)} = \diag\left( \exp\Big(\int_{0}^{T} \big(s^{(t)}-mr_{i}^{(t)} \big) dt\Big)\right) \cdot A^{(0)} \cdot 
\diag\left( \exp\Big(\int_{0}^{T} \big(s^{(t)}-nc_{j}^{(t)} \big) dt \Big) \right).
\]
\end{lemma}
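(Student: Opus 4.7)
The plan is to observe that for each fixed pair $(i,j)$, the dynamical system in Definition~\ref{d:matrix-dynamical} reduces to a scalar linear ODE for $A_{ij}^{(t)}$ with a time-dependent coefficient $f_{ij}(t) := 2s^{(t)} - mr_i^{(t)} - nc_j^{(t)}$. The reason this works is that $f_{ij}$ depends on the full system $A^{(t)}$, but once the system is solved (which it is, by standard ODE theory), $f_{ij}$ becomes a fixed real-valued function of $t$, and we may integrate the scalar equation $\frac{d}{dt} A_{ij}^{(t)} = f_{ij}(t) A_{ij}^{(t)}$ directly.

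First, as in the proof of Lemma~\ref{l:unitary-scaling}, I would note that because $\d A_{ij}$ is a polynomial in the bounded entries of $A^{(t)}$, the right-hand side of the ODE is Lipschitz, so by standard ODE theory there is a unique solution to the full system with initial condition $A^{(0)}$. In particular, this fixes the time-dependent scalar function $f_{ij}(t)$. Next, I would verify directly that the candidate
\[
A_{ij}^{(T)} = A_{ij}^{(0)} \cdot \exp\Bigl(\int_0^T f_{ij}(t)\,dt\Bigr)
\]
satisfies the ODE: differentiating using $\frac{d}{dT}\int_0^T f_{ij}(t)\,dt = f_{ij}(T)$ gives $\frac{d}{dT} A_{ij}^{(T)} = f_{ij}(T) A_{ij}^{(T)}$, and the initial condition at $T=0$ is trivially satisfied. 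Uniqueness of the scalar ODE (or equivalently of the full system) then yields the first claimed formula.

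For the second claimed formula, I would split the exponent additively as $f_{ij}(t) = (s^{(t)} - m r_i^{(t)}) + (s^{(t)} - n c_j^{(t)})$, so that
\[
\exp\Bigl(\int_0^T f_{ij}(t)\,dt\Bigr) = \exp\Bigl(\int_0^T (s^{(t)}-mr_i^{(t)})\,dt\Bigr) \cdot \exp\Bigl(\int_0^T (s^{(t)}-nc_j^{(t)})\,dt\Bigr).
\]
The first factor depends only on $i$ and the second only on $j$, so the $(i,j)$-entry of $A^{(T)}$ can be written as $L_{ii}^{(T)} \cdot A_{ij}^{(0)} \cdot R_{jj}^{(T)}$ where $L^{(T)}$ and $R^{(T)}$ are the two diagonal matrices in the statement, which is exactly the matrix product $L^{(T)} A^{(0)} R^{(T)}$.

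There is essentially no hard step here: the main point to be careful about is that $f_{ij}$ is a genuine (Lipschitz) function of $t$ only after one appeals to existence and uniqueness for the full coupled system; once that is in hand, the entry-wise equation is a textbook scalar linear ODE. This explicit formula will later be useful because it displays $A^{(T)}$ as a matrix scaling of $A^{(0)}$ (in the sense of Definition~\ref{d:matrix-scaling}), which immediately implies the capacity-unchanged statement via the multiplicativity of the matrix capacity under diagonal scaling.
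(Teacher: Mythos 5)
Your proof is correct and takes essentially the same approach as the paper: treat each entry $A_{ij}^{(t)}$ as solving a scalar linear ODE with time-dependent coefficient $2s^{(t)} - mr_i^{(t)} - nc_j^{(t)}$, verify the candidate by differentiation, conclude by uniqueness, and split the exponent additively to exhibit $A^{(T)}$ as a diagonal scaling of $A^{(0)}$. The only cosmetic difference is that you make more explicit the point that existence and uniqueness for the coupled system must come first so that the coefficient is a well-defined (Lipschitz) function of $t$; the paper glosses over this but relies on the same underlying fact.
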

\begin{proof}
It is easy to check that this solution satisfies the differential equation in Definition~\ref{d:matrix-dynamical} as
\begin{eqnarray*}
\frac{d}{dT} A_{ij}^{(T)} 
& = & (2s^{(T)} - mr_i^{(T)} - nc_j^{(T)}) \cdot \exp\Big(\int_{0}^{T} (2s^{(t)}-mr_{i}^{(t)}-nc_{j}^{(t)}) dt\Big) \cdot A_{ij}^{(0)}\\
& = & (2s^{(T)} - mr_i^{(T)} - nc_j^{(T)}) A_{ij}^{(T)}.
\end{eqnarray*}
By standard theory (see Theorem~2.1 of~\cite{Bjo}),
there is a unique solution to the differential equation $\d z^{(t)} = c^{(t)} z^{(t)}$ at an initial value $z^{(0)}=A_{ij}^{(0)}$ when $c^{(t)}$ is Lipschitz continuous over $t$.
So the first part of the lemma follows from the uniqueness of the solution of our dynamical system.
Let 
\[x_i = \exp\Big(\int_{0}^{T} \big( s^{(t)}-mr_{i}^{(t)} \big) dt \Big)
\quad {\rm and} \quad 
y_j = \exp\Big(\int_{0}^{T} \big( s^{(t)}-nc_{j}^{(t)} \big) dt \Big).
\]
Let $X \in \R^{m \times m}$ be the diagonal matrix with $X_{ii}=x_i$ for $1 \leq i \leq m$ and $Y \in \R^{n \times n}$ be the diagonal matrix with $Y_{jj}=y_j$ for $1 \leq j \leq n$.
Then we see the second part of the lemma as 
\[A_{ij}^{(T)} = x_i A_{ij}^{(0)} y_j \quad {\rm and~thus} \quad
A^{(T)} = X A^{(0)} Y.
\]
\end{proof}

We see how the capacity changes after scaling.
The proof is basically the same as that in Proposition~2.7 of~\cite{operator}, with a slightly different definition of capacity.

\begin{lemma} \label{l:matrix-cap-change}
Let $A \in \R^{m \times n}$, $X \in \R^{m \times m}$ be a positive diagonal matrix and $Y \in \R^{n \times n}$ be a positive diagonal matrix.
Then 
\[
\capa(XAY) = (\prod_{i=1}^m X_{ii})^{1/m} (\prod_{j=1}^n Y_{jj})^{1/n} \capa(A).
\]
\end{lemma}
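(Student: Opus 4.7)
The plan is to prove the identity by a direct calculation, substituting the definition of matrix capacity and performing a change of variables in the infimum. This mirrors the proof of Lemma~\ref{l:cap-change} in the operator setting, but is simpler because $X$ and $Y$ are diagonal.

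First I would write $B := XAY$ and compute $(Bx)_i = X_{ii} \sum_{j=1}^n A_{ij} Y_{jj} x_j = X_{ii} (A(Yx))_i$ for any positive $x \in \R^n$, using that $X$ and $Y$ are diagonal. Taking the product over $i$ then factors as
\[
\prod_{i=1}^m (Bx)_i = \Bigl(\prod_{i=1}^m X_{ii}\Bigr) \cdot \prod_{i=1}^m (A(Yx))_i,
\]
which allows us to pull $(\prod_i X_{ii})^{1/m}$ outside the infimum in $\capa(B) = \inf_{x>0} \frac{m (\prod_i (Bx)_i)^{1/m}}{(\prod_j x_j)^{1/n}}$ from Definition~\ref{d:matrix-capacity}.

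Next I would make the change of variables $z := Yx$, which is a bijection of the positive orthant $\R^n_{>0}$ to itself since $Y$ is a positive diagonal matrix. Under this substitution $x_j = z_j / Y_{jj}$, so
\[
\prod_{j=1}^n x_j = \Bigl(\prod_{j=1}^n Y_{jj}\Bigr)^{-1} \prod_{j=1}^n z_j,
\]
which lets us pull $(\prod_j Y_{jj})^{1/n}$ outside the infimum. Combining the two factorizations yields
\[
\capa(B) = \Bigl(\prod_{i=1}^m X_{ii}\Bigr)^{1/m} \Bigl(\prod_{j=1}^n Y_{jj}\Bigr)^{1/n} \inf_{z>0} \frac{m (\prod_i (Az)_i)^{1/m}}{(\prod_j z_j)^{1/n}},
\]
and the remaining infimum is exactly $\capa(A)$ by Definition~\ref{d:matrix-capacity}, which gives the claimed equality.

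There is no real obstacle here; the entire proof is bookkeeping with products and a linear change of variables. The one thing to be careful about is ensuring the change of variables $z = Yx$ preserves the domain $\R^n_{>0}$ of the infimum, which is immediate because $Y$ is a positive diagonal matrix (every diagonal entry is strictly positive). No continuity or limiting argument is needed since both infima are over the same open set.
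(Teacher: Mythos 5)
Your proof is correct and follows essentially the same route as the paper's: factor out the diagonal $X$ entrywise from the numerator, then absorb $Y$ into a change of variables $z = Yx$ on the positive orthant, which produces the $(\prod_j Y_{jj})^{1/n}$ factor from the denominator. The only cosmetic difference is that you state the domain-preservation of the substitution explicitly, which the paper leaves implicit.
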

\begin{proof}
Let $x_i=X_{ii}$ and $y_j=Y_{jj}$.
By Definition~\ref{d:matrix-capacity},
\begin{eqnarray*}
\capa(XAY) 
& = & \inf_{z > 0} \frac{\prod_{i=1}^m \big((XAYz)_i\big)^{1/m}}{\prod_{j=1}^n z_j^{1/n}}
\\
& = & \Big(\prod_{i=1}^m x_i\Big)^{1/m} \inf_{z > 0} \frac{\prod_{i=1}^m \big((AYz)_i\big)^{1/m}}{\prod_{j=1}^n z_j^{1/n}}
\\
& = &
\Big(\prod_{i=1}^m x_i\Big)^{1/m} \inf_{z > 0} \frac{\prod_{i=1}^m \big((Az)_i\big)^{1/m}}{\prod_{j=1}^n (z_j/y_j)^{1/n}}
\\
& = &
\Big(\prod_{i=1}^m x_i\Big)^{1/m} \Big(\prod_{j=1}^n y_j\Big)^{1/n} \inf_{z > 0} \frac{\prod_{i=1}^m \big((Az)_i\big)^{1/m}}{\prod_{j=1}^n (z_j)^{1/n}}
\\
& = &
\Big(\prod_{i=1}^m x_i\Big)^{1/m} \Big(\prod_{j=1}^n y_j\Big)^{1/n} \capa(A).
\end{eqnarray*}
\end{proof}

We are ready to check that the capacity of $\capa(\AAt)$ is unchanged over time $t$.

\begin{lemma} \label{l:matrix-cap-unchanged}
For any $T \geq 0$, we have
\[ \capa\big((A^{(T)})^{\circ 2}\big) = \capa\big((A^{(0)})^{\circ 2}\big).\]
\end{lemma}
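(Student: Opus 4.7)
The plan is to derive the claim directly from the two preceding lemmas: the explicit closed form for $A^{(T)}$ in Lemma~\ref{l:A-formula}, and the scaling identity for matrix capacity in Lemma~\ref{l:matrix-cap-change}. No differential calculation is needed; the argument reduces to checking that two determinant-like factors equal one.

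First, I would take the factorization $A^{(T)} = X^{(T)} A^{(0)} Y^{(T)}$ from Lemma~\ref{l:A-formula}, where $X^{(T)} = \diag(x_i^{(T)})$ and $Y^{(T)} = \diag(y_j^{(T)})$ with
\[
x_i^{(T)} = \exp\Big(\int_0^T (s^{(t)} - m r_i^{(t)}) \, dt\Big), \qquad y_j^{(T)} = \exp\Big(\int_0^T (s^{(t)} - n c_j^{(t)}) \, dt\Big).
\]
Since the Hadamard square is multiplicative on entrywise products and diagonal scalings commute through it, this immediately yields $(A^{(T)})^{\circ 2} = (X^{(T)})^2 \cdot (A^{(0)})^{\circ 2} \cdot (Y^{(T)})^2$, which is again a positive diagonal scaling on both sides.

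Next, I would apply Lemma~\ref{l:matrix-cap-change} to this factorization to obtain
\[
\capa\big((A^{(T)})^{\circ 2}\big) = \Big(\prod_{i=1}^m (x_i^{(T)})^2\Big)^{1/m} \Big(\prod_{j=1}^n (y_j^{(T)})^2\Big)^{1/n} \capa\big((A^{(0)})^{\circ 2}\big).
\]
The key identity to verify is that both prefactors equal $1$. For the first,
\[
\prod_{i=1}^m x_i^{(T)} = \exp\Big(\int_0^T \sum_{i=1}^m (s^{(t)} - m r_i^{(t)}) \, dt\Big) = \exp\Big(\int_0^T (m s^{(t)} - m s^{(t)}) \, dt\Big) = 1,
\]
using that $\sum_{i=1}^m r_i^{(t)} = s^{(t)}$ by Definition~\ref{d:matrix-sum} and Definition~\ref{d:matrix-size}. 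The same argument, with $\sum_{j=1}^n c_j^{(t)} = s^{(t)}$, gives $\prod_{j=1}^n y_j^{(T)} = 1$.

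There is no real obstacle here; the content is entirely captured by the two earlier lemmas, and the only thing one needs is the observation that the dynamical system was precisely engineered so that the trace of the ``row drift'' $(s - m r_i)$ and the ``column drift'' $(s - n c_j)$ each sum to zero, mirroring the role of $\tr(C_m) = \tr(C_n) = 0$ in the operator case (see Lemma~\ref{l:unitary-scaling}). This vanishing trace is exactly the feature that makes the scaling matrices have determinant one and hence leaves the capacity invariant.
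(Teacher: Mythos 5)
Your proof is correct and follows essentially the same route as the paper's: factor $A^{(T)} = X^{(T)} A^{(0)} Y^{(T)}$ via Lemma~\ref{l:A-formula}, pass to the Hadamard square, apply the scaling identity of Lemma~\ref{l:matrix-cap-change}, and observe that the prefactors are $1$ because $\sum_i (s - m r_i) = \sum_j (s - n c_j) = 0$. The only cosmetic difference is that you write $(X^{(T)})^2$ where the paper writes $X^{\circ 2}$, which coincide for diagonal matrices.
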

\begin{proof}
Let 
\[
X=\diag\left( \exp\Big(\int_{0}^{T} \big(s^{(t)}-mr_{i}^{(t)} \big) dt\Big)\right)
\quad {\rm and} \quad
Y=\diag\left( \exp\Big(\int_{0}^{T} \big(s^{(t)}-nc_{j}^{(t)} \big) dt \Big) \right).
\]
By Lemma~\ref{l:A-formula}, we have $A^{(T)} = XA^{(0)}Y$.
Since $X$ and $Y$ are diagonal matrices,
we have
\[(A^{(T)})^{\circ 2}
= A^{(T)} \circ A^{(T)} 
= (XA^{(0)}Y) \circ (XA^{(0)}Y) 
= X^{\circ 2} \cdot (A^{(0)})^{\circ 2} \cdot Y^{\circ 2}.\]
Therefore, by Lemma~\ref{l:matrix-cap-change},
we have
\[
\capa\big( (A^{(T)})^{\circ 2} \big) 
= \Big(\prod_{i=1}^m \big({X^{\circ 2}}\big)_{ii}\Big)^{1/m} \Big(\prod_{j=1}^n \big(Y^{\circ 2}\big)_{jj}\Big)^{1/n} \capa\big( (A^{(0)})^{\circ 2} \big).
\]
To prove that the capacity is unchanged, it remains to check that 
$\prod_{i=1}^m \big(X^{\circ 2}\big)_{ii} =  \prod_{j=1}^n \big(Y^{\circ 2}\big)_{jj} = 1$.
Note that
\[
\sqrt{\prod_{i=1}^m \big(X^{\circ 2}\big)_{ii}}
= \prod_{i=1}^m X_{ii}
= \prod_{i=1}^m \exp\Big(\int_{0}^{T} \big(s^{(t)}-mr_{i}^{(t)} \big) dt \Big)
= \exp\Big( \int_{0}^{T} \sum_{i=1}^m \big(s^{(t)}-mr_{i}^{(t)} \big) dt \Big) 
= 1,
\]
as $\sum_{i=1}^m (s-mr_i) = ms - m\sum_{i=1}^m r_i = ms - ms = 0$.
Similarly, we can check that $\prod_{j=1}^n \big(Y^{\circ 2}\big)_{jj} = 1$ and the lemma follows.
\end{proof}

\subsection{Matrix Capacity Lower Bound from Dynamical System} \label{ss:Delta'}

In this subsection, we present a new method to prove matrix capacity lower bound, and use it to prove that a pseudorandom matrix has a stronger capacity lower bound.
First, we show how to prove matrix capacity lower bound using a lower bound on the convergence rate of the dynamical system.
Then, we define the pseudorandom property of a matrix and state the main results about matrix capacity lower bound.
Finally, we show that the pesudorandom property is maintained throughout the execution of the dynamical system.
In the next subsection,
we will prove the combinatorial lemma that the pseudorandom property of a matrix implies a lower bound on the convergence rate of the dynamical system, 
and this will complete the first part of Section~\ref{s:smoothed}.

\subsubsection*{Matrix Capacity Lower Bound from Convergence Rate Lower Bound}

In Subsection~\ref{ss:outline}, the operator capacity lower bound is used to show that the size of the operator will not decrease much (see point (iv) in Subsection~\ref{ss:outline}), and this implies that $\Delta$ must decrease quick enough, as otherwise the size will decrease too much by the formula $\d s = -2\Delta$ from Lemma~\ref{l:Delta'}, contradicting the capacity lower bound.
To summarize, the capacity lower bound provides an indirect way to argue about the fast convergence of $\Delta$ to zero.

In this subsection, we establish the reverse direction to prove capacity lower bound.
The following proposition shows that if we have a lower bound on the convergence rate of $\Delta$,
then we can use it to prove a lower bound on the capacity.
Recall the shorthand notation defined in Definition~\ref{d:matrix-shorthand}.

\begin{proposition} \label{p:cap-from-Delta'}
For the dynamical system defined in Definition~\ref{d:matrix-dynamical} with $A^{(0)}$ as the input matrix,
if the following convergence lower bound holds throughout the execution of the dynamical system
\[
-\d \Delta^{(t)} \geq \mu \Delta^{(t)} {\rm~for~all~} t \geq 0,
\]
then we have the capacity lower bound
\[
\capa^{(0)} \geq s^{(0)} - \frac{2\Delta^{(0)}}{\mu}.
\]
\end{proposition}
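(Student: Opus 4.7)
The plan is to combine three facts we have already established: the identity $\d s^{(t)} = -2\Delta^{(t)}$ from Lemma~\ref{l:matrix-s'}, the time-invariance of the capacity $\capa(\AAt) = \capa(\AA^{(0)})$ from Lemma~\ref{l:matrix-cap-unchanged}, and the matrix capacity lower bound $\capa(\AAt) \geq s^{(t)} - n\sqrt{\Delta^{(t)}/2}$ from Proposition~\ref{p:matrix-cap}. The first step is to convert the differential hypothesis $-\d \Delta^{(t)} \geq \mu\Delta^{(t)}$ into the explicit exponential decay $\Delta^{(t)} \leq \Delta^{(0)} e^{-\mu t}$ by a standard Grönwall/comparison argument (compare $\Delta^{(t)}$ with the solution of $\d z = -\mu z$ with $z^{(0)}=\Delta^{(0)}$).

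Second, I would integrate Lemma~\ref{l:matrix-s'} against this exponential bound. Since $s^{(t)}$ is monotonically non-increasing (by $\d s^{(t)} = -2\Delta^{(t)} \leq 0$) and bounded below by $0$, the limit $s^{(\infty)} := \lim_{t \to \infty} s^{(t)}$ exists, and
\[
s^{(0)} - s^{(\infty)} = \int_0^{\infty} 2\Delta^{(t)}\, dt \leq \int_0^{\infty} 2\Delta^{(0)} e^{-\mu t}\, dt = \frac{2\Delta^{(0)}}{\mu}.
\]

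Third, I would take the limit inferior as $t \to \infty$ in the capacity bound from Proposition~\ref{p:matrix-cap}. Since $\Delta^{(t)} \to 0$ and $s^{(t)} \to s^{(\infty)}$, and since by Lemma~\ref{l:matrix-cap-unchanged} the capacity does not change over time, we get
\[
\capa(\AA^{(0)}) = \liminf_{t \to \infty} \capa(\AAt) \geq \liminf_{t \to \infty} \Big(s^{(t)} - n\sqrt{\Delta^{(t)}/2}\Big) = s^{(\infty)}.
\]
Combining with the bound on $s^{(0)} - s^{(\infty)}$ from the previous step gives $\capa(\AA^{(0)}) \geq s^{(\infty)} \geq s^{(0)} - 2\Delta^{(0)}/\mu$, which is the desired inequality.

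I do not expect any serious obstacle here; the proposition is essentially a repackaging of facts that are already in place, and the main point is simply to observe that exponential decay of $\Delta$ forces the total drop in $s$ to be $O(\Delta^{(0)}/\mu)$, and then invoke the invariance of capacity to transfer this bound back to time $0$. The only mild subtlety is justifying the limit $t \to \infty$, but since $s^{(t)}$ is monotone bounded and $\Delta^{(t)}$ decays exponentially, both convergences are immediate without needing the convergence of the matrices $A^{(t)}$ themselves.
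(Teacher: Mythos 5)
Your proof is correct and follows essentially the same approach as the paper: integrate $\d s^{(t)} = -2\Delta^{(t)}$ against the exponential decay implied by the hypothesis, then use $\Delta^{(t)} \to 0$ together with the time-invariance of capacity and the matrix capacity bound to identify $\capa^{(0)}$ with $s^{(\infty)}$. The only cosmetic point is that you should cite Proposition~\ref{p:matrix-capLB} (the rectangular version, with $mn\sqrt{\Delta/2}$) rather than Proposition~\ref{p:matrix-cap}, since $A^{(t)}$ is $m\times n$; this does not affect the argument because the error term vanishes in the limit regardless.
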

\begin{proof}
The matrix capacity lower bound is obtained from the formulas in the dynamical system in Subsection~\ref{ss:matrix-Paulsen} as follows.
\begin{eqnarray*}
s^{(0)} - \capa^{(0)} 
& = & s^{(0)} - \capa^{(\infty)}
\\
& = & s^{(0)} - s^{(\infty)} 
\\
& = & -\int_0^{\infty} \d s~dt
\\
& = & 2 \int_0^{\infty} \Delta^{(t)}~dt
\\
& \leq & 2\Delta^{(0)} \int_0^{\infty} e^{-\mu t} dt
\\
& = & \frac{2\Delta^{(0)}}{\mu}.
\end{eqnarray*}
The first line is because the matrix capacity is unchanged by Lemma~\ref{l:matrix-cap-unchanged}.
To see the second line, we first claim that $\lim_{t \to \infty} \Delta^{(t)} = 0$.
Otherwise, as $\Delta^{(t)}$ is non-increasing over time by Lemma~\ref{l:matrix-Delta'}, there exists a constant $\eta>0$ such that $\Delta^{(t)} \geq \eta$ for all $t \geq 0$.
Since $\d s^{(t)} = -2\Delta^{(t)}$ by Lemma~\ref{l:matrix-s'}, 
this would imply that $s^{(\infty)}$ is unbounded below (in particular, negative), a contradiction to that $s$ is non-negative.
With $\Delta^{(\infty)}=0$ proved,
the second line then follows from $s(\AA) \geq \capa(\AA) \geq s(\AA) - mn\sqrt{\Delta(\AA)/2}$ applied to $A := A^{(\infty)}$, where the first inequality is by the capacity upper bound in Lemma~\ref{l:matrix-capUB} applied to $\AA$ and the second inequality is by the capacity lower bound in Proposition~\ref{p:matrix-capLB} applied to $\AA$.
The fourth line is by Lemma~\ref{l:matrix-s'}, and the fifth line is by $\Delta^{(t)} \leq e^{-\mu t} \Delta^{(0)}$ which follows from the assumption.
The third line and the last line are simple calculations,
and rearranging completes the proof.
\end{proof}

\subsubsection*{Pseudorandom Property}

We now define the pseudorandom property that we will use,
and state the combinatorial lemma that shows a convergence lower bound for pseudorandom matrices.

\begin{definition}[pesudorandom property of a matrix] \label{d:pseudorandom}
A non-negative matrix $A \in \R^{m \times n}$ is $(\alpha,\beta)$-pseudorandom, denoted by $A\gtrsim_{\beta} \alpha$, if it satisfies the following two properties:
\begin{itemize}
\item Every column has at least one entry with value at least $\alpha$.
\item Every row has at least $(1-\beta)n$ entries with value at least $\alpha$.
\end{itemize}
\end{definition}

The following is a key step which we will prove in the next subsection.
This allows us to prove that the pseudorandom property will be maintained throughout the execution of the dynamical system in Proposition~\ref{p:maintain}, so that we can apply Proposition~\ref{p:cap-from-Delta'} to prove a capacity lower bound for a pseudorandom matrix.

\begin{proposition} \label{p:combinatorial}
If $(A^{(t)})^{\circ 2}\gtrsim_{\beta} \alpha$ for a constant $\beta \leq 10^{-9}$ for all $t \geq 0$, then there exists an absolute constant $\kappa \geq 10^{-7}$ such that
\[
- \d \Delta^{(t)} \geq \kappa \alpha n \Delta^{(t)}  {\rm~for~all~} t \geq 0.
\]
A corollary is that 
\[\Delta^{(t)} \leq e^{-\kappa \alpha n t} \cdot \Delta^{(0)} {\rm~for~all~} t \geq 0.
\]
\end{proposition}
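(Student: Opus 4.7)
The plan is to unpack $-\d \Delta$ using Lemma~\ref{l:matrix-Delta'}, peel off the factor $\alpha$ using pseudorandomness, and reduce to a purely combinatorial quadratic inequality. Writing $u_i := s - m r_i$, $v_j := s - n c_j$ (so $\sum_i u_i = \sum_j v_j = 0$), $U^2 := \sum_i u_i^2$, $V^2 := \sum_j v_j^2$, I have $\Delta = U^2/m + V^2/n$ and $n\Delta = (n/m) U^2 + V^2$, while Lemma~\ref{l:matrix-Delta'} reads $-\d \Delta = 4 \sum_{i,j}(u_i + v_j)^2 A_{ij}^2$. Setting $T := \{(i,j) : A_{ij}^2 \geq \alpha\}$ and using $A_{ij}^2 \geq \alpha$ on $T$ gives $-\d \Delta \geq 4 \alpha \sum_{(i,j) \in T}(u_i + v_j)^2$, reducing the proposition to the combinatorial inequality $\sum_{(i,j) \in T}(u_i + v_j)^2 \geq c \cdot n \Delta$ for an absolute constant $c > 0$, after which $\kappa = 4c$.

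To attack the combinatorial inequality, I expand $(u_i + v_j)^2 = u_i^2 + 2 u_i v_j + v_j^2$ and sum over $T$. With $J_i := \{j : (i,j) \in T\}$ and $K_j := \{i : (i,j) \in T\}$, the pseudorandom hypothesis gives $|J_i| \geq (1-\beta) n$ and $|K_j| \geq 1$, so the diagonal terms yield $\sum_i |J_i| u_i^2 \geq (1-\beta) n U^2$ and $\sum_j |K_j| v_j^2 \geq V^2$. For the cross term, $\sum_j v_j = 0$ forces $\sum_{j \in J_i} v_j = -\sum_{j \in J_i^c} v_j$, and Cauchy-Schwarz on the at most $\beta n$ terms of the right-hand side yields $|\sum_{j \in J_i} v_j| \leq \sqrt{\beta n}\,V$; a further Cauchy-Schwarz in $i$ then gives $|2 \sum_i u_i \sum_{j \in J_i} v_j| \leq 2 \sqrt{\beta m n}\,UV$. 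A weighted AM-GM $2\sqrt{\beta m n}\,UV \leq \lambda n U^2 + (\beta m / \lambda) V^2$ with $\lambda = (1-\beta)/2$ then produces the lower bound
\[
\sum_{(i,j) \in T}(u_i + v_j)^2 \geq \tfrac{1-\beta}{2}\,n U^2 + \Big(1 - \tfrac{2\beta m}{1-\beta}\Big)V^2.
\]

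The hard part is eliminating the $m$-dependence: a direct comparison of this bound with $n \Delta = (n/m) U^2 + V^2$ gives an absolute $U^2$-coefficient but the $V^2$-coefficient $1 - 2\beta m/(1-\beta)$ requires $\beta m$ small. To close this gap I will split columns into a ``good'' set $G := \{j : |K_j| \geq m/2\}$ and complement $B := G^c$; the counting identity $\sum_j (m - |K_j|) = mn - \sum_i |J_i| \leq \beta m n$ forces $|B| \leq 2 \beta n$. On $G$ the third sum strengthens to $\sum_{j \in G} |K_j| v_j^2 \geq (m/2) \sum_{j \in G} v_j^2$, which absorbs the $m$-penalty when the $V^2$-mass sits on $G$; in the opposite regime where $V^2$ concentrates on the small set $B$, each row still retains $|J_i \cap G| \geq (1 - 3\beta) n$ contributions of essentially $u_i^2$-weight (since $v_j$ is negligible on $G$), while the zero-sum condition $\sum_{j \in B} v_j = -\sum_{j \in G} v_j$ combined with $|B| \leq 2\beta n$ controls the residual cross-contribution from $B$ column-by-column. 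Optimizing over the two regimes should give an absolute $c$, yielding $\kappa \geq 10^{-7}$ under $\beta \leq 10^{-9}$. The corollary $\Delta^{(t)} \leq e^{-\kappa \alpha n t} \Delta^{(0)}$ then follows by Gr\"onwall: wherever $\Delta^{(t)} > 0$, dividing and integrating $\d(\log \Delta^{(t)}) \leq -\kappa \alpha n$ from $0$ to $t$ gives the stated exponential decay.
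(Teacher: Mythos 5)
Your approach is genuinely different from the paper's and, once the cross term is handled correctly, is both simpler and sharper. The paper proves Proposition~\ref{p:combinatorial} by dyadically bucketing the column defects $v_j := s - nc_j$ and row defects $u_i := s-mr_i$ and running a fairly heavy case analysis (Lemma~\ref{l:subset-strong} and Propositions~\ref{p:pseudorandom-strong},~\ref{p:pseudorandom-weak}), obtaining $\kappa\ge 1/8192000$. Your direct expansion of $\sum_{(i,j)\in T}(u_i+v_j)^2$ together with the column split $G:=\{j:|K_j|\ge m/2\}$, $B:=G^c$ is the right lever, and a properly executed version yields $\sum_{(i,j)\in T}(u_i+v_j)^2 \ge \tfrac14\,n\Delta$ outright, hence $\kappa\ge 1$, with no bucketing and in fact no regime split.

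The gap is exactly where you flagged it, but the repair you sketch for Regime~2 does not work as stated. You bound the cross term once, over all of $T$, by $2\sqrt{\beta mn}\,UV$; that $\sqrt m$ is the source of the $m$-dependent deficit $2\beta m/(1-\beta)$ on the $V^2$-coefficient, and strengthening only the \emph{diagonal} $v$-term on $G$ cannot undo it. In particular the zero-sum $\sum_{j\in B}v_j=-\sum_{j\in G}v_j$ controls only $\lvert\sum_{j\in B}v_j\rvert$, not the quantity you actually need, $\sum_{j\in B}v_j\sigma_j$ with $\sigma_j:=\sum_{i\in K_j}u_i$. What closes the gap is to split the \emph{cross term itself} by $G$ versus $B$. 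On $G$, write $\sigma_j=-\sum_{i\notin K_j}u_i$, use $|K_j^c|\le m/2$ together with $\sum_{j}\sum_{i\notin K_j}u_i^2=\sum_i u_i^2|J_i^c|\le\beta nU^2$, and Cauchy--Schwarz in $j$ to get
\[
\Big|2\sum_{j\in G}v_j\sigma_j\Big|\ \le\ 2\sqrt{\tfrac m2 W^2}\sqrt{\beta nU^2}\ =\ WU\sqrt{2\beta mn},\qquad W^2:=\sum_{j\in G}v_j^2,
\]
which AM--GM absorbs as $\le\tfrac m4 W^2+2\beta nU^2$, leaving $\tfrac m4 W^2$ of the $\tfrac m2 W^2$ diagonal alive. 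On $B$, use the pointwise bound $(u_i+v_j)^2\ge\tfrac12 v_j^2-u_i^2$ to get
\[
\sum_{j\in B}\sum_{i\in K_j}(u_i+v_j)^2\ \ge\ \tfrac12\sum_{j\in B}|K_j|v_j^2-\sum_i u_i^2\,|J_i\cap B|\ \ge\ \tfrac12\sum_{j\in B}v_j^2-2\beta nU^2;
\]
here it is $|B|\le 2\beta n$ that does the work, not the zero-sum. Adding the $G$- and $B$-parts and using $|J_i\cap G|\ge(1-3\beta)n$ gives $(1-7\beta)nU^2+\tfrac m4 W^2+\tfrac12\sum_{j\in B}v_j^2\ge\tfrac14\big(\tfrac nm U^2+W^2+\sum_{j\in B}v_j^2\big)=\tfrac14 n\Delta$ since $m\ge 1$, as wanted. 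So your plan is right in outline and the $G/B$ idea is the correct one, but the cross term must be split by $G/B$ as well, and the zero-sum plays no role in the $B$-estimate.
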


The proof of Proposition~\ref{p:combinatorial} is combinatorial and has an involved case analysis, which is deferred to the next subsection.
Assuming Proposition~\ref{p:combinatorial} and Proposition~\ref{p:maintain},
we prove the main result in this subsection.
In our application to the Paulsen problem in Subsection~\ref{ss:together},
we will argue that all the assumptions will be satisfied.

\begin{theorem} \label{t:pseudorandom}
Suppose $(A^{(0)})^{\circ 2}$ is $(\alpha,\beta)$-pseudorandom for a constant $\beta \leq 10^{-9}$ and all the column sums are the same, i.e.
\[(A^{(0)})^{\circ 2} \gtrsim_{\beta} \alpha 
\quad {\rm and} \quad
c_j^{(0)} = \frac{s^{(0)}}{n} {\rm~for~} 1 \leq j \leq n,
\]
where $\kappa$ is the absolute constant in Proposition~\ref{p:combinatorial}.
Also assume that 
\[
\alpha \geq \frac{80\sqrt{m\Delta^{(0)}}}{\kappa n}, \quad
s^{(0)} = m \quad {\rm and} \quad 
\Delta^{(0)} \leq 1/10.\]
Then
\[
\capa^{(0)} \geq s^{(0)} - \frac{\Delta^{(0)}}{5\kappa \alpha n}.
\] 
\end{theorem}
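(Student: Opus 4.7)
The plan is to chain two earlier results. Proposition~\ref{p:cap-from-Delta'} converts a uniform exponential decay rate $-\d \Delta^{(t)} \ge \mu \Delta^{(t)}$ into the capacity lower bound $\capa^{(0)} \ge s^{(0)} - 2\Delta^{(0)}/\mu$, while Proposition~\ref{p:combinatorial} converts a pseudorandom guarantee $(A^{(t)})^{\circ 2} \gtrsim_\beta \alpha'$ into exactly such a rate $\mu = \kappa \alpha' n$. If the pseudorandom hypothesis can be verified throughout the trajectory with some $\alpha' = \Omega(\alpha)$, then combining the two gives $\capa^{(0)} \ge s^{(0)} - O(\Delta^{(0)}/(\kappa \alpha n))$, which is the shape of the claimed bound.

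The central technical task is therefore to propagate the pseudorandom property from $t=0$ to all $t \ge 0$. The key tool is Lemma~\ref{l:A-formula}, which gives the explicit entrywise evolution
\[
A_{ij}^{(t)} = A_{ij}^{(0)} \exp\Bigl(\int_0^t \bigl(2s^{(\tau)} - m r_i^{(\tau)} - n c_j^{(\tau)}\bigr)\, d\tau\Bigr).
\]
I would run a continuity/bootstrap argument: let $T^{*}$ be the supremum of times for which $(A^{(\tau)})^{\circ 2} \gtrsim_\beta \alpha/\gamma$ holds on $[0, T^{*}]$, for a modest constant $\gamma > 1$ to be calibrated. On this interval Proposition~\ref{p:combinatorial} is applicable and yields $\Delta^{(\tau)} \le \Delta^{(0)} \exp(-\kappa \alpha n \tau/\gamma)$, so it suffices to show that the exponent in the above display is bounded in absolute value by $\tfrac{1}{2}\log \gamma$ on $[0, T^{*}]$; this strictly improves the property at $T^{*}$ and forces $T^{*} = \infty$ by maximality. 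Cauchy--Schwarz on the definition of $\Delta$ gives $|s - m r_i| \le \sqrt{m \Delta^{(\tau)}}$, and so the row part of the exponent integrates to at most $O\bigl(\gamma \sqrt{m \Delta^{(0)}}/(\kappa \alpha n)\bigr)$, which is a small constant by the standing hypothesis $\alpha \ge 80\sqrt{m\Delta^{(0)}}/(\kappa n)$.

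The hard part, which I expect to be the main obstacle, is the column contribution $\int_0^\tau (s - n c_j)\, d\tau$: the naive bound $|s - n c_j| \le \sqrt{n \Delta^{(\tau)}}$ is too weak in the Paulsen regime $n \gg m$. This is where the standing hypothesis $c_j^{(0)} = s^{(0)}/n$ becomes essential, since it forces $s - n c_j$ to start at zero. Differentiating the shorthand definitions (using Lemma~\ref{l:matrix-s'} and the formula for $\d c_j$) yields
\[
\d(s - n c_j) = -2\Delta - 2n\sum_i A_{ij}^2(s - m r_i) - 2 n c_j(s - n c_j),
\]
and combined with the row bounds already under control, the exponential decay of $\Delta$, and the smallness $\Delta^{(0)} \le 1/10$ (which keeps $s$ near $m$ and allows the inhomogeneous terms driving $s - n c_j$ to be controlled), this should yield an integrated column bound of the same order $O\bigl(\gamma \sqrt{m \Delta^{(0)}}/(\kappa \alpha n)\bigr)$ as the row part. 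Plugging the resulting global rate $\mu = \kappa \alpha n/\gamma$ into Proposition~\ref{p:cap-from-Delta'} and tracking the constants $\gamma$ and $80$ then produces the claimed $\capa^{(0)} \ge s^{(0)} - \Delta^{(0)}/(5\kappa \alpha n)$.
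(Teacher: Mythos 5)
Your plan is structurally the same as the paper's proof: bootstrap the pseudorandom property forward in time, feed the resulting rate $\mu$ into Proposition~\ref{p:cap-from-Delta'} to get the capacity bound, and invoke Proposition~\ref{p:combinatorial} to supply the rate. The paper packages the bootstrap as a separate lemma (Proposition~\ref{p:maintain}), which you are in effect re-deriving. Where you diverge is the column side of the bootstrap, which you correctly flag as the real difficulty, and here your route is genuinely different from the paper's. You propose to integrate the scalar ODE $\d(s - nc_j) = -2\Delta - 2n\sum_i A_{ij}^2(s - mr_i) - 2nc_j(s - nc_j)$ directly, using $c_j^{(0)} = s^{(0)}/n$ to get a zero initial condition, the row bound $|s - mr_i| \le \sqrt{m\Delta}$ to control the forcing, and the damping $-2nc_j(s - nc_j) \approx -2s\cdot(s - nc_j)$ to keep the solution small, then bound $|\int_0^T (s - nc_j)\,d\tau|$. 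This is viable, but it is a Riccati-type ODE with a time-varying damping coefficient, so closing the argument requires carrying an additional bootstrap hypothesis ($|s - nc_j| \le s/2$, say) alongside the pseudorandom one, so that the decay term is genuinely coercive; and the Gronwall estimate splits into cases depending on whether the damping rate $\approx m$ dominates the exponential decay rate $\nu = \kappa\alpha n/\gamma$ or vice versa. It works, but it is more delicate than your sketch suggests, and the constant $\gamma$ has to absorb all of this.

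The paper's Proposition~\ref{p:maintain} sidesteps the integral bound on $s - nc_j$ entirely. It only needs a \emph{lower} bound on $y_j = \exp\bigl(\int_0^T (s^{(t)} - nc_j^{(t)})\,dt\bigr)$, and obtains it algebraically from the column-sum identity: writing $(A_{ij}^{(T)})^2 = x_i^2 (A_{ij}^{(0)})^2 y_j^2$ and summing over $i$ gives $c_j^{(T)} \le e^{1/2} y_j^2 c_j^{(0)} = e^{1/2} y_j^2 s^{(0)}/n$, whence $y_j^2 \ge c_j^{(T)} n/(2s^{(0)})$. The paper then lower-bounds $c_j^{(T)}$ by observing from the $\d c_j$ formula that $c_j$ is non-decreasing whenever it falls below $(s - \sqrt{m\Delta^{(0)}})/n$, and separately lower-bounds $s^{(T)}$ via $\d s = -2\Delta$ and the exponential decay of $\Delta$. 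This avoids the Riccati analysis and the second bootstrap hypothesis, and uses the hypothesis $c_j^{(0)} = s^{(0)}/n$ for a purely algebraic purpose (to express $c_j^{(0)}$ as $s^{(0)}/n$ in the $y_j$ formula) rather than as an ODE initial condition. Both approaches critically use the equal-column-sum hypothesis; the paper's is the more economical, yours proves something slightly stronger (two-sided control of $y_j$) at more cost. If you flesh out the ODE estimate and the auxiliary bootstrap for $c_j$, your argument goes through and gives the same conclusion.
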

\begin{proof}
The assumptions are needed to apply Proposition~\ref{p:maintain} to prove that when $A^{(0)}$ is given as the input matrix to the dynamical system,
all the matrices $(A^{(t)})^{\circ 2}$ throughout the execution of the dynamical system is still $(\frac{1}{10} \alpha,\beta)$-pseudorandom.
Thus, by Proposition~\ref{p:combinatorial}, we have
\[
-\d \Delta^{(t)} \geq \frac{1}{10} \kappa \alpha n \Delta^{(t)} {\rm~for~all~} t \geq 0.
\]
We can then apply Proposition~\ref{p:cap-from-Delta'} with $\mu = \kappa \alpha n / 10$ to get
\[
\capa^{(0)} \geq s^{(0)} - \frac{\Delta^{(0)}}{5 \kappa \alpha n}.
\]
\end{proof}

In the remaining of this subsection, we will prove Proposition~\ref{p:maintain}, and then Proposition~\ref{p:combinatorial} in the next subsection.

\subsubsection*{Invariance of Pseudorandom Property}

The following proposition proves that the pseudorandom property is maintained throughout the execution of the dynamical system.
This is an important proposition for the matrix capacity lower bound,
and the assumption on $\alpha$ is the reason for our choice of $\sigma^2$ in the perturbation step.
We will also explain how the proof of the following proposition can be used to bound the running time of the matrix scaling algorithms in~\cite{Cohen}.

\begin{proposition} \label{p:maintain}
Suppose $(A^{(0)})^{\circ 2}$ is $(\alpha,\beta)$-pseudorandom and all the initial column sums are the same, i.e.
\[(A^{(0)})^{\circ 2} \gtrsim_{\beta} \alpha
\quad {\rm and} \quad
c_j^{(0)} = \frac{s^{(0)}}{n} {\rm~for~} 1 \leq j \leq n,
\]
where $\kappa$ is the constant in Proposition~\ref{p:combinatorial}.
Also assume that 
\[
\alpha \geq \frac{80\sqrt{m\Delta^{(0)}}}{\kappa n}, \quad
s^{(0)} = m \quad {\rm and} \quad \Delta^{(0)} \leq 1/10.\]
Then the matrix in the dynamical system is still $(\frac{1}{10} \alpha,\beta)$-pesudorandom at any time $t$, i.e.
\[(A^{(t)})^{\circ 2} \gtrsim_{\beta} \frac{\alpha}{10} {\rm~~for~all~} t \geq 0.
\]
\end{proposition}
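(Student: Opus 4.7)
The plan is a continuity/bootstrap argument based on the explicit scaling formula of Lemma~\ref{l:A-formula}. Define $T^\ast \in (0,\infty]$ to be the supremum of times $T$ for which every entry $(i,j)$ with $(A_{ij}^{(0)})^2 \ge \alpha$ satisfies $(A_{ij}^{(t)})^2 \ge \alpha/10$ throughout $[0,T]$. Tracking only these originally-large entries suffices, because their preservation above $\alpha/10$ automatically preserves both defining conditions of the $(\alpha/10,\beta)$-pseudorandom property: the witnesses for the column and row conditions at $t=0$ continue to witness them. Assume for contradiction $T^\ast<\infty$; by continuity of the trajectory some such entry then satisfies $(A_{ij}^{(T^\ast)})^2 = \alpha/10$.

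On $[0,T^\ast]$ the $(\alpha/10,\beta)$-pseudorandom property holds, so Proposition~\ref{p:combinatorial} applies with parameter $\alpha/10$ and yields $\Delta^{(t)} \le \Delta^{(0)}\exp(-\kappa \alpha n t/10)$. Lemma~\ref{l:A-formula} then rewrites the target ratio as
\[
\frac{(A_{ij}^{(T^\ast)})^2}{(A_{ij}^{(0)})^2} = \exp\Bigl(2\int_0^{T^\ast}(s-mr_i)\,dt + 2\int_0^{T^\ast}(s-nc_j)\,dt\Bigr),
\]
so the task reduces to bounding both integrals uniformly in $i,j$. The row integral is straightforward: $|s-mr_i| \le \sqrt{m\Delta^{(t)}}$ combined with the exponential decay of $\Delta$ gives
\[
\Bigl|\int_0^{T^\ast}(s-mr_i)\,dt\Bigr| \le \sqrt{m\Delta^{(0)}}\int_0^\infty e^{-\kappa \alpha nt/20}\,dt = \frac{20\sqrt{m\Delta^{(0)}}}{\kappa \alpha n} \le \tfrac{1}{4}
\]
by the hypothesis $\alpha \ge 80\sqrt{m\Delta^{(0)}}/(\kappa n)$.

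The main obstacle is the column integral, since the naive bound $|s-nc_j| \le \sqrt{n\Delta^{(t)}}$ loses a factor $\sqrt{n/m}$ and is useless when $n \gg m$. Here the plan is to exploit the hypothesis $c_j^{(0)} = m/n$: set $q_j(t) := s^{(t)} - nc_j^{(t)}$ so that $q_j(0)=0$, and differentiate using Lemma~\ref{l:matrix-s'} together with $\d c_j = 2\sum_i(2s-mr_i-nc_j) A_{ij}^2$ to obtain the linear ODE
\[
\d q_j + (2n c_j)\, q_j = -2\Delta - 2n\sum_i (s-mr_i) A_{ij}^2 .
\]
Running a secondary self-consistent bootstrap from the entry lower bounds of $(\alpha/10,\beta)$-pseudorandomness, together with the conservation of total mass, will force $c_j = \Theta(m/n)$ throughout $[0,T^\ast]$, so the damping rate $2nc_j$ is of order $m$; the forcing satisfies $|\text{forcing}| \le 2\Delta + 2n c_j\sqrt{m\Delta^{(t)}} = O(m\sqrt{m\Delta^{(t)}})$. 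A Gronwall-type convolution with the exponential kernel yields
\[
\int_0^{T^\ast}|q_j|\,dt \le \tfrac{1}{m}\int_0^{T^\ast}|\text{forcing}|\,dt,
\]
and substituting the exponential decay of $\Delta^{(t)}$ and the hypothesis on $\alpha$ bounds this by a constant of the same order as the row estimate.

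Combining the two estimates, the total exponent has absolute value strictly less than $\ln 10$, so $(A_{ij}^{(T^\ast)})^2 > (A_{ij}^{(0)})^2/10 \ge \alpha/10$, contradicting the equality $(A_{ij}^{(T^\ast)})^2 = \alpha/10$ and closing the bootstrap. The delicate piece is the Gronwall estimate for $q_j$: a crude analysis lets too much slack leak into the exponent, so one must combine the damping-based convolution with the exponential decay of $\Delta$ carefully, and simultaneously maintain the self-consistent bound $c_j = \Theta(m/n)$ throughout $[0,T^\ast]$. This is precisely where the specific constant in the hypothesis $\alpha \ge 80\sqrt{m\Delta^{(0)}}/(\kappa n)$ and the smallness assumption $\Delta^{(0)} \le 1/10$ are calibrated.
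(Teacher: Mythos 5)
Your bootstrap scaffold (define $T^\ast$ as the supremal survival time of the originally-large entries, use Lemma~\ref{l:A-formula} to write $A_{ij}^{(T^\ast)}=x_iA_{ij}^{(0)}y_j$, bound the row exponent by $\bigl|\int_0^{T^\ast}(s-mr_i)\,dt\bigr|\le \tfrac14$ via $\sqrt{m\Delta^{(t)}}$ and the exponential decay from Proposition~\ref{p:combinatorial}) is exactly the paper's; so is the correct diagnosis that the naive column bound $|s-nc_j|\le\sqrt{n\Delta}$ loses a $\sqrt{n/m}$ factor and that the hypothesis $c_j^{(0)}=s^{(0)}/n$ must be exploited. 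Where you diverge is in \emph{how} the column factor $y_j$ is controlled. You propose a Gronwall argument for $q_j=s-nc_j$ via the linear ODE $\d q_j + (2nc_j)q_j = -2\Delta - 2n\sum_i(s-mr_i)A_{ij}^2$, together with a secondary self-consistent bound $c_j=\Theta(m/n)$ to keep the damping rate of order $m$. The paper avoids Gronwall entirely by a slicker algebraic move: since $(A_{ij}^{(T)})^2 = x_i^2(A_{ij}^{(0)})^2y_j^2 \le 2(A_{ij}^{(0)})^2y_j^2$ (using the row bound on $x_i$), summing over $i$ gives $c_j^{(T)}\le 2y_j^2 c_j^{(0)} = 2s^{(0)}y_j^2/n$, i.e.\ $y_j^2\ge c_j^{(T)}n/(2s^{(0)})$. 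It then lower-bounds $c_j^{(T)}$ by a one-line monotonicity observation --- $\tfrac12\d c_j\ge (s-nc_j-\sqrt{m\Delta^{(0)}})c_j$, so $c_j$ cannot cross below $(s^{(T)}-\sqrt{m\Delta^{(0)}})/n$ --- combined with the integrated size drop $s^{(0)}-s^{(T)}\le 20\Delta^{(0)}/(\kappa\alpha n)$. That yields $y_j^2\ge 1/4$ directly and closes the bootstrap with no secondary self-consistency loop. Your ODE route can plausibly be made to work, and is perhaps more systematic, but it is heavier: it genuinely needs $c_j$ bounded above as well as below, and the two bootstraps (entry sizes and $c_j$ range) must be coupled.

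One soft spot in your sketch, worth flagging concretely: you claim $c_j=\Theta(m/n)$ will ``follow from the entry lower bounds of $(\alpha/10,\beta)$-pseudorandomness, together with conservation of total mass.'' Pseudorandomness only guarantees $c_j\ge\alpha/10$, and in the regime of interest $\alpha\sim\sqrt{m\Delta^{(0)}}/(\kappa n)\ll m/n$, so that alone is far too weak; and $\sum_j c_j = s\approx m$ controls only the average, not an individual $c_j$. The $\Theta(m/n)$ bound must instead be fed back from the Gronwall estimate itself (small $\int|q_j|\,dt$ forces $c_j$ close to $s/n$), which is exactly what makes the bootstrap self-consistent rather than a consequence of the listed hypotheses. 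As written, the mechanism you cite does not supply the needed input; the paper's monotonicity-of-$c_j$ step is the clean way to close this and is the ingredient your plan is still missing.
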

\begin{proof}
Consider the set of indices $S = \{(i, j) ~|~ (A^{(0)}_{ij})^2 \geq \alpha\}$.
Let $T$ be the supremum such that $(A_{ij}^{(t)})^2 \geq \alpha/10$ for all $(i, j) \in S$ and $0 \leq t \leq T$.
Our goal is to prove that $T$ is unbounded, which will imply that $A^{(t)}$ is $(\alpha/10, \beta)$-pesudorandom for all $t \geq 0$.

From Lemma~\ref{l:A-formula}, we know the explicit formula for
\[
A_{ij}^{(T)}=\exp\Big(\int_{0}^{T} (s^{(t)}-mr_{i}^{(t)}) dt\Big) \cdot A_{ij}^{(0)} \cdot
\exp\Big(\int_{0}^{T} (s^{(t)}-nc_{j}^{(t)}) dt\Big).
\]
Let $x_i := \exp\Big(\int_{0}^{T} (s^{(t)}-mr_{i}^{(t)}) dt\Big)$ and 
$y_j := \exp\Big(\int_{0}^{T} (s^{(t)}-nc_{j}^{(t)}) dt\Big)$.
Then $A_{ij}^{(T)} = x_i A_{ij}^{(0)} y_j$.
We would like to bound the exponents of $x_i$ and $y_j$.
Using $(s-mr_i)^2 \leq m\Delta$ that follows from the definition of $\Delta$ in Definition~\ref{d:matrix-Delta}, we have
\begin{eqnarray*}
\left|\int_{0}^{T} (s^{(t)}-mr_{i}^{(t)}) dt\right|
& \leq &
\int_{0}^T |s^{(t)} - mr_i^{(t)}| dt
\\
& \leq &
\int_{0}^T \sqrt{m \Delta^{(t)}} dt
\\
& \leq &
\sqrt{m \Delta^{(0)}} \int_0^T e^{-\frac{\kappa \alpha n t}{20}} dt
\\
& \leq &
\frac{20\sqrt{m \Delta^{(0)}}}{\kappa \alpha n},
\end{eqnarray*}
where the third inequality is from the corollary in Proposition~\ref{p:combinatorial}
and our assumption that $(A^{(t)})^{\circ 2}$ is $(\alpha/10,\beta)$-pesudorandom for $0 \leq t \leq T$.
Since we assume that $\alpha \geq 80\sqrt{m \Delta^{(0)}}/(\kappa n)$, we have
\[
-\frac{1}{4} \leq \int_{0}^{T} (s^{(t)}-mr_{i}^{(t)}) dt \leq \frac{1}{4}
\quad \implies \quad
e^{-1/4} \leq x_i \leq e^{1/4}.
\]
So we have that $x_i$ is a constant for $1 \leq i \leq n$.
To bound $(A_{ij}^{(T)})^2$ in terms of $(A_{ij}^{(0)})^2$, it remains to lower bound $y_j$ by a constant.
Note that
\[(A_{ij}^{(T)})^2 = x_i^2 (A_{ij}^{(0)})^2 y_j^2 \leq e^{1/2} (A_{ij}^{(0)})^2 y_j^2 \leq 2 (A_{ij}^{(0)})^2 y_j^2.\]
To lower bound $y_j$, we consider the column sum by summing the above inequality 
\[
c_j^{(T)} = \sum_{i=1}^m (A_{ij}^{(T)})^2 \leq 2y_j^2\sum_{i=1}^m (A_{ij}^{(0)})^2 = 2y_j^2 c_j^{(0)} = \frac{2s^{(0)}y_j^2}{n}
\quad \implies \quad
y_j^2 \geq \frac{c_j^{(T)} n}{2s^{(0)}},
\]
where the last equality follows from our assumption about the initial column sum.
Therefore, to lower bound $y_j^2$, it suffices to show that the column sum at time $T$ is not much smaller than the initial column sum which is $s^{(0)}/n$.
To lower bound $c_j^{(T)}$,
we keep track of the change of $c_j$ over time.
By the definition of the dynamical system in Definition~\ref{d:matrix-dynamical},
\begin{eqnarray*}
& &\d c_j = \d \sum_{i=1}^m A_{ij}^2 = 2 \sum_{i=1}^m A_{ij} \d A_{ij}
= 2 \sum_{i=1}^m A_{ij}^2 (s - mr_i + s - nc_j).
\\
& \implies & \frac{1}{2} \d c_j = \sum_{i=1}^m (s-mr_i)A_{ij}^2 + \sum_{i=1}^m (s-nc_j)A_{ij}^2 
= \sum_{i=1}^m (s-mr_i)A_{ij}^2 + (s-nc_j)c_j.
\end{eqnarray*}
Since $(s-mr_i)^2 \leq m\Delta \leq m\Delta^{(0)}$ as $\Delta$ is non-increasing over time by Lemma~\ref{l:matrix-Delta'},
we have $s-mr_i \geq -\sqrt{m \Delta^{(0)}}$ and thus
\[
\frac{1}{2} \d c_j 
\geq -\sum_{i=1}^m \sqrt{m\Delta^{(0)}} A_{ij}^2 + (s-nc_j)c_j
= (s-nc_j - \sqrt{m\Delta^{(0)}}) c_j.
\]
This implies that $c_j$ is non-decreasing whenever $s-nc_j - \sqrt{m\Delta^{(0)}} \geq 0 \iff c_j \leq (s - \sqrt{m \Delta^{(0)}})/n$.
Note that $c_j^{(0)} \geq (s^{(0)} - \sqrt{m\Delta^{(0)}})/n$ by the definition of $\Delta^{(0)}$.
Since $s^{(T)}$ is non-increasing, the column sum can never go below this value and in particular 
\[c_j^{(T)} \geq \frac{s^{(T)}-\sqrt{m\Delta^{(0)}}}{n}.
\]
So, to lower bound the column sum, it suffices to lower bound the size of the matrix at time $T$.
We can do this directly by using the result in the dynamical system, very similar to the proof in Proposition~\ref{p:cap-from-Delta'}.
Note that
\[s^{(0)}-s^{(T)} 
= -\int_{0}^T \d s~dt
= 2\int_0^T \Delta~dt 
\leq 2 \Delta^{(0)} \int_0^T e^{-\frac{\kappa \alpha n t}{10}} dt 
\leq \frac{20\Delta^{(0)}}{\kappa \alpha n}
\leq \frac{\sqrt{\Delta^{(0)}}}{4\sqrt{m}},
\]
where the second equality is by Lemma~\ref{l:matrix-s'},
the first inequality is by our assumption that $A$ is $(\alpha/10,\beta)$-pesudorandom and the corollary in Proposition~\ref{p:combinatorial},
and the last inequality is by our assumption about $\alpha$.
Therefore, 
\[
c_j^{(T)}  
\geq \frac{s^{(0)} - \frac14 \sqrt{\Delta^{(0)}/m} - \sqrt{m \Delta^{(0)}}}{n}
\quad \implies \quad
y_j^2 \geq \frac{c_j^{(T)} n}{2 s^{(0)}} \geq
\frac{s^{(0)} - \frac14 \sqrt{\Delta^{(0)}/m} - \sqrt{m \Delta^{(0)}}}{2s^{(0)}}.
\]
Using our assumptions that $s^{(0)} = m$ and $\Delta^{(0)} \leq 1/10$,
we conclude that
\[
y_j^2 \geq \frac{1}{4} \quad \implies \quad
(A_{ij}^{(T)})^2 = x_i^2 (A_{ij}^{(0)})^2 y_j^2 
\geq \frac{e^{-1/2} (A_{ij}^{(0)})^2}{4} \geq \frac{1}{8} (A_{ij}^{(0)})^2
\geq \frac{\alpha}{8}.
\]
Since $A_{ij}^{(t)}$ is a continuous function of $t$ during the execution of the dynamical system, this implies that $(A_{ij}^{(T+\xi)})^2 > \alpha/10$ for a small enough $\xi > 0$, which contradicts that $T$ is the supremum that $(A_{ij}^{(t)})^2 \geq \alpha/10$ for all $0 \leq t \leq T$.
Therefore, $T$ is unbounded and the pseudorandom property is maintained throughout the execution of the dynamical system.
\end{proof}

\subsubsection*{Bounding Optimal Matrix Scaling Solution}

Before we move on to the combinatorial lemma in the next subsection, we discuss how the proofs in Proposition~\ref{p:maintain} could be used to bound the running time of the algorithms in~\cite{Cohen}.
This part can be skipped as this is not related to the Paulsen problem.

\begin{remark}[bounding optimal matrix scaling solution] \label{r:condition}
Given a non-negative matrix $A \in \R^{n \times n}$, the matrix scaling problem in~\cite{Cohen,ALOW} is to find diagonal matrices $L$ and $R$ such that $LAR$ is doubly stochastic (i.e. row and column sums are all one).
In~\cite{Cohen}, the time complexities of the algorithms depend on the parameter $\kappa$ (do not confuse with our $\kappa$ which is an absolute constant), which is defined as the ratio of the maximum entry to the minimum entry in $L$ and $R$.

Our matrix scaling problem is slightly different but closely related, in which we are given a matrix $A \in \R^{n \times n}$ and we would like to find diagonal matrices $L$ and $R$ such that $(LAR)^{\circ 2}$ is doubly stochastic.
We can define $\kappa$ similarly as in~\cite{Cohen}, and note that the $\kappa$ in our problem is just the square of the $\kappa$ in~\cite{Cohen}.

In the proof of Proposition~\ref{p:maintain}, we consider the quantities 
$x_i(T) := \exp\Big(\int_{0}^{T} (s^{(t)}-mr_{i}^{(t)}) dt\Big)$ and 
$y_j(T) := \exp\Big(\int_{0}^{T} (s^{(t)}-nc_{j}^{(t)}) dt\Big)$ for $1 \leq i,j \leq n$.
Let $X$ be the diagonal matrix with $X_{ii}=x_i(\infty)$ for $1 \leq i \leq n$, and $Y$ be the diagonal matrix with $Y_{jj}=y_j(\infty)$ for $1 \leq j \leq n$.
Then Lemma~\ref{l:A-formula} shows that $XAY$ is doubly balanced as $A^{(\infty)}$ is doubly balanced.
Note that the scaling solution for matrix scaling is unique under very mild assumption (i.e. the underlying graph is connected), and so we only need to bound the ratio of the maximum entry to the minimum entry of $X$ and similarly for $Y$.
Assuming the conditions in Proposition~\ref{p:maintain}, the proof of Proposition~\ref{p:maintain} shows that $e^{-1/4} \leq x_i(T) \leq e^{1/4}$ for $1 \leq i \leq n$ for all $T$, and the same argument can be used to show that $e^{-1/4} \leq y_j(T) \leq e^{1/4}$ for all $T$ for $1 \leq j \leq n$ when the input is a square matrix.
This implies that the $\kappa$ in~\cite{Cohen} is bounded by a constant when the input matrix $A$ satisfies the assumptions of Proposition~\ref{p:maintain}, and it follows that the algorithms in~\cite{Cohen} have near linear time for these instances.
The only known previous result about bounding $\kappa$ is in~\cite{KK} when $A$ is strictly positive, which is incomparable to our assumptions.

To extend the results in this section to bound $\kappa$, what we need to do is to find a property/condition of the input matrix under which we can lower bound $-\d \Delta^{(t)}$ as stated in Proposition~\ref{p:combinatorial}, and to prove that the property/condition will be maintained as in Proposition~\ref{p:maintain}.
\end{remark}

\subsection{Convergence Rate Lower Bound from Pseudorandom Property} \label{ss:combinatorial}

In this subsection, we prove Proposition~\ref{p:combinatorial} that establishes the convergence rate lower bound using the pseudorandom property.
We first begin with some notations for the proof,
and then we will prove a stronger lower bound using a stronger pseudorandom property, and finally we will prove Proposition~\ref{p:combinatorial}.

Recall from Lemma~\ref{l:matrix-Delta'} that
\[
\d \Delta = -4\sum_{i=1}^m \sum_{j=1}^n (s-mr_{i}+s-nc_{j})^{2}\cdot A_{ij}^{2},
\]
where from the shorthand in Definition~\ref{d:matrix-shorthand} that 
\[
\Delta = \frac{1}{m} \sum_{i=1}^m (s-mr_i)^2 + \frac{1}{n} \sum_{j=1}^n (s-nc_j)^2, \quad
s=\sum_{i=1}^m \sum_{j=1}^n A_{ij}^2, \quad r_i = \sum_{j=1}^n A_{ij}^2, \quad c_j = \sum_{i=1}^m A_{ij}^2.
\]
We will divide the rows and columns into buckets in the proof and the following are the notations.

\begin{definition}[buckets] \label{d:buckets}
Let $\AA$ be a non-negative $m\times n$ matrix.
We say a column $j$ is positive if $s-nc_j$ is positive, and a column $j$ is negative if $s-nc_j$ is negative.
We denote the set of positive columns by $C^+$, the set of negative columns by $C^-$, and the set of all columns by $C$.
Similarly, we say a row $i$ is positive if $s-mr_i$ is positive, and is negative if $s-mr_i$ is negative.
We denote the set positive rows by $R^+$, the set of negative rows by $R^-$ and the set of all rows by $R$.
We assume without loss of generality that $\sum_{j \in C^+} (s-nc_j)^2 \geq \sum_{j \in C^-} (s-nc_j)^2$.

We divide the positive columns into buckets, so that for each column $j$ in bucket $C_l$ for $l \in \mathbb{Z}$, the column $j$ satisfies $2^l \leq s-nc_j \leq 2^{l+1}$.
Similarly, we divide the negative rows into buckets, so that for each row $i$ in bucket $R_l$ for $l \in \mathbb{Z}$, the row $i$ satisfies $-2^l \geq s-mr_i \geq -2^{l+1}$.
Note that a bucket could be an empty set.
We use the notations
\[
C_{l \pm 1} := C_{l-1} \cup C_{l} \cup C_{l+1} \quad {\rm and} \quad
C_{l \pm 2} := C_{l-2} \cup C_{l-1} \cup C_{l} \cup C_{l+1} \cup C_{l+2}
\]
as the unions of the adjacent buckets of bucket $l$, and also
\[
\ol{C_{l \pm 1}} = C \setminus C_{l \pm 1} \quad {\rm and} \quad 
\ol{C_{l \pm 2}} = C \setminus C_{l \pm 2}
\]
as the complements of the unions.
Likewise, we use the analogous notations for the buckets of rows such as $R_{l \pm 1}, \ol{R_{l \pm 1}}, R_{l \pm 2}, \ol{R_{l \pm 2}}$.
The main reason for these definitions is the following:
For a column $j$ in bucket $C_l$ and a row $i$ in bucket $\ol{R_{l\pm 1}}$, it follows from the definitions that
\begin{equation} \label{e:difference}
| s - mr_i + s - nc_j | \geq \frac{1}{2} | s - nc_j |
\quad \implies \quad
(s - mr_i + s - nc_j)^2 \geq \frac{1}{4} (s - nc_j)^2.
\end{equation}
Likewise, we have the same inequality for a column $j$ in $C_{l \pm 1}$ and a row $i$ in $\ol{R_{l \pm 2}}$.
This will help us to bound the summands in $\d \Delta$.
\end{definition}

Recall from Definition~\ref{d:pseudorandom} that $\AA$ is $(\alpha,\beta)$-pseudorandom if every column of $\AA$ has at least one entry with value at least $\alpha$ and every row of $\AA$ has at least $(1-\beta)n$ entries with value at least $\alpha$.

We will first prove the following stronger conclusion assuming a stronger pseudorandom property, where we also require each column to have most entries with value at least $\alpha$. 
The statement will not be used in other places, 
but the proof is useful for the proof of Proposition~\ref{p:combinatorial},
and we think that the statement may be useful to improve our results.

\begin{proposition} \label{p:pseudorandom-strong}
Let $\AA$ be an $m \times n$ matrix.
Suppose $\AA$ satisfies the following properties:
\begin{enumerate}
\item Every row has at most $n/8000$ entries smaller than $\alpha$. 
\item Every column has at most $m/8000$ entries smaller than $\alpha$. 
\end{enumerate}
Then 
\[
- \d \Delta \geq \frac{\alpha m n \Delta}{32000}.
\]
\end{proposition}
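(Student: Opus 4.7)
The plan is to reduce the weighted sum $-\d\Delta$ directly to an unweighted sum via a cancellation identity, and then use the two pseudorandom hypotheses to restrict to large entries. Introduce the shorthands $u_i := s - mr_i$ and $v_j := s - nc_j$, so that $\Delta = \frac{1}{m}\sum_i u_i^2 + \frac{1}{n}\sum_j v_j^2$ and, by Lemma~\ref{l:matrix-Delta'},
\[
-\d\Delta \;=\; 4\sum_{i=1}^m \sum_{j=1}^n (u_i + v_j)^2 A_{ij}^2 .
\]
The key observation is that the \emph{unweighted} sum collapses: since $\sum_i u_i = ms - m\sum_i r_i = 0$ and similarly $\sum_j v_j = 0$, the cross term in the expansion vanishes and
\[
\sum_{i,j}(u_i+v_j)^2 \;=\; n\sum_i u_i^2 \,+\, m\sum_j v_j^2 \;=\; mn\,\Delta.
\]
So the whole mass $mn\Delta$ is already available; the remaining task is to show that most of it is supported on pairs $(i,j)$ where $A_{ij}^2$ is not too small.

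Next I would restrict attention to the set $S := \{(i,j) : A_{ij}^2 \ge \alpha\}$ and bound the contribution of its complement. Using $(u_i+v_j)^2 \le 2u_i^2 + 2v_j^2$, the row hypothesis (each row has at most $n/8000$ entries below $\alpha$) gives
\[
\sum_{(i,j)\notin S} u_i^2 \;\le\; \frac{n}{8000}\sum_i u_i^2 \;=\; \frac{mn}{8000}\cdot \tfrac1m\sum_i u_i^2 ,
\]
and symmetrically the column hypothesis yields $\sum_{(i,j)\notin S} v_j^2 \le \frac{mn}{8000}\cdot\tfrac1n\sum_j v_j^2$. Adding and using the two-term definition of $\Delta$,
\[
\sum_{(i,j)\notin S}(u_i+v_j)^2 \;\le\; \frac{mn\,\Delta}{4000},
\]
and therefore $\sum_{(i,j)\in S}(u_i+v_j)^2 \ge (1 - 1/4000)\,mn\,\Delta \ge mn\Delta/2$.

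Finally, on $S$ every weight satisfies $A_{ij}^2 \ge \alpha$, so
\[
-\d\Delta \;\ge\; 4\alpha\sum_{(i,j)\in S}(u_i+v_j)^2 \;\ge\; 2\alpha\,mn\,\Delta,
\]
which is far stronger than the claimed $\alpha mn\Delta/32000$ and leaves plenty of slack in the constants. I do not expect a genuine obstacle here: the only non-routine step is spotting the cancellation $\sum_i u_i = \sum_j v_j = 0$ that promotes $\sum(u_i+v_j)^2$ from an inequality to the exact value $mn\Delta$; after that the argument is just AM-GM plus a row-sum/column-sum bookkeeping against the two pseudorandom hypotheses, and the constant $1/8000$ in the assumptions is used very loosely.
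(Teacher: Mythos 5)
Your proof is correct, and it takes a genuinely different and cleaner route than the paper. The paper proves this via Lemma~\ref{l:subset-strong}, which requires a dyadic bucketing of columns by the value of $|s-nc_j|$ and a three-way case analysis depending on the size of the sets $\overline{R_{l\pm 1}}$; that machinery is needed because the same lemma is later reused in Case~1 and Case~2(a) of the proof of Proposition~\ref{p:pseudorandom-weak}, where only the weaker pseudorandom hypotheses of Definition~\ref{d:pseudorandom} are available. Your argument bypasses all of it by noticing the exact cancellation
\[
\sum_{i,j}(u_i+v_j)^2 \;=\; n\sum_i u_i^2 + m\sum_j v_j^2 \;=\; mn\,\Delta,
\]
valid because $\sum_i u_i = \sum_j v_j = 0$, and then peeling off the small-entry contribution with the two hypotheses via $(u_i+v_j)^2\le 2u_i^2+2v_j^2$. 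This is a legitimate simplification for the statement at hand and, as you note, gives the much stronger constant $-\frac{d}{dt}\Delta \ge 2\alpha mn\Delta$, which in particular implies the claimed $\alpha mn\Delta/32000$. The reason this shortcut is available here and not in Proposition~\ref{p:pseudorandom-weak} is precisely the \emph{column} hypothesis: once only one large entry per column is guaranteed, you cannot bound $\sum_{(i,j)\notin S} v_j^2$ by a small multiple of $\sum_j v_j^2$, and some version of the paper's case analysis becomes necessary.
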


Finally we will prove the following precise result that will imply Proposition~\ref{p:combinatorial}.

\begin{proposition} \label{p:pseudorandom-weak}
Let $\AA$ be an $m \times n$ matrix which is $(\alpha,\beta)$-pesudorandom for $\beta \leq 10^{-9}$.
Then
\[
- \d \Delta \geq \frac{\alpha n \Delta}{8192000}.
\]
\end{proposition}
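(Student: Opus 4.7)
The plan is to prove Proposition~\ref{p:pseudorandom-weak} by a bucket-level analysis of the identity
\[-\d \Delta = 4 \sum_{i,j}(s - mr_i + s - nc_j)^2 A_{ij}^2,\]
extending the argument underlying Proposition~\ref{p:pseudorandom-strong}. The strategy is to retain in the sum only those pairs $(i, j)$ for which $A_{ij}^2 \geq \alpha$ (allowing us to factor out $\alpha$) and one of the bucket separation conditions of~(\ref{e:difference}) holds, i.e.\ $i \in \ol{R_{l \pm 1}}$ when the column $j$ lies in bucket $C_l$, or symmetrically $j \in \ol{C_{l \pm 1}}$ when the row $i$ lies in $R_l$. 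Either condition certifies that $(s - mr_i + s - nc_j)^2$ is at least a constant fraction of $(s - nc_j)^2$ or $(s - mr_i)^2$ respectively.

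Split $\Delta$ into $\Delta_{col} = \frac{1}{n}\sum_j (s - nc_j)^2$ and $\Delta_{row} = \frac{1}{m}\sum_i (s - mr_i)^2$, and case on which dominates. In the ``rows dominate'' case ($\Delta_{row} \geq \Delta/2$), a further sign-WLOG isolates a group of rows of total weight $\Omega(m\Delta)$, and since each row has at least $(1-\beta)n$ big entries, for each such row $i \in R_l$ with $|C_{l \pm 1}| \leq n/2$ we can find on the order of $n$ columns in $\ol{C_{l \pm 1}}$ carrying big entries. Each pair contributes $\geq \tfrac{1}{4}\alpha (s - mr_i)^2$ to $-\d \Delta/4$, for a total of order $\alpha n m \Delta$, which is much stronger than needed and leaves ample slack in the constants.

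The delicate case is ``columns dominate'' ($\Delta_{col} \geq \Delta/2$), where after the stated WLOG $\sum_{j \in C^+}(s - nc_j)^2 \geq n\Delta/4$. For each positive column $j \in C_l$ we now want a single row $i \in \ol{R_{l \pm 1}}$ with $A_{ij}^2 \geq \alpha$; summing $\tfrac{1}{4}\alpha (s - nc_j)^2$ over $j \in C^+$ then yields exactly the target $\Omega(\alpha n \Delta)$. Since column-pseudorandomness only guarantees one big entry per column with no control over its row, the key observation is that the row condition implies $\sum_j |\{i : A_{ij}^2 < \alpha\}| = \sum_i |\{j : A_{ij}^2 < \alpha\}| \leq \beta m n$, so by Markov all but $O(\sqrt{\beta}\, n)$ columns are ``dense'' in the sense of having at most $\sqrt{\beta}\, m$ small entries. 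A dense column in $C_l$ then automatically has a big entry outside $R_{l \pm 1}$ whenever $|R_{l \pm 1}| < m/2$.

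The main obstacle is controlling the remaining pathological configurations while tracking the constant $1/8192000$. There are two such configurations: the $O(\sqrt \beta\, n)$ sparse columns, and column buckets $C_l$ for which $|R_{l \pm 1}| \geq m/2$. The sparse columns are handled by showing that their total contribution to $\sum_{j \in C^+}(s - nc_j)^2$ is a small fraction of the whole, absorbed using $\beta \leq 10^{-9}$. For a bucket with $|R_{l \pm 1}| \geq m/2$ the key observation is that the rows in $R_{l \pm 1}$ alone force $\Delta_{row}$ to be comparable to the column contribution $\Delta_l := \sum_{j \in C_l}(s - nc_j)^2$ of that bucket, so the lost column contribution can be charged to a row-side contribution obtained as in the second paragraph, restricted to rows in $R_{l \pm 1}$ using safe columns in $\ol{C_{l \pm 1}}$. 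Closing the numerical bookkeeping requires unwinding the WLOG on $C^+$ versus $C^-$ (factor $2$), the rows-vs-columns case split (factor $2$), the safety factor of $\tfrac{1}{4}$ from~(\ref{e:difference}), and the Markov/pigeonhole constants entering through $\sqrt \beta$ and the $m/2$ threshold; this accounting, rather than any single deep step, is where the work lies.
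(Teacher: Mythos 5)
Your high-level skeleton matches the paper's — decompose $-\d \Delta = 4\sum_{i,j}(s-mr_i+s-nc_j)^2 A_{ij}^2$, case on whether rows or columns dominate $\Delta$, use bucket separation~(\ref{e:difference}) to factor out $(s-nc_j)^2$ or $(s-mr_i)^2$, and rely on Markov-type counting to find big entries in safe locations. The dense/sparse column split you derive from $\sum_j |\{i : A_{ij}^2 < \alpha\}| \le \beta mn$ is essentially the paper's strong/weak column split with a different threshold. So far so good.

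However, the hard part of this proposition is precisely the two ``pathological configurations'' you flag at the end, and your treatment of both contains genuine gaps.

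First, the claim that the $O(\sqrt\beta\, n)$ sparse columns ``contribute a small fraction of $\sum_{j\in C^+}(s-nc_j)^2$'' is false in general. Nothing bounds $(s-nc_j)^2$ for a sparse column $j$; a single sparse column sitting alone in an extreme bucket can account for almost the entire column-side of $\Delta$. The paper does \emph{not} discard weak columns globally — it throws them away only inside \emph{strong buckets}, where the factor-$4$ comparability of $(s-nc_j)^2$ within a bucket makes the loss a constant fraction, and then runs a completely separate argument (its Case 2: weak buckets, split by ``close to a big bucket'' or not) when weak columns carry the weight. Your proposal has no mechanism to recover contribution from buckets dominated by sparse columns.

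Second, the charging step for buckets with $|R_{l\pm1}| \ge m/2$ assumes, without justification, that the rows in $R_{l\pm1}$ can find big entries in safe columns $\ol{C_{l\pm2}}$. But $|\ol{C_{l\pm2}}|$ could be tiny simultaneously with $|R_{l\pm1}|$ large, in which case those rows have nowhere to place their $(1-\beta)n$ big entries with the required bucket separation and the charging produces nothing. The paper closes this off in two distinct ways that you would need to reproduce: when $C_l$ is close to a big bucket it charges to the big bucket instead (Case~2a), and when $|R_{l\pm1}|$ is overwhelmingly close to $m$ it derives a \emph{contradiction} from $\sum_i (s-mr_i) = 0$ plus Cauchy--Schwarz, showing $\sum_{i\in\ol{R_{l\pm1}}}(s-mr_i)^2 > m\Delta$, impossible (Lemma~\ref{l:subset-strong}, Case~2b). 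This last contradiction is the one genuinely new idea in the proof and your sketch does not contain it — and without it the argument does not close, because nothing else rules out the scenario where a small number of rows carry a huge surplus. The same contradiction is also what rescues the rows-dominate case when $|C_{l\pm1}| > n/2$, which your sketch likewise leaves open. So while your framing is close to the paper's in spirit, the key structural lemma — that the bucket containing a huge fraction of $\Delta$ cannot simultaneously swallow almost all of the opposite index set — is missing, and as written the sparse-column absorption step is simply incorrect.
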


The following definition will be useful in the proof of Proposition~\ref{p:pseudorandom-weak}.

\begin{definition}[strong/weak columns] \label{d:strong-weak}
Let $\AA$ be an $m \times n$ matrix which is $(\alpha,\beta)$-pseudorandom.
We say a column of $\AA$ is strong if it has at most $m / 128000$ entries with value less than $\alpha$; otherwise we say a column is weak.
Since $\AA$ is $(\alpha,\beta)$-pseudorandom, each row of $\AA$ has at most $\beta n$ entries with value less than $\alpha$, and so 
there are at most $\beta m n / (m /128000) = 128000 \beta n$ weak columns in $\AA$.
In Proposition~\ref{p:combinatorial}, we assume $\beta \leq 10^{-9}$, and so there will be at most $n/7500$ columns of $\AA$ which are weak.
\end{definition}

To lower bound $\d \Delta$, we can think of $\d \Delta$ as the inner product of two matrices $\inner{B}{\AA}$ where $B_{ij} = (s - mr_i + s - nc_j)^2$.
Our strategy is to find a large area of $B$ with large value, and then use the pseudorandom property of $\AA$ to conclude that the inner product is large.

\subsubsection*{Strong Lower Bound from Strong Pseudorandom Property}

We prove Proposition~\ref{p:pseudorandom-strong} in this subsubsection.
The following lemma will be useful in both proofs.
We will use the following lemma with $\gamma$ being an absolute constant at least $1/64$.

\begin{lemma} \label{l:subset-strong}
Suppose $C' \subseteq C^+$ is a set of columns with $\sum_{j \in C'} (s-nc_j)^2 \geq \gamma n \Delta$ for some $1 \geq \gamma > 0$,
and every column in $C'$ has at most $\gamma m / 2000$ entries of value smaller than $\alpha$.
Then
\[
-\d \Delta \geq \frac{\alpha \gamma^2 m n \Delta}{2000}.
\]
\end{lemma}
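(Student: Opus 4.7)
The plan is to start from Lemma~\ref{l:matrix-Delta'} and bound $-\d\Delta$ from below by the contribution of columns $j\in C'$. For each such $j$ lying in bucket $C_{l_j}$, I set $S_j:=\{i:A_{ij}^2\geq\alpha\}$ (so $|S_j^c|\leq\gamma m/2000$ by hypothesis) and $T_j:=S_j\cap\ol{R_{l_j\pm1}}$. Inequality~(\ref{e:difference}) gives $(s-mr_i+s-nc_j)^2\geq\tfrac14(s-nc_j)^2$ on $T_j$, so
\[
-\d\Delta \;\geq\; 4\sum_{j\in C'}\sum_{i\in T_j}(s-mr_i+s-nc_j)^2A_{ij}^2 \;\geq\; \alpha\sum_{j\in C'}(s-nc_j)^2\,|T_j|,
\]
with $|T_j|\geq m-|R_{l_j\pm1}|-\gamma m/2000$ by pigeonhole over the two ways a row can fail to be in $T_j$.

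To turn the right-hand side into the target $\alpha\gamma^2 m n\Delta/2000$, I would partition $C'$ into \emph{good} columns satisfying $|R_{l_j\pm1}|\leq m/2$ and \emph{bad} columns with $|R_{l_j\pm1}|>m/2$. For good columns, $|T_j|\gtrsim m$, so their contribution is at least $\Omega(\alpha m\sum_{\text{good}}(s-nc_j)^2)$. If the good columns carry at least half of $\sum_{j\in C'}(s-nc_j)^2\geq\gamma n\Delta$, we immediately obtain $-\d\Delta\gtrsim\alpha\gamma mn\Delta$, which is much stronger than required. For bad columns, the budget $\sum_{i\in R^-}(s-mr_i)^2\leq m\Delta$ combined with $|R_{l_j\pm1}|\cdot 4^{l_j-1}\leq\sum_{i\in R^-}(s-mr_i)^2$ forces $(s-nc_j)^2\leq 32\Delta$, so bad columns are confined to a few adjacent low buckets and must number at least $\Omega(\gamma n)$ when they carry most of the mass.

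The main obstacle I foresee is precisely this bad-dominated case, where per-column $|T_j|$ is too small to apply the bucket bound directly. To handle it, I plan to reorganize the sum by rows: for every bad column $j$ the set $\ol{R_{l_j\pm1}}$ still contains all of $R^+$ (where $(s-mr_i+s-nc_j)^2\geq(s-nc_j)^2$) together with the far buckets $R_{\leq l_j-2}$ and $R_{\geq l_j+2}$ (where the same lower bound holds up to a constant). Since bad columns sit in a narrow band of buckets while the $R^-$ bucket budget $\sum_i(s-mr_i)^2\leq m\Delta$ limits how many rows can ``interfere'' with all of them simultaneously, a double count pairing each $i\in R^-$ with the bad $j\in C'\cap C_{l_i\pm1}$ (using $(s-nc_j)^2\leq 16(s-mr_i)^2$ on such pairs) controls the total loss. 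Combining this with the density guarantee from the pseudorandom hypothesis on $S_j$ should recover a factor of $\Omega(\gamma m)$ of usable rows per contributing column, producing the claimed $\alpha\gamma^2 m n\Delta/2000$ bound; the two factors of $\gamma$ trace respectively to $\sum_{j\in C'}(s-nc_j)^2\geq\gamma n\Delta$ and to the fraction of rows that survive both the pseudorandom restriction and the bucket filtering.
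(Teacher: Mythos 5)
Your "good column" case ($|R_{l_j\pm1}|\le m/2$) is correct and essentially coincides with the paper's Case~1 (and the favourable half of its Case~2a), but your plan for the "bad column" case has a genuine gap. When bad columns dominate, you observe that $R^+\subseteq\ol{R_{l_j\pm1}}$ and that bad columns sit in a narrow band, and you propose a double count on pairs $(i,j)$ with $l_j\in l_i\pm1$ using $(s-nc_j)^2\le 16(s-mr_i)^2$. Unwinding that double count only re-derives that bad columns are numerous (at least $\Omega(\gamma n)$), which you had already obtained from $(s-nc_j)^2\le 32\Delta$; it gives no lower bound on $|T_j|$. In fact, a priori $|\ol{R_{l_j\pm1}}|$ can be smaller than $\gamma m/2000$, making $T_j=S_j\cap\ol{R_{l_j\pm1}}$ empty, so nothing in the steps you wrote rules out the bad-dominated case producing essentially zero contribution.

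The missing ingredient is the balance identity $\sum_{i=1}^m(s-mr_i)=0$. This is what forces $|R^+|$, and hence $|T_j|$, to be large: since the band $R_{l_j\pm1}$ consists of negative rows with $|s-mr_i|\ge 2^{l_j-1}$, one has $\sum_{i\in R^+}(s-mr_i)=\sum_{i\in R^-}|s-mr_i|\ge|R_{l_j\pm1}|\,2^{l_j-1}$, and Cauchy--Schwarz against the budget $\sum_i(s-mr_i)^2\le m\Delta$ then yields a lower bound on $|R^+|$ in terms of $(s-nc_j)^2/\Delta$. The paper packages this as a contradiction (its subcase~2b): if the near band $C'_{l^*\pm2}$ carried half the mass while $|\ol{R_{l^*\pm1}}|$ were tiny, then $\sum_{i\in\ol{R_{l^*\pm1}}}(s-mr_i)^2$ would exceed $m\Delta$. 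Your sketch never invokes $\sum_i(s-mr_i)=0$, so the argument as written cannot close. Adding this ingredient (either via the paper's contradiction, or by directly lower-bounding $|R^+|$ as above and then handling the residual "very bad" columns of tiny $(s-nc_j)^2$ by a mass count) would complete your approach; with that addition the two proofs are close in spirit, your $m/2$ threshold being coarser than the paper's $\gamma m/1000$ but workable.
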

\begin{proof}
In this proof, we will restrict our attention only to the columns in $C'$.
We use $C'_l$ to denote $C_l \cap C'$.
As stated in~(\ref{e:difference}) of Definition~\ref{d:buckets},
for columns in $C_l$ and rows in $\ol{R_{l \pm 1}}$, we have
\[
\sum_{j \in C'} \sum_{i \in R} (s-mr_i + s-nc_j)^2 \cdot A_{ij}^2
\geq \sum_{l \in \mathbb{Z}} \sum_{j \in C'_l} \sum_{i \in \ol{R_{l \pm 1}}} (s-mr_i + s-nc_j)^2 \cdot A_{ij}^2
\geq \sum_{l \in \mathbb{Z}} \sum_{j \in C'_l} \sum_{i \in \ol{R_{l \pm 1}}} \frac{1}{4} (s-nc_j)^2 \cdot A_{ij}^2.
\]

{\bf Case 1:}
Suppose $|\ol{R_{l \pm 1}}| \geq \gamma m / 1000$ for all $l \in \mathbb{Z}$.
As each column $j \in C'_l$ has at most $\gamma m / 2000$ entries with value smaller than $\alpha$, there are at least $\gamma m / 2000$ entries which belong to $\ol{R_{l \pm 1}}$ with value at least $\alpha$.
Therefore,
\begin{align*}
- \d \Delta 
& \geq 4\sum_{i \in R} \sum_{j \in C'} (s - mr_i + s - nc_j)^2 \cdot A_{ij}^2
\geq \sum_{l \in \mathbb{Z}} \sum_{j \in C'_l} \sum_{i \in \ol{R_{l \pm 1}}} (s-nc_j)^2 \cdot A_{ij}^2\\
& \geq \sum_{l \in \mathbb{Z}} \sum_{j \in C'_l} \frac{\gamma m}{2000} \cdot (s-nc_j)^2 \cdot \alpha 
= \frac{\alpha \gamma m}{2000} \sum_{j \in C'} (s-nc_j)^2
\geq \frac{\alpha \gamma^2 m n \Delta}{2000},
\end{align*}
where the last inequality follows from the assumption of the lemma,
and we are done in this case.

{\bf Case 2:}
Otherwise, there exists an $l^*$ such that $|\ol{R_{l^*\pm 1}}| \leq \gamma m / 1000 \leq m / 1000$, which implies that $|R_{l^*-1} \cup R_{l^*} \cup R_{l^*+1}| \geq 999m / 1000$.
We consider $C'_{l^* \pm 2}$ and divide into two subcases.

\begin{enumerate}[(a)]
\item
The first subcase is when $\sum_{j \in C'_{l^* \pm 2}} (s-nc_j)^2 \leq \frac12 \sum_{j \in C'} (s-nc_j)^2$.
In this subcase, we consider the contribution to $\d \Delta$ from the columns in $\ol{C'_{l^* \pm 2}} := C' \setminus C'_{l^* \pm 2}$.
Our assumption in this subcase implies that 
\[
\sum_{j \in \ol{C'_{l^* \pm 2}}} (s-nc_j)^2
\geq \frac{1}{2} \sum_{j \in C'} (s-nc_j)^2 \geq \frac12 \gamma n \Delta.
\]
Furthermore, for every bucket $C'_l \in \ol{C'_{l^* \pm 2}}$,
we now have $R_{l^*-1} \cup R_{l^*} \cup R_{l^*+1} \subseteq \ol{R_{l \pm 1}}$ and thus $|\ol{R_{l \pm 1}}| \geq 999m/100 \geq 3m/4$ for all such $l$.
We can apply a similar argument as in case~1 to conclude that
\begin{align*}
- \d \Delta 
& \geq 4\sum_{i \in R} \sum_{j \in \ol{C'_{l^* \pm 2}}} (s - mr_i + s - nc_j)^2 \cdot A_{ij}^2
\geq \sum_{l \notin l^* \pm 2}~ \sum_{j \in C'_l} ~\sum_{i \in \ol{R_{l \pm 1}}} (s-nc_j)^2 \cdot A_{ij}^2\\
& \geq \sum_{l \notin l^* \pm 2}~ \sum_{j \in C'_l} \frac{m}{2} \cdot (s-nc_j)^2 \cdot \alpha 
= \frac{\alpha m}{2} \sum_{j \in \ol{C'_{l^* \pm 2}}} (s-nc_j)^2
\geq \frac{\alpha \gamma m n \Delta}{4},
\end{align*}
where the second inequality is from~(\ref{e:difference}) in Definition~\ref{d:buckets}, the third inequality is because $|\ol{R_{l \pm 1}}| \geq 3m/4$ and there are at most $\gamma m / 1000 \leq m / 1000$ entries in each column with value smaller than $\alpha$, and the final inequality is from our assumption in this subcase.
Therefore, the lemma also holds in this subcase.

\item
The remaining subcase is when $\sum_{j \in C'_{l^* \pm 2}} (s-nc_j)^2 \geq \frac12 \sum_{j \in C'} (s-nc_j)^2 \geq \frac12 \gamma n \Delta$.
We will rule this subcase out by deriving a contradiction, which will complete the proof of the lemma.
We will derive a contradiction by showing that $\sum_{i \in \ol{R_{l^* \pm 1}}} (s-mr_i)^2$ is too large.

Note that every row $i \in R_{l^* \pm 1}$ satisfies $|s-mr_i| \geq \frac{1}{16} |s-nc|$ where $|s-nc|$ is the maximum $|s-nc_j|$ term in $C'_{l^* \pm 2}$,
as $|s-nc_j| \leq 2^{l^*+3}$ for $j \in C_{l^* \pm 2}$ and $|s-mr_i| \geq 2^{l^*-1}$ for $i \in R_{l^* \pm 1}$.
Since we have $\sum_{i=1}^m (s - mr_i) = ms - m \sum_{i=1}^m r_i = 0$,
we know that
\[
\Big| \sum_{i \in \ol{R_{l^* \pm 1}}} (s-mr_i) \Big|
= \Big| \sum_{i \in R_{l^* \pm 1}} (s-mr_i) \Big|
= \sum_{i \in R_{l^* \pm 1}} \big| s-mr_i \big|
\geq \frac{1}{16} \big|R_{l^* \pm 1}\big| (s-nc).
\]
As $\big|\ol{R_{l^* \pm 1}}\big| \leq m/1000$, we use Cauchy-Schwarz to lower bound
\[
\sum_{i \in \ol{R_{l^* \pm 1}}} (s-mr_i)^2 
\geq \frac{1}{\big|\ol{R_{l^* \pm 1}}\big|} \Big( \sum_{i \in \ol{R_{l^* \pm 1}}} (s-mr_i) \Big)^2
\geq \frac{\big|R_{l^* \pm 1}\big|^2 (s-nc)^2}{256\big|\ol{R_{l^* \pm 1}}\big|.}
\]
Note that $\big| C'_{l^* \pm 2} \big| (s-nc)^2 
\geq \sum_{j \in C'_{l^* \pm 2}} (s-nc_j)^2 \geq \frac{1}{2} \gamma n \Delta$, 
which implies that $(s-nc)^2 \geq \frac{1}{2} \gamma \Delta$.
Hence,
\[
\sum_{i \in \ol{R_{l^* \pm 1}}} (s-mr_i)^2
\geq \frac{\gamma \Delta \big|R_{l^* \pm 1}\big|^2}{512 \big|\ol{R_{l^* \pm 1}}\big|}
\geq \frac{\gamma \Delta (999m/1000)^2}{512 (\gamma m/1000)} > m\Delta,
\]
contradicting to the definition of $\Delta$.
Therefore, case (2b) cannot happen.
\end{enumerate}
\end{proof}

Proposition~\ref{p:pseudorandom-strong} follows rather easily from Lemma~\ref{l:subset-strong}.

\begin{proof}[{\bf Proof of Proposition~\ref{p:pseudorandom-strong}}]
Recall that $\Delta = \frac{1}{m} \sum_{i=1}^m (s-mr_i)^2 + \frac{1}{n} \sum_{j=1}^n (s-nc_j)^2$, so either we have $\sum_{i=1}^m (s-mr_i)^2 \geq m\Delta/2$ or $\sum_{j=1}^n (s-nc_j)^2 \geq n\Delta/2$.

If $\sum_{j=1}^n (s-nc_j)^2 \geq n\Delta/2$, then we can assume without loss of generality that $\sum_{j \in C^+} (s-nc_j)^2 \geq n\Delta/4$.
Then, we apply Lemma~\ref{l:subset-strong} with $C' = C^+$ and $\gamma=1/4$.
Note that the assumption in Lemma~\ref{l:subset-strong} that every column in $C'$ has at most $\gamma m/2000 = m/8000$ is satisfied by the assumption in Proposition~\ref{p:pseudorandom-strong}.
So we can conclude from Lemma~\ref{l:subset-strong} that
\[-\d \Delta \geq \frac{\alpha mn\Delta}{32000}.
\]

The other case is the same. 
We have not used any property of the rows in Lemma~\ref{l:subset-strong}.
When $\sum_{i=1}^m (s-mr_i)^2 \geq m\Delta/2$, we can simply exchange the roles of $m$ and $n$ (or consider the transpose of the matrix) and apply Lemma~\ref{l:subset-strong} as in the previous paragraph to get the same conclusion.
\end{proof}

\subsubsection*{Lower Bound from Pseudorandom Property}

We prove Proposition~\ref{p:pseudorandom-weak} in this subsubsection, and then use it to prove Proposition~\ref{p:combinatorial}.
The case analysis is more involved in the proof of Proposition~\ref{p:pseudorandom-weak}, in some cases we can just apply Lemma~\ref{l:subset-strong} to get the (stronger) conclusion, but in the final case we can only use the weaker property to get the (weaker) conclusion.

\begin{proof}[{\bf Proof of Proposition~\ref{p:pseudorandom-weak}}]
As in the proof of Proposition~\ref{p:pseudorandom-strong},
we start with either $\sum_{i=1}^m (s-mr_i)^2 \geq m\Delta/2$ or $\sum_{j=1}^n (s-nc_j)^2 \geq n\Delta/2$.
In the former case, we can apply Lemma~\ref{l:subset-strong} as in Proposition~\ref{p:pseudorandom-strong} because all rows have at most $\beta n \ll n/8000$ entries with value smaller than $\alpha$.
Henceforth, we consider the latter case and assume without loss of generality that 
\[\sum_{j=1}^n (s-nc_j)^2 \geq n\Delta/2 \quad {\rm and} \quad 
\sum_{j \in C^+} (s-nc_j)^2 \geq n\Delta /4.\]
As in the proof of Lemma~\ref{l:subset-strong}, we will restrict our attention only to the columns in $C^+$, and consider the buckets in $C^+$.

\begin{definition}[strong/weak column buckets]
We call a column bucket $C_l$ strong if at least half of the columns in it are strong; otherwise we call the bucket $C_l$ weak.

Let $C_S$ be the union of the columns of the strong buckets, and let $C_W$ be the union of the columns of the weak buckets.
\end{definition}

Since $\sum_{j \in C^+} (s-nc_j)^2 \geq n\Delta /4$, either
\[\sum_{j \in C_S} (s-nc_j)^2 \geq n\Delta / 8 \quad {\rm or} \quad
 \sum_{j \in C_W} (s-nc_j)^2 \geq n\Delta/8.\]
We consider the two cases separately.

{\bf Case 1:}
Suppose we are in the former case when $\sum_{j \in C_S} (s-nc_j)^2 \geq n\Delta/8$.
For each strong bucket $C_l$, we throw away the weak columns and call the remaining columns $C'_l$.
Since each $(s-nc_j)^2$ in the same bucket is within a factor $4$ of each other and we throw away at most half the columns, we still have 
$\sum_{j \in C'_l} (s-nc_j)^2 \geq \frac18 \sum_{j \in C_l} (s-nc_j)^2$.
Call the union of the remaining columns in the strong buckets $C'_S$.
Then $\sum_{j \in C'_S} (s-nc_j)^2 \geq \frac{1}{8} \sum_{l \in C_S} (s-nc_j)^2 \geq n\Delta/64$.
Since each column in $C'_S$ is strong, we can apply Lemma~\ref{l:subset-strong} with $C' := C'_S$ and $\gamma=1/64$ to get 
\[
-\d \Delta \geq \frac{\alpha mn\Delta}{8192000}.
\]
This is the case where we require each strong column to have at most $m/128000$ 
entries of value smaller than $\alpha$, so that the assumption in Lemma~\ref{l:subset-strong} that every column in $C'$ has at most $\gamma m/2000$ entries of value smaller than $\alpha$ is satisfied.
In other cases, we apply Lemma~\ref{l:subset-strong} with larger $\gamma$.

{\bf Case 2:}
Suppose we are in the latter case when $\sum_{j \in C_W} (s-nc_j)^2 \geq n\Delta/8$.
We distinguish the weak column buckets into two types.

\begin{definition}[close to a big bucket]
We say a column bucket is big if it has at least $n/10$ columns.
Note that a big column bucket must be a strong column bucket, as there are at most $n/7500$ weak columns as stated in Definition~\ref{d:strong-weak}.

We say a weak bucket $C_l$ is close to a big bucket if there is a big bucket in $C_{l \pm 2}$; otherwise it is not close to a big bucket.

We denote the union of the columns of the weak buckets that are close to a big bucket by $C_{W^*}$, and the union of the columns of the weak buckets that are not close to a big bucket by $C_{\ol{W^*}}$.
\end{definition}

Since $\sum_{j \in C_W} (s-nc_j)^2 \geq n\Delta/8$ in this case,
either 
\[
\sum_{j \in C_{W^*}} (s-nc_j)^2 \geq \frac{n\Delta}{16} \quad {\rm or} \quad
\sum_{j \in C_{\ol{W^*}}} (s-nc_j)^2 \geq \frac{n\Delta}{16}.
\]
We consider these two subcases separately.

\begin{enumerate}[(a)]
\item The first subcase is when $\sum_{j \in C_{W^*}} (s-nc_j)^2 \geq n\Delta/16$.
We will reduce this subcase to case 1 (with different parameters).
For each weak bucket $C_l$ that is close a big bucket, let $\tilde{C}$ be a big bucket in $C_{l \pm 2}$.
We claim that 
\[
\sum_{j \in \tilde{C}} (s-nc_j)^2 \geq \frac{375}{8} \sum_{j \in C_l} (s-nc_j)^2.
\] 
The reasons are as follows.
Firstly, each summand on the LHS is at least a factor of $1/8$ of each summand on the RHS, as the worst case is when $\tilde{C} = C_{l-2}$. 
Secondly, the number of terms on the LHS is at least $375$ times the number of terms on the RHS, as $C_l$ has at most $2n/7500$ columns since it is a weak bucket and there are at most $n/7500$ weak columns as stated in Definition~\ref{d:strong-weak}, while $\tilde{C}$ has at least $n/10$ columns.
So we have the claim.

Let $C_B$ be the union of the columns of the big buckets that are close to a weak bucket contained in $C_{W^*}$.
Note that each big bucket contained in $C_B$ can be close to at most four weak buckets.
Therefore,
\[
\sum_{j \in C_B} (s-nc_j)^2 
\geq \frac{1}{4} \frac{375}{8} \sum_{j \in C_{W^*}} (s-nc_j)^2
\geq \frac{375}{32} \frac{n\Delta}{16} = \frac{375}{512} n\Delta.
\]
Since each big bucket is a strong bucket, we can apply the same argument as in case~1.
We throw away the weak columns in $C_B$ and call the set of the remaining columns $C'_B$.
Then 
\[\sum_{j \in C'_B} (s-nc_j)^2 
\geq \frac18 \sum_{j \in C_B} (s-nc_j)^2 
\geq \frac{375}{4096} n \Delta
\quad \implies \quad
-\d \Delta \geq \frac{\alpha m n \Delta}{800000},
\]
where the implication follows from Lemma~\ref{l:subset-strong} with $C' := C'_B$ and $\gamma = 375/4096 \geq 1/20$.

\item The second subcase is when $\sum_{j \in C_{\ol{W^*}}} (s-nc_j)^2 \geq n\Delta/16$.
This is the remaining case that we can only use the weak property.
For each weak bucket $C_l$ contained in $C_{\ol{W^*}}$,
we consider the corresponding row buckets in $R_{l \pm 1}$.
There are two situations.

The first situation is when $R_{l \pm 1}=\emptyset$.
Since each column in $C_l$ has at least one entry with value at least $\alpha$,
we have
\begin{equation} \label{e:one-entry}
\sum_{j \in C_l} \sum_{i=1}^m (s-mr_i + s-nc_j)^2 \cdot A_{ij}^2
\geq \sum_{j \in C_l} \sum_{i=1}^m \frac{1}{4} (s-nc_j)^2 \cdot A_{ij}^2
\geq \sum_{j \in C_l} \frac{\alpha}{4} (s-nc_j)^2,
\end{equation}
where the first inequality is from~(\ref{e:difference}) in Definition~\ref{d:buckets}.

The second situation is when $R_{l \pm 1} \neq \emptyset$.
In this situation, we instead consider the contribution of the entries in $R_{l \pm 1}$ to $\d \Delta$.
Again using~(\ref{e:difference}) in Definition~\ref{d:buckets}, we have
\[
\sum_{i \in R_{l \pm 1}} \sum_{j=1}^n (s-mr_i+s-nc_j)^2 \cdot A_{ij}^2
\geq \sum_{i \in R_{l \pm 1}} \sum_{j \in \ol{C_{l \pm 2}}} (s-mr_i+s-nc_j)^2 \cdot A_{ij}^2
\geq \sum_{i \in R_{l \pm 1}} \frac{1}{4} \big| \ol{C_{l \pm 2}} \big| (s-mr_i)^2 \cdot A_{ij}^2.
\]
As $C_l$ is not close to a big bucket, we have $\big| \ol{C_{l \pm 2}} \big| \geq n/2$.
As each row has at most $\beta n$ entries with value smaller than $\alpha$ and there are at most $n/7500$ weak columns,
there are at least $n/4$ entries in each row which belong to $\ol{C_{l \pm 2}}$ and have value at least $\alpha$ and moreover do not belong to the weak columns.
Therefore,
\[
\sum_{i \in R_{l \pm 1}} \sum_{j \in \ol{C_{l \pm 2}}} (s-mr_i+s-nc_j)^2 \cdot A_{ij}^2
\geq \sum_{i \in R_{l \pm 1}} \frac{1}{4} \big| \ol{C_{l \pm 2}} \big| (s-mr_i)^2 \cdot A_{ij}^2
\geq \sum_{i \in R_{l \pm 1}} \frac{\alpha n}{16} (s-mr_i)^2.
\]
In the worst case, there is only one row in $R_{l \pm 1}$,
but we still have
\begin{equation} \label{e:row}
\sum_{i \in R_{l \pm 1}} \sum_{j \in \ol{C_{l \pm 2}}} (s-mr_i+s-nc_j)^2 \cdot A_{ij}^2
\geq \sum_{i \in R_{l \pm 1}} \frac{\alpha n}{16} (s-mr_i)^2
\geq \frac{\alpha n}{256} (s-nc_j)^2,
\end{equation}
where we used $|s-mr_i| \geq \frac14 |s-nc_j|$ for $i \in R_{l \pm 1}$ and $j \in C_l$.

Now we combine the two situations together.
For a weak column bucket $C_l$ that is not close to a big bucket,
if $R_{l \pm 1} = \emptyset$,
we lower bound the contribution to $- \d \Delta$ from $C_l$ using~(\ref{e:one-entry}).
Otherwise, we lower bound the contribution to $- \d \Delta$ from $R_{l \pm 1}$ using~(\ref{e:row}). 
In either case, the contribution is at least
\[
\min\Big\{\sum_{j \in C_l} \frac{\alpha}{4} (s-nc_j)^2,
\frac{\alpha n}{256} (s-nc_j)^2 \Big\} 
\geq \sum_{j \in C_l} \frac{\alpha}{256} (s-nc_j)^2.
\]
Note that the contribution from any row in $R_{l \pm 1}$ is counted at most $5$ times, when we consider the buckets in $C_{l \pm 2}$.
And when we consider the rows, we consider those entries not in the weak columns, and so the contributions from the rows and from the columns are disjoint.
To summarize and to conclude, we have
\begin{align*}
- \d \Delta 
& = \sum_{i=1}^m \sum_{j=1}^n (s-mr_i + s-nc_j)^2 \cdot A_{ij}^2
\\
& \geq \frac{1}{5} \sum_{l \in \ol{W^*}} 
\min\Big\{ \sum_{j \in C_l} \sum_{i=1}^m (s-mr_i + s-nc_j)^2 \cdot A_{ij}^2,
\sum_{i \in R_{l \pm 1}} \sum_{j \in \ol{C_{l \pm 2}}} (s-mr_i+s-nc_j)^2 \cdot A_{ij}^2\Big\}
\\
& \geq \frac{1}{5} \sum_{l \in \ol{W^*}} 
\min\Big\{ \sum_{j \in C_l} \frac{\alpha}{4} (s-nc_j)^2,
\sum_{j \in C_l} \frac{\alpha}{256} (s-nc_j)^2\Big\}
\\
& \geq \frac{1}{5} \sum_{l \in \ol{W^*}} \sum_{j \in C_l} \frac{\alpha}{256} (s-nc_j)^2
\\
& = \frac{\alpha}{1280} \sum_{j \in C_{\ol{W^*}}} (s-nc_j)^2
\\
& \geq \frac{\alpha n \Delta}{20480},
\end{align*}
where the last inequality is by the assumption in this subcase.
\end{enumerate}
We have considered all cases, and the proposition follows by taking the minimum contribution to $-\d \Delta$ of these cases, which is achieved by Case~1.
\end{proof}

It is clear that Proposition~\ref{p:combinatorial} follows from Proposition~\ref{p:pseudorandom-weak} with $\kappa \geq 1/8192000$.
So, we have completed the first part of the smoothed analysis, that a pseudorandom matrix has a stronger capacity lower bound.
Next, we move on to the perturbation process and its analysis.

\subsection{The Perturbation Process} \label{ss:perturb}

In this subsection, we describe the perturbation process of the vectors.
Then we bound the movement during the perturbation process. 
We end this subsection by listing the facts and results that we will use for the rest of the analysis of the perturbation process.

\begin{procedure}[perturbation process] \label{proc:perturb}
We consider the following perturbation process:
\begin{enumerate}

\item (Preprocess) 
Rescale the vectors $u_{j} \in \R^d$ such that $\norm{u_{j}}^2=\frac{d}{n}$ for $1 \leq j \leq n$. 

\item (Gaussian noise) 
Let $x \in \R^{d \times n}$ be the concatenation of the vectors $x_1, \ldots, x_n \in \R^d$ such that $x_{i,j} := (x_j)_i$.
Let \[x \sim N(0,\sigma^2 I_{dn}),\]
where $x$ is sampled from a multivariate Gaussian distribution with zero mean and the covariance being the identity matrix.
Explicitly, $x_j \in \R^d$ is a vector with each coordinate being an independent Gaussian random variable in $N(0,\sigma^2)$ with mean zero and variance $\sigma^2$.

\item (First subspace)
Let $y \in \R^{d \times n}$ be the concatenation of the vectors $y_1, \ldots, y_n \in \R^d$ such that $y_{i,j} := (y_j)_i$.
Let 
\[L_{1}=\{y \in \R^{d\times n}\text{ such that } \inner{u_{j}}{y_{j}}=0\text{ for all } 1 \leq j \leq n\} 
\quad {\rm and} \quad 
y \sim N(0,\sigma^2 P_{L_1}),
\]
where $y$ is sampled from a multivariate Gaussian distribution with the covariance matrix being $P_{L_1} \in \R^{(d \times n) \times (d \times n)}$, which is the orthogonal projection matrix to the subspace $L_1$.
Note that $L_1$ is of co-dimension $n$ and $\rank(P_{L_1}) \ge (d-1)n$.
Explicitly, $y_j$ is the orthogonal projection of $x_j$ to the subspace perpendicular to $u_j$, such that
\[y_j = x_j - \frac{\inner{u_j}{x_j} u_j}{\norm{u_j}^2} = x_j - \frac{n}{d} \inner{u_j}{x_j} u_j.
\]

\item (Second subspace) 
Let $z \in \R^{d \times n}$ be the concatenation of the vectors $z_1, \ldots, z_n \in \R^d$ such that $z_{i,j} := (z_j)_i$.
Let 
\[L_{2}=\{z \in\R^{d\times n}\text{ such that }\sum_{j=1}^n u_{j}z_{j}^{T}=0\}
\quad {\rm and} \quad
z \sim N(0, \sigma^2 P_{L_1 \cap L_2}),
\]
where $z$ is sampled from a multivariate Gaussian distribution with the covariance matrix being $P_{L_1 \cap L_2}$, which is the orthogonal projection matrix to the subspace $L_1 \cap L_2$.
Note that $L_2$ is of co-dimension at most $d^2$ and thus $\rank(P_{L_1 \cap L_2}) \geq dn - n - d^2$.
Equivalently, we can think of $z = P_{L_1 \cap L_2} y$. 

\item (Noise adding)
The vector $z$ is the noise vector that we generate.
For $1 \leq j \leq n$, let the vectors $v_1, \ldots, v_n \in \R^d$ be
\[v_j := u_j + z_j {\rm~for~} 1 \leq j \leq n.\]

\item (Postprocess) 
We rescale each perturbed vector to have squared norm $d/n$ by letting
\[w_{j}=\sqrt{\frac{d}{n}} \frac{v_{j}}{\norm{v_{j}}}.
\]
The vectors $w_1, \ldots, w_n$ are the perturbed vectors that we generate.
\end{enumerate}
\end{procedure}

In the remainder of this section, we assume that the size of a frame is exactly $d$, and this is more convenient for the calculations as we do not need to keep track of $s$ in the definition of $\Delta$.

\begin{definition}[normalization for the Paulsen problem] \label{d:normal-Delta}
Given a frame $U = \{u_1, \ldots, u_n\}$ where $u_i \in \R^d$ for $1 \leq i \leq n$, when $s(U)=d$,
\[
\Delta(U)
=d\tr(I_d-\sum_{i=1}^{n}u_{i}u_{i}^{T})^{2} + n\sum_{i=1}^{n}(\frac{d}{n}-\norm{u_{i}}^{2})^{2}
=d\norm{I_d-\sum_{i=1}^{n}u_{i}u_{i}^{T}}_F^2 + n\sum_{i=1}^{n}(\frac{d}{n}-\norm{u_{i}}^{2})^{2}.
\]
Recall that $\Delta(U) \leq d^{2}\varepsilon^{2}$ by Lemma~\ref{l:Delta-eps}.
\end{definition}

\subsubsection*{Bounding the Movement in the Perturbation Process}

We first bound the movement and the increase in $\Delta$ in the preprocessing step of the perturbation process.

\begin{lemma} \label{l:perturb-preprocess}
Let $U = \{u_1, \ldots, u_n\}$ where $u_i \in \R^d$ for $1 \leq i \leq n$
and $s(U)=d$.
Let $V = \{v_1, \ldots, v_n\}$ where 
\[v_{i}=\sqrt{\frac{d}{n}}\frac{u_{i}}{\norm{u_{i}}} {\rm~for~} 1 \leq i \leq n 
\quad {\rm so~that} \quad
\norm{v_i}^2 = \frac{d}{n} {\rm~for~} 1 \leq i \leq n.
\] 
Assuming that $\Delta(U) \leq d/16$, we have
\[
\dist(U,V) = \sum_{i=1}^n \norm{v_{i}-u_{i}}^{2}\leq\frac{\Delta(U)}{d}
\quad {\rm and} \quad
\Delta(V) \leq 20\Delta(U).
\]
\end{lemma}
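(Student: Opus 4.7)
The plan is to prove the two bounds separately: the distance bound follows by a direct calculation, while the $\Delta(V)$ bound is more subtle and requires controlling the spectral radius of $N := \sum_i \hat{u}_i \hat{u}_i^T$ where $\hat{u}_i := u_i/\norm{u_i}$.

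For the distance bound I would write $v_i - u_i = (\sqrt{d/n}-\norm{u_i})\hat{u}_i$, and use the factorization $\sqrt{d/n}-\norm{u_i} = (d/n-\norm{u_i}^2)/(\sqrt{d/n}+\norm{u_i})$ together with the trivial lower bound $\sqrt{d/n}+\norm{u_i} \geq \sqrt{d/n}$ to obtain
\[
\norm{v_i-u_i}^2 \leq \frac{n}{d}\bigl(d/n-\norm{u_i}^2\bigr)^2.
\]
Summing over $i$ and invoking the second term of $\Delta(U)$ from Definition~\ref{d:normal-Delta} immediately gives $\dist(U,V) \leq \Delta(U)/d$.

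For $\Delta(V)$, the second term of $\Delta$ vanishes since $\norm{v_i}^2 = d/n$ by construction, so $\Delta(V) = d\norm{I-M'}_F^2$ where $M' := \sum_i v_i v_i^T = (d/n)N$. Setting $M := \sum_i u_i u_i^T$ and $c_i := \norm{u_i}^2-d/n$, I have $M - M' = \sum_i c_i\hat{u}_i\hat{u}_i^T$, so the triangle inequality gives $\norm{I-M'}_F \leq \norm{I-M}_F + \norm{\sum_i c_i\hat{u}_i\hat{u}_i^T}_F$. The first term is controlled by $\sqrt{\Delta(U)/d}$ directly from the definition. The second term I would interpret as a quadratic form $c^T G c$ where $G_{ij} = (\hat{u}_i^T\hat{u}_j)^2$; that is, $G$ is the Hadamard square of the Gram matrix $H^T H$ of the unit vectors $\hat{u}_i$. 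By Schur's product theorem, $\lambda_{\max}(G) \leq \lambda_{\max}(H^T H)\cdot\max_i (H^T H)_{ii} = \lambda_{\max}(N)$, so $\norm{\sum_i c_i\hat{u}_i\hat{u}_i^T}_F^2 \leq \lambda_{\max}(N)\sum_i c_i^2 \leq \lambda_{\max}(N)\cdot\Delta(U)/n$.

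The main obstacle is to bound $\lambda_{\max}(N) = O(n/d)$, and this is the only place the hypothesis $\Delta(U) \leq d/16$ is used in a nontrivial way. From the hypothesis I get $\norm{I-M}_F \leq 1/4$ and hence $\lambda_{\max}(M) \leq 5/4$. If every $\norm{u_i}^2$ were at least $d/(2n)$ I would conclude at once that $N \preceq (2n/d)M$; in general this can fail, but the set of ``bad'' indices $I_b := \{i : \norm{u_i}^2 < d/(2n)\}$ is small, since each $i\in I_b$ satisfies $c_i^2 \geq d^2/(4n^2)$, giving
\[
|I_b| \leq \frac{4n^2}{d^2}\sum_i c_i^2 \leq \frac{4n\Delta(U)}{d^2} \leq \frac{n}{4d}.
\]
Splitting $N = N_g+N_b$ over good and bad indices, the good part satisfies $\lambda_{\max}(N_g) \leq (2n/d)\lambda_{\max}(M) \leq 5n/(2d)$, while the bad part satisfies $\lambda_{\max}(N_b) \leq \tr(N_b) = |I_b| \leq n/(4d)$, so altogether $\lambda_{\max}(N) \leq 3n/d$. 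Substituting back gives $\norm{M-M'}_F^2 \leq 3\Delta(U)/d$; combining with $\norm{I-M}_F^2 \leq \Delta(U)/d$ via $(a+b)^2 \leq 2a^2 + 2b^2$ yields $\norm{I-M'}_F^2 \leq 8\Delta(U)/d$, hence $\Delta(V) \leq 8\Delta(U) \leq 20\Delta(U)$, comfortably within the claimed factor.
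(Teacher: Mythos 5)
Your proof is correct, and the bound on the distance term matches the paper exactly. The $\Delta(V)$ bound, however, takes a genuinely different route at the key technical step. Both arguments begin with the same decomposition $\|I - M'\|_F^2 \leq 2\|I-M\|_F^2 + 2\|M - M'\|_F^2$ (with $M = \sum_i u_iu_i^T$, $M' = \sum_i v_iv_i^T$) and both must deal separately with indices where $\|u_i\|^2$ is small, but the way the remaining quadratic form is bounded differs. The paper writes the discrepancy (over the good indices) as $\sum_i \gamma_i u_iu_i^T$ with $\gamma_i = 1 - \tfrac{d}{n}\|u_i\|^{-2}$, expands the Frobenius norm, applies the AM--GM estimate $\gamma_i\gamma_j \leq \tfrac12(\gamma_i^2+\gamma_j^2)$ to diagonalize the resulting double sum, and finishes with $\sum_j u_ju_j^T \preceq 2I$. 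You instead write the same object as $\sum_i c_i\hat u_i\hat u_i^T$ with $c_i = \|u_i\|^2 - d/n$, recognize $\bigl\|\sum_i c_i\hat u_i\hat u_i^T\bigr\|_F^2 = c^TGc$ with $G$ the Hadamard square of the unit-vector Gram matrix, invoke Schur's product theorem to get $\lambda_{\max}(G) \leq \lambda_{\max}(N)$ for $N = \sum_i \hat u_i\hat u_i^T$, and then bound $\lambda_{\max}(N) \leq 3n/d$ via a good/bad split of the index set. (Note that $\gamma_i u_iu_i^T = c_i\hat u_i\hat u_i^T$, so you are bounding the same matrix; the difference is entirely in the estimate for the quadratic form.) Your route is more structural and yields the tighter constant $8\Delta(U)$, whereas the paper's AM--GM trick is more elementary (bounding a weighted row sum rather than the top eigenvalue) and keeps the argument self-contained. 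Incidentally, the paper's final arithmetic `$\leq 2\Delta(U) + 16\Delta(U)$' appears to drop a factor of $2$ (the second summand should be $2d\cdot 16\Delta(U)/d = 32\Delta(U)$), so your derivation is actually cleaner on the constants as well, and comfortably within the claimed $20\Delta(U)$.
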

\begin{proof}
For the movement, 
\begin{eqnarray*}
\sum_{i=1}^n \norm{v_{i}-u_{i}}^{2}
& = & \sum_{i=1}^n \norm{(\sqrt{\frac{d}{n}} \frac{1}{\norm{u_i}} -1) u_i}^2
= \sum_{i=1}^n \left(\sqrt{\frac{d}{n}}-\norm{u_{i}}\right)^{2}
\\
& = & \sum_{i=1}^n \left(\frac{d}{n}-\norm{u_{i}}^2 \right)^{2} / \left(\sqrt{\frac{d}{n}} + \norm{u_i}\right)^2
\leq \frac{n}{d}\sum_{i=1}^n \left(\frac{d}{n}-\norm{u_{i}}^{2}\right)^{2}
\leq \frac{\Delta(U)}{d},
\end{eqnarray*}
where the last inequality follows from Definition~\ref{d:normal-Delta}.

For the bound on $\Delta(V)$, since $\norm{v_i}^2 = d/n$ and $s(V)=d$, it follows from Definition~\ref{d:normal-Delta} that
\begin{align*}
\Delta(V) & = d \tr(I_d-\sum_{i=1}^{n}v_{i}v_{i}^{T})^{2}
= d\norm{I_d - \sum_{i=1}^n v_i v_i^T}_F^2
\\
& \leq 2d\norm{I_d-\sum_{i=1}^{n} u_{i}u_{i}^{T}}_F^{2} 
+ 2d\norm{\sum_{i=1}^{n}u_{i}u_{i}^{T} -\sum_{i=1}^{n}v_{i}v_{i}^{T}}_F^{2},
\end{align*}
where the last inequality uses that $\norm{A+B}_F^2 \leq 2\norm{A}_F^2 + 2\norm{B}_F^2$ for two symmetric matrices $A$ and $B$.

Let $\mathcal{I}$ be the set of $i$ such that $\norm{u_{i}}^{2}\leq\frac{d}{2n}$.
Note that 
\[\Delta(U) \geq n \sum_{i \in \mathcal{I}} (\frac{d}{n} - \norm{u_i}^2)^2
\geq n\cdot(\frac{d}{2n})^{2}|\mathcal{I}|
\quad \implies \quad
|\mathcal{I}|\leq\frac{4n}{d^{2}}\Delta(U).
\]
Let $\gamma_{i}=1-\frac{d}{n}\norm{u_{i}}^{-2}$
if $i\notin\mathcal{I}$ and $\gamma_{i}=0$ for $i\in\mathcal{I}$.
By triangle inequality and the definition of $v_i$, 
\begin{align*}
\norm{\sum_{i=1}^{n}u_{i}u_{i}^{T}-\sum_{i=1}^{n}v_{i}v_{i}^{T}}_{F} & \leq\norm{\sum_{i\in\mathcal{I}}u_{i}u_{i}^{T}-\sum_{i\in\mathcal{I}}v_{i}v_{i}^{T}}_{F}+\norm{\sum_{i=1}^{n}\gamma_{i}u_{i}u_{i}^{T}}_{F}.
\end{align*}
For the first term, using that $\Delta(U)\leq\frac{d}{16}$, it follows that
\[
\norm{\sum_{i\in\mathcal{I}}u_{i}u_{i}^{T}-\sum_{i\in\mathcal{I}}v_{i}v_{i}^{T}}_{F}\leq\sum_{i\in\mathcal{I}}\left(\norm{u_{i}u_{i}^{T}}_{F}+\norm{v_{i}v_{i}^{T}}_{F}\right)\leq\frac{2d}{n}|\mathcal{I}|\leq\frac{8}{d}\Delta(U)\leq2\sqrt{\frac{\Delta(U)}{d}}.
\]
For the second term, we have that 
\begin{align*}
\norm{\sum_{i=1}^{n}\gamma_i u_{i}u_{i}^{T}}_{F}^{2} 
=\sum_{i=1}^{n} \sum_{j=1}^n \gamma_{i}\gamma_{j}(u_{i}^{T}u_{j})^{2}
\leq\sum_{i=1}^{n} \sum_{j=1}^n \left(\frac{\gamma_{i}^{2}}{2}+\frac{\gamma_{j}^{2}}{2}\right)(u_{i}^{T}u_{j})^{2}
 =\sum_{i=1}^{n}\gamma_{i}^{2}\tr(u_{i}u_{i}^{T}\sum_{j=1}^{n}u_{j}u_{j}^{T}).
\end{align*}
Using that $\Delta(U)\leq d$, it follows that $\sum_{j=1}^{n}u_{j}u_{j}^{T}\preceq2I$ and hence 
\begin{align*}
\norm{\sum_{i=1}^{n}\gamma_i u_{i}u_{i}^{T}}_{F}^{2} 
\leq2\sum_{i=1}^{n}\gamma_{i}^{2}\tr(u_{i}u_{i}^{T})
=2\sum_{i\notin\mathcal{I}}\frac{(\norm{u_{i}}^{2}-\frac{d}{n})^{2}}{\norm{u_{i}}^{2}}
\leq4\frac{n}{d}\sum_{i\notin\mathcal{I}}(\norm{u_{i}}^{2}-\frac{d}{n})^{2}
\leq\frac{4}{d}\Delta(U).
\end{align*}
Combining both terms, we have that
\[
\norm{\sum_{i=1}^n u_{i}u_{i}^{T}-\sum_{i=1}^n v_{i}v_{i}^{T}}_{F}\leq4\sqrt{\frac{\Delta(U)}{d}}.
\]
Therefore,
\[
\Delta(V) \leq 2d\norm{I_d-\sum_{i=1}^{n} u_{i}u_{i}^{T}}_F^{2} 
+ 2d\norm{\sum_{i=1}^{n}u_{i}u_{i}^{T} -\sum_{i=1}^{n}v_{i}v_{i}^{T}}_F^{2}
\leq 2\Delta(U) + 16\Delta(U).
\]
\end{proof}

Next, we bound the movement in the rest of the perturbation process.

\begin{lemma} \label{l:perturb-movement}
Let $U = \{u_1, \ldots, u_n\}$ where $u_i \in \R^d$ and $\norm{u_i}^2 = d/n$ 
for $1 \leq i \leq n$ be the vectors after the preprocessing step of the perturbation process.
Then the expected squared distance between $U$ and the output $W$ of the perturbation process in Procedure~\ref{proc:perturb} is
\[
\E \dist(U,W) = \E\sum_{i=1}^n \norm{u_{i}-w_{i}}^{2}
\leq 2 \sigma^2 d n.
\]
\end{lemma}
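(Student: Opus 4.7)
The plan is to reduce everything to controlling $\E \|z\|^2$, which is straightforward since $z$ is a projected Gaussian. The key observation is that $z \in L_1$ implies $\langle u_j, z_j\rangle = 0$ for every $j$, so the noise is orthogonal to each $u_j$ in its corresponding block, and this makes the postprocessing almost a no-op in the direction we care about.

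First I would expand the squared norm. Since the preprocessing ensures $\norm{u_j}^2 = d/n$ and the postprocessing ensures $\norm{w_j}^2 = d/n$, we get
\[
\|u_j - w_j\|^2 = \frac{2d}{n} - 2\langle u_j, w_j\rangle.
\]
Using $w_j = \sqrt{d/n}\, v_j/\|v_j\|$ with $v_j = u_j + z_j$, and the orthogonality $\langle u_j, z_j\rangle = 0$ coming from $z \in L_1 \cap L_2 \subseteq L_1$, I get $\langle u_j, v_j\rangle = \|u_j\|^2 = d/n$ and $\|v_j\|^2 = d/n + \|z_j\|^2$. Substituting,
\[
\|u_j - w_j\|^2 = \frac{2d}{n}\left(1 - \frac{1}{\sqrt{1 + n\|z_j\|^2/d}}\right).
\]

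Next I would apply the elementary inequality $1 - (1+x)^{-1/2} \le x/2$ valid for all $x \ge 0$ (both sides equal $0$ at $x=0$ and the derivatives compare correctly). This immediately gives
\[
\|u_j - w_j\|^2 \le \frac{2d}{n} \cdot \frac{n\|z_j\|^2}{2d} = \|z_j\|^2.
\]
Summing over $j$ yields the pointwise (non-random) bound $\sum_j \|u_j - w_j\|^2 \le \|z\|^2$.

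Finally, taking expectations, since $z \sim N(0, \sigma^2 P_{L_1 \cap L_2})$ is a centered Gaussian with covariance $\sigma^2 P_{L_1 \cap L_2}$, we have
\[
\E \|z\|^2 = \sigma^2 \tr(P_{L_1 \cap L_2}) = \sigma^2 \rank(P_{L_1 \cap L_2}) \le \sigma^2 dn,
\]
which gives $\E \dist(U, W) \le \sigma^2 dn \le 2\sigma^2 dn$, as claimed. There is no real obstacle here; the only subtle point is recognizing that the first-subspace constraint $L_1$ is exactly what makes $\langle u_j, z_j\rangle = 0$, which in turn makes the radial normalization cost only second-order in $\|z_j\|$. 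The second subspace $L_2$ is not needed for this particular lemma and simply contributes by shrinking the covariance of $z$, which can only decrease $\E\|z\|^2$.
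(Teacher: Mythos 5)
Your proof is correct and cleaner than the paper's, and takes a genuinely different route. The paper argues via the triangle inequality: it decomposes $u_j - w_j = (u_j - v_j) + (v_j - w_j)$, bounds each piece by $\norm{z_j}$ (the second via the reverse triangle inequality $\big|\norm{u_j+z_j}-\norm{u_j}\big| \leq \norm{z_j}$), gets $\norm{u_j-w_j} \leq 2\norm{z_j}$, squares, and uses $\E\norm{z_j}^2 \leq d\sigma^2$. Squaring that inequality actually produces a factor of $4$, so the paper's proof as written only yields $4\sigma^2 dn$; the stated $2\sigma^2 dn$ is off by a harmless constant. Your approach instead uses the law of cosines $\norm{u_j-w_j}^2 = \frac{2d}{n} - 2\inner{u_j}{w_j}$ (valid because preprocessing and postprocessing equalize the two norms), computes $\inner{u_j}{w_j}$ exactly from $\inner{u_j}{z_j}=0$, and applies the elementary inequality $1-(1+x)^{-1/2}\leq x/2$ to obtain the pointwise bound $\norm{u_j-w_j}^2 \leq \norm{z_j}^2$ with constant $1$, then $\E\norm{z}^2 = \sigma^2 \tr(P_{L_1\cap L_2}) \leq \sigma^2 dn$. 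What this buys: a better constant, a deterministic (pre-expectation) inequality, and an explicit demonstration that the $L_1$ orthogonality makes the radial renormalization cost second-order in $\norm{z_j}$ — a point the paper's triangle-inequality argument obscures. Your observation that $L_2$ plays no role here beyond shrinking the covariance is accurate.
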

\begin{proof}
By the triangle inequality, for any $1 \leq j \leq n$,
\begin{align*}
\norm{u_{j}-w_{j}} 
& \leq\norm{u_{j}-v_{j}}+\norm{v_{j}-w_{j}}
= \norm{z_j} + \norm{v_j - \sqrt{\frac{d}{n}} \frac{v_j}{\norm{v_j}}}
\\
& =\norm{z_{j}} + \left|\norm{v_{j}}-\sqrt{\frac{d}{n}}\right|
=\norm{z_{j}} + \Big|\norm{u_{j}+z_j}-\norm{u_j}\Big| \leq 2\norm{z_j},
\end{align*}
where we used $v_j = u_j + z_j$ in the perturbation process and $\norm{u_j}^2 = d/n$.
Therefore,
\[
\E \dist(U,W) 
= \E \sum_{j=1}^n \norm{u_j-w_j}^2 
\leq 2\E \sum_{j=1}^n \norm{z_j}^2 
\leq 2 \sum_{j=1}^n \sigma^2 d 
= 2\sigma^2 d n,
\]
where the second inequality follows because $\E \norm{x_j}^2 = d\sigma^2$ by definition and $z_j$ is a projection of $x_j$ as stated in Procedure~\ref{proc:perturb}.
\end{proof}

\subsubsection*{Facts and Results for the Rest of the Analysis}

We list some facts and results for the analysis of the perturbation process.
First we start with some facts about projection matrices.

\begin{fact}[projection matrices] \label{f:projection}
Some basic facts about projection matrices that we will use:
\begin{enumerate}
\item
An orthogonal projection matrix $P \in \R^{n \times n}$ has only two eigenvalues $1$ and $0$, with multiplicity $\rank(P)$ and $n-\rank(P)$ respectively. 
In particular, $\tr(P) = \rank(P)$ as trace is equal to the sum of eigenvalues.
\item
For a real symmetric matrix $A \in \R^{n \times n}$, it holds that $\norm{A}_F^2 = \sum_{i=1}^n \lambda_i(A)^2$ where $\lambda_i(A)$ is the $i$-th eigenvalue of $A$.
In particular, for an orthogonal projection matrix $P$,
it follows that $\norm{P}_F^2 = \rank(P)$.
\item
For an orthogonal projection matrix $P \in \R^{n \times n}$, it holds that $0 \preceq P \preceq I_n$, and any principle submatrix $Q$ of $P$ satisfies $0 \preceq Q \preceq I$.  In particular, if the principle submatrix $Q$ of $P$ is a $k \times k$ matrix, then $\norm{Q}_F^2 \leq k$.
\item
For a positive semidefinite matrix $A \in \R^{n \times n}$, it holds that
$\norm{A}_F \leq \tr(A)$. 
This follows from $\tr(A) = \sum_{i=1}^n \lambda_i(A)$, $\lambda_i(A) \geq 0$ for all $i$ by our assumption and the previous fact.
\end{enumerate}
\end{fact}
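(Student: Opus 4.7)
The plan is to derive all four items from the spectral theorem for real symmetric matrices, which gives $P = \sum_i \lambda_i e_i e_i^T$ for an orthonormal eigenbasis $\{e_i\}$.

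For item~1, I would start from the defining properties of an orthogonal projection, namely $P^2 = P$ and $P = P^T$. Symmetry ensures $P$ is diagonalizable with real eigenvalues; idempotence then forces each eigenvalue to satisfy $\lambda^2 = \lambda$, so $\lambda \in \{0,1\}$. The multiplicity of the eigenvalue $1$ equals $\dim(\range(P)) = \rank(P)$, and the multiplicity of $0$ equals $n - \rank(P)$. The identity $\tr(P) = \rank(P)$ is immediate since the trace equals the sum of eigenvalues.

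For item~2, the spectral decomposition $A = \sum_i \lambda_i e_i e_i^T$ gives $A^T A = A^2 = \sum_i \lambda_i^2 e_i e_i^T$, so
\[
\norm{A}_F^2 = \tr(A^T A) = \sum_{i=1}^n \lambda_i(A)^2.
\]
Specializing to a projection whose eigenvalues are $0$ or $1$ yields $\norm{P}_F^2 = \rank(P)$.

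For item~3, the eigenvalue bound $0 \le \lambda_i(P) \le 1$ from item~1 gives $0 \preceq P \preceq I_n$. If $Q$ is the principal submatrix of $P$ indexed by a subset $S$ of size $k$, then for any $w \in \R^k$ the extension $\widetilde w \in \R^n$ obtained by zero-padding satisfies $w^T Q w = \widetilde w^T P \widetilde w$, so $0 \preceq Q \preceq I_k$ follows from $0 \preceq P \preceq I_n$ applied to $\widetilde w$. Hence each eigenvalue of $Q$ lies in $[0,1]$, and by item~2,
\[
\norm{Q}_F^2 = \sum_{j=1}^k \lambda_j(Q)^2 \le \sum_{j=1}^k \lambda_j(Q) = \tr(Q) \le k,
\]
where the final inequality uses that $\tr(Q) \leq \tr(P) \leq n$ and more directly that each of the $k$ eigenvalues of $Q$ is at most $1$. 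Item~4 follows from item~2 and nonnegativity: for PSD $A$, $\norm{A}_F^2 = \sum_i \lambda_i(A)^2 \le (\sum_i \lambda_i(A))^2 = \tr(A)^2$ since cross terms $\lambda_i \lambda_j \ge 0$. Taking square roots gives $\norm{A}_F \le \tr(A)$. There is no genuine obstacle here; the only mild subtlety is item~3, where one must remember that the principal submatrix inherits the operator-inequality bounds via zero-padding rather than by a direct spectral computation.
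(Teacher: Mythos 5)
Your proposal is correct and follows essentially the same spectral route the paper uses: the paper states these as a \emph{Fact} with no separate proof, embedding the one-line justifications (idempotence forces eigenvalues in $\{0,1\}$, Frobenius norm equals sum of squared eigenvalues, trace equals sum of eigenvalues) directly in the statement. Your only genuine addition is spelling out the zero-padding argument for why a principal submatrix $Q$ of $P$ inherits $0 \preceq Q \preceq I$, which the paper leaves implicit; everything else matches the paper's intended reasoning.
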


Next we state some results about Gaussian distributions that we will use. 

\begin{fact}[Gaussian distribution] \label{f:Gaussian}
Some results about Gaussian distributions that we will use:
\begin{enumerate}
\item \label{t:moments} (Moment Bound: Theorem 5.22 in~\cite{LV07}) 
For any mean $0$ log-concave distribution $p$, it holds that 
\[
\E_{x \sim p}\norm x^{k} \leq (2k)^{k}\left(\E_{x\sim p}\norm x^{2} \right)^{k/2}.
\]
Since multivariate Gaussian distributions are log-concave, we can apply this results to $x,y,z$ in the perturbation process.

\item \label{t:Isserlis} (Isserlis' Theorem)
If $(x_1,x_2,x_3,x_4)$ is a zero-mean multivariate normal random vector, then
\[
\E[x_1 x_2 x_3 x_4] = \E[x_1 x_2] \E[x_3 x_4] + \E[x_1 x_3] \E[x_2 x_4] + \E[x_1 x_4] \E[x_2 x_3].
\]

\item \label{t:covariance}
Let $y \sim N(0,\Sigma)$.  Then
\[{\rm Var}[\norm{y}_2^2] = \E[\norm{y}_2^4] - (\E[\norm{y}_2^2])^2 = 2\norm{\Sigma}_F^2.\]
This can be seen by reducing to the diagonal case and using the fact that ${\rm Var}[x^2] = 2$ for $x \sim N(0,1)$.
\end{enumerate}
\end{fact}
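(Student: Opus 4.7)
The three items are all classical results from the Gaussian toolkit, so I would not attempt new proofs for items 1 and 2: item 1 is exactly Theorem 5.22 of Lov\'asz--Vempala~\cite{LV07}, whose proof for a general log-concave $p$ proceeds by reducing through a needle-decomposition / localization argument to a one-dimensional moment bound on log-concave densities; item 2 (Isserlis'/Wick's theorem) can be proved by induction on the number of factors using Stein's identity for Gaussians, or equivalently by expanding the moment generating function $\E[\exp(\inner{t}{y})] = \exp(\tfrac{1}{2} t^T \Sigma t)$ and matching Taylor coefficients. Both are standard enough that the paper's ``proof'' is just the citation; no new ideas are needed here.

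For item 3 I would give a short self-contained proof by diagonalization. Since $\Sigma$ is real symmetric and positive semidefinite, write $\Sigma = Q D Q^T$ with $Q$ orthogonal and $D = \diag(d_1,\dots,d_n)$, $d_i \geq 0$. If $z \sim N(0, I_n)$ then $y := Q D^{1/2} z \sim N(0,\Sigma)$, and orthogonality of $Q$ gives
\[
\norm{y}_2^2 \;=\; \norm{D^{1/2} z}_2^2 \;=\; \sum_{i=1}^n d_i \, z_i^2.
\]
The $z_i^2$ are independent chi-squared $(1)$ variables with $\E[z_i^2]=1$ and ${\rm Var}(z_i^2) = \E[z_i^4] - 1 = 3-1 = 2$. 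Therefore by independence,
\[
{\rm Var}(\norm{y}_2^2) \;=\; \sum_{i=1}^n d_i^2 \cdot {\rm Var}(z_i^2) \;=\; 2\sum_{i=1}^n d_i^2 \;=\; 2 \norm{D}_F^2 \;=\; 2 \norm{\Sigma}_F^2,
\]
using orthogonal invariance of the Frobenius norm so that $\norm{\Sigma}_F^2 = \norm{Q D Q^T}_F^2 = \norm{D}_F^2 = \sum_i d_i^2$.

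As a one-line alternative, item 2 can be applied directly without diagonalizing: expand $\E[\norm{y}_2^4] = \sum_{i,j}\E[y_i^2 y_j^2]$ and use Isserlis on each summand to obtain $\E[y_i^2 y_j^2] = \Sigma_{ii}\Sigma_{jj} + 2\Sigma_{ij}^2$; summing over $i,j$ yields $\E[\norm{y}_2^4] = (\tr\Sigma)^2 + 2\norm{\Sigma}_F^2$, and subtracting $(\E[\norm{y}_2^2])^2 = (\tr\Sigma)^2$ completes the identity. There is no serious obstacle here: the only non-trivial input is ${\rm Var}(z^2)=2$ for a standard normal $z$ (equivalently, that the fourth moment of $N(0,1)$ is $3$) together with orthogonal invariance of $\norm{\cdot}_F$.
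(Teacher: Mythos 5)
Your proof of item 3 by diagonalizing $\Sigma = QDQ^T$, writing $y = QD^{1/2}z$ with $z \sim N(0,I_n)$, and reducing to $\operatorname{Var}(z_i^2)=2$ together with orthogonal invariance of $\|\cdot\|_F$ is exactly the argument the paper sketches in its parenthetical ``reducing to the diagonal case and using $\operatorname{Var}[x^2]=2$''; items 1 and 2 are, as you say, stated as citations with no proof. Your alternative derivation via Isserlis is also correct, but not needed.
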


Finally, we state some concentration inequalities.

\begin{fact}[concentration inequalities] \label{f:concentration}
Some concentration inequalities that we will use:
\begin{enumerate}
\item (Chernoff bound)
Let $X_1, \ldots, X_n$ be independent random variables with $X_i \in \{0,1\}$.
Let $X = \sum_{i=1}^n X_i$.
Then, for any $\delta>0$
\[
\P( X \geq (1+\delta)\E X) \leq \left( \frac{e^{\delta}}{(1+\delta)^{\delta}} \right)^{\E X}.
\]
\item
(Theorem 6.1 of~\cite{LM00})
For any $\delta \geq 0$, we have
\[
\P_{z\sim N(0,I_{n})}\left(z^{T}Az\geq\tr A+2\norm A_{F}\sqrt{\delta}+2\norm A_{2}\delta\right)\leq e^{-\delta}.
\]
\item
If $I\succeq A\succeq0$, for any $0 \leq \delta \leq \frac{1}{e^2} \tr A$,
\[
\P_{z\sim N(0,I_{n})}\left(z^{T}Az\leq \delta\right)\leq\left(\frac{\delta}{\tr A}\right)^{\frac{1}{4}\tr A}.
\]
\end{enumerate}
\end{fact}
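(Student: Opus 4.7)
The plan is to prove this via a standard Chernoff/Laplace-transform argument, tailored to a lower-tail bound for the quadratic form $z^T A z$ where $z \sim N(0, I_n)$. The one place where the hypothesis $A \preceq I$ will enter is through a concavity estimate on $\log\det(I + 2\lambda A)$, and the restriction $\delta \le \tr(A)/e^2$ will be used at the very end to absorb a constant-factor loss.

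First I would apply Markov's inequality to $e^{-\lambda z^T A z}$ for an arbitrary $\lambda > 0$, obtaining
\[
\P(z^T A z \le \delta) \;\le\; e^{\lambda \delta}\,\E\!\left[e^{-\lambda z^T A z}\right] \;=\; e^{\lambda \delta}\det(I + 2\lambda A)^{-1/2},
\]
using the standard Gaussian moment-generating formula. Writing $\mu_1,\dots,\mu_n \in [0,1]$ for the eigenvalues of $A$, this becomes $e^{\lambda\delta}\prod_i (1 + 2\lambda\mu_i)^{-1/2}$. Now I would use that the map $x \mapsto \log(1 + 2\lambda x)$ is concave on $[0,1]$ and vanishes at $x=0$, so
\[
\log(1 + 2\lambda \mu_i) \;\ge\; \mu_i\,\log(1 + 2\lambda) \quad \text{for each } i,
\]
which is exactly the place where $0 \preceq A \preceq I$ is used. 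Summing over $i$ gives $\prod_i(1 + 2\lambda\mu_i)^{1/2} \ge (1+2\lambda)^{\tr(A)/2}$, hence
\[
\P(z^T A z \le \delta) \;\le\; e^{\lambda\delta}\,(1 + 2\lambda)^{-\tr(A)/2}.
\]

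Next I would optimize over $\lambda$. Setting the derivative of the exponent to zero yields $1 + 2\lambda = \tr(A)/\delta$, which is admissible because $\delta \le \tr(A)/e^2 < \tr(A)$. Substituting this optimum produces
\[
\P(z^T A z \le \delta) \;\le\; \exp\!\left(\tfrac{\tr(A) - \delta}{2}\right)\left(\tfrac{\delta}{\tr(A)}\right)^{\tr(A)/2} \;\le\; \left(\tfrac{e\,\delta}{\tr(A)}\right)^{\tr(A)/2}.
\]
This already has essentially the right shape; the remaining task is to convert the exponent $\tr(A)/2$ to $\tr(A)/4$ while swallowing the factor $e^{\tr(A)/2}$. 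This is precisely where the hypothesis $\delta \le \tr(A)/e^2$ enters: it gives $\delta/\tr(A) \le e^{-2}$, so $(\delta/\tr(A))^{\tr(A)/4} \le e^{-\tr(A)/2}$. Splitting $(\delta/\tr(A))^{\tr(A)/2} = (\delta/\tr(A))^{\tr(A)/4}\cdot(\delta/\tr(A))^{\tr(A)/4}$ and bounding the second factor by $e^{-\tr(A)/2}$ cancels the stray $e^{\tr(A)/2}$ and yields the claimed bound $(\delta/\tr(A))^{\tr(A)/4}$.

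I do not expect any real obstacle: the argument is standard apart from the concavity trick, and the choice of exponent $\tfrac{1}{4}\tr(A)$ in the statement (rather than $\tfrac{1}{2}\tr(A)$) is exactly calibrated so that the restriction $\delta \le \tr(A)/e^2$ suffices to absorb the $e^{\tr(A)/2}$ factor. The only mild care is to verify that the optimal $\lambda = (\tr(A)/\delta - 1)/2$ is strictly positive (which holds whenever $\delta < \tr(A)$), and to note that if $\tr(A) = 0$ the inequality is trivial since then $A = 0$ and $z^T A z = 0 \le \delta$ always, with the right-hand side interpreted as $1$.
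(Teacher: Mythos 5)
Your proposal is correct and follows essentially the same route as the paper's own proof: Markov's inequality applied to the negative-exponential moment, the concavity of $x\mapsto\log(1+2\lambda x)$ on $[0,1]$ to reduce $\sum_i\log(1+2\lambda\mu_i)$ to $\tr(A)\log(1+2\lambda)$, optimization at $1+2\lambda=\tr(A)/\delta$, and finally using $\delta\le\tr(A)/e^2$ to trade the exponent $\tr(A)/2$ for $\tr(A)/4$ while absorbing the stray $e^{\tr(A)/2}$ factor. The only cosmetic difference is that the paper first diagonalizes $A$ and phrases the concavity step as minimizing $\sum_i\log(1+2ua_i)$ over the simplex, but that is the same estimate applied entrywise.
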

\begin{proof}
We include a proof of the last concentration inequality for completeness.
First, we can assume that $A$ is a diagonal matrix.
Otherwise, we can write $A= U D U^T$ where the columns in $U$ are orthonormal and $D$ is a diagonal matrix.
Then $\tr(A) = \tr(U D U^T) = \tr(D U^T U) = \tr(D)$ and $U^Tz$ is still distributed as $N(0,I_n)$, both are because the columns in $U$ are orthonormal.
Therefore, the problem is reduced to proving the inequality for the diagonal matrix $D$.

So we assume $A$ is diagonal and let $A_{ii}=a_{i}$ where $0 \leq a_i \leq 1$.
Then $z^{T}Az=\sum_{i}a_{i}z_{i}^{2}$. 
By the moment generating function of the chi-squared distribution, 
we know that for $u < 1/2$,
\[
\E_{z_i} e^{uz_i^{2}}=(1-2u)^{-1/2}.
\]
Since $z_{i}$ are independent, for any $u > -1/2$,
\[
\E_{z}e^{-u(\sum_{i}a_{i}z_{i}^{2})}
=\prod_{i}(1+2ua_{i})^{-1/2}
= e^{-\frac{1}{2}\sum_{i}\log(1+2ua_{i})}.
\]
By Markov's inequality,
\[
\P_z\Big(\sum_{i}a_{i}z_{i}^{2}\leq \delta\Big)
= \P_z\Big(e^{-u(\sum_i a_i z_i^2)} \leq e^{-u\delta}\Big)
\leq \frac{\E_{z}e^{-u(\sum_{i}a_{i}z_{i}^{2})}}{e^{-u\delta}}
\leq e^{u\delta-\frac{1}{2}\sum_{i}\log(1+2ua_{i})}.
\]
Let $\lambda\defeq\sum a_{i}$. 
To upper bound the RHS, we would like to lower bound the term $\sum_{i}\log(1+2ua_{i})$ for any $0 \leq a_i \leq 1$ with $\sum a_i = \lambda$.
Since $\sum_{i}\log(1+2ua_{i})$ is concave, 
the minimum is achieved when we set $\lfloor \lambda \rfloor$ of $a_i$ to be one, one $a_i$ to be $\lambda - \lfloor \lambda \rfloor$ and the rest to be zero,
and this gives us a lower bound
\[
\sum_{i}\log(1+2ua_{i})
\geq \lambda\log(1+2u).
\]
Plugging this back into the previous inequality and choosing $u=\frac{\lambda}{2\delta}-\frac{1}{2}$, we get 
\begin{align*}
\P_z(\sum_{i}a_{i}z_{i}^{2}\leq \delta)
\leq  \exp(u\delta-\frac{\lambda}{2}\log(1+2u))
\leq  \exp(\frac{\lambda}{2}-\frac{\delta}{2}-\frac{\lambda}{2}\log(\frac{\lambda}{\delta})).
\end{align*}
Therefore, when $\delta \leq \lambda / e^2$, we conclude that 
\begin{align*}
\P_z(z^T A z \leq \delta) 
= \P_z(\sum_{i}a_{i}z_{i}^{2} \leq \delta)
\leq \exp(-\frac{\lambda}{4}\log(\frac{\lambda}{\delta}))
= \left(\frac{\delta}{\lambda}\right)^{\frac{\lambda}{4}}
= \left( \frac{\delta}{\tr A} \right)^{\frac14 \tr A}.
\end{align*}
\end{proof}

\subsection{Bounding the Increase of $\Delta$ in the Perturbation Process}
\label{ss:Delta}

In this subsection, we bound the increase of $\Delta$ in the perturbation process after the preprocessing step.
The proof is a bit long as there are many terms to keep track of.

\begin{proposition} \label{p:perturb-Delta}
Let $U = \{u_1, \ldots, u_n\}$ where $u_i \in \R^d$ and $\norm{u_i}^2 = d/n$ 
for $1 \leq i \leq n$ be the vectors after the preprocessing step of the perturbation process.
Assume $\Delta(U)\leq 1$ and $\sigma^2 \leq\frac{1}{n}$.
Then the expected value of $\Delta(W)$ for the output $W$ of the perturbation process in Procedure~\ref{proc:perturb} is
\[
\E\Delta(W) \leq 6\Delta(U) + 40 \sigma^4 n^{2} \sqrt{\Delta(U)} + 10^{7} \sigma^4 d^{3} n + 10^{14} \sigma^6 d^{3} n^{3}.
\]
\end{proposition}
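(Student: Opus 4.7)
The plan is to reduce $\Delta(W)$ to a single Frobenius-norm expression, perform an algebraic decomposition that exploits the two projection constraints built into the perturbation, and then take expectations term by term using the Gaussian moment and fourth-moment tools from Facts~\ref{f:Gaussian} and~\ref{f:projection}. First, since the postprocess step of Procedure~\ref{proc:perturb} forces $\norm{w_j}^2 = d/n$ for all $j$, we have $s(W) = d$ and the column-norm term in Definition~\ref{d:normal-Delta} is exactly zero, so $\Delta(W) = d\,\norm{I - \sum_i w_i w_i^T}_F^2$. The key algebraic simplification is that $z \in L_1$ forces $\inner{u_i}{z_i} = 0$, so $\norm{v_i}^2 = d/n + \norm{z_i}^2$ and $w_i w_i^T = \alpha_i v_i v_i^T$ with $\alpha_i := (1 + n\norm{z_i}^2/d)^{-1}$. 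Expanding gives
\[
I - \sum_i w_i w_i^T = H + A + B + C,
\]
where $H = I - \sum_i u_i u_i^T$, $A = \sum_i (1-\alpha_i)\, u_i u_i^T$, $B = -\sum_i \alpha_i(u_i z_i^T + z_i u_i^T)$, $C = -\sum_i \alpha_i z_i z_i^T$. Note $d\norm{H}_F^2 = \Delta(U)$ and $\tr(H) = d - s(U) = 0$.

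Next, the second projection $z \in L_2$ gives the identity $\sum_i u_i z_i^T = 0$ (and its transpose), which lets me rewrite $B = \sum_i (1-\alpha_i)(u_i z_i^T + z_i u_i^T)$, so that $B$, like $A$, is genuinely of order $\norm{z_i}^2$. I then write $C = -\sum_i z_i z_i^T + \sum_i (1-\alpha_i) z_i z_i^T$ to isolate the single $O(\sigma^2)$ stochastic piece. The remaining task is to expand $\E\norm{H + A + B + C}_F^2$ and group the terms. The main term $d\norm{H}_F^2 = \Delta(U)$ picks up the small multiplier $6$ from absorbing the cross contributions via $\norm{X+Y}_F^2 \le (1+t)\norm{X}_F^2 + (1+1/t)\norm{Y}_F^2$. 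The $40\sigma^4 n^2 \sqrt{\Delta(U)}$ term comes from the cross contributions $2d\,\E\inner{H}{A + B + \sum_i (1-\alpha_i) z_i z_i^T}$, bounded by Cauchy--Schwarz in terms of $\norm{H}_F = \sqrt{\Delta(U)/d}$ and the expected Frobenius norm of each $O(\sigma^4)$-sized correction; crucially the would-be $O(\sigma^2)$ cross term $2d\E\inner{H}{-\sum_i z_i z_i^T}$ is killed because $\tr(H) = 0$, so only the traceless part of $\E\sum_i z_i z_i^T$ interacts with $H$, and this part is itself small. The $10^7 \sigma^4 d^3 n$ term is the leading contribution from $\E\norm{\sum_i z_i z_i^T}_F^2 = \sum_{ij}\E\inner{z_i}{z_j}^2$, computed exactly by Isserlis' theorem (Fact~\ref{f:Gaussian}.\ref{t:Isserlis}) and bounded using $0 \preceq P_{L_1 \cap L_2} \preceq I$ together with $\rank(P_{L_1 \cap L_2}) \le dn$ (Fact~\ref{f:projection}). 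The $10^{14} \sigma^6 d^3 n^3$ term collects all the higher-order pieces from the $(1-\alpha_i)$ expansions and triple/quadruple products of $\norm{z_i}$, each estimated by the log-concave moment bound $\E\norm{z}^k \le (2k)^k(\E\norm{z}^2)^{k/2}$ (Fact~\ref{f:Gaussian}.\ref{t:moments}) combined with $\E\norm{z}^2 \le \sigma^2 dn$.

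The main obstacle will be producing the clean constant $6$ on $\Delta(U)$ and the $\sqrt{\Delta(U)}$ rate (rather than $1$) on the cross term. Both features rely on the algebraic cancellations coming from the two projection constraints: without $\inner{u_i}{z_i} = 0$ the formula for $\norm{v_i}^2$ would carry an $O(\sigma)$ contribution, and without $\sum_i u_i z_i^T = 0$ the matrix $B$ would contribute at $O(\sigma^2)$ rather than $O(\sigma^4)$, either of which would swamp the final bound. The other delicate point is handling $(1-\alpha_i)$ uniformly, since $\alpha_i$ could a priori be tiny if $\norm{z_i}$ is large: I will split on the event $\{n\norm{z_i}^2/d \le 1\}$, where the Taylor estimate $1 - \alpha_i \le n\norm{z_i}^2/d$ applies directly, and on its complement use the Gaussian tail bound of Fact~\ref{f:concentration} together with the hypothesis $\sigma^2 \le 1/n$ (which makes this event exponentially rare) to absorb the contribution into the $\sigma^6 d^3 n^3$ slack. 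The final step is routine arithmetic to collect all constants and use $\Delta(U) \le 1$, $\sigma^2 \le 1/n$ to combine like terms into the stated inequality.
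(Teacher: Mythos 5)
Your high-level strategy is the same as the paper's (reduce to a single Frobenius-norm via $s(W)=d$, use $L_1$ to get $\norm{v_i}^2 = d/n + \norm{z_i}^2$, use $L_2$ to kill the linear cross matrix $\sum_i u_i z_i^T$, then estimate the Gaussian moments), but there is a genuine gap in how you handle the $O(\sigma^2)$ contributions from $\sum_i z_i z_i^T$, and it is not repairable within the decomposition you propose.

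The problem is that you are treating $-\sum_i z_i z_i^T$ as a mean-zero noise and relying on $\tr(H)=0$ to kill the cross term $2d\,\E\inner{H}{-\sum_i z_iz_i^T} = -2d\inner{H}{M}$, where $M := \E\sum_i z_iz_i^T$. It is true that $\tr(H)=0$ removes the scalar part of $M$, so only $\inner{H}{M_0}$ survives, where $M_0$ is the traceless part of $M$. But your claim that $M_0$ is ``itself small'' is not justified: one has $M = \sigma^2\sum_i Y^{(i)} - D$ where $\sigma^2\sum_i Y^{(i)} = \sigma^2\big(nI - \tfrac{n}{d}\sum_i u_iu_i^T\big)$ and $D = \sigma^2\sum_i Q^{(i)}$ collects the diagonal blocks of $\sigma^2 P_{L_1\cap L_2^\perp}$. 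The traceless part of the first piece is indeed $\sigma^2\tfrac{n}{d}H$ (hence controlled by $\sqrt{\Delta(U)}$), but $D$ is an arbitrary PSD matrix with $\tr(D) \le \sigma^2 d^2$, so its traceless part $D_0$ can have $\norm{D_0}_F$ as large as roughly $\sigma^2 d^2$ and need not be aligned with $H$ or proportional to $\sqrt{\Delta(U)}$. This contributes up to $2d\norm{H}_F\norm{D_0}_F \approx 2\sigma^2 d^{5/2}\sqrt{\Delta(U)}$ to $\E\Delta(W)$, which scales as $\sigma^2$, not $\sigma^4$, and cannot be absorbed into any of the stated error terms once $\sigma^2$ is small compared to $\sqrt{\Delta(U)}/(d^{1/2}n)$. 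A related symptom appears in your claim that $\E\norm{\sum_i z_iz_i^T}_F^2$ is the source of the $10^7\sigma^4 d^3 n$ term: that quantity is actually $\approx \sigma^4 dn^2$ at leading order (from the mean of $\sum_i z_iz_i^T$), which after the $\times d$ gives $\sigma^4 d^2 n^2 \gg \sigma^4 d^3 n$ whenever $n\gg d$, so it is not admissible uncentered either.

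The fix — which is what the paper does — is to \emph{center} the noise before applying any Cauchy--Schwarz. Concretely, the paper writes $\sum_i w_iw_i^T = \sum_i v_iv_i^T - \tfrac{n}{d}\sum_i v_iv_i^T\norm{z_i}^2 + (\text{higher order})$ and then subtracts and adds the \emph{same} constant $\sigma^2\tfrac{n}{d}(d-1)I$ to the first two terms. This is exact: $\big(\sum v_iv_i^T - (1+\sigma^2\tfrac{n}{d}(d-1))I\big)$ and $\big(\tfrac{n}{d}\sum v_iv_i^T\norm{z_i}^2 - \tfrac{n}{d}\sigma^2(d-1)I\big)$ differ from the uncentered versions by offsets that cancel. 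After centering, each of the three pieces is a fluctuation whose variance (computed via Isserlis', the projection facts, and the exact formula $\E\norm{y_i}^2 = \sigma^2(d-1)$) is genuinely $O(\sigma^4)$, and the $L_2$ trace deficit only enters multiplicatively against another $\sigma^2$ factor, producing the $\sigma^4 nd^2$ term rather than an $O(\sigma^2)$ leftover. In your decomposition $H + A + B + C$, the cancellation of the $O(\sigma^2)$ means is still present (it is hidden between $A$ and the $-\sum z_iz_i^T$ part of $C$), but you would need to make it explicit by centering $\sum z_iz_i^T$ and $A$ by the \emph{same} scalar multiple of the identity \emph{before} invoking Young or Cauchy--Schwarz; merely relying on $\tr(H)=0$ does not achieve it. Your treatment of $B$ (using $L_2$ to promote it to order $\norm{z_i}^2$) and your plan for the $\sigma^6$ tail are fine.
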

\begin{proof}
Since $\norm{w_i}^2 = d/n$ by the postprocessing step of the perturbation process, by Definition~\ref{d:normal-Delta},
\[
\Delta(W) = d \tr(I_d - \sum_{i=1}^n w_i w_i^T)^2 = d \norm{I_d - \sum_{i=1}^n w_i w_i^T}_F^2.
\]
By the property of the first subspace $L_1$,
\[
\norm{v_i}^2 = \norm{u_i + z_i}^2 = \norm{u_i}^2 + 2\inner{u_i}{z_i} + \norm{z_i}^2 = \norm{u_i}^2 + \norm{z_i}^2,
\]
as $\inner{u_i}{z_i} = 0$ for $1 \leq i \leq n$ by construction.
Therefore, by the definition of $w_i$ and $\norm{u_i}^2 = d/n$,
\begin{align*}
\sum_{i=1}^n w_{i} w_{i}^{T} 
= \frac{d}{n} \sum_{i=1}^n \frac{v_i v_i^T}{\norm{v_i}^2} 
= \frac{d}{n} \sum_{i=1}^n \frac{v_i v_i^T}{\norm{u_i}^2 + \norm{z_i}^2}
& = \sum_{i=1}^n \frac{v_i v_i^T}{1 + \frac{n}{d} \norm{z_i}^2}.
\end{align*}
Using that $\frac{1}{1+x} = 1-x+\frac{x^2}{1+x}$,
we split the right hand side of the above equality into three terms so that
\[
\sum_{i=1}^n w_{i} w_{i}^{T} =
\sum_{i=1}^n v_i v_i^T - \frac{n}{d} \sum_{i=1}^n v_i v_i^T \norm{z_i}^2 
+ \frac{n^2}{d^2} \sum_{i=1}^n v_i v_i^T \frac{\norm{z_i}^4}{1+\frac{n}{d} \norm{z_i}^2}.
\]
We will show that the first term $\sum_{i=1}^n v_i v_i^T \approx (1+\sigma^2 \frac{n}{d} (d-1))I$, and the second term $\sum_{i=1}^n v_iv_i^T \norm{z_i}^2 \approx \sigma^2 (d-1)I$.
So, our plan is to bound
\begin{align}
\frac1d \E \Delta(W) 
& = \E \norm{\sum_{i=1}^n w_i w_i^T - I}_F^2
= \E \norm{\sum_{i=1}^n v_i v_i^T - I - \frac{n}{d} \sum_{i=1}^n v_i v_i^T \norm{z_i}^2 
+ \frac{n^2}{d^2} \sum_{i=1}^n v_i v_i^T \frac{\norm{z_i}^4}{1+\frac{n}{d} \norm{z_i}^2}}_F^2 \nonumber
\\
& \leq 
3 \E \norm{\sum_{i=1}^n v_i v_i^T - (1+\sigma^2 \frac{n}{d} (d-1)I }_F^2
+ \frac{3n^2}{d^2} \E \norm{\sum_{i=1}^n v_i v_i^T \norm{z_i}^2 - \sigma^2 (d-1)I}_F^2 \nonumber
\\
& \quad \quad \quad + \frac{3n^4}{d^4} \E \norm{\sum_{i=1}^n v_i v_i^T \frac{\norm{z_i}^4}{1+\frac{n}{d} \norm{z_i}^2}}_F^2,
\label{e:master}
\end{align}
where we used the inequality that $\norm{A+B+C}_F^2 \leq 3\norm{A}_F^2 + 3\norm{B}_F^2 + 3\norm{C}_F^2$ for symmetric matrices $A,B,C$.

We will bound the three terms separately in the following three claims,
and then we will combine the bounds to prove the lemma.
The following claim relies on the second subspace in the perturbation process.

\begin{claim} \label{c:first-term}
Let $u_1, \ldots, u_n \in \R^d$ be such that $\norm{u_{j}}^{2}=\frac{d}{n}$ for $1 \leq j \leq n$.
Assume $n \geq d^{2}$ and $\Delta(U)\leq d^{4}$. Then
\[
\E\norm{\sum_{i=1}^ n v_{i} v_{i}^{T} - (1+\sigma^2 \frac{n}{d}(d-1))I}_{F}^{2}
\leq 2\norm{\sum_{i=1}^n u_{i} u_{i}^{T}-I}_{F}^{2}+ 
8 \sigma^4 \left(\frac{n^{2}}{d}\sqrt{\Delta(U)}+nd^{2}\right).
\]
\end{claim}
\begin{proof}
Recall that $v_i = u_i + z_i$.
Since $z$ is in the second subspace $L_2$,
we have the important property that the ``cross terms''
\[
\sum_{i=1}^n u_i z_i^T = \sum_{i=1}^n z_i u_i^T = 0.
\]
This implies that
\[
\sum_{i=1}^n v_i v_i^T = \sum_{i=1}^n \big( u_i u_i^T + u_i z_i^T + z_i u_i^T + z_i z_i^T \big) =
\sum_{i=1}^n u_i u_i^T + \sum_{i=1}^n z_i z_i^T.
\]
The main work of this claim is to bound
\[
\E \norm{\sum_{i=1}^n z_i z_i^T}_F^2
= \E \tr\Big( \sum_{i=1}^n z_i z_i^T \sum_{j=1}^n z_j z_j^T \Big)
= \E \sum_{i=1}^n \sum_{j=1}^n \inner{z_i}{z_j}^2
= \E \sum_{i=1}^n \sum_{j=1}^n \sum_{l_1 = 1}^d \sum_{l_2=1}^d z_{l_1,i} z_{l_1,j} z_{l_2,i} z_{l_2,j},
\]
where we recall that $z_{l,i} = (z_i)_l$ as defined in Procedure~\ref{proc:perturb}.
We apply Isserlis' theorem in Fact~\ref{f:Gaussian} to break the right hand side into three terms so that
\begin{equation} \label{e:Isserlis}
\E \norm{\sum_{i=1}^n z_i z_i^T}_F^2
= 
\sum_{i,j,l_1,l_2} \E(z_{l_1,i} z_{l_1,j}) \E(z_{l_2, i} z_{l_2,j}) +
\sum_{i,j,l_1,l_2} \E(z_{l_1,i} z_{l_2,i}) \E(z_{l_1, j} z_{l_2,j}) +
\sum_{i,j,l_1,l_2} \E(z_{l_1,i} z_{l_2,j}) \E(z_{l_1, j} z_{l_2,i}).
\end{equation}
To bound these terms, we consider the $(d \times n) \times (d \times n)$ matrix $Z$ where 
\[Z_{(l_1,i),(l_2,j)} = \E(z_{l_1,i} z_{l_2, j}).\] 
By the perturbation process, 
\[Z = \sigma^2 P_{L_1 \cap L_2}.\]
To bound the first term in~(\ref{e:Isserlis}),
let $Z^{(l)}$ be the $n \times n$ matrix where 
$(Z^{(l)})_{ij} := \E(z_{l,i} z_{l,j})$.
Then, the first term is
\[
\sum_{i,j,l_1,l_2} \E(z_{l_1,i} z_{l_1,j}) \E(z_{l_2, i} z_{l_2,j})
= \sum_{i,j,l_1,l_2} Z^{(l_1)}_{ij} Z^{(l_2)}_{ij}
= \sum_{l_1,l_2} \tr( Z^{(l_1)} Z^{(l_2)} )
\leq \sum_{l_1,l_2} \norm{Z^{(l_1)}}_F \norm{Z^{(l_2)}}_F,
\]
where the inequality is by Cauchy-Schwarz.
Note that $Z^{(l)}$ is a principle submatrix of the matrix $Z = \sigma^2 P_{L_1 \cap L_2}$ where $P_{L_1 \cap L_2}$ is a projection matrix, so $Z^{(l)} \preceq \sigma^2 I_n$ and hence $\norm{Z^{(l)}}_F^2 \leq \sigma^4 n$ by Fact~\ref{f:projection}(2).
Therefore, the first term is bounded by 
\[
\sum_{i,j,l_1,l_2} \E(z_{l_1,i} z_{l_1,j}) \E(z_{l_2, i} z_{l_2,j})
\leq \sum_{l_1=1}^d \sum_{l_2=1}^d \norm{Z^{(l_1)}}_F \norm{Z^{(l_2)}}_F
\leq \sigma^4 d^2 n.
\]
To bound the third term in~(\ref{e:Isserlis}),
let $Z^{(i,j)}$ be the $d \times d$ matrix where $Z^{(i,j)}_{l_1 l_2} = \E(z_{l_1,i} z_{l_2,j})$.
Then, the third term is
\begin{align*}
\sum_{i,j,l_1,l_2} \E(z_{l_1,i} z_{l_2,j}) \E(z_{l_1, j} z_{l_2,i})
& = \sum_{i,j,l_1,l_2} Z^{(i,j)}_{l_1,l_2} Z^{(j,i)}_{l_1,l_2}
= \sum_{i,j} \tr(Z^{(i,j)} Z^{(j,i)})
\\
& \leq \sum_{i,j} \norm{Z^{(i,j)}}_F \norm{Z^{(j,i)}}_F
= \sum_{i,j} \norm{Z^{(i,j)}}_F^2 = \norm{Z}_F^2 \leq \sigma^4 dn.
\end{align*}
where the first inequality is by Cauchy-Schwarz
and the second inequality is by Fact~\ref{f:projection}(2) as $Z = \sigma^2 P_{L_1 \cap L_2}$.

The second term in~(\ref{e:Isserlis}) requires more care.
Let $Z^{(i)}$ be the $d \times d$ matrix where $Z^{(i)}_{l_1,l_2} = E(z_{l_1,i} z_{l_2,i})$.
Then, the second term is
\[
\sum_{i,j,l_1,l_2} \E(z_{l_1,i} z_{l_2,i}) \E(z_{l_1, j} z_{l_2,j})
= \sum_{i,j,l_1,l_2} Z^{(i)}_{l_1,l_2} Z^{(j)}_{l_1,l_2}
= \sum_{i,j} \tr(Z^{(i)} Z^{(j)})
= \norm{\sum_{i=1}^n Z^{(i)}}_F^2.
\]
Note that we can bound the second term as in the first term to get a bound $\sigma^4 d n^2$, but we could not afford the $n^2$ factor.
We bound $\norm{\sum_{i=1}^n Z^{(i)}}_F^2$ by looking at the special structure in the subspace $L_1$.
Let $y \in \R^{d \times n}$ be the (noise) vector in the perturbation process,
and let $Y$ be the $(d \times n) \times (d \times n)$ matrix and $Y^{(i)}$ be the $d \times d$ matrix where
\[Y_{(l_1,i),(l_2,j)} = \E(y_{l_1,i} y_{l_2,j}) {\rm~so~that~} Y = \sigma^2 P_{L_1},
\quad {\rm and} \quad
Y^{(i)}_{l_1,l_2} = \E(y_{l_1,i} y_{l_2,i}).\]
Note that $0\preceq Z^{(i)}\preceq Y^{(i)}$ for all $i$ and hence $\norm{\sum_{i=1}^{n}Z^{(i)}}_{F}^{2}\leq\norm{\sum_{i=1}^{n}Y^{(i)}}_{F}^{2}$.
Now, we use the special structure of $L_1$ to bound $\norm{\sum_{i=1}^n Y^{(i)}}_F$.
As stated in step~(3) in the perturbation process in Procedure~\ref{proc:perturb}, $Y^{(i)}$ can be described explicitly as 
\[Y^{(i)} = \sigma^2(I_d - \frac{n}{d} u_i u_i^T).\]
\begin{align*}
{\rm So,~} \norm{\sum_{i=1}^n Y^{(i)}}_F
& = \sigma^2 \norm{\sum_{i=1}^n (I_d - \frac{n}{d} u_i u_i^T)}_F
= \sigma^2 \norm{nI_d - \frac{n}{d} \sum_{i=1}^n u_i u_i^T}_F
\\
& \leq \sigma^2 \left( \norm{\frac{n}{d} (I_d-\sum_{i=1}^n u_i u_i^T)}_F + \norm{(n-\frac{n}{d})I_d}_F \right) 
= \sigma^2 \left( \frac{n}{d} \sqrt{\frac{\Delta(U)}{d}} + n(1 - \frac{1}{d}) \sqrt{d} \right),
\end{align*}
where the last equality is by Definition~\ref{d:normal-Delta} and the assumption that $\norm{u_i}^2 = d/n$ for $1 \leq i \leq n$.
Hence,
\[
\norm{\sum_{i=1}^n Z^{(i)}}_F 
\leq \sigma^2 \left( \frac{n}{d} \sqrt{\frac{\Delta(U)}{d}} + n(1 - \frac{1}{d}) \sqrt{d} \right).
\]
Squaring both sides, we get that the second term is
\begin{align*}
\norm{\sum_{i=1}^n Z^{(i)}}_F^2
\leq \sigma^4 \left( \frac{n^2 \Delta(U)}{d^3} + dn^2(1 - \frac{1}{d})^2 + \frac{2n^2}{d} (1-\frac{1}{d})\sqrt{\Delta(U)} \right)
\leq \sigma^4 \left( dn^2(1 - \frac{1}{d})^2 + \frac{3n^2}{d}\sqrt{\Delta(U)} \right),
\end{align*}
where the last inequality follows from the assumption that $\Delta(U) \leq d^4$.
Putting all three bounds back to~(\ref{e:Isserlis}), we finally have
\[
\E \norm{\sum_{i=1}^n z_i z_i^T}_F^2
\leq \sigma^4 \left(
dn^2(1 - \frac{1}{d})^2 + \frac{3n^2}{d}\sqrt{\Delta(U)} + 2 d^2 n \right).
\]
Going back to the left hand side of the claim,
\begin{align}
\E\norm{\sum_{i=1}^ n v_{i} v_{i}^{T} - (1+\sigma^2 \frac{n}{d}(d-1))I}_{F}^{2}
& \leq
2\E\norm{\sum_{i=1}^n u_i u_i^T - I}_F^2 + 2\E\norm{\sum_{i=1}^n z_i z_i^T - \sigma^2 \frac{n}{d} (d-1) I}_F^2,
\label{e:first-first}
\end{align}
where the second term in~(\ref{e:first-first}) is
\begin{align*}
& 2\E \norm{\sum_{i=1}^n z_i z_i^T}_F^2 - 4 \sigma^2 n (1-\frac{1}{d}) \E \sum_{i=1}^n \norm{z_i}^2 + 2\sigma^4 d n^2 (1-\frac{1}{d})^2.
\end{align*}
Note that
\[
\E \sum_{i=1}^n \norm{z_i}^2 
= \sum_{i=1}^n \tr(Z^{(i)}) 
= \tr(Z) 
= \sigma^2 \tr(P_{L_1 \cap L_2})
= \sigma^2 \rank(P_{L_1 \cap L_2}) 
\geq \sigma^2 (dn - n - d^2),
\]
where the last equality is by Fact~\ref{f:projection}(1) and the inequality is because the codimension of $L_1$ is at most $n$ and the codimension of $L_2$ is at most $d^2$.
Using the bound for $\E \norm{\sum_{i=1}^n z_i z_i^T}_F^2$,
the second term in~(\ref{e:first-first}) is at most
\begin{align*}
&~ 2\sigma^4 \left( dn^2(1 - \frac{1}{d})^2 + \frac{3n^2}{d} \sqrt{\Delta(U)} + 2 d^2 n - 2n (1-\frac{1}{d}) (dn - n - d^2) + dn^2(1-\frac{1}{d})^2
\right)
\\
= &~ 2\sigma^4 \left( \frac{3n^2}{d} \sqrt{\Delta(U)} + 2d^2n +2nd^2(1-\frac{1}{d}) \right) 
\leq 2\sigma^4 \left(\frac{3n^2}{d} \sqrt{\Delta(U)} + 4 nd^{2} \right).
\end{align*}
Putting this back into the second term of~(\ref{e:first-first}) proves the claim.
\end{proof}

In the second term of~(\ref{e:master}), we show that $\sum_{i=1}^n v_{i}v_{i}^{T}\norm{z_{i}}^{2}$ is close to the matrix $\sigma^2(d-1)I$.

\begin{claim} \label{c:second-term}
Let $u_1, \ldots, u_n \in \R^d$ be such that $\norm{u_{j}}^{2}=\frac{d}{n}$ for $1 \leq j \leq n$.
Assuming $\sigma^2 \leq \frac{1}{n}$ and $\Delta(U) \leq d$, then
\[
\E\norm{\sum_{i=1}^n v_{i} v_{i}^{T} \norm{z_{i}}^{2} - \sigma^2 (d-1)I}_{F}^{2}
\leq 4\sigma^{4} d \Delta(U)+10^{6}\frac{d^{4}}{n}\sigma^{4} + 10^{8} \sigma^{6}d^{4}n.
\]
\end{claim}
\begin{proof}
Recall that $v_i = u_i + z_i$, so
\begin{equation*}
\sum_{i=1}^n v_{i} v_{i}^{T} \norm{z_{i}}^{2}
= \sum_{i=1}^n u_{i} u_{i}^{T} \norm{z_{i}}^{2} 
+ \sum_{i=1}^n (u_{i}z_{i}^{T} + z_{i}u_{i}^{T} + z_{i}z_{i}^{T})\norm{z_{i}}^{2}.
\end{equation*}
To bound the left hand side of the claim,
we use the above equation to split the left hand side into two terms
\begin{align}
&~\E\norm{\sum_{i=1}^n v_{i} v_{i}^{T} \norm{z_{i}}^{2} - \sigma^2 (d-1)I}_{F}^{2} \nonumber
\\
\leq &~
2\E\norm{\sum_{i=1}^n u_{i} u_{i}^{T} \norm{z_{i}}^{2} - \sigma^2 (d-1)I}_{F}^{2}
+ 2\E\norm{\sum_{i=1}^n (u_{i}z_{i}^{T} + z_{i}u_{i}^{T} + z_{i}z_{i}^{T})\norm{z_{i}}^{2}}_F^2.
\label{e:2nd-master}
\end{align}

To bound the first term on the right hand side of (\ref{e:2nd-master}), we let $$r_i = \norm{z_i}^2 - \norm{y_i}^2.$$
We further split the first term in~(\ref{e:2nd-master}) into two terms such that
\begin{align}
\E\norm{\sum_{i=1}^n u_{i}u_{i}^{T} \norm{z_{i}}^{2} - \sigma^2(d-1)I}_{F}^{2}
\leq 
 2\E\norm{\sum_{i=1}^n u_{i} u_{i}^{T} \norm{y_{i}}^{2} - \sigma^2(d-1)I}_{F}^{2}+2\E\norm{\sum_{i=1}^n r_i u_{i} u_{i}^{T}}_{F}^{2}. \label{eq:2n_step_sub}
\end{align}
Recall from point (3) of Procedure~\ref{proc:perturb} that
$y_i = x_i - \frac{n}{d} \inner{x_i}{u_i} u_i$,
this allows us to compute $\E \norm{y_i}^2$ exactly as
\begin{equation} \label{e:y^2}
\E \norm{y_i}^2 = \E \norm{x_i}^2 - \E \frac{n}{d} \inner{x_i}{u_i}^2
= \sigma^2 d - \frac{n}{d} \sigma^2 \norm{u_i}^2
= \sigma^2 (d - 1),
\end{equation}
where the second equality is because $x_i$ is a $d$-dimensional vector in which each coordinate is an independent Gaussian variable with variance $\sigma^2$.
Therefore, the first term in (\ref{eq:2n_step_sub}) is
\begin{align*}
&~\E\norm{\sum_{i=1}^n u_{i} u_{i}^{T}\norm{y_{i}}^{2} - \sigma^2 (d-1)I}_{F}^{2} \\
= &~ \sum_{i=1}^n \sum_{j=1}^n \inner{u_{i}}{u_{j}}^{2} \cdot \E\norm{y_{i}}^{2}\norm{y_{j}}^{2}
- 2 \sigma^2(d-1) \sum_{i=1}^n \norm{u_{i}}^{2}\E\norm{y_{i}}^{2}
+  \sigma^{4}(d-1)^{2}d
\\
= &~ \sigma^{4}(d-1)^{2}\left(\sum_{i=1}^n \sum_{j=1}^n \inner{u_{i}}{u_{j}}^{2}- 2\sum_{i=1}^n \norm{u_{i}}^{2}+d\right)
+ \sum_{i=1}^n \norm{u_i}^4 \left(\E \norm{y_i}^4_2 - (\E \norm{y_i}^2)^2 \right)
\\
= &~ \sigma^{4}(d-1)^{2}\norm{\sum_{i=1}^n u_{i} u_{i}^{T}-I}_{F}^{2}
+ \frac{d^2}{n^2} \sum_{i=1}^n \left(\E \norm{y_i}^4_2 - (\E \norm{y_i}^2)^2 \right)
\\
= &~ \sigma^4 (d-1)^2 \frac{\Delta(U)}{d} + \frac{d^2}{n^2} \sum_{i=1}^n \left(\E \norm{y_i}^4_2 - (\E \norm{y_i}^2)^2 \right) \\
\leq &~ \sigma^4 d \Delta(U) + \frac{d^2}{n^2} \sum_{i=1}^n \left(\E \norm{y_i}^4_2 - (\E \norm{y_i}^2)^2 \right),
\end{align*}
where the second equality uses that $y_i$ and $y_j$ are independent random variables for $i\neq j$ and $\E\norm{y_i}^2 = \sigma^2(d-1)$ as calculated above,
the third equality is by our assumption that $\norm{u_i}^2 = d/n$ for $1 \leq i \leq n$,
and the last equality follows from Definition~\ref{d:normal-Delta} and the same assumption.

For the variance term, since $y$ follows a multivariate distribution with covariance matrix $\sigma^2 (I_d - \frac{n}{d} u_i u_i^{T})$, Fact~\ref{f:Gaussian}(3) implies that
$$
\E\norm{y_{i}}^{4}-\left(\E\norm{y_{i}}^{2}\right)^{2}
= 2\sigma^{4}\norm{I_d-\frac{n}{d}u_{i}u_{i}^{T}}_{F}^{2}
= 2\sigma^{4}(d-1),
$$
where the last equality uses the assumption that $\norm{u_i}^2 = d/n$ for $1 \leq i \leq n$.
Therefore, the first term in (\ref{eq:2n_step_sub}) is
\begin{equation} \label{e:c2-1}
\E\norm{\sum_{i=1}^n u_{i} u_{i}^{T}\norm{y_{i}}^{2} - \sigma^4 (d-1)I}_{F}^{2}
\leq  \sigma^4 d \Delta(U) +  2 \sigma^4 \frac{d^3}{n}.
\end{equation}

For the second term in (\ref{eq:2n_step_sub}), we note as in Lemma~\ref{l:perturb-preprocess} that
\begin{align*}
\E\norm{\sum_{i}r_{i}u_{i}u_{i}^{T}}_{F}^{2} & =\E\sum_{i,j=1}^{n}r_{i}r_{j}(u_{i}^{T}u_{j})^{2}\leq\E\sum_{i,j=1}^{n}\left(\frac{r_{i}^{2}}{2}+\frac{r_{j}^{2}}{2}\right)(u_{i}^{T}u_{j})^{2}
 =\E\sum_{i=1}^{n}r_{i}^{2}\tr(u_{i}u_{i}^{T}\sum_{j=1}^{n}u_{j}u_{j}^{T}).
\end{align*}
Using that $\Delta(U)\leq d$, we have that $\sum_{j=1}^{n}u_{j}u_{j}^{T}\preceq2I$
and hence
\begin{align*}
\E\norm{\sum_{i}r_{i}u_{i}u_{i}^{T}}_{F}^{2} & 
\leq2\E\sum_{j=1}^{n}r_{j}^{2}\norm{u_{j}}^{2}
=\frac{2d}{n}\E\sum_{j=1}^{n}\left(\norm{z_{j}}^{2}-\norm{y_{j}}^{2}\right)^{2}.
\end{align*}
Note that
\begin{align*}
\left|\norm{z_{i}}^{2}-\norm{y_{i}}^{2}\right| & \leq\left(\norm{y_{i}}+\norm{z_{i}-y_{i}}\right)^{2}-\norm{y_{i}}^{2}=2\norm{z_{i}-y_{i}}\norm{y_{i}}+\norm{z_{i}-y_{i}}^{2},
\end{align*}
which implies that
\begin{align*}
\E\left(\norm{z_{i}}^{2}-\norm{y_{i}}^{2}\right)^{2} & \leq
8\E\norm{z_{i}-y_{i}}^{2}\norm{y_{i}}^{2}+2\E\norm{z_{i}-y_{i}}^{4}.
\end{align*}
By the AM-GM inequality,
\[
\norm{z_{i}-y_{i}}^{2}\norm{y_{i}}^{2}\leq\frac{\E\norm{y_{i}}^{2}}{2\E\norm{z_{i}-y_{i}}^{2}}\norm{z_{i}-y_{i}}^{4}+\frac{\E\norm{z_{i}-y_{i}}^{2}}{2\E\norm{y_{i}}^{2}}\norm{y_{i}}^{4}.
\]
Therefore,
\begin{align*}
\E\left(\norm{z_{i}}^{2}-\norm{y_{i}}^{2}\right)^{2} 
& \leq\frac{4\E\norm{y_{i}}^{2}}{\E\norm{z_{i}-y_{i}}^{2}}\E\norm{z_{i}-y_{i}}^{4}+
\frac{4\E\norm{z_{i}-y_{i}}^{2}}{\E\norm{y_{i}}^{2}}\E\norm{y_{i}}^{4}+2\E\norm{z_{i}-y_{i}}^{4} 
\\
& \leq \frac{4\E\norm{y_{i}}^{2}}{\E\norm{z_{i}-y_{i}}^{2}} \cdot 2^{12} \left(\E\norm{z_{i}-y_{i}}^{2}\right)^{2}
+\frac{4\E\norm{z_{i}-y_{i}}^{2}}{\E\norm{y_{i}}^{2}} \cdot 2^{12} \left(\E\norm{y_{i}}^{2}\right)^{2}
+2 \cdot 2^{12} (\E\norm{z_{i}-y_{i}}^{2})^{2}
\\
 & =2^{15}\E\norm{y_{i}}^{2}\E\norm{z_{i}-y_{i}}^{2}+2^{14}(\E\norm{z_{i}-y_{i}}^{2})^{2},
\end{align*}
where we used Fact~\ref{f:Gaussian} in the second inequality.
Let $P_{L_1 \cap L_2^{\perp}}$ be the projector to the subspace $L_1 \cap L_2^{\perp}$ such that $z-y = \sigma^2 P_{L_1 \cap L_2^{\perp}} x$.
Let $Q^{(i)}$ be the $n \times n$ matrix where $(Q^{(i)})_{l_1,l_2} = (P_{L_1 \cap L_2^{\perp}})_{(l_1,i),(l_2,i)}$ such that $\sigma^2 \tr(Q^{(i)}) = \E\norm{z_i-y_i}^2$; in other words, $Q^{(i)} \in \R^{d \times d}$ is the $(i,i)$-th block of $P_{L_1 \cap L_2^{\perp}}$.
So the second term in~(\ref{eq:2n_step_sub}) is 
\begin{align}
\E\norm{\sum_{i}r_{i}u_{i}u_{i}^{T}}_{F}^{2} & 
\leq2^{16}\frac{d}{n}\sum_{i=1}^{n}\left(\E\norm{y_{i}}^{2}\E\norm{z_{i}-y_{i}}^{2}+(\E\norm{z_{i}-y_{i}}^{2})^{2}\right) \nonumber
\\
 & \leq2^{16}\frac{d}{n}\sigma^{4}\sum_{i=1}^{n}
\left(d\cdot\tr(Q^{(i)})+(\tr(Q^{(i)}))^{2}\right) \nonumber
\\
 & \leq2^{17}\frac{d^{2}}{n}\sigma^{4}\sum_{i=1}^{n}\tr(Q^{(i)}) \nonumber
\\
 & = 2^{17}\frac{d^{2}}{n}\sigma^{4}\tr(P_{L_{1}\cap L_{2}^{\perp}}) \nonumber
\\
& \leq2^{17}\frac{d^{4}}{n}\sigma^{4}, \label{e:c2-2}
\end{align}
where the second inequality is by~(\ref{e:y^2}) and the definition of $Q^{(i)}$,
the third inequality uses Fact~\ref{f:projection}(3) $\tr(Q^{(i)}) \leq d$ as $Q^{(i)}$ is a $d \times d$ principle submatrix of a projector matrix, 
and the last inequality uses Fact~\ref{f:projection}(1) that $L_2$ is of codimension at most $d^2$.
Plugging back the right hand sides of~(\ref{e:c2-1}) and~(\ref{e:c2-2}) into (\ref{eq:2n_step_sub}), we get that 
\begin{equation}
\E\norm{\sum_{i=1}^n u_{i} u_{i}^{T}\norm{z_{i}}^{2}-\sigma^2(d-1)I}_{F}^{2}
\leq 2 \sigma^4 d \Delta(U) +  4 \sigma^4 \frac{d^3}{n} + 2^{18}\frac{d^{4}}{n}\sigma^{4} .
\label{eq:2n_step_master_1}
\end{equation}
For the second term in (\ref{e:2nd-master}), using triangle inequality and Cauchy-Schwarz, 
\begin{align*}
\norm{\sum_{i=1}^n (u_{i}z_{i}^{T}+z_{i}u_{i}^{T}+z_{i}z_{i}^{T})\norm{z_{i}}^{2}}_{F}
\leq & \sum_{i=1}^n (2\norm{u_{i}}_{2}\norm{z_{i}}^{3}+\norm{z_{i}}^{4})
= 2\sqrt{\frac{d}{n}}\sum_{i=1}^n \norm{z_{i}}^{3}+\sum_{i=1}^n\norm{z_{i}}_{2}^{4}.
\end{align*}
Again we bound the higher moments of $\norm{z_i}$ using Fact~\ref{f:Gaussian}(1).
Since $\E \norm{z_i}^2 \leq \sigma^2 d$, 
we can bound
\begin{align}
& \E\norm{\sum_{i=1}^n (u_{i}z_{i}^{T} + z_{i}u_{i}^{T} + z_{i}z_{i}^{T})\norm{z_{i}}^{2}}_{F}^{2}
\leq 8 \frac{d}{n} \E \Big(\sum_{i=1}^n \norm{z_i}^3\Big)^2 + 2 \E \Big(\sum_{i=1}^n \norm{z_i}^4\Big)^2 \nonumber
\\
\leq &~ 8 \frac{d}{n} \Big(6^3 n (\sigma^2 d)^{3/2}\Big)^2 + 
2\Big( 8^4 n (\sigma^2 d)^2\Big)^2
\leq 10^{6} \sigma^{6} d^{4} n 
+ 4 \cdot 10^{7} \sigma^{8} d^4 n^2.
\label{eq:2n_step_master_2}
\end{align}

Combining (\ref{eq:2n_step_master_1}) and (\ref{eq:2n_step_master_2})
into (\ref{e:2nd-master}), we get that
\begin{align*}
\E\norm{\sum_{i=1}^n u_{i}u_{i}^{T}\norm{z_{i}}^{2}-\sigma^2(d-1)I}_{F}^{2} 
& \leq 4 \sigma^4 d \Delta(U) +  8 \sigma^4 \frac{d^3}{n} + 2^{19}\frac{d^{4}}{n}\sigma^{4} 
+ 2 \cdot 10^{6} \sigma^{6} d^{4} n + 8 \cdot 10^{7} \sigma^{8} d^4 n^2
\\
& \leq 4 \sigma^4 d \Delta(U) +  10^{6}\frac{d^{4}}{n}\sigma^{4} + 10^{8} \sigma^{6} d^{4} n,
\end{align*}
where we used the assumption that $\sigma^2 \leq\frac{1}{n}$.
\end{proof}

Finally, we bound the third term of the right hand side of (\ref{e:master}).

\begin{claim} \label{c:third-term}
Let $u_1, \ldots, u_n \in \R^d$ be such that $\norm{u_{j}}^{2}=\frac{d}{n}$ for $1 \leq j \leq n$.
Assume $\sigma^2 \leq \frac{1}{n}$. Then,
\[
\E \norm{\sum_{i=1}^n v_{i}v_{i}^{T}\frac{\norm{z_{i}}^{4}}{1+\frac{n}{d}\norm{z_{i}}^{2}}}^2_{F}
\leq10^{13} \sigma^{8} d^{6}.
\]
\end{claim}
\begin{proof}
By triangle inequality,
\begin{align*}
\norm{\sum_{i=1}^n v_{i}v_{i}^{T}\frac{\norm{z_{i}}^{4}}{1+\frac{n}{d}\norm{z_{i}}^{2}}}_{F}
 & \leq\sum_{i=1}^n \frac{\norm{z_{i}}^{4}}{1+\frac{n}{d}\norm{z_{i}}^{2}}\norm{v_{i}}^{2}
 \leq\sum_{i=1}^n \norm{z_{i}}^{4}(\norm{u_{i}}^{2}+\norm{z_{i}}^{2})
 = \frac{d}{n}\sum_{i=1}^n \norm{z_{i}}^{4}+\sum_{i=1}^n \norm{z_{i}}^{6},
\end{align*}
where we used that $\norm{v_i}^2 = \norm{u_i}^2 + \norm{z_i}^2$ by the property of the subspace $L_1$ and $\norm{u_i}^2 = d/n$ for $1 \leq i \leq n$.
Therefore, by Fact~\ref{f:Gaussian}(1), 
with $\E\norm{z_{i}}_{2}^{2}\leq \sigma^2 d$
and $\sigma^2 \leq\frac{1}{n}$, we can bound 
\begin{align*}
\E \norm{\sum_{i=1}^n v_{i}v_{i}^{T}\frac{\norm{z_{i}}^{4}}{1+\frac{n}{d}\norm{z_{i}}^{2}}}_{F}^2
& \leq \frac{2d^2}{n^2} \E \Big(\sum_{i=1}^n \norm{z_{i}}^{4}\Big)^2 + \E \Big(\sum_{i=1}^n \norm{z_{i}}^{6} \Big)^2
\\
& \leq \frac{2d^2}{n^2} \big( 8^4 n (\sigma^2 d)^2 \Big)^2
+ \big( (12)^6 n (\sigma^2 d)^3 )^2
\leq10^{13} \sigma^{8} d^6.
\end{align*}
\end{proof}

Using Claim~\ref{c:first-term}, Claim~\ref{c:second-term} and Claim~\ref{c:third-term} in~(\ref{e:master}), we have that
\begin{align*}
\E\Delta(W)
\leq &~ 3d \left(2\norm{\sum_{i=1}^n u_{i} u_{i}^{T}-I}_{F}^{2}+8 \sigma^4 \left(\frac{n^{2}}{d}\sqrt{\Delta(U)}+nd^2\right)\right)
\\
& \quad \quad + \frac{3n^2}{d} \left( 4\sigma^{4} d \Delta(U)+10^{6}\frac{d^{4}}{n}\sigma^{4} + 10^{8} \sigma^{6}d^{4}n \right)
+ \frac{3n^4}{d^3} \left( 10^{13} \sigma^{8} d^6 \right)
\\
\leq &~ 
6\Delta(U) + 24 \sigma^4 n^2 \sqrt{\Delta(U)} + 24 \sigma^4 d^3 n
+ 12 \sigma^4 n^2 \Delta(U) 
\\
& \quad \quad + 3 \cdot 10^6 \sigma^4 n d^3 + 3 \cdot 10^8 \sigma^6 d^3 n^3 + 3 \cdot 10^{13} \sigma^8 d^3 n^4
\\
\leq &~ 6\Delta(U) + 40 \sigma^4 n^2 \sqrt{\Delta(U)} + 10^7 \sigma^4 d^3 n + 10^{14} \sigma^6 d^3 n^3,
\end{align*}
where we used our assumptions that $\sigma^2 \leq\frac{1}{n}$ and $\Delta(U) \leq 1$. 
\end{proof}

\subsection{Improved Capacity Lower Bound for a Perturbed Frame} \label{ss:pseudorandom}

Our goal in this subsection is to establish the foundation to prove a strong capacity lower bound on a frame $W$ produced by the perturbation process in Procedure~\ref{proc:perturb}.
To prove a frame capacity lower bound, as in Subsection~\ref{ss:capLB}, we use the reduction in Proposition~\ref{p:reduction-capacity} to construct a $d \times n$ matrix $B$ from $W$ such that $\capa(B) \leq \capa(W)$ and $\Delta(B) \leq \Delta(W)$, and then our goal is reduced to proving a strong lower bound on $\capa(B)$ in terms of $\Delta(B)$.
In Subsection~\ref{ss:Delta'}, we proved in Theorem~\ref{t:pseudorandom} that if a matrix $B$ is pesudorandom, then $\capa(B)$ has a stronger lower bound than the general lower bound in Proposition~\ref{p:matrix-capLB}.
The main work in this subsection is to prove that $B$ is pseudorandom if $W$ is the output of the perturbation process.

Our plan is as follows.
First, using the special structure in the frame setting, we show that the matrix $B$ has a simpler form than that in the general reduction from Proposition~\ref{p:reduction-capacity}.
Then, the main work is a probabilistic analysis to prove that $B$ is pseudorandom.
In the next subsection, we use the pseudorandom property to establish an improved capacity lower bound for a perturbed frame.

\begin{lemma} \label{l:simple}
Given a frame $U=\{u_1, \ldots, u_n\}$ where $u_i \in \R^d$ for $1 \leq i \leq n$, there exists an orthonormal basis $g_1, \ldots, g_d \in \R^d$ so that the $d \times n$ non-negative matrix $A$ with
\[
A_{ij} = \inner{g_i}{u_j}^2 \quad {\rm for~} 1 \leq i \leq d, 1 \leq j \leq n
\]
satisfies the properties
\[\capa(A) \leq \capa(U) \quad {\rm and} \quad 
\Delta(A) \leq \Delta(U) \quad {\rm and} \quad
s(A) = s(U).
\]
Furthermore, if $\norm{u_j}^2 = d/n$ for $1 \leq j \leq n$, then $c_j(A) = d/n$ for $1 \leq j \leq n$.
\end{lemma}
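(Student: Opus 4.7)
My plan is to specialize the reduction of Proposition~\ref{p:reduction-capacity} to the frame setting, where each $U_i$ in the operator reduction (Definition~\ref{d:reduction}) is supported on a single column. The first observation is that in this setting, we may restrict the infimum in Definition~\ref{d:capacity} to diagonal $X$: since $\sum_i U_i X U_i^T = \sum_i X_{ii} u_i u_i^T$ depends only on the diagonal of $X$, and since $\det(X) \leq \prod_i X_{ii}$ by Hadamard's inequality for PSD matrices, replacing $X$ by $\diag(X_{11},\ldots,X_{nn})$ can only decrease the objective. Hence
\[
\capa(U) = \inf_{x > 0} \frac{d \det\bigl(\sum_{j=1}^n x_j u_j u_j^T\bigr)^{1/d}}{(\prod_{j=1}^n x_j)^{1/n}}.
\]

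Fix $\delta > 0$ and pick a positive approximate minimizer $x = x(\delta)$ so that the above ratio is at most $\capa(U) + \delta$. Diagonalize $\sum_{j=1}^n x_j u_j u_j^T = \sum_{i=1}^d \sigma_i g_i g_i^T$ with $\{g_i\}_{i=1}^d$ an orthonormal eigenbasis, and define the $d \times n$ matrix $A^{(\delta)}$ by $A^{(\delta)}_{ij} = \inner{g_i}{u_j}^2$. The key identity is that for this test vector $x$,
\[
(A^{(\delta)} x)_i = \sum_{j=1}^n x_j \inner{g_i}{u_j}^2 = g_i^T \Bigl(\sum_{j=1}^n x_j u_j u_j^T\Bigr) g_i = \sigma_i,
\]
so that $\prod_i (A^{(\delta)} x)_i = \det(\sum_j x_j u_j u_j^T)$, and plugging $x$ into Definition~\ref{d:matrix-capacity} yields $\capa(A^{(\delta)}) \leq \capa(U) + \delta$. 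The size and column-sum claims are immediate from orthonormality of $\{g_i\}$: $s(A^{(\delta)}) = \sum_{j} \sum_i \inner{g_i}{u_j}^2 = \sum_j \norm{u_j}^2 = s(U)$ and $c_j(A^{(\delta)}) = \norm{u_j}^2$ (giving $c_j = d/n$ whenever $\norm{u_j}^2 = d/n$).

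For $\Delta(A^{(\delta)}) \leq \Delta(U)$, the column contribution already matches exactly because $c_j(A^{(\delta)}) = \norm{u_j}^2$. For the row contribution, writing $M = s\, I_d - d \sum_j u_j u_j^T$, I have $s - d\, r_i(A^{(\delta)}) = g_i^T M g_i$, so
\[
\frac{1}{d}\sum_{i=1}^d (s - d\, r_i(A^{(\delta)}))^2
= \frac{1}{d} \sum_{i=1}^d (g_i^T M g_i)^2
\leq \frac{1}{d} \sum_{i=1}^d g_i^T M^2 g_i
= \frac{1}{d} \tr(M^2),
\]
using $(g_i^T M g_i)^2 \leq g_i^T M^2 g_i$ (Cauchy--Schwarz applied to the symmetric $M$) together with $\sum_i g_i g_i^T = I_d$; comparing with Definition~\ref{d:frame-parameters} gives $\Delta(A^{(\delta)}) \leq \Delta(U)$.

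Finally, since every entry of $A^{(\delta)}$ is bounded by $\max_j \norm{u_j}^2 \leq s(U)$, the family $\{A^{(\delta)}\}$ lies in a compact set and by Bolzano--Weierstrass admits a convergent subsequence $A^{(\delta_l)} \to A$ as $\delta_l \to 0$. The limit $A$ remains of the form $A_{ij} = \inner{g_i}{u_j}^2$ for some orthonormal basis $\{g_i\}$ (by compactness of the orthogonal group applied to the corresponding eigenbases), and the continuity of $\capa$, $\Delta$, $s$, and $c_j$ on their arguments yields $\capa(A) \leq \capa(U)$, $\Delta(A) \leq \Delta(U)$, $s(A) = s(U)$, and the column-sum condition, as required. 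I do not anticipate any serious obstacle here; the only step requiring care is the passage to a limit, which is standard once the uniform boundedness of the entries is observed.
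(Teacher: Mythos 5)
Your proposal is correct and follows essentially the same route as the paper: specialize Proposition~\ref{p:reduction-capacity} to the frame setting, use Hadamard's inequality to restrict the capacity infimum to diagonal $X$, diagonalize $\sum_j x_j u_j u_j^T$ to produce the orthonormal basis $\{g_i\}$, and then pass to a limit over approximate minimizers. The paper phrases the middle steps by referring back to Proposition~\ref{p:reduction-capacity} (with $f_j = e_j$ forced by the diagonal structure) rather than recomputing them, and you are slightly more explicit about the compactness-of-$O(d)$ step needed to keep the limit in the required form $A_{ij}=\inner{g_i}{u_j}^2$, but both proofs hinge on the same two ideas and are substantively identical.
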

\begin{proof}
The proof is mostly similar to the proof of Proposition~\ref{p:reduction-capacity},
and so we focus on the observation that simplifies the matrix.
Recall in Definition~\ref{d:reduction} that we reduce a frame $U=\{u_1, \ldots, u_n\}$ to an operator $\U=\{U_1, \ldots, U_n\}$ where each $U_l$ is a $d \times n$ matrix with the $l$-th column being $u_l$ and other columns being zero.
From Definition~\ref{d:frame-parameters} the capacity of a frame $U$ is defined as
\[
\capa(U) = \capa(\U) = \inf_{X \succeq 0} \frac{d \det(\sum_{l=1}^n U_l X U_l^T)^{1/d}}{\det(X)^{1/n}} = \inf_{X \succeq 0} \frac{d \det(\sum_{l=1}^n X_{ll} u_l u_l^T)^{1/d}}{\det(X)^{1/n}},
\]
where the last equality follows from the specific $\U$ from a frame $U$ as described above.
As in Proposition~\ref{p:reduction-capacity}, we let $X \in \R^{n \times n}$ be an approximate minimizer to this optimization problem such that 
\[
X \succ 0 \quad {\rm and} \quad \frac{d \det(\sum_{l=1}^n X_{ll} u_l u_l^T)^{1/d}}{\det(X)^{1/n}} \leq \capa(\U) + \delta {\rm~for~some~} \delta>0.
\]
The observation is that $X$ can be assumed to be a diagonal matrix.
Consider the $n \times n$ diagonal matrix $\ol{X}$ with $\ol{X}_{ii} = X_{ii}$ for $1 \leq i \leq n$.
We will show that $\ol{X}$ is a feasible solution to the optimization problem with objective value at most that of $X$.
Firstly, as $X \succ 0$, each diagonal entry of $X$ is positive, and thus $\ol{X} \succ 0$.
Secondly, it is clear that the numerators are the same.
Finally, Hadamard's inequality (see~\cite{HJ91} Theorem~7.8.1) says that for a positive definite matrix $Y \in \R^{n \times n}$, we have $\det(Y) \leq \prod_{i=1}^n Y_{ii}$, which implies that $\det(X) \leq \prod_{i=1}^n X_{ii} = \det(\ol{X})$.
Therefore,
\[
\ol{X} \succ 0 \quad {\rm and} \quad 
\frac{d \det(\sum_{l=1}^n \ol{X}_{ll} u_l u_l^T)^{1/d}}{\det(\ol{X})^{1/n}} \leq
\frac{d \det(\sum_{l=1}^n X_{ll} u_l u_l^T)^{1/d}}{\det(X)^{1/n}} \leq \capa(\U) + \delta {\rm~for~some~} \delta>0.
\]
Then, in the proof of Proposition~\ref{p:reduction-capacity}, we consider the eigen-decomposition of $\ol{X}$, and we can thus assume that the orthonormal basis is simply the standard basis, i.e. $f_j = e_j$ for $1 \leq j \leq n$.
The $(i,j)$-th entry of the associated matrix $A \in \R^{d \times n}$ is defined in~(\ref{e:A}) as
\[
A_{ij} = g_i^T \big( \sum_{l=1}^n U_l f_j f_j^T U_l^T \big) g_i
\]
where $g_1, \ldots, g_d$ is an orthonormal basis. 
Using the special structure of $U_l$ and $f_j=e_j$ for $1 \leq j \leq n$,
the $(i,j)$-th entry of the associated matrix thus becomes
\[
A_{ij} = g_i^T \big( \sum_{l=1}^n U_l e_j e_j^T U_l^T \big) g_i 
= g_i^T \big( U_j e_j e_j^T U_j \big) g_i = \inner{g_i}{u_j}^2.
\]
The rest of the proof is the same as the proof of Proposition~\ref{p:reduction-capacity}, and we have $\capa(A) \leq \capa(\U) = \capa(U)$, $\Delta(A) \leq \Delta(\U) = \Delta(U)$ and $s(A) = s(\U) = s(U)$.
Finally, if $\norm{u_j}^2 = d/n$ for $1 \leq j \leq n$, then notice that $\sum_{j=1}^n U_j^T U_j = \frac{d}{n} I_n$.
It follows from the furthermore part in Proposition~\ref{p:reduction-capacity} that $c_j(A) = d/n$ for $1 \leq j \leq n$. 
\end{proof}

\subsubsection*{Proving the Pseudorandom Property}

The main result in this subsection is the following proposition.

\begin{proposition} \label{p:perturb-pseudorandom}
Let $W = \{w_1, \ldots, w_n\}$ where $w_j \in \R^d$ for $1 \leq j \leq n$ be the output of the perturbation process in Procedure~\ref{proc:perturb} when $U=\{u_1, \ldots, u_n\}$ where $u_j \in \R^d$ for $1 \leq j \leq n$ is given as the input with $\norm{u_j}^2 = d/n$ for $1 \leq j \leq n$.
Let $B$ be a $d \times n$ matrix with 
\[B_{ij} = \inner{g_i}{w_j}^2 {\rm~for~} 1 \leq i \leq d {\rm~and~} 1 \leq j \leq n \quad {\rm where~} g_1, \ldots, g_d \in \R^d {\rm~form~an~orthonormal~basis}.
\]
Assume that 
\[\norm{u_j}^2 = \frac{d}{n} {\rm~for~} 1 \leq j \leq n
\quad {\rm and} \quad 
\sigma^2 \leq \frac{1}{1600n}
\quad {\rm and} \quad
n \geq \frac{1600 d^2}{\beta \zeta},
\]
where $\zeta = (\beta/(800e))^{160}$ and $\beta = 10^{-9}$ are absolute constants.
Then $B$ is $(\zeta \sigma^2, \beta)$-pseudorandom with probability at least $0.9$.
\end{proposition}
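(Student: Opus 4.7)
The column condition in the pseudorandom property is essentially free: the postprocessing step forces $\|w_j\|_2^2 = d/n$, so $\sum_{i=1}^d B_{ij} = d/n$ and pigeonhole yields at least one entry of value $\geq 1/n$, which exceeds $\zeta\sigma^2$ since $\sigma^2 \leq 1/(1600n)$. The real work is the row condition. Because the orthonormal basis $g_1,\ldots,g_d$ produced by Lemma~\ref{l:simple} depends on $W$, I will prove the stronger uniform statement: with probability $\geq 0.9$, every unit vector $g \in \R^d$ satisfies $|\{j : \langle g, w_j\rangle^2 < \zeta\sigma^2\}| \leq \beta n$. Using $\langle u_j, z_j\rangle = 0$ (a property of the first subspace $L_1$) together with a high-probability bound $\|z_j\|_2^2 \leq d/n$ for every $j$ (from Fact~\ref{f:concentration} applied to $\|z_j\|^2$), one has $\|v_j\|_2^2 \in [d/n, 2d/n]$, and this reduces the claim to: uniformly in $g$, $|\{j : \langle g, v_j\rangle^2 < 2\zeta\sigma^2\}| \leq \beta n$.

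\textbf{Fixed-direction bound via the first subspace only.} For a fixed unit $g$, write $\langle g,v_j\rangle = \langle g,u_j\rangle + \langle g,z_j\rangle$. The $z_j$'s are correlated across $j$ because of the second subspace $L_2$, so I first pass to the simpler $y_j = P_{L_1}x_j$, whose blocks are independent Gaussians with $\langle g, y_j\rangle \sim N(0,\sigma^2(1-(n/d)\langle g,u_j\rangle^2))$. Near-Parseval of $U$ (which holds because $\Delta(U) \leq 1$) gives $\sum_j \langle g,u_j\rangle^2 \leq 1 + o(1)$, so the set $S_g = \{j : \langle g,u_j\rangle^2 \leq d/(2n)\}$ has size at least $(1-\beta/8)n$, and on $S_g$ the variance of $\langle g,y_j\rangle$ is at least $\sigma^2/2$. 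The one-dimensional Gaussian small-ball inequality then gives, for each $j \in S_g$, $\P(\langle g, u_j\rangle + \langle g, y_j\rangle)^2 < 2\zeta\sigma^2) \leq c\sqrt{\zeta}$, which is far less than $\beta/8$ by the extravagantly small choice of $\zeta$. Independence of the $y_j$'s across $j \in S_g$ then lets Chernoff (Fact~\ref{f:concentration}) show that for this fixed $g$, the number of violations among $\{y_j\}$ is at most $\beta n/4$ with probability $1 - \exp(-\Omega(\beta n))$.

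\textbf{From $y$ to $z$, and uniformity in $g$.} The difference $y - z$ lies in a subspace of dimension at most $d^2$, so $\E\sum_j \|y_j - z_j\|_2^2 \leq \sigma^2 d^2$, and by Markov the number of $j$ with $\|y_j - z_j\|_2^2 > \zeta\sigma^2$ is at most $d^2/\zeta$ with probability $\geq 0.99$; by the hypothesis $n \geq 1600 d^2/(\beta\zeta)$, this is at most $\beta n/8$, which absorbs the $y \to z$ error into the bad-$j$ budget. To make the bound uniform in $g$, I put an $\eta$-net $\mathcal{N}$ on the unit sphere of $\R^d$ of size at most $(3/\eta)^d$ with $\eta = \frac{1}{10}\sqrt{\zeta\sigma^2 n/d}$, so that $\|g - g'\|_2 \leq \eta$ implies $|\langle g,v_j\rangle^2 - \langle g',v_j\rangle^2| < \zeta\sigma^2$ for every $j$ with $\|v_j\|_2^2 \leq 2d/n$; thus the bad set for an arbitrary $g$ is contained (up to $\beta n/8$ further slack) in the bad set for its nearest net point. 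A union bound of the per-direction failure probability $\exp(-\Omega(\beta n))$ over $\mathcal{N}$ succeeds provided $d\log(1/\eta) \ll \beta n$, which is exactly what the hypothesis on $n$ ensures. Combining the three failure events yields total failure probability $\leq 0.1$ and proves the proposition.

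\textbf{Main obstacle.} The delicate step is the $y \to z$ transfer coupled with uniformity in $g$: the rank-$d^2$ correction produced by the constraint $\sum_j u_j z_j^T = 0$ has to be controlled simultaneously for all directions, and this is precisely what forces the hypothesis $n \gg d^2/(\beta\zeta)$. The tiny value $\zeta = (\beta/(800e))^{160}$ is tuned so that the $O(\sqrt\zeta)$ loss from Gaussian small-ball still leaves room for a Chernoff-level contribution and a net union bound. Noting that a naive perturbation (independent Gaussian noise per coordinate, without projecting into $L_1 \cap L_2$) would give cleaner independence but destroy the bound on $\Delta(W)$ from Proposition~\ref{p:perturb-Delta}, the design of the perturbation is precisely what lets the two required properties hold simultaneously, and this proposition is the other half of that trade-off.
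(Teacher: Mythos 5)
Your overall plan agrees with the paper's: settle the column condition trivially via $\sum_i B_{ij} = \|w_j\|^2 = d/n$, prove the stronger uniform row statement over all unit $g \in \R^d$, first control the uncorrelated $L_1$-noise $y_j$ and then charge the $L_2$-projection error to a rank-$d^2$ correction, and finish with an $\epsilon$-net over the sphere. The concentration mechanism you choose (scalar Gaussian anti-concentration per entry, plus a Chernoff count over the independent $y_j$'s) is a legitimate substitute for the paper's route (which projects the $|S|$-dimensional vector $y^g$ onto a codimension-$d$ subspace and applies the multivariate small-ball bound of Fact~\ref{f:concentration}(3)). However, there are two genuine gaps in your argument, both stemming from not separating the deterministic $\langle g,u_j\rangle$ contribution from the random $\langle g,y_j\rangle$ contribution.

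\textbf{Gap 1: the set $S_g$ is not as large as claimed.} You define $S_g = \{j : \langle g,u_j\rangle^2 \leq d/(2n)\}$ and assert $|S_g| \geq (1-\beta/8)n$. From $\sum_j \langle g,u_j\rangle^2 \lesssim 1$ (near-Parseval) Markov gives only $|S_g^c| \lesssim 2n/d$, and for this to be $\leq \beta n/8$ you would need $d \gtrsim 16/\beta = 1.6\times 10^{10}$, which is not an assumption of the proposition. For realistic $d$, the $\approx 2n/d$ indices in $S_g^c$ (where the variance of $\langle g,y_j\rangle$ may be arbitrarily small) dwarf the $\beta n$ budget. The paper's resolution is the complementary observation: whenever $|\langle g,u_j\rangle| \geq \|u_j\|/2$, the mean already dominates, so $|\langle g,u_j+y_j\rangle| \geq \frac12\|u_j\| - \|y_j\| \gg \sqrt{\zeta}\sigma$ provided $\|y_j\| \leq 10\sqrt{d}\sigma$ (using $\sigma^2 \leq 1/(1600 n)$), and by Chernoff only $\leq \beta n/8$ indices can have $\|y_j\|$ larger. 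You need to include this observation; without it, $S_g^c$ is a hole in the budget.

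\textbf{Gap 2: your net resolution depends on $\sigma$, so the union bound can fail for small $\sigma$.} You set $\eta = \frac{1}{10}\sqrt{\zeta\sigma^2 n/d}$ so that moving $g$ by $\eta$ changes $\langle g,v_j\rangle$ by at most $\eta\|v_j\| \lesssim \sqrt{\zeta}\sigma$. But then $\log(1/\eta) \gtrsim \frac12\log\frac{1}{\sigma^2 n}$, which is unbounded as $\sigma \to 0$, and the hypothesis $n \geq 1600 d^2/(\beta\zeta)$ gives no lower bound on $\sigma$. So $d\log(1/\eta) \ll \beta n$ is not guaranteed, and the union bound breaks. The reason your net has this dependence is that you compare $\langle g,v_j\rangle$ directly, whose Lipschitz constant $\|v_j\| = \Theta(\sqrt{d/n})$ is $\sigma$-independent while the small-ball radius $\sqrt{\zeta}\sigma$ is $\sigma$-dependent. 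The paper avoids this by running the net argument on $\langle g,y_j\rangle$ (equivalently, on $\|P_{L^\perp}y^g\|$), where the Lipschitz constant is $\|y_j\| \lesssim \sqrt{d}\sigma$ and the radius is $\sqrt{\zeta|S|}\sigma$, so $\sigma$ cancels and the net resolution is $\frac{1}{10}\sqrt{\zeta/d}$, independent of $\sigma$. The condition $|\langle g,u_j\rangle| \leq \frac12\|u_j\|$ from Gap~1 only needs to survive a change of $g$ at resolution $\Theta(1)$, so it imposes no extra constraint. In short, the same separation of $\langle g,u_j\rangle$ from $\langle g,y_j\rangle$ that closes Gap~1 is also what gives a $\sigma$-independent net, and your proposal does not perform it.

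With those two fixes your approach (scalar small-ball + Chernoff over the independent $y_j$'s + a combined net over the $\langle g,u_j\rangle$-size condition and the Gaussian deviation) would be a valid alternative to the paper's multivariate-small-ball proof. As written, the argument fails for moderate $d$ and for small $\sigma$.
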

\begin{proof}
Recall from Definition~\ref{d:pseudorandom} for a matrix to be $(\alpha,\beta)$-pseudorandom.
The first property is easy to check.
The $j$-th column sum of $B$ is
\[c_j(B) = \sum_{i=1}^d \inner{g_i}{c_j}^2 = \norm{c_j}^2 = \frac{d}{n} \geq d \sigma^2,
\]
where the second equality is because $g_1, \ldots, g_d$ is an orthonormal basis, the third equality is by the postprocessing step in the perturbation process, and the inequality is by our assumption that $\sigma^2 \leq 1/n$.
Therefore, there must be an entry with value at least $\sigma^2$, satisfying the first property in Definition~\ref{d:pseudorandom}.

For the second property in Definition~\ref{d:pseudorandom},
we will prove that with probability at least $0.99$ over the perturbation process,
for every row $1 \leq i \leq d$, there are at most $\beta n$ entries with $\inner{g_i}{w_j}^2 < \zeta \sigma^2$.
We will establish this by proving a stronger statement that
with probability at least $0.99$ over the perturbation process,
for any unit vector $g \in \R^d$, there are at most $\beta n$ entries with $\inner{g}{w_j}^2 < \zeta \sigma^2$.

To do this, we follow the perturbation process and keep track of the pseudorandom property.
Recall that in the perturbation process, we first generate some correlated Gaussian noise $z_1, \ldots, z_n \in \R^d$ and set $v_j = u_j + z_j$ for $1 \leq j \leq n$, and then we rescale each $v_j$ to form $w_i$. 
In Lemma~\ref{l:perturb-v}, we will prove that with probability at least $0.99$ over the perturbation process, for any unit vector $g \in \R^d$, there are at most $\beta n / 2$ entries with $\inner{g}{v_j}^2 < 2\zeta \sigma^2$.

Assuming Lemma~\ref{l:perturb-v}, we then consider the rescaling step.
Since $w_j = \sqrt{\frac{d}{n}} \frac{v_j}{\norm{v_j}}$ for $1 \leq j \leq n$,
we have 
\[\inner{g}{w_j}^2 = \frac{d}{n} \norm{v_j}^{-2} \inner{g}{v_j}^2 \quad {\rm for~} 1 \leq j \leq n.\]
If we can prove that $\norm{v_j}^2 \leq 2d/n$ for at most $\beta n / 2$ entries with probability at least $0.99$, then this implies the lemma that there are at most $\beta n$ entries with $\inner{g}{w_j}^2 < \zeta \sigma^2$ with probability at least $0.9$.
Note that
\[
\norm{v_j}^2 = \norm{u_j + z_j}^2 = \norm{u_j}^2 + 2\inner{u_j}{z_j} + \norm{z_j}^2 = \norm{u_j}^2 + \norm{z_j}^2,
\]
where the last equality is by the first subspace in the perturbation process that $\inner{u_j}{z_j}=0$ for all $1 \leq j \leq n$.
Note that $\norm{z_j}^2 \leq \norm{x_j}^2$ where $x_j \in \R^d$ with each entry sampled from $N(0,\sigma^2)$, as $z_j$ is a projection of $x_j$ in the perturbation process.  
By our assumption, $\E \norm{x_j}^2 = d\sigma^2 \leq d/ (100n)$.
By Fact~\ref{f:concentration}(2), $\P(\norm{x_j}^2 \geq d/n) \leq e^{-100}$,
and therefore the expected number of entries with $\norm{x_j}^2 \leq d/n$ is at most $e^{-100} n \ll \beta n / 2$.
By Chernoff bound, there are at most $\beta n / 2$ entries with $\norm{x_j}^2 > d/n$ with probability at least $0.99$,
and therefore $\norm{v_j}^2 \leq 2d/n$ for at most $\beta n / 2$ entries with probability at least $0.99$.
\end{proof}

To finish the proof of Proposition~\ref{p:perturb-pseudorandom},
it remains to prove Lemma~\ref{l:perturb-v} which is the main work in this subsection.
To prove Lemma~\ref{l:perturb-v},
we first consider $\v_j = u_j + y_j$ and prove in Lemma~\ref{l:perturb-L1} that for any unit vector $g \in \R^d$, there are a small number of entries with small $\inner{g}{\v_j}^2$, and then we use it to prove Lemma~\ref{l:perturb-v}.
The idea is to use the special property of the subspace $L_1$ in the perturbation process to prove Lemma~\ref{l:perturb-L1}, and then we use that $L_2$ is a subspace with codimension $d^2$ to argue that there are at most $O(d^2)$ more small entries in $\inner{g}{v_j}^2$.

The core of the technical argument in this subsection is in the following lemma.

\begin{lemma} \label{l:perturb-L1}
Assume $\sigma^2 \leq 1/(1600n)$ and $n \geq 4(d \log d)/(\beta \zeta)$ and $\zeta \leq (\beta/(800e))^{160}$.
The probability over the perturbation process 
that there exists a unit vector $g \in \R^d$ 
and a subset $S \subseteq [n]$ with $|S| \geq \beta n / 4$ such that
\[
\inner{g}{u_j+y_j}^2 \leq 4\zeta \sigma^2 \quad {\rm for~all~} j \in S
\]
is at most $0.001$.
\end{lemma}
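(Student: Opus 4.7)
The plan is to combine a Gaussian anti-concentration bound for a fixed direction, independence of the $y_j$ across $j$, a binomial tail bound, and an $\epsilon$-net argument on the unit sphere $S^{d-1} \subseteq \R^d$. First, I fix a unit vector $g \in S^{d-1}$. Since $y_j \sim N\bigl(0, \sigma^2 (I_d - \tfrac{n}{d} u_j u_j^T)\bigr)$ (using $\norm{u_j}^2 = d/n$), the scalar $\inner{g}{u_j + y_j}$ is Gaussian with mean $\inner{g}{u_j}$ and variance $\sigma^2 \bigl(1 - \tfrac{n}{d} \inner{g}{u_j}^2\bigr)$. I split into two cases. If $\tfrac{n}{d} \inner{g}{u_j}^2 \leq 1/2$, the variance is at least $\sigma^2/2$, and the Gaussian density bound gives $\Pr[\,|\inner{g}{u_j + y_j}| \leq 2\sqrt{\zeta}\sigma\,] \leq O(\sqrt{\zeta})$. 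Otherwise $|\inner{g}{u_j}| \geq \sqrt{d/(2n)}$, which is $\gg 2\sqrt{\zeta}\,\sigma$ under $\sigma^2 \leq 1/(1600n)$ and the tiny $\zeta$, so the event would require the Gaussian to cancel a mean of order $\sqrt{d/n}$, an event of exponentially small probability. Write $p$ for the resulting uniform upper bound, with $p = O(\sqrt{\zeta})$.

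Second, since $x_1, \ldots, x_n$ are independent, so are $y_1, \ldots, y_n$, and hence the events $E_j := \{\inner{g}{u_j + y_j}^2 \leq 4\zeta \sigma^2\}$ are independent in $j$. The standard binomial tail bound yields
\[
\Pr\bigl[\,|\{j : E_j\}| \geq \beta n/4\,\bigr] \;\leq\; \binom{n}{\beta n/4}\, p^{\beta n/4} \;\leq\; \bigl(4ep/\beta\bigr)^{\beta n/4}.
\]
Third, I promote to a uniform statement over $S^{d-1}$ by a standard $\epsilon$-net $\mathcal{N}$ with $|\mathcal{N}| \leq (3/\epsilon)^d$. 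For $g, g' \in S^{d-1}$ with $\norm{g-g'} \leq \epsilon$ one has $|\inner{g-g'}{u_j+y_j}| \leq \epsilon \norm{u_j + y_j}$. On the high-probability event $\mathcal{A}$ that $\norm{u_j + y_j}^2 \leq 4d/n$ for every $j$ (which follows from chi-squared tail bounds and a union bound, with probability at least $0.999$), a failure of the lemma at threshold $4\zeta\sigma^2$ for some unit $g$ implies a failure at threshold $16 \zeta \sigma^2$ for some $g' \in \mathcal{N}$, provided $\epsilon$ is chosen of order $\sqrt{\zeta n \sigma^2 / d}$.

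Finally, a union bound over $\mathcal{N}$ gives
\[
\Pr[\text{lemma fails}] \;\leq\; \Pr[\mathcal{A}^c] + |\mathcal{N}| \cdot \bigl(4ep/\beta\bigr)^{\beta n/4} \;\leq\; 0.001 + \Bigl(\tfrac{3}{\epsilon}\Bigr)^d \bigl(4ep/\beta\bigr)^{\beta n/4}.
\]
The assumption $\zeta \leq (\beta/(800e))^{160}$ is tuned so that $p \leq O(\sqrt{\zeta}) \leq (\beta/(800e))^{80}$, which makes each factor of $(4ep/\beta)$ very small; the assumption $n \geq 4(d \log d)/(\beta \zeta)$ then ensures that the resulting negative exponent in $n$ overwhelms the net cardinality $(C\sqrt{d/(\zeta n \sigma^2)})^d$, driving the total failure probability below $0.001$. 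The main obstacle will be the case analysis for fixed $g$ when $g$ is nearly aligned with $u_j$, where the Gaussian density bound is vacuous and one must rely on the large deterministic mean $\inner{g}{u_j}$; and carefully tracking constants so that the net factor $(C/\epsilon)^d$ is dominated by the binomial tail, which is precisely the tension that forces the unusual exponent $160$ in the hypothesis on $\zeta$.
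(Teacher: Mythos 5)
Your proposal replaces the paper's projection argument with a more elementary one for the fixed-direction part: you exploit independence of the $y_j$ (which does hold, since $y_j$ is a function of $x_j$ alone) and a per-coordinate Gaussian anti-concentration bound $\Pr[E_j] \leq O(\sqrt{\zeta})$, then apply a binomial tail. This is correct and in fact gives a slightly sharper per-subset bound than the paper's $(100\zeta)^{|S|/20}$, which is obtained via a lower-tail estimate for the quadratic form $\norm{P_{L^\perp}y^g}_2^2$. Your case split on $\tfrac{n}{d}\inner{g}{u_j}^2$ is also fine (the paper's analogue is the condition $|\inner{g}{u_j}| \leq \norm{u_j}/2$).

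The genuine gap is in the $\epsilon$-net step, and it is not a bookkeeping issue. You choose $\epsilon$ of order $\sqrt{\zeta n\sigma^2/d}$, because the Lipschitz constant of $g \mapsto \inner{g}{u_j + y_j}$ is governed by $\norm{u_j + y_j} \approx \sqrt{d/n}$ while the threshold is $2\sqrt{\zeta}\sigma$. These two scales do not both carry a factor of $\sigma$: the first is set by the unperturbed vectors $u_j$, the second by the noise level. Hence $\epsilon \propto \sigma$, so the net cardinality $(3/\epsilon)^d \sim (C\sqrt{d/(\zeta n\sigma^2)})^d$ diverges as $\sigma^2 \to 0$. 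The hypotheses only impose $\sigma^2 \leq 1/(1600n)$ and give no lower bound on $\sigma^2$, so the claim that $n \geq 4(d\log d)/(\beta\zeta)$ makes the binomial tail overwhelm the net cardinality fails: for $\sigma^2$ small enough (e.g.\ $\sigma^2 = e^{-n}$, perfectly allowed) the net factor is $\exp(\Omega(dn))$ while the binomial term only contributes $\exp(-O(\beta n))$, and $\beta \ll d$. This cannot be repaired by a finer accounting of which $j$ are lost under perturbation of $g$: the culprit is the deterministic contribution $\inner{g'-g}{u_j}$, whose size $\norm{g'-g}\sqrt{d/n}$ must be compared against $\sqrt{\zeta}\sigma$, forcing $\norm{g'-g} \lesssim \sigma\sqrt{\zeta n/d}$. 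The paper avoids this precisely by projecting onto $L^\perp$, where $L$ is the ($\leq d$)-dimensional column space of $U_S^T$: the projected statistic $\norm{P_{L^\perp}y^g}_2$ has $P_{L^\perp}\gamma = 0$, so both its Lipschitz constant ($\approx 10\sqrt{d|S|}\sigma\norm{\delta}$) and its threshold ($\approx \sqrt{\zeta|S|}\sigma$) scale with $\sigma$, and a net of radius $\tfrac{1}{10}\sqrt{\zeta/d}$ — independent of $\sigma$ — suffices. This is precisely the second use of the projection trick, which your proposal discards.

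A secondary, easily repaired issue: you define $\mathcal{A}$ as the event that $\norm{u_j+y_j}^2 \leq 4d/n$ \emph{for every} $j$ and bound $\Pr[\mathcal{A}^c]$ by a union bound of size $n$. The per-$j$ probability is $e^{-\Theta(d)}$ with a large constant, so $n e^{-\Theta(d)}$ can exceed $0.001$ for large $n$ at fixed $d$. The paper instead only asks that at most $\beta n/8$ indices violate the bound, and proves this by a Chernoff estimate on the count, which is uniform in $n$. You would want to do the same, and it meshes cleanly with the ``lose at most $\beta n/8$ coordinates'' bookkeeping elsewhere in the argument. But that fix does not save the net step above, which is the real obstruction.
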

\begin{proof}
Our plan is to use the concentration inequality in Fact~\ref{f:concentration}(3) to prove that the probability is small for a given $g \in \R^d$ and a given large subset $S$,
and then we apply a union bound on an epsilon-net in $\R^d$ and all large subsets $S$.

We first assume that $\norm{y_j}^2 \geq 100 d \sigma^2$ for at most $\beta n / 8$ vectors.
The reason is as follows.
Note that $\norm{y_j}^2 \leq \norm{x_j}^2$ where $x_j \in \R^d$ with each entry independently sampled from $N(0,\sigma^2)$, as $y_j$ is a projection of $x_j$ in the perturbation process.  
By definition, $\E \norm{x_j}^2 = d\sigma^2$.
By Fact~\ref{f:concentration}(2), $\P(\norm{x_j}^2 \geq 100d\sigma^2) \leq e^{-100}$,
and therefore the expected number of vectors with $\norm{x_j}^2 \geq 100d\sigma^2$ is at most $e^{-100} n \ll \beta n / 8$.
By Chernoff bound, there are at most $\beta n / 8$ vectors with $\norm{x_j}^2 \geq 100d \sigma^2$ with probability at least $0.9999$, and hence there are at most $\beta n / 8$ vectors with $\norm{y_j}^2 \geq 100d\sigma^2$ with probability at least $0.9999$.
In the following, we condition on this event.

For an entry with $|\inner{g}{u_j}| \geq \norm{u_j}/2$ and $\norm{y_i} \leq 10\sqrt{d} \sigma$,
we have 
\[
|\inner{g}{u_j + y_j}| 
\geq |\inner{g}{u_j}| - |\inner{g}{y_j}|
\geq \frac12 \norm{u_j} - \norm{y_j}
\geq \frac12 \norm{u_j} - 10\sqrt{d} \sigma
\geq 10\sqrt{d} \sigma 
\gg 2\sqrt{\zeta} \sigma,
\]
where the second last inequality is by our assumptions that 
$\norm{u_j}^2=d/n \geq 1600d\sigma^2$.
So, for an entry with $|\inner{g}{u_j}| \geq \norm{u_j}/2$,
it will be too small after the perturbation process only if $\norm{y_i} \geq 10\sqrt{d} \sigma$.
Therefore, by the above assumption, there are at most $\beta n / 8$ entries with $|\inner{g}{u_j}| \geq \norm{u_j}/2$ that could become too small after the perturbation process.

Hence, we can restrict our attention to those entries with $|\inner{g}{u_j}| \leq \norm{u_j}/2$.
The following claim contains the key concentration argument.

\begin{claim} \label{c:one-subset}
Assume that $n \geq 2d / \beta$.
For a unit vector $g$ and a subset $S \subseteq [n]$ with $|S| \geq \beta n$ with $|\inner{g}{u_j}| \leq 3\norm{u_j}/4$ for all $j \in S$,
\[
\P_y[ \inner{g}{u_j + y_j}^2 \leq  16\zeta \sigma^2 {\rm~for~all~} j \in S] \leq (100 \zeta)^{|S|/20}.
\]
\end{claim}
\begin{proof}
We will reduce our setting to applying Fact~\ref{f:concentration}(3).
Assume without loss of generality that $\inner{g}{u_j+y_j}^2 \leq 16 \zeta \sigma^2$ for $1 \leq j \leq |S|$.
Let $\gamma \in \R^{|S|}$ be the vector with the $j$-th entry being $\gamma_j = \inner{g}{u_j}$.
Let $y^g \in \R^{|S|}$ be the vector with the $j$-th entry being $y^g_j = \inner{g}{y_j}$.
Let $h \in \R^{|S|}$ be the vector with the $j$-th entry being
\[
h_j = \inner{g}{u_j + y_j} = \inner{g}{u_j} + \inner{g}{y_j} = \gamma_j + y^g_j.
\]
Note that $\gamma$ is in a fixed $d$-dimensional subspace $L \subseteq \R^{|S|}$, where $\gamma = U_S^T g$ where $U_S$ is the $d \times |S|$ matrix with the $j$-th column being $u_j$.
Let $L^{\perp}$ being the orthogonal subspace of $L$ in $\R^{|S|}$, so the dimension of $L^{\perp}$ is at least $|S|-d \geq |S|/2$ by our assumptions $|S| \geq \beta n$ and $n \geq 2d/\beta$.
Let $P_{L^{\perp}}$ be the projector matrix to the subspace $L^{\perp}$.
Our assumption is equivalent to 
\[\norm{h}_{\infty} \leq 4\sqrt{\zeta} \sigma,\]
which implies that 
\[
\norm{P_{L^{\perp}} y^g}_2 
= \norm{P_{L^{\perp}} h}_2 
\leq \norm{h}_2
\leq \sqrt{|S|} \cdot \norm{h}_{\infty}
= 4\sqrt{\zeta |S|} \sigma. 
\]

The intuition is that a random Gaussian vector $y^g$ will have a large projection to a high dimensional subspace with high probability,
and we will make it precise using Fact~\ref{f:concentration}(3).

To apply Fact~\ref{f:concentration}(3), the Gaussian vector should have independent entries, and so we write $y^g$ in terms of the independent Gaussian vector $x$ in the perturbation process.
In the following, think of a $(d \times |S|)$-dimensional vector as having $|S|$ blocks of $d$ entries.
Let $y_S$ be the $(d \times |S|)$-dimensional vector with $y_j \in \R^d$ in the $j$-th block for $1 \leq j \leq |S|$, where $y_j$ is generated in the perturbation process.
Let $G$ be the $|S| \times (d \times |S|)$ matrix with the $i$-th row being the $(d \times |S|)$-dimensional vector with $g \in \R^d$ in the $i$-th block while other blocks are zero.
Then we can write 
\[y^g = G y_S.\]
Let $x_S$ be the $(d \times |S|)$-dimensional vector with $x_j \in \R^d$ in the $j$-th block for $1 \leq j \leq |S|$, where $x_j$ is generated in the perturbation process.
Recall that $y_S$ is the projection of $x_S$ into the first subspace $L_1$ such that $\inner{y_j}{u_j}=0$.
The projection matrix can be written explicitly as a $(d \times |S|) \times (d \times |S|)$ matrix $P$ where the $(j,j)$-th diagonal block of $P$ being
\[I_d - \frac{u_j u_j^T}{\norm{u_j}^2}\]
and all other blocks being zero. 
Then we can write
\[y_S = P x_S \quad \implies \quad
y^g = GPx_S,
\]
where $GP$ is the $|S| \times (d \times |S|)$ matrix with the $i$-th row being the $(d \times |S|)$-dimensional vector with the $i$-th block being
\[
g^T (I - \frac{u_iu_i^T}{\norm{u_i}^2}).
\]
So, we can finally write 
\[
\norm{P_{L^{\perp}} y^g}_2 = \norm{P_{L^{\perp}} GP x_S}_2,
\]
where $x_S$ is a vector with independent Gaussian variable with mean zero and variance $\sigma^2$.

Now, we can apply Fact~\ref{f:concentration}(3) to bound the probability that the right hand side is small.
To do so, we need to compute 
\[\tr((GP)^T P_{L^{\perp}} (GP))
= \tr(P_{L^{\perp}} (GP) (GP)^T).
\]
Note that $D:=(GP) (GP)^T$ is an $|S| \times |S|$ diagonal matrix with 
\[D_{ii} 
= g^T(I-\frac{u_iu_i^T}{\norm{u_i}^2})(I-\frac{u_iu_i^T}{\norm{u_i}^2})g 
= \norm{g}^2 - \frac{\inner{g}{u_i}^2}{\norm{u_i}^2} 
\geq \norm{g_i}^2 - \frac{9}{16} = \frac{7}{16},
\]
where the inequality is by our assumption that $|\inner{g}{u_i}| \leq 3\norm{u_i}/4$ for all $i \in S$.
Therefore,
\[
\tr((GP)^T P_{L^{\perp}} (GP)) 
= \tr(P_{L^{\perp}} D) 
\geq \frac{7}{16} \tr(P_{L^{\perp}})
= \frac{7}{16} \rank(P_{L^{\perp}})
\geq \frac{7}{16} (|S|-d)
\geq \frac{7}{32} |S|,
\]
where the last equality is by Fact~\ref{f:projection}.
Since $0 \preceq (GP)^T P_{L^{\perp}} (GP) \preceq I$, 
we can apply Fact~\ref{f:concentration}(3) and get that the probability of the event that
\[
\norm{P_{L^{\perp}} GP x_S}^2_2 = \norm{P_{L^{\perp}} y^g}^2_2 \leq 16\zeta |S| \sigma^2
\]
is at most
\[
\frac{16 \zeta |S|}{\tr\big((GP)^T P_{L^{\perp}} (GP)\big)}^{\frac14 \tr\big((GP)^T P_{L^{\perp}} (GP)\big)}
\leq \big(\frac{512}{7} \zeta\big)^{\frac{7}{128} |S|}.
\]
We conclude that 
\[
\P_y\Big[\inner{g}{u_j + y_j}^2 \leq 16\zeta \sigma^2 {\rm~for~all~} j \in S\Big] 
= \P_y\Big[ \norm{h}^2_{\infty} \leq 16\zeta \sigma^2\Big]
\leq \P_y\Big[ \norm{P_{L^{\perp}} y^g}^2_2  \leq 16\zeta |S| \sigma^2\Big]
\leq (100 \zeta)^{|S|/20}.
\]
\end{proof}

With the claim,
we apply a standard union bound on an epsilon-net in the $d$-dimensional space and all subsets of size at least $\beta n /8$ to prove the following.

\begin{claim} \label{c:epsilon-net}
Assume $\sigma^2 \leq 1/(1600n)$ and $n \geq 4(d\log d)/(\beta \zeta)$ and $\zeta \leq (\beta/(800e))^{160}$.
After generating $y_1, \ldots, y_n \in \R^d$ as described in the perturbation process,
the probability that there exists a unit vector $g \in \R^d$ and a subset $S \subseteq [n]$ with 
\[|S| \geq \beta n /8, 
\quad |\inner{g}{u_j}| \leq \norm{u_j}/2 {\rm~for~all~} j \in S,
\quad \norm{y_i}^2 \leq 100d\sigma^2 {\rm~for~all~} j \in S
\]
such that
\[
\inner{g}{u_j + y_j}^2 \leq 4 \zeta \sigma^2 {\rm~for~all~} j \in S
\]
is at most $\exp(-\beta n / 4)$.
\end{claim}
\begin{proof}
We use the same notation as defined in Claim~\ref{c:one-subset},
and use that
\[
\P_y\Big[ \inner{g}{u_j + y_j}^2 \leq 4 \zeta \sigma^2 {\rm~for~all~} j \in S\Big]
\leq \P_y\Big[ \norm{P_{L^{\perp}} y^g}^2_2  \leq 4 \zeta |S| \sigma^2\Big] 
\]
To apply an epsilon-net argument, we check how quickly the quantity $\norm{P_{L^{\perp}} y^g}^2$ changes under some small change of $g$.
Let $\tilde{g} = g + \delta$ where $\delta \in \R^d$.
Recall that the $i$-th entry of $y^g = \inner{g}{y_i}$,
and so the $i$-th entry of $y^{\tilde{g}} = \inner{g+\delta}{y_i} = \inner{g}{y_i} + \inner{\delta}{y_i}$,
and so
\begin{align*}
\norm{P_{L^{\perp}}y^{\tilde{g}}}_2 - \norm{P_{L^{\perp}} y^g}_2
& = \norm{P_{L^{\perp}}y^{g} + P_{L^{\perp}}y^{\delta}}_2 - \norm{P_{L^{\perp}} y^g}_2
\leq \norm{P_{L^{\perp}}y^{\delta}}_2
\\
& \leq \norm{y^{\delta}}_2 
= \sqrt{ \sum_{i=1}^{|S|} \inner{\delta}{y_i}^2 }
\leq \sqrt{ \sum_{i=1}^{|S|} \norm{\delta}^2 \norm{y_i}^2 }
\leq 10\sqrt{d|S|} \sigma \norm{\delta},
\end{align*}
and also
\[
\Big| |\inner{g+\delta}{u_i}| - |\inner{g}{u_i}|\Big|
\leq |\inner{\delta}{u_i}| \leq \norm{\delta} \norm{u_i}.
\]
Let $N$ is an $\frac{1}{10} \sqrt{\frac{\zeta}{d}}$-net of the unit sphere in $\R^d$.
Then, if there exists any unit vector $g \in \R^d$ such that
\[
\norm{P_{L^{\perp}} y^g}_2  \leq 2\sqrt{\zeta |S|} \sigma \quad {\rm and} \quad
|\inner{g}{u_i}| \leq \frac{1}{2} \norm{u_i},
\]
then there exists a unit vector $\tilde{g}$ in $N$ such that
\[
\norm{P_{L^{\perp}} y^{\tilde{g}}}_2  \leq 4\sqrt{\zeta |S|} \sigma \quad {\rm and} \quad
|\inner{\tilde{g}}{u_i}| \leq \frac{3}{4} \norm{u_i}.
\]
It is known that (e.g. see~\cite{V12}) that the net $N$ has size
\[
|N| \leq \Big(1+ \frac{2}{\frac{1}{10} \sqrt{\frac{\zeta}{d}}}\Big)^d
\]
By a union bound on $N$ and $S$ with Claim~\ref{c:one-subset}, 
the probability that we want to bound in this claim is at most
\begin{align*}
\sum_{|S|=\beta n/8}~\sum_{\tilde{g} \in N} \P_y\Big[ \norm{P_{L^{\perp}} y^{\tilde{g}}}^2_2  \leq 16\zeta |S| \sigma^2\Big]
& \leq \binom{n}{\beta n / 8} \Big(1+ \frac{2}{\frac{1}{10} \sqrt{\frac{\zeta}{d}}}\Big)^d
(100\zeta)^{\beta n/160}
\\
& \leq \Big( \frac{ne}{\beta n / 8} \Big)^{\beta n /2} 
\exp\Big( d \log\big(1 + 20\sqrt{\frac{d}{\zeta}}\big) + \frac{\beta n}{160} \log(100 \zeta)\Big)
\\
& = \exp\Big( \frac{\beta n}{2} \log(\frac{8e}{\beta}) + d \log\big(1 + 20\sqrt{\frac{d}{\zeta}}\big) - \frac{\beta n}{160} \log(\frac{1}{100 \zeta})\Big)
\end{align*}
If we set $\zeta \leq (\frac{\beta}{800e})^{160}$, then 
\[
\sum_{|S|=\beta n/8} \sum_{\tilde{g} \in N} \P_y\Big[ \norm{P_{L^{\perp}} y^g}^2_2  \leq 16\zeta |S| \sigma^2\Big]
\leq \exp\Big( d \log\big(1 + 20\sqrt{\frac{d}{\zeta}}\big) - \frac{\beta n}{2} \log(\frac{8e}{\beta})\Big)
\leq \exp(-\beta n/ 4),
\]
where the final inequality uses that $n \geq 4 (d \log d) / (\beta \zeta)$.
\end{proof}

Continuing the proof of Lemma~\ref{l:perturb-L1},
with probability at least $0.9999$, we have $\norm{y_j}^2 \geq 100d\sigma^2$ for at most $\beta n / 8$ vectors.
We condition on this event and assume that in the worst case that all the entries in those columns become very small.
Hence, for each unit vector $g \in \R^d$, we restrict our attention to those columns with $\norm{y_j}^2 \leq 100d\sigma^2$.
As explained before Claim~\ref{c:one-subset}, for the entries with $|\inner{g}{u_j}| \geq \norm{u_j}/2$, we always have $\inner{g}{u_j+y_j}^2 \gg 4\zeta \sigma^2$ if $\norm{y_j}^2 \leq 100 d \sigma^2$. 
So, we further restrict our attention to those entries with $|\inner{g}{u_j}| \leq \norm{u_j}/2$.
Now, we can apply Claim~\ref{c:epsilon-net} to prove that with probability at least $1-\exp(-\beta n / 4)$, every unit vector $g \in \R^d$ has at most $\beta n / 8$ such entries with $\inner{g}{u_j+y_j}^2 \leq 4 \zeta \sigma^2$.
Therefore, with probability at least $0.9999 - \exp(\beta n / 4) \geq 0.999$, every unit vector $g \in \R^d$ has at most $\beta n / 4$ entries with $\inner{g}{u_j+y_j}^2 \leq 4\zeta \sigma^2$.
\end{proof}

In Lemma~\ref{l:perturb-L1}, we have proved that with probability at least $0.999$, every unit vector $g \in \R^d$ has at most $\beta n / 4$ entries with 
$\inner{g}{u_j + y_j}^2 \leq 4\zeta \sigma^2$.
To prove Proposition~\ref{p:perturb-pseudorandom},
we need to prove that every unit vector $g \in \R^d$ has at most $\beta n / 2$ entries with $\inner{g}{u_j + z_j}^2 \leq 2\zeta \sigma^2$.
To do this, we use that $z = P_{L_1 \cap L_2} y$ and $L_2$ is a subspace with codimension $d^2$.
With the assumption $n \gg d^2$, our plan is to show that $\inner{g}{u_j+z_j}^2 \approx \inner{g}{u_j+y_j}^2$ for most entries.

\begin{lemma} \label{l:perturb-v}
Assume that $n \geq 1600 d^2 / (\beta \zeta)$.
The probability over the perturbation process 
that there exists a unit vector $g \in \R^d$ 
and a subset $S \subseteq [n]$ with $|S| \geq \beta n / 2$ such that
\[
\inner{g}{u_j+z_j}^2 \leq 2 \zeta \sigma^2 \quad {\rm for~all~} j \in S
\]
is at most $0.01$.
\end{lemma}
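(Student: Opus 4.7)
The plan is to deduce Lemma~\ref{l:perturb-v} from Lemma~\ref{l:perturb-L1} by arguing that the vectors $z_j$ produced in step~(4) of the perturbation process are close to the first-subspace vectors $y_j$ from step~(3), for all but a small fraction of indices~$j$. Concretely, since $L_2$ has codimension at most $d^2$ in the ambient space and $z = P_{L_1 \cap L_2}\, y$, the difference $y-z$ lies in the subspace $L_1 \cap (L_1 \cap L_2)^{\perp}$, which has dimension at most $d^2$. If $\|y_j-z_j\|^2$ is much smaller than $\zeta\sigma^2$, then by Cauchy-Schwarz $\inner{g}{y_j-z_j}^2$ is also small for every unit vector $g$, and by the triangle inequality $\inner{g}{u_j+z_j}^2 \le 2\zeta\sigma^2$ will imply $\inner{g}{u_j+y_j}^2 \le 4\zeta\sigma^2$. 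Invoking Lemma~\ref{l:perturb-L1} on the surviving indices then furnishes the contradiction needed to bound the bad set for $z$.

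The main technical ingredient is to bound $\|y-z\|^2$. Writing $y = P_{L_1} x$ and $z = P_{L_1\cap L_2} x$ where $x\sim N(0,\sigma^2 I_{dn})$, we have $y-z = Qx$ where $Q$ is the orthogonal projection onto $L_1\cap (L_1\cap L_2)^{\perp}$, so $\rank(Q)\le d^2$. In particular $\E\|y-z\|^2 \le d^2\sigma^2$ and $\|Q\|_F^2 = \rank(Q) \le d^2$, $\|Q\|_2\le 1$. Applying the Gaussian concentration inequality in Fact~\ref{f:concentration}(2) with, say, $\delta = d^2$ yields $\|y-z\|^2 \le C d^2 \sigma^2$ with probability at least $1-e^{-d^2}$, for an absolute constant $C$. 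By Markov's inequality, the number of indices $j$ for which $\|y_j-z_j\|^2 > (3-2\sqrt 2)\,\zeta\sigma^2$ is at most
\[
\frac{\|y-z\|^2}{(3-2\sqrt 2)\,\zeta\sigma^2} \;\le\; \frac{C\, d^2}{(3-2\sqrt 2)\,\zeta} \;\le\; \frac{\beta n}{4},
\]
where the last inequality uses the assumption $n \ge 1600 d^2/(\beta\zeta)$ (with room to spare for the constant $C$).

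To finish, suppose for contradiction that there exist a unit vector $g\in\R^d$ and a set $S\subseteq[n]$ with $|S|\ge \beta n/2$ such that $\inner{g}{u_j+z_j}^2 \le 2\zeta\sigma^2$ for all $j\in S$. Let $S'\subseteq S$ be the indices in $S$ with $\|y_j-z_j\|^2 \le (3-2\sqrt 2)\,\zeta\sigma^2$; by the previous paragraph, $|S\setminus S'| \le \beta n/4$, hence $|S'|\ge \beta n/4$. For $j\in S'$, Cauchy-Schwarz gives $|\inner{g}{y_j-z_j}| \le \|y_j-z_j\| \le (\sqrt 2 - 1)\sqrt{\zeta}\,\sigma$, and the triangle inequality gives
\[
|\inner{g}{u_j+y_j}| \;\le\; |\inner{g}{u_j+z_j}| + |\inner{g}{y_j-z_j}| \;\le\; \sqrt{2\zeta}\,\sigma + (\sqrt 2 - 1)\sqrt{\zeta}\,\sigma \;=\; 2\sqrt{\zeta}\,\sigma,
\]
so $\inner{g}{u_j+y_j}^2 \le 4\zeta\sigma^2$ for every $j\in S'$. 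This contradicts Lemma~\ref{l:perturb-L1} except on an event of probability at most $0.001$, and the above Gaussian bound on $\|y-z\|^2$ can itself fail with probability at most (say) $0.001$; a union bound gives failure probability at most $0.01$, as required.

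The main obstacle is purely quantitative bookkeeping of constants: one has to choose the threshold separating ``small'' from ``large'' $\|y_j-z_j\|^2$ tightly enough that the triangle inequality pushes $\inner{g}{u_j+y_j}^2$ below the $4\zeta\sigma^2$ threshold of Lemma~\ref{l:perturb-L1}, while still ensuring (via Gaussian concentration on the rank-$d^2$ projection $Q$) that the exceptional index set has size at most $\beta n/4$. The hypothesis $n \ge 1600 d^2/(\beta\zeta)$ is exactly what is needed to absorb the $d^2$ degrees of freedom coming from the codimension of $L_2$, and is the sole place where the large-$n$ assumption enters.
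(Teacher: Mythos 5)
Your proposal is correct and follows essentially the same route as the paper's own proof: bound $\|y-z\|^2$ by concentration of the Gaussian $x$ under the rank-$\le d^2$ projection $P_{L_1\cap L_2^\perp}$, then Markov-count the indices where $\|y_j-z_j\|^2$ is not small, and invoke Lemma~\ref{l:perturb-L1} to reach a contradiction. Two tiny bookkeeping slips: the displayed $\sqrt{2\zeta}\sigma + (\sqrt 2-1)\sqrt{\zeta}\sigma = 2\sqrt\zeta\,\sigma$ should be $\le$ (the left side is $(2\sqrt2-1)\sqrt\zeta\,\sigma < 2\sqrt\zeta\,\sigma$), and taking $\delta=d^2$ gives failure probability $e^{-d^2}$, which exceeds $0.009$ for $d\le 2$; using $\delta=10d^2$ as the paper does (giving $\|y-z\|^2\le 100d^2\sigma^2$ with probability $1-e^{-10d^2}$) fixes this while the Markov count still fits comfortably under $\beta n/4$.
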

\begin{proof}
First, we bound the difference between $\inner{g}{u_j+y_j}^2$ and $\inner{g}{u_j+z_j}^2$.
Recall that $y \in \R^{d \times n}$ is the concatenation of $y_1, \ldots, y_n \in \R^d$, and $z \in \R^{d \times n}$ is the concatenation of $z_1, \ldots, z_n \in \R^d$.
Since $y = P_{L_1} x$ and $z = P_{L_1 \cap L_2} x$ as described in the perturbation process,
\[
\norm{y-z}^2 
= \norm{(P_{L_1} - P_{L_1 \cap L_2})x}^2
= x^T (P_{L_1} - P_{L_1 \cap L_2}) (P_{L_1} - P_{L_1 \cap L_2}) x
= x^T (P_{L_1} - P_{L_1 \cap L_2}) x.
\]
Note that $P_{L_{1}}-P_{L_{1}\cap L_{2}}=P_{L_{1}\cap L_{2}^{\perp}}$.
Since $\text{rank}(L_{1}\cap L_{2}^{\perp})\leq\text{rank}(L_{2}^{\perp})\leq d^{2}$,
using Fact~\ref{f:concentration}(2) with $A = P_{L_1} - P_{L_1 \cap L_2}$ shows that 
\[
\P_x(\norm{y-z}^{2} \geq 100 d^{2}\cdot \sigma^2)\leq e^{-10 d^{2}}.
\]
Therefore, with probability $1-e^{-10d^2}$, we have that
\begin{equation*}
\sum_{j=1}^n \inner{g}{y_j-z_j}^{2} \leq 
\sum_{j=1}^n \norm{g}^2 \norm{y_j-z_j}^2 
= \norm{y-z}^2 \leq 100d^{2} \sigma^2 
\label{eq:noise_difference}
\end{equation*}
for all unit vector $g \in \R^d$.
This implies that there are at most $400^2/\zeta$ terms with 
$\inner{g}{y_j-z_j}^2 \geq \frac{1}{4} \zeta \sigma^2$.
Lemma~\ref{l:perturb-L1} shows that for all unit vector $g \in \R^d$,
there are at most $\beta n / 4$ entries with $\inner{g}{u_j+y_j}^2 \leq 4\zeta \sigma^2$ with probability $0.999$.
For an entry with $\inner{g}{u_j + y_j}^2 > 4\zeta \sigma^2$ and $\inner{g}{y_j-z_j}^2 \leq \frac{1}{4} \zeta \sigma^2$,
we have
\[
\inner{g}{u_j+z_j}^2 \geq \inner{g}{u_j+y_j}^2 - 2|\inner{g}{u_j+y_j}| |\inner{g}{z_j-y_j}| \geq \frac{1}{2} \inner{g}{u_j+y_j}^2 \geq 2\zeta \sigma^2.
\]
Therefore, besides the $\beta n / 4$ entries with $\inner{g}{u_j+y_j}^2 \leq 4\zeta \sigma^2$, there could be at most $400d^2/\zeta$ more entries with $\inner{g}{u_j+z_j}^2 \leq 2 \zeta \sigma^2$ with probability $1 - e^{-10d^2}$.
By our assumption that $400d^2/\zeta \leq \beta n /4$,
for all unit vectors $g \in \R^d$,
there are at most $\beta n / 2$ entries with $\inner{g}{u_j + z_j}^2 \leq 2\zeta \sigma^2$ with probability at least $0.999 - e^{-10d^2} \geq 0.99$.
\end{proof}

This completes the proof of Proposition~\ref{p:perturb-pseudorandom}.

In the next subsection, with the right choice of the parameters,
we can prove point (iii) in Subsection~\ref{ss:overview} about the improved capacity lower bound.

\subsection{The Path to an Equal Norm Parseval Frame} \label{ss:together}

In this subsection, we put together the results in this section to prove Theorem~\ref{t:main}.

We describe precisely how to move from the initial $\eps$-nearly equal norm Parseval frame to the final equal norm Parseval frame.
We also state the properties that we will maintain, and will prove these properties later.

\begin{procedure}[path] \label{proc:path}
The following is a step-by-step procedure, with the properties required stated.
\begin{enumerate}

\item {\bf Input}: an $\eps$-nearly equal norm Parseval frame $U=\{u_1, \ldots, u_n\}$ where $u_i \in \R^d$ for $1 \leq i \leq n$.

\item \label{step:preprocess}
Apply Lemma~\ref{l:preprocess} to scale $U$ to $U^0$ such that 
\[U^0 \text{ is an } O(\eps)\text{-nearly equal norm Parseval frame, } s(U^0)=d \text{ and } \dist(U,U^0)=O(d \eps^2).
\]

\item \label{step:initial-Delta}
Let $\Delta := \Delta(U^0)$.  Note that $\Delta \leq O(d^2 \eps^2)$ by Lemma~\ref{l:Delta-eps}.

\item \label{step:global}
{\bf Global assumptions}:
\[
n \geq \frac{10^{15} d^{4}}{\zeta^2 \kappa^2} \quad {\rm and} \quad \Delta \leq \frac{\zeta^6 \kappa^6}{10^{55} d^{9}},
\]
where $0 < \kappa < 1$ is the small constant in Proposition~\ref{p:maintain} and $0 < \zeta < 1$ is the small constant in Proposition~\ref{p:perturb-pseudorandom}.

\item \label{step:loop}
  Let $l:=0$.  Repeat the following steps.
  \begin{enumerate}
  \item \label{step:perturb}
  Apply the perturbation process to $U^l = \{u_1^l, \ldots, u_n^l\}$ to produce 
  $W^l = \{w_1^l, \ldots, w_n^l\}$ with
  \[
  \sigma^2 = \frac{10^4 \sqrt{d\Delta(U^l)}}{\zeta \kappa n}. 
  \]
  The assumptions are that 
  \[s(U^l)=d \quad {\rm and} \quad \Delta(U^l) \leq \Delta {\rm~so~that~} \sigma^2 \ll \frac{1}{n}.\]
  The properties of $W^l$ are that
    \begin{enumerate}[(i)]
    \item \label{step:size-norm}
    (Size and equal norm) The following is guaranteed by the perturbation 
    process in Procedure~\ref{proc:perturb}:
    \[s(W^l) = d \quad {\rm and} \quad 
    \norm{w_i^l}_2^2 = \frac{d}{n}  {\rm~for~} 1 \leq i \leq n.
    \]
    \item \label{step:movement} 
    (Movement) We will prove in Proposition~\ref{p:movement} that
    \[
    \dist(U^l,W^l) = O\big(d^{3/2} \sqrt{\Delta(U^l)}\big).
    \]
    \item \label{step:Delta}
    ($\Delta$) We will prove in Proposition~\ref{p:Delta} that
    \[
    \Delta(W^l) = O(\Delta(U^l)).
    \]
    \item \label{step:capacity}
    (Capacity) We will prove in Theorem~\ref{t:capacity} that
    \[
    \capa(W^l) \geq s(W^l) - O\Big(\sqrt{\frac{\Delta(W^l)}{d}}\Big).
    \]
    \end{enumerate}
  
  \item \label{step:dynamical}
  Let $W^{(0)} := W^l$.
  Apply the dynamical system in Definition~\ref{d:dynamical} 
  with $W^{(0)}$ as the input until the first time $T$ when
  \[\Delta(W^{(T)}) \leq \frac{\Delta}{3\cdot2^l}.\]
  The properties of $W^{(T)}$ are that
    \begin{enumerate}[(i)]
    \item \label{step:movement2}
    (Movement) We will prove in Proposition~\ref{p:movement-size} that
    \[
    \dist(W^{(0)},W^{(T)}) \leq O\big(\sqrt{\frac{\Delta(U^l)}{d}}\big).
    \]
    \item \label{step:size}
    (Size) We will prove in Proposition~\ref{p:movement-size} that
    \[
    s(W^{(T)}) \geq d - O\big(\sqrt{\frac{\Delta(U^l)}{d}}\big).
    \]
    \end{enumerate}

  \item \label{step:rescale} Rescale $W^{(T)}$ to $U^{l+1}$ such that
  \[
  s(U^{l+1}) = d.
  \]
  We will prove in Lemma~\ref{l:rescale} that 
  \[\dist(W^{(T)},U^{l+1}) \leq O\big(\sqrt{\frac{\Delta(U^l)}{d}}\big) 
  \quad {\rm and} \quad 
  \Delta(U^{l+1}) \leq \frac{\Delta}{2^{l+1}}.
  \]
  Increment $l$ by one.
  \end{enumerate} 
\item {\bf Output:} an equal norm Parseval frame $V:=U^{\infty}$.
\end{enumerate}
\end{procedure}

We will first prove the main theorems assuming all the claims in Procedure~\ref{proc:perturb}.  Then we will justify all the claims in Procedure~\ref{proc:path}.

\begin{theorem} \label{t:one-iteration}
Assume the global assumptions in Step~\ref{step:global} of Procedure~\ref{proc:path} hold.
Then, in the $l$-th iteration of Step~\ref{step:loop} in Procedure~\ref{proc:path}, 
\[
\dist(U^l,U^{l+1}) \leq O(d^{3/2} \sqrt{\Delta(U^l)}) = O(d^{3/2} \sqrt{\Delta/2^l}).
\]
\end{theorem}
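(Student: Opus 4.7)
The plan is to prove this by straightforward triangle-inequality bookkeeping, combining the three movement estimates across sub-steps (a)--(c) of Step~\ref{step:loop}, while verifying by induction that the hypotheses needed to invoke those estimates are satisfied at every iteration. The heart of the argument is not the theorem itself but the cited propositions; once they are in hand, the theorem becomes a short aggregation.

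First I would check the inductive invariant that at the start of iteration $l$, the frame $U^l$ satisfies $s(U^l)=d$ and $\Delta(U^l)\le\Delta/2^l$. The base case $l=0$ is immediate from Steps~\ref{step:preprocess}--\ref{step:initial-Delta}. For the inductive step, assuming the invariant holds at $l$, Step~\ref{step:perturb} applies (its hypotheses $s(U^l)=d$ and $\Delta(U^l)\le\Delta$ are met, and $\sigma^2 \ll 1/n$ follows from the global assumption on $\Delta$), and Step~\ref{step:rescale} then ensures $s(U^{l+1})=d$ and $\Delta(U^{l+1})\le\Delta/2^{l+1}$ via Lemma~\ref{l:rescale}, preserving the invariant.

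Next I chain the three sub-step estimates via the triangle inequality for the (unsquared) distance:
\[
\distance(U^l,U^{l+1}) \le \distance(U^l,W^l) + \distance(W^l,W^{(T)}) + \distance(W^{(T)},U^{l+1}).
\]
By Step~\ref{step:perturb}(ii), $\distance(U^l,W^l) \le O(d^{3/4}\,\Delta(U^l)^{1/4})$. By Step~\ref{step:dynamical}(i), $\distance(W^l,W^{(T)}) \le O(d^{-1/4}\,\Delta(U^l)^{1/4})$. By Step~\ref{step:rescale} (Lemma~\ref{l:rescale}), $\distance(W^{(T)},U^{l+1}) \le O(d^{-1/4}\,\Delta(U^l)^{1/4})$. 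Since $d\ge 1$, the first term dominates, so
\[
\distance(U^l,U^{l+1}) \le O(d^{3/4}\,\Delta(U^l)^{1/4}).
\]
Squaring both sides yields $\dist(U^l,U^{l+1}) \le O(d^{3/2}\sqrt{\Delta(U^l)})$, and substituting the inductive bound $\Delta(U^l)\le\Delta/2^l$ gives the claimed $O(d^{3/2}\sqrt{\Delta/2^l})$.

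The only subtle point is that the four properties in Step~\ref{step:perturb} (and especially the capacity bound in Theorem~\ref{t:capacity}) hold only with constant probability over the random Gaussian noise in the perturbation step, rather than deterministically. I would address this by noting that we may condition on the joint event that all four properties hold, which by a union bound occurs with probability at least a fixed positive constant; since the perturbation can be resampled independently in each iteration, a good draw can be assumed at each $l$. The genuine difficulty of the paper lies in establishing the cited propositions themselves (particularly the improved capacity lower bound via the pseudorandomness of the associated matrix $B$), not in this aggregation argument, which is essentially a triangle-inequality calculation once the per-iteration geometry is in place.
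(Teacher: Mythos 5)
Your proposal is correct and follows essentially the same route as the paper: cite Proposition~\ref{p:movement}, Proposition~\ref{p:movement-size}, and Lemma~\ref{l:rescale} for the three sub-steps, chain them with the triangle inequality on the unsquared distance $\distance$, observe the perturbation term dominates, and square. The inductive invariant you verify is precisely what the paper records in Lemma~\ref{l:assumptions}, and your conditioning-on-a-good-draw remark matches the paper's own remark on probability following Theorem~\ref{t:capacity}.
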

\begin{proof}
We analyze the total movement in the $l$-th iteration of Step~\ref{step:loop}.
We first apply the perturbation process to $U^l$ to obtain $W^l$ in Step~\ref{step:perturb} and by Proposition~\ref{p:movement}
\[
\dist(U^l,W^l) = O(d^{3/2} \sqrt{\Delta(U^l)}).
\]
Then we apply the dynamical system to $W^l$ to obtain $W^{(T)}$ in Step~\ref{step:dynamical} and by Proposition~\ref{p:movement-size}
\[
\dist(W^l,W^{(T)}) = O(\sqrt{\frac{\Delta(U^l)}{d}}).
\] 
Finally, we rescale $W^{(T)}$ to obtain $U^{l+1}$ in Step~\ref{step:rescale} and by Lemma~\ref{l:rescale}
\[
\dist(W^{(T)},U^{l+1}) = O(\sqrt{\frac{\Delta(U^l)}{d}}).
\]
By the triangle inequality, the total movement in one iteration is
\begin{eqnarray*}
\distance(U^l,U^{l+1})
& \leq & 
\distance(U^l,W^l) + \distance(W^l,W^{(T)}) + \distance(W^{(T)},U^{l+1})
\\
& \leq & O\Big(\sqrt{d^{3/2} \sqrt{\Delta(U^l)}}\Big) + O\Big(\sqrt{\sqrt{\frac{\Delta(U^l)}{d}}}\Big) + O\Big(\sqrt{\sqrt{\frac{\Delta(U^l)}{d}}}\Big) \\
& \leq & O\Big(\sqrt{d^{3/2} \sqrt{\Delta(U^l)}}\Big).
\end{eqnarray*}
Squaring both sides proves the result.
\end{proof}

The following theorem implies Theorem~\ref{t:perturb}.

\begin{theorem} \label{t:d^3 eps}
Assume the global assumptions in Step~\ref{step:global} of Procedure~\ref{proc:path} hold.
Given any $\eps$-nearly equal norm Parseval frame $U=\{u_1,\ldots,u_n\}$ where $u_i \in \R^d$ for $1 \leq i \leq n$, there exists an equal norm Parseval frame $V$ with
\[
\dist(U,V) = O(d^{3/2} \sqrt{\Delta}) = O(d^{5/2} \eps).
\]
\end{theorem}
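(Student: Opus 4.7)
The plan is to combine Theorem~\ref{t:one-iteration} over the infinite sequence of iterations generated by Procedure~\ref{proc:path}, much as Theorem~\ref{t:total-movement} telescoped the halving argument of Proposition~\ref{p:half}. Concretely, I would first invoke the preprocessing step~\ref{step:preprocess} to pass from $U$ to $U^0$ with $s(U^0)=d$, $\dist(U,U^0)=O(d\eps^2)$, and $\Delta(U^0)=:\Delta \leq O(d^2\eps^2)$ by Lemma~\ref{l:Delta-eps}; the global assumptions in step~\ref{step:global} are then translations of the hypotheses of Theorem~\ref{t:perturb}. I would then run the loop in step~\ref{step:loop} indefinitely to produce $U^0,U^1,U^2,\ldots$.

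The iteration-by-iteration bound is provided directly by Theorem~\ref{t:one-iteration}: in the $l$-th iteration, $\distance(U^l,U^{l+1}) \leq O(d^{3/4}(\Delta/2^l)^{1/4})$, where I use that $\Delta(U^l) \leq \Delta/2^l$ is maintained by step~\ref{step:rescale} of the procedure. By the triangle inequality for the Frobenius (equivalently $\distance$) norm, as used in Lemma~\ref{l:triangle},
\[
\distance(U^0,U^{\infty}) \;\leq\; \sum_{l=0}^{\infty} \distance(U^l,U^{l+1}) \;\leq\; O(d^{3/4}\Delta^{1/4}) \sum_{l=0}^{\infty} 2^{-l/4} \;=\; O(d^{3/4}\Delta^{1/4}),
\]
since the geometric series converges. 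Squaring gives $\dist(U^0,U^{\infty}) \leq O(d^{3/2}\sqrt{\Delta})$. Combining with the preprocessing bound $\dist(U,U^0)=O(d\eps^2)$ via one more triangle inequality, and using $\Delta \leq O(d^2\eps^2)$ to convert to the stated form, yields $\dist(U,V) \leq O(d^{3/2}\sqrt{\Delta}) \leq O(d^{5/2}\eps)$ for $V := U^{\infty}$.

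It remains to verify that $V$ is genuinely an equal norm Parseval frame. Since the sequence $U^l$ is Cauchy (by the summable bound above), the limit $V$ exists. The equal norm condition $\norm{v_i}^2 = d/n$ is preserved exactly, because after the rescaling in step~\ref{step:rescale} we have $s(U^{l+1}) = d$ and by step~\ref{step:size-norm}(\ref{step:size-norm}) each $w_i$ entering each iteration has squared norm $d/n$; continuity then transfers this to $V$. The Parseval condition $\sum_i v_iv_i^T = I_d$ follows from $\Delta(U^{l+1}) \leq \Delta/2^{l+1} \to 0$ combined with $s(U^{l+1})=d$ and Definition~\ref{d:normal-Delta}.

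The main obstacle is bookkeeping rather than a new idea: I must check that the preconditions for invoking step~\ref{step:perturb} in each iteration (namely $s(U^l)=d$ and $\Delta(U^l) \leq \Delta$) are indeed maintained by the rescaling in step~\ref{step:rescale} of the previous iteration, and that the global assumptions of step~\ref{step:global} remain valid with $\Delta$ replaced by $\Delta/2^l$ — the latter is actually easier to satisfy as $l$ grows, so no extra work is needed beyond the first iteration. A minor subtlety is that the bound $\dist(U,U^0)=O(d\eps^2)$ is dominated by $O(d^{5/2}\eps)$ precisely under the smallness assumption on $\Delta$ (equivalently on $\eps$) in step~\ref{step:global}, so the final bound is genuinely $O(d^{5/2}\eps)$ as claimed.
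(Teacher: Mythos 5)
Your proposal follows essentially the same route as the paper: apply the preprocessing step, telescope the per-iteration bound from Theorem~\ref{t:one-iteration}, sum the resulting geometric series in $\distance$, square, combine with the preprocessing bound by one more triangle inequality, and pass to the limit $V=U^\infty$. One small slip in the final verification: the equal-norm condition is \emph{not} preserved exactly through an iteration --- the dynamical system in step~\ref{step:dynamical} does not keep $\norm{w_i}^2=d/n$, and the rescaling in step~\ref{step:rescale} restores only $s(U^{l+1})=d$, not equal norms --- so the correct way to conclude is to deduce both the equal-norm and the Parseval conditions for $V$ simultaneously from $\Delta(U^l)\to 0$, $s(U^l)=d$, and Definition~\ref{d:normal-Delta}, which is precisely the argument you already give for the Parseval part.
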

\begin{proof}
It is clear that the $U^l$ is converging to an equal norm Parseval frame,
as $\Delta(U^l) \leq \Delta(U^0)/2^l$, and $\Delta(U^{l})=0$ if and only if $U^l$ is an equal norm Parseval frame.

Using Theorem~\ref{t:one-iteration} and the triangle inequality,
the total movement in all iterations is
\begin{eqnarray*}
\distance(U^0,U^{\infty})
& \leq &
\sum_{l=0}^{\infty} \distance(U^l,U^{l+1}) 
\leq \sum_{l=0}^{\infty} O\Big(\sqrt{d^{3/2} \sqrt{\Delta(U^l)}}\Big)
\\
& \leq & \sum_{l=0}^{\infty} O\Big(\sqrt{d^{3/2} \sqrt{\Delta/2^l}}\Big)
\leq O\Big(\sqrt{d^{3/2} \sqrt{\Delta}}\Big),
\end{eqnarray*}
where the last inequality is a decreasing geometric sum.
From the preprocessing in Step~\ref{step:preprocess},
we have $\dist(U,U^0) = O(d\eps^2)$ and $\Delta \leq O(d^2\eps^2)$.
Therefore, since we set $V = U^{\infty}$, we have
\begin{eqnarray*}
\distance(U,V)
& \leq & \distance(U,U^0) + \distance(U^0,U^{\infty})\\
& \leq & O(\sqrt{d\eps^2}) + O\Big(\sqrt{d^{3/2} \sqrt{\Delta}}\Big) 
\leq O(\sqrt{d\eps^2}) + O(\sqrt{d^{5/2} \eps}) = O(\sqrt{d^{5/2} \eps}).
\end{eqnarray*}
Squaring both sides proves the theorem.
\end{proof}

Finally, we prove an unconditional bound for the Paulsen problem, which implies Theorem~\ref{t:main}.

\begin{theorem} \label{t:d^7 eps}
Given any $\eps$-nearly equal norm Parseval frame $U=\{u_1,\ldots,u_n\}$ where $u_i \in \R^d$ for $1 \leq i \leq n$, there exists an equal norm Parseval frame $V$ with
\[
\dist(U,V) = O(d^{13/2} \eps).
\]
\end{theorem}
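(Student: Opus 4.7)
Proof proposal for Theorem~\ref{t:d^7 eps}:

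The plan is to obtain the unconditional bound $O(d^{13/2}\eps)$ by a three-way case analysis in $(n, \eps)$, patching together the trivial bound, the $n$-dependent bound of Theorem~\ref{t:Paulsen} from Section~\ref{s:dynamical}, and the smoothed-analysis bound of Theorem~\ref{t:d^3 eps}. Let $c_1, c_2 > 0$ be constants chosen so that the global assumptions $n \geq 10^{15} d^4 / (\zeta^2 \kappa^2)$ and $\Delta \leq \zeta^6 \kappa^6 / (10^{55} d^9)$ in Step~\ref{step:global} of Procedure~\ref{proc:path} both hold whenever $n \geq c_2 d^4$ and $\eps \leq c_1 / d^{11/2}$ (the latter using $\Delta \leq d^2 \eps^2$ from Lemma~\ref{l:Delta-eps}).

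The three regimes and which bound they use are as follows. First, if $\eps \geq c_1 / d^{11/2}$, then the trivial bound suffices: picking $V$ to be any equal norm Parseval frame (which exists for every $n \geq d$), we estimate
\[
\dist(U,V) \leq \sum_{i=1}^n 2\norm{u_i}_2^2 + 2\norm{v_i}_2^2 = O(d),
\]
and $d \leq d^{13/2} \eps / c_1$, as in the trivial case of Subsection~\ref{ss:alternating}. Second, if $n < c_2 d^4$, apply Theorem~\ref{t:Paulsen} (the bound from the dynamical system alone, without perturbation) to obtain $\dist(U,V) \leq O(d^2 n \eps) \leq O(d^6 \eps) \leq O(d^{13/2}\eps)$. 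Third, if simultaneously $n \geq c_2 d^4$ and $\eps \leq c_1 / d^{11/2}$, then the global assumptions of Procedure~\ref{proc:path} are satisfied, so Theorem~\ref{t:d^3 eps} applies and yields $\dist(U,V) \leq O(d^{5/2}\eps) \leq O(d^{13/2}\eps)$.

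Taking the worst of the three regimes gives the claimed bound. There is no genuine obstacle in this argument, only a bookkeeping check that the three regimes cover all possibilities and that the constants in the global assumption translate into a threshold of the form $\eps \lesssim 1/d^{11/2}$ via $\Delta \leq d^2\eps^2$. The exponent $13/2$ is forced precisely at the boundary of Case~1: the trivial bound $O(d)$ meets $\eps \sim 1/d^{11/2}$, producing $d \sim d^{13/2}\eps$, which is the dominant regime since $d^6\eps$ (from Case~2) and $d^{5/2}\eps$ (from Case~3) are both smaller.
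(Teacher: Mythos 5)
Your proof is correct and is essentially the same as the paper's: both reduce to a three-way case analysis on whether the global assumptions of Procedure~\ref{proc:path} hold, invoking Theorem~\ref{t:d^3 eps} when both hold, Theorem~\ref{t:Paulsen} when $n$ is too small, and the trivial $O(d)$ bound when $\eps$ is too large. The only cosmetic difference is that you partition parameter space by thresholds on $(n,\eps)$ directly, whereas the paper phrases the cases as ``first/second assumption fails,'' but via Lemma~\ref{l:Delta-eps} these are the same dichotomy.
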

\begin{proof}
If the assumptions in Theorem~\ref{t:d^3 eps} are satisfied, then we can find an equal norm Parseval frame $V$ with $\dist(U,V) = O(d^{5/2} \eps)$.

If the first assumption is not satisfied, then $n = O(d^4)$,
and we can apply the result in Theorem~\ref{t:Paulsen} to obtain an equal norm Parseval frame $V$ with $\dist(U,V) = O(d^2 n \eps) = O(d^6 \eps)$.

If the second assumption is not satisfied, then $\Omega(1/d^9) \leq \Delta \leq O(d^2 \eps^2)$ by Lemma~\ref{l:Delta-eps} in Step~\ref{step:initial-Delta}.
This implies that $\eps \geq \Omega(1/d^{11/2})$.
In this case, we simply output any equal norm Parseval frame $V$ and
$\dist(U,V) = O(d) = O(d^{13/2} \eps)$, where we have shown the trivial bound $O(d)$ in Subsection~\ref{ss:alternating}.

So, in all cases, the bound is at most $O(d^{13/2} \eps)$ and is independent of $n$.
\end{proof}

\subsubsection*{Justifying the Steps}

The preprocessing in Step~\ref{step:preprocess} and the initial bound on $\Delta$ in Step~\ref{step:initial-Delta} are already justified.
So we start with the assumptions in Step~\ref{step:perturb}.

\begin{lemma} \label{l:assumptions}
The assumptions in Step~\ref{step:perturb} are always satisfied.
\end{lemma}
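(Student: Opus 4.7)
The plan is to prove Lemma~\ref{l:assumptions} by induction on the iteration counter $l$ in Step~\ref{step:loop} of Procedure~\ref{proc:path}, showing simultaneously that
\[
s(U^l) = d, \qquad \Delta(U^l) \leq \frac{\Delta}{2^l}, \qquad \sigma^2 \ll \frac{1}{n}
\]
hold for every $l \geq 0$. The base case $l=0$ is immediate from the preprocessing in Step~\ref{step:preprocess}, which guarantees $s(U^0) = d$, and from Step~\ref{step:initial-Delta}, which defines $\Delta = \Delta(U^0)$.

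For the inductive step, assume the three statements hold at iteration $l$. The size and $\Delta$ bounds are exactly what is needed to legitimately enter Step~\ref{step:perturb}, so the perturbation, the dynamical system, and the rescaling can all be applied. After the rescaling in Step~\ref{step:rescale}, the claim there gives $s(U^{l+1}) = d$ and $\Delta(U^{l+1}) \leq \Delta/2^{l+1}$, which is strictly stronger than $\Delta(U^{l+1}) \leq \Delta$. In particular $\Delta(U^{l+1}) \leq \Delta$, completing the inductive step for the first two statements.

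It remains to verify $\sigma^2 \ll 1/n$. By definition,
\[
\sigma^2 \;=\; \frac{10^4 \sqrt{d\,\Delta(U^l)}}{\zeta \kappa n} \;\leq\; \frac{10^4 \sqrt{d\,\Delta}}{\zeta \kappa n},
\]
using $\Delta(U^l) \leq \Delta$. The global assumption in Step~\ref{step:global} that $\Delta \leq \zeta^6 \kappa^6/(10^{55} d^9)$ gives $\sqrt{d\Delta} \leq \zeta^3 \kappa^3/(10^{27} d^4)$, so
\[
\sigma^2 \;\leq\; \frac{10^4 \zeta^3 \kappa^3}{10^{27} d^4 \zeta \kappa n} \;=\; \frac{\zeta^2 \kappa^2}{10^{23} d^4 n} \;\ll\; \frac{1}{n},
\]
which is the desired bound (and is in fact much stronger than what is used downstream; the precise numerical constants such as $\sigma^2 \leq 1/(1600 n)$ required in Proposition~\ref{p:perturb-pseudorandom} are comfortably satisfied).

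There is no real obstacle here: the lemma is a bookkeeping statement whose content is that the loop is well-defined. The only non-trivial point is that the induction on $\Delta$ uses the geometric decrease claim $\Delta(U^{l+1}) \leq \Delta/2^{l+1}$ from Step~\ref{step:rescale}, which itself will be verified (in Lemma~\ref{l:rescale}) by combining Proposition~\ref{p:Delta} on the perturbation step with the monotonicity of $\Delta$ under the dynamical system (Lemma~\ref{l:Delta'}) and the small change caused by the final rescaling. Once that is in place, the chain of implications above closes and the loop invariants propagate to all $l$.
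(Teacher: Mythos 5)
Your proof is correct and follows the same logic as the paper's: both verify the size invariant $s(U^l)=d$ and the bound $\Delta(U^l)\leq\Delta$ by observing they hold after the preprocessing in Step~\ref{step:preprocess} and are propagated by the rescaling in Step~\ref{step:rescale} (via Lemma~\ref{l:rescale}), i.e.\ an induction on $l$ that the paper states tersely and you make explicit. You additionally spell out the $\sigma^2\ll 1/n$ calculation, which the paper leaves implicit; that extra detail is harmless and accurate.
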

\begin{proof}
The assumption $s(U^l)=d$ in Step~\ref{step:perturb} is satisfied initially in Step~\ref{step:preprocess} and is always maintained by Step~\ref{step:rescale} afterwards.

The assumption $\Delta(U^l) \leq \Delta$ in Step~\ref{step:perturb} is satisfied initially and is always maintained by Step~\ref{step:rescale} afterwards.
\end{proof}

The properties in Step~\ref{step:size-norm} are guaranteed by the postprocessing step in the perturbation process in Procedure~\ref{proc:perturb}. 
We next consider Step~\ref{step:movement} in Procedure~\ref{proc:path}.

\begin{proposition} \label{p:movement}
Let $U =\{u_1,\ldots,u_n\}$ with each $u_i \in \R^d$ and $s(U)=d$.
If we apply the perturbation process in Procedure~\ref{proc:perturb} with the choice of $\sigma^2$ as defined in Procedure~\ref{proc:path} to obtain output $W$, then 
\[\dist(U,W) \leq O(\sigma^2 d n) \leq O(d^{3/2} \sqrt{\Delta(U)}),
\]
with probability at least $0.9$.
\end{proposition}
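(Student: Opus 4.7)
The plan is to decompose the perturbation process into its three measurable phases and combine the distance bounds already established for each phase: (a) the preprocessing rescaling so that every vector has squared norm exactly $d/n$, (b) the Gaussian noise addition followed by the two subspace projections, (c) the postprocessing rescaling. Phases (b) and (c) are already packaged together in Lemma~\ref{l:perturb-movement}, and phase (a) is exactly Lemma~\ref{l:perturb-preprocess}. So the proof is mostly bookkeeping plus a Markov step, together with verifying that the choice of $\sigma^2$ dictated by Procedure~\ref{proc:path} gives the claimed scaling.

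Concretely, I would first invoke Lemma~\ref{l:perturb-preprocess} on the input frame $U$. The hypothesis $\Delta(U) \le d/16$ is implied by the global assumption $\Delta \le \zeta^6\kappa^6/(10^{55} d^9)$ in Step~\ref{step:global}, so Lemma~\ref{l:perturb-preprocess} yields a rescaled frame $\tilde U$ with $\norm{\tilde u_j}^2 = d/n$ and $\dist(U, \tilde U) \le \Delta(U)/d$. Next, I would apply Lemma~\ref{l:perturb-movement} to $\tilde U$, which gives $\E \dist(\tilde U, W) \le 2\sigma^2 d n$. Combining via the triangle inequality for the metric $\distance(\cdot,\cdot) = \sqrt{\dist(\cdot,\cdot)}$ and squaring,
\[
\dist(U, W) \le 2\dist(U, \tilde U) + 2\dist(\tilde U, W) \le \frac{2\Delta(U)}{d} + 2\dist(\tilde U, W).
\]
Taking expectation over the Gaussian noise gives $\E \dist(U, W) \le 2\Delta(U)/d + 4\sigma^2 d n$. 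Since $\sigma^2 dn = \Theta(\sqrt{d \Delta(U)})\cdot d = \Theta(d^{3/2}\sqrt{\Delta(U)})$ by the choice of $\sigma^2$ in Procedure~\ref{proc:path}, and $\Delta(U)/d \ll d^{3/2}\sqrt{\Delta(U)}$ under the global bound on $\Delta$, the first term is absorbed into the second and $\E \dist(U, W) = O(\sigma^2 d n)$.

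To upgrade expectation to a high-probability bound, I would apply Markov's inequality to the nonnegative random variable $\dist(U, W)$: with probability at least $0.9$, $\dist(U, W) \le 10 \cdot \E \dist(U, W) = O(\sigma^2 d n)$. Substituting the choice $\sigma^2 = 10^4 \sqrt{d\Delta(U)}/(\zeta \kappa n)$ yields the second claimed inequality $O(d^{3/2}\sqrt{\Delta(U)})$.

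There is no real obstacle here: Proposition~\ref{p:movement} is essentially a bookkeeping consequence of Lemma~\ref{l:perturb-preprocess}, Lemma~\ref{l:perturb-movement}, and Markov's inequality, together with plugging in the prescribed $\sigma^2$. The only point worth checking carefully is that the global assumptions (in particular $\sigma^2 \le 1/n$ and $\Delta(U) \le d/16$) are strong enough to make both lemmas applicable and to ensure the preprocessing contribution $\Delta(U)/d$ is dominated by $\sigma^2 dn$; both verifications are immediate from the numerical bounds in Step~\ref{step:global}.
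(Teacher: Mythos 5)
Your proposal is correct and follows essentially the same route as the paper: invoke Lemma~\ref{l:perturb-preprocess} for the preprocessing step, Lemma~\ref{l:perturb-movement} for the remaining steps, combine via the triangle inequality, and apply Markov's inequality, then substitute the prescribed $\sigma^2$. The only cosmetic difference is the order of operations — the paper applies Markov to $\dist(\widehat U, W)$ alone and then takes a triangle inequality on $\distance$, whereas you first bound $\E\dist(U,W)$ via triangle inequality and then apply Markov to the full quantity — but this is immaterial and both give the stated bound.
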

\begin{proof}
The perturbation process assumes that $s(U)=d$, which is satisfied by our assumption.
In the first step of the perturbation process, the vectors in $U$ are renormalized to $\widehat{U}$ so that they have the same norm, and by Lemma~\ref{l:perturb-preprocess}
\[
\dist(U,\widehat{U}) \leq \frac{\Delta(U)}{d}.
\]
Then we apply the projection step and the postprocessing step in the perturbation process and get $W$.
Since the vectors $\widehat{U}$ are of squared length exactly $d/n$
and $\sigma^2 \leq 1/n$,
we can apply Lemma~\ref{l:perturb-movement} to prove that
\[
\E \dist(\widehat{U},W) \leq 2\sigma^2 d n = O(d^{3/2} \sqrt{\Delta(U)}).
\]
By Markov's inequality, with probability at least $0.9$,
the random variable is at most $10$ times the expected value and we still have
\[
\dist(\widehat{U},W) \leq O(d^{3/2} \sqrt{\Delta(U)}).
\]
By the triangle inequality,
\[
\distance(U,W) \leq \distance(U,\widehat{U}) + \distance(\widehat{U},W)
\leq O(\sqrt{\frac{\Delta(U)}{d}}) + O(\sqrt{d^{3/2} \sqrt{\Delta(U)}})
= O(\sqrt{d^{3/2} \sqrt{\Delta(U)}}),
\]
and squaring both sides proves the lemma.
\end{proof}

We next justify Step~\ref{step:Delta}.
This is the step that we need to use the global assumptions, so that the next step in Theorem~\ref{t:capacity} would go through.

\begin{proposition} \label{p:Delta}
Assume the global assumptions in Step~\ref{step:global} of Procedure~\ref{proc:path}.
Let $U =\{u_1,\ldots,u_n\}$ with each $u_i \in \R^d$ and $s(U)=d$ and $\Delta(U) \leq \Delta$.
If we apply the perturbation process in Procedure~\ref{proc:perturb} with the choice of $\sigma^2$ as defined in Procedure~\ref{proc:path} to obtain output $W$, then 
\[
\Delta(W) \leq 200 \Delta(U)
\]
with probability at least $1/4$.
\end{proposition}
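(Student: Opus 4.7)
The proof will be a relatively short combination of three ingredients that are already in place: the preprocessing bound of Lemma~\ref{l:perturb-preprocess}, the main expectation bound of Proposition~\ref{p:perturb-Delta}, and Markov's inequality. The core of the argument is verifying that, with the chosen
\[
\sigma^2 = \frac{10^{4}\sqrt{d\Delta(U)}}{\zeta\kappa n},
\]
each of the three ``noise'' terms on the right-hand side of Proposition~\ref{p:perturb-Delta} is dominated by $\Delta(U)$ thanks to the global assumptions on $n$ and $\Delta$.

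First I would split off the preprocessing step: the perturbation process begins by rescaling $U$ to a frame $\widehat{U}$ with $\norm{\widehat{u}_i}^2 = d/n$, and Lemma~\ref{l:perturb-preprocess} gives $\Delta(\widehat{U}) \leq 20\,\Delta(U)$ (the hypothesis $\Delta(U) \leq d/16$ follows from $\Delta(U) \leq \Delta \leq \zeta^{6}\kappa^{6}/(10^{55}d^{9})$). Next I would apply Proposition~\ref{p:perturb-Delta} to $\widehat{U}$ (noting that $\sigma^2 \leq 1/n$ because $\Delta(U) \leq \Delta$ is very small, and $\Delta(\widehat{U}) \leq 1$ for the same reason), which yields
\[
\E\Delta(W) \;\leq\; 6\Delta(\widehat{U}) + 40\sigma^{4}n^{2}\sqrt{\Delta(\widehat{U})} + 10^{7}\sigma^{4}d^{3}n + 10^{14}\sigma^{6}d^{3}n^{3}.
\]
The first summand is already $\leq 120\,\Delta(U)$.

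The main (but still routine) obstacle is to bound the remaining three terms by $O(\Delta(U))$ using the global assumptions. Plugging in $\sigma^2 = 10^{4}\sqrt{d\Delta(U)}/(\zeta\kappa n)$ gives
\[
\sigma^{4}n^{2} = \frac{10^{8}\, d\,\Delta(U)}{\zeta^{2}\kappa^{2}},\qquad
\sigma^{4}d^{3}n = \frac{10^{8}\,d^{4}\,\Delta(U)}{\zeta^{2}\kappa^{2}\,n},\qquad
\sigma^{6}d^{3}n^{3} = \frac{10^{12}\,d^{9/2}\,\Delta(U)^{3/2}}{\zeta^{3}\kappa^{3}}.
\]
For the second (multiplied by $\sqrt{\Delta(\widehat{U})} \leq \sqrt{20\,\Delta(U)}$) and the fourth, the bound $\Delta(U) \leq \Delta \leq \zeta^{6}\kappa^{6}/(10^{55}d^{9})$ absorbs the extra factor $\sqrt{\Delta(U)}$ and makes the term $\leq \Delta(U)$. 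For the third, the bound $n \geq 10^{15}d^{4}/(\zeta^{2}\kappa^{2})$ absorbs the factor $d^{4}/(\zeta^{2}\kappa^{2}\,n)$. Summing everything,
\[
\E\Delta(W) \;\leq\; 120\,\Delta(U) + O(\Delta(U)) \;\leq\; 130\,\Delta(U).
\]

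Finally I would invoke Markov's inequality: since $\Delta(W) \geq 0$, we have $\Pr[\Delta(W) > 200\,\Delta(U)] \leq 130/200 < 3/4$, and hence $\Delta(W) \leq 200\,\Delta(U)$ with probability at least $1/4$, as required. The only delicate aspect is keeping track of the three global-assumption thresholds so that the constants in the final inequality genuinely close; no new analytic idea is needed beyond what Lemma~\ref{l:perturb-preprocess} and Proposition~\ref{p:perturb-Delta} already provide.
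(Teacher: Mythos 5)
Your proof is correct and follows essentially the same route as the paper: rescale via Lemma~\ref{l:perturb-preprocess} to get $\Delta(\widehat{U}) \leq 20\Delta(U)$, invoke Proposition~\ref{p:perturb-Delta}, substitute the chosen $\sigma^2$, absorb the three noise terms via the global assumptions on $n$ and $\Delta$, and finish with Markov's inequality. The paper obtains the slightly sharper constant $\E\Delta(W)\leq 122\Delta(U)$ where you get $130\Delta(U)$, but both close the Markov step equally well.
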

\begin{proof}
The perturbation process assumes that $s(U)=d$, which is satisfied by our assumption.
In the first step of the perturbation process, the vectors in $U$ are renormalized to $\widehat{U}$ so that they have the same norm, and by Lemma~\ref{l:perturb-preprocess}
\[
\Delta(\widehat{U}) \leq 20\Delta(U).
\]
Then we apply the projection step and the postprocessing step in the perturbation process and get $W$.
Since the vectors $\widehat{U}$ are of squared length exactly $d/n$
and $\sigma^2 \leq 1/n$,
we can apply Proposition~\ref{p:perturb-Delta} to prove that
\begin{eqnarray*}
\E \Delta(W) & \leq &
6\Delta(\widehat{U}) + 40\sigma^4 n^2 \sqrt{\Delta(\widehat{U})}
+ 10^7 \sigma^4 d^{3} n + 10^{14} \sigma^6 d^3 n^3.
\\
& \leq & 
120 \Delta(U) + 200 \sigma^4 n^2 \sqrt{\Delta(U)}
+ 10^7 \sigma^4 d^{3} n + 10^{14} \sigma^6 d^3 n^3.
\end{eqnarray*}
We choose $\sigma^2 = 10^4 \sqrt{d\Delta(U)}/(\zeta \kappa n)$ in Procedure~\ref{proc:path}, and so the right hand side is
\begin{eqnarray*}
& = &
120\Delta(U) 
+ 200 (\frac{10^8 d \Delta(U)}{\zeta^2 \kappa^2 n^2}) n^2 \sqrt{\Delta(U)}
+ 10^7 (\frac{10^8 d \Delta(U)}{\zeta^2 \kappa^2 n^2}) d^{3} n
+ 10^{14} (\frac{10^{12} d^{3/2} \Delta(U)^{3/2}}{\zeta^3 \kappa^3 n^3}) d^3 n^3
\\
& \leq &
120\Delta(U) + \frac{10^{11} d \Delta(U)^{3/2}}{\zeta^2 \kappa^2}
+ \frac{10^{15} d^{4} \Delta(U)}{\zeta^2 \kappa^2 n} 
+ \frac{10^{26} d^{9/2} \Delta(U)^{3/2}}{\zeta^3 \kappa^3}
\\
& \leq &
120\Delta(U) 
+ \frac{10^{15} d^{4} \Delta(U)}{\zeta^2 \kappa^2 n} 
+ \frac{10^{27} d^{9/2} \Delta(U)^{3/2}}{\zeta^3 \kappa^3}
\\
& \leq &
120\Delta(U) + 
\Delta(U) + \Delta(U)
\\
& = &
122 \Delta(U),
\end{eqnarray*}
where the last inequality uses the assumption that $n \geq 10^{15} d^{4} / (\zeta^2 \kappa^2)$ 
and the assumption that $\Delta(U) \leq 10^{-55} \zeta^6 \kappa^6 /d^{9}$ so that $\sqrt{\Delta(U)} \leq 10^{-27} \zeta^3 \kappa^3 /d^{9/2}$.
By Markov's inequality, with probability at least $1/4$, we have
$\Delta(W) \leq 200\Delta(U)$.
\end{proof}

Step~\ref{step:capacity} is the heart of the smoothed analysis, where we have removed the dependency on $n$ from Theorem~\ref{t:cap_lower_bound} for a perturbed instance.

\begin{theorem} \label{t:capacity}
Let $U =\{u_1,\ldots,u_n\}$ with each $u_i \in \R^d$ and $s(U)=d$ and $\Delta(U)\leq \Delta$.
If we apply the perturbation process in Procedure~\ref{proc:perturb} with the choice of $\sigma^2$ as defined in Procedure~\ref{proc:path} to obtain output $W$, then 
\[
\capa(W) \geq s(W) - O(\sqrt{\frac{\Delta(W)}{d}}),
\]
with probability at least $0.9$ assuming $\Delta(W) \leq 200\Delta(U)$.
\end{theorem}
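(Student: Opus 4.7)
The plan is to deduce the frame capacity lower bound from the matrix capacity machinery by a three-step pipeline: reduce to a matrix via Lemma~\ref{l:simple}, establish the pseudorandom property of that matrix via Proposition~\ref{p:perturb-pseudorandom}, then invoke the strong matrix capacity bound in Theorem~\ref{t:pseudorandom} with the specific $\sigma^2$ chosen in Procedure~\ref{proc:path}.

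First, I would apply Lemma~\ref{l:simple} to $W$ to obtain a non-negative matrix $B \in \R^{d \times n}$ with $B_{ij} = \inner{g_i}{w_j}^2$ for some orthonormal basis $g_1, \ldots, g_d$, satisfying $\capa(B) \leq \capa(W)$, $\Delta(B) \leq \Delta(W)$, and $s(B) = s(W) = d$. Since the postprocessing step of Procedure~\ref{proc:perturb} guarantees $\norm{w_j}^2 = d/n$ for all $j$, the ``furthermore'' part of Lemma~\ref{l:simple} gives $c_j(B) = d/n = s(B)/n$. Next, although the theorem statement only assumes $s(U)=d$, the preprocessing step of Procedure~\ref{proc:perturb} rescales $U$ to a frame $\widehat U$ with $\norm{\widehat u_j}^2 = d/n$, and $W$ is exactly the output of the remaining steps of the perturbation process applied to $\widehat U$. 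Hence Proposition~\ref{p:perturb-pseudorandom} applies (its hypotheses $\sigma^2 \leq 1/(1600n)$ and $n \geq 1600 d^2/(\beta \zeta)$ follow from the global assumptions of Procedure~\ref{proc:path}, the bound $\Delta(U) \leq \Delta$, and our choice $\sigma^2 = 10^4\sqrt{d\Delta(U)}/(\zeta \kappa n)$) and yields, with probability at least $0.9$, that $B$ is $(\zeta \sigma^2, \beta)$-pseudorandom for $\beta = 10^{-9}$ and $\zeta = (\beta/(800e))^{160}$.

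Then I would apply Theorem~\ref{t:pseudorandom} to $B$ with $m = d$ and $\alpha = \zeta \sigma^2$. Its assumptions are verified as follows: the pseudorandom property holds by the previous step; the column sums are uniform with $c_j(B) = s(B)/n$; $s(B) = d = m$; the assumption $\Delta(B) \leq 1/10$ follows from $\Delta(B) \leq \Delta(W) \leq 200\Delta(U) \leq 200\Delta$ combined with the global bound on $\Delta$; and the key numerical condition
\[
\zeta \sigma^2 \;=\; \frac{10^4 \sqrt{d\Delta(U)}}{\kappa n} \;\geq\; \frac{80\sqrt{d\,\Delta(B)}}{\kappa n}
\]
follows from $\Delta(B) \leq 200\Delta(U)$ and $10^4 \geq 80\sqrt{200}$. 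Theorem~\ref{t:pseudorandom} then gives
\[
\capa(B) \;\geq\; s(B) - \frac{\Delta(B)}{5\kappa \cdot \zeta\sigma^2 \cdot n} \;=\; d - \frac{\Delta(B)}{5 \cdot 10^4 \sqrt{d\,\Delta(U)}}.
\]

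Finally, using $\Delta(U) \geq \Delta(W)/200 \geq \Delta(B)/200$,
\[
\frac{\Delta(B)}{5\cdot 10^4 \sqrt{d\Delta(U)}} \;\leq\; \frac{\Delta(W)}{5\cdot 10^4 \sqrt{d\Delta(W)/200}} \;=\; O\Big(\sqrt{\tfrac{\Delta(W)}{d}}\Big),
\]
so that $\capa(W) \geq \capa(B) \geq s(W) - O(\sqrt{\Delta(W)/d})$, as claimed. The proof is essentially a careful bookkeeping exercise once the heavy machinery (Lemma~\ref{l:simple}, Proposition~\ref{p:perturb-pseudorandom}, and Theorem~\ref{t:pseudorandom}) is in place; the only mild subtlety is checking the scaling assumption $\alpha \geq 80\sqrt{m\Delta^{(0)}}/(\kappa n)$, which is precisely what motivates the choice $\sigma^2 = 10^4 \sqrt{d\Delta(U)}/(\zeta\kappa n)$ in Procedure~\ref{proc:path}, and is where the loss factor of $\sqrt{\Delta(U)}$ (rather than just $\Delta(U)$) in the movement bound of Step~\ref{step:perturb} is ultimately forced.
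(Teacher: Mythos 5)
Your proposal is correct and follows essentially the same three-step pipeline as the paper's proof: reduce to a $d \times n$ matrix $B$, establish pseudorandomness of $B$ via Proposition~\ref{p:perturb-pseudorandom}, and invoke Theorem~\ref{t:pseudorandom} after verifying its hypotheses with the chosen $\sigma^2$. The one small difference is that you cite Lemma~\ref{l:simple} for the frame-to-matrix reduction while the paper cites the more general Proposition~\ref{p:reduction-capacity}; your citation is arguably the more precise one, since it is Lemma~\ref{l:simple} that produces $B$ in the specific form $B_{ij} = \inner{g_i}{w_j}^2$ required by Proposition~\ref{p:perturb-pseudorandom}.
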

\begin{proof}
The perturbation process assumes that $s(U)=d$, which is satisfied by our assumption.
In the first step of the perturbation process, the vectors in $U$ are renormalized to $\widehat{U}$ so that they have the same norm.
Then we apply the projection step and the postprocessing step in the perturbation process and get $W = \{w_1, \ldots, w_n\}$ where $w_i \in \R^d$ for $1 \leq i \leq n$.
Note that $s(W)=d$ and $\norm{w_i}_2^2 = d/n$ for $1 \leq i \leq n$ by the postprocessing step in the perturbation process.
We would like to prove a lower bound on the capacity of $W$.

To prove a capacity lower bound for the frame $W$,
we apply the reduction in Proposition~\ref{p:reduction-capacity} to construct a 
$d \times n$ non-negative matrix $B$ with
\[
\capa(B) \leq \capa(W), \quad \Delta(B) \leq \Delta(W), \quad {\rm and~} s(B)=s(W)=d.
\]
Our plan is to prove that $B$ is pesudorandom and use Theorem~\ref{t:pseudorandom} to establish a lower bound on $\capa(B)$.
Since all vectors in $\widehat{U}$ have the same squared norm $d/n$ and the assumptions $\sigma^2 \leq 1/n$ and $n \gg d^2$ are satisfied,
Proposition~\ref{p:perturb-pseudorandom} implies that the matrix $B$ is $(\zeta \sigma^2, 10^{-9})$-pseudorandom with probability at least $0.9$.
As we choose $\sigma^2 = 10^4 \sqrt{d\Delta(U)}/(\zeta \kappa n)$ in Procedure~\ref{proc:path},
let 
\[\alpha=\frac{10^4 \sqrt{d\Delta(U)}}{\kappa n} 
\quad {\rm and} \quad \beta = 10^{-9} \quad
\implies \quad B \gtrsim_{\beta} \alpha.
\]
By Proposition~\ref{p:Delta}, we have $\Delta(U) \geq \Delta(W)/200$.
Therefore, assuming this event happens, $\alpha$ satisfies the assumption in Theorem~\ref{t:pseudorandom} as 
\[\alpha = \frac{10^4 \sqrt{d\Delta(U)}}{\kappa n} 
\geq \frac{80\sqrt{d \Delta(W)}}{\kappa n}
\geq \frac{80\sqrt{d \Delta(B)}}{\kappa n}.
\]
Also, $\beta$ satisfies the assumption in Theorem~\ref{t:pseudorandom}, and $\Delta(B)$ satisfies the assumption in Theorem~\ref{t:pseudorandom} as $\Delta(B) \leq \Delta(W) \leq 200\Delta(U) \leq 200\Delta \ll 1/10$, and $s(B)$ satisfies the assumption in Theorem~\ref{t:pseudorandom} as $s(B)=s(W)=d$.
It remains to verify the assumption that $c_j(B) = d/n$ for all $1 \leq j \leq n$, for which we will use the property that $\norm{w_i}_2^2 = d/n$ for $1 \leq i \leq n$ guaranteed by the perturbation process as stated in Step~\ref{step:size-norm}.
Using the reduction from frame to operator in Definition~\ref{d:reduction} where $W_l$ is the $d \times n$ matrix with the $l$-th column being $w_l$ and all other columns being zero,
we have $\sum_{l=1}^n W_l^T W_l = \frac{d}{n} I_n$.
Therefore, by the furthermore part of Proposition~\ref{p:reduction-capacity}, we have that $c_j(B) = d/n$ for $1 \leq j \leq n$.

So, all the assumptions of Theorem~\ref{t:pseudorandom} are satisfied, by substituting $(A^{(0)})^{\circ 2} := B$. 
This substitution is justified as $B$ is a non-negative matrix, and thus is the Hadamard product square of some matrix (e.g. where each entry is the square root of the corresponding entry in $B$).
Therefore, Theorem~\ref{t:pseudorandom} concludes that 
\[
\capa(B) \geq s(B) - \frac{\Delta(B)}{5 \kappa \alpha n}.
\]
The properties from the reduction implies that
\[
\capa(W) \geq \capa(B) \geq s(B) - \frac{\Delta(B)}{5 \kappa \alpha n}
\geq s(W) - \frac{\Delta(W)}{5 \kappa \alpha n}
= s(W) - \frac{\Delta(W)}{50000 \sqrt{d\Delta(U)}},
\]
where the last equality is by our choice of $\alpha$.
By Proposition~\ref{p:Delta} that $\Delta(U) \geq \Delta(W)/200$,
we conclude that
\[
\capa(W) \geq s(W) - \frac{\sqrt{\Delta(W)}}{5000\sqrt{d}}.
\]
\end{proof}

\begin{remark}[Probability]
The conclusions of Proposition~\ref{p:movement}, Proposition~\ref{p:Delta} and Theorem~\ref{t:capacity} are all probabilistic.
By the union bound, however, there is a positive probability that all the conclusions hold.
This implies that there exists a perturbation that simultaneously satisfies all three lemmas, and we will proceed with such a perturbation.
\end{remark}

We next justify Step~\ref{step:movement2} in Procedure~\ref{proc:path}.
The argument is similar to the proof of Proposition~\ref{p:half}.
The current statement highlights more clearly the relation between the distance moved and the capacity lower bound.

\begin{proposition} \label{p:movement-size}
Let $W^{(0)} = \{w_1^{(0)}, \ldots, w_n^{(0)}\}$ be the input to the dynamical system in Definition~\ref{d:dynamical} with $w_j^{(0)} \in \R^d$ for $1 \leq j \leq n$ and
\[\capa(W^{(0)}) \geq s(W^{(0)}) - p(d,n,\Delta(W^{(0)})) {\rm~where~} p(d,n,\Delta(W^{(0)})) \text{ is a function of } d, n, \Delta(W^{(0)}).
\]
Let $T$ be the first time in the dynamical system when $\Delta(W^{(T)}) \leq \ol{\Delta}$ for some given $\ol{\Delta} < \Delta(W^{(0)})$.
Then 
\[
\dist(W^{(0)},W^{(T)}) \leq \frac{2 \Delta(W^{(0)}) \cdot p(d,n,\Delta(W^{(0)}))}{\ol{\Delta}}
\quad {\rm and} \quad 
s(W^{(T)}) \geq s(W^{(0)}) - p(d,n,\Delta(W^{(0)})).
\]
In particular, given the specific setting in Step~\ref{step:movement2} and Step~\ref{step:size} in Procedure~\ref{proc:path}, we have
\[
\dist(W^{(0)},W^{(T)}) \leq O\Big(\sqrt{\frac{\Delta(U^l)}{d}}\Big) 
\quad {\rm and } \quad
s(W^{(T)}) \geq d - O\Big(\sqrt{\frac{\Delta(U^l)}{d}}\Big).
\]
\end{proposition}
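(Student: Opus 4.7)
The plan is to directly adapt the argument of Proposition~\ref{p:half} to this slightly more general setup, where the general capacity lower bound $s - mn\sqrt{\Delta/2}$ is replaced by the abstract bound $s - p(d,n,\Delta^{(0)})$. Two ingredients drive the proof: (a) the operator capacity is invariant along the dynamical system (Lemma~\ref{l:cap-unchanged}), and (b) the size satisfies $\frac{d}{dt} s = -2\Delta$ (Lemma~\ref{l:s'}), so a hypothesized capacity lower bound at time $0$ automatically controls how far $s$ can drop at any later time.

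First I would establish the size bound, which is almost immediate. Chaining Lemma~\ref{l:capUB}, Lemma~\ref{l:cap-unchanged} and the hypothesis gives
\[
 s(W^{(T)}) \;\geq\; \capa(W^{(T)}) \;=\; \capa(W^{(0)}) \;\geq\; s(W^{(0)}) - p\bigl(d,n,\Delta(W^{(0)})\bigr).
\]
For the distance bound, since $T$ is the first hitting time of $\ol{\Delta}$ and $\Delta$ is non-increasing (Lemma~\ref{l:Delta'}), one has $\Delta^{(\tau)} \geq \ol{\Delta}$ for all $\tau \in [0,T]$. Using $\frac{d}{dt} s = -2\Delta$ this yields $s^{(0)} - s^{(T)} \geq 2\ol{\Delta}\cdot T$, which combined with the size bound produces the time estimate
\[
 T \;\leq\; \frac{p(d,n,\Delta(W^{(0)}))}{2\ol{\Delta}}.
\]
Then, exactly as in Proposition~\ref{p:half}, I would bound the movement by Lemma~\ref{l:triangle}, rewrite the integrand using Lemma~\ref{l:Delta'} so that $\sqrt{\sum_i\|\frac{d}{d\tau}w_i^{(\tau)}\|^2}$ becomes (up to a constant) $\sqrt{-\frac{d}{d\tau}\Delta^{(\tau)}}$, and apply Cauchy--Schwarz:
\[
 \distance(W^{(T)},W^{(0)})^2 \;\lesssim\; \Bigl(\Delta^{(0)}-\Delta^{(T)}\Bigr)\cdot T \;\leq\; \Delta(W^{(0)})\cdot \frac{p(d,n,\Delta(W^{(0)}))}{2\ol{\Delta}},
\]
which after absorbing absolute constants gives the claimed inequality.

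Finally I would specialize to the setting of Procedure~\ref{proc:path}. Here $s(W^{(0)}) = s(W^l) = d$ by Step~\ref{step:size-norm}, the capacity hypothesis is exactly the bound from Theorem~\ref{t:capacity}, namely $p(d,n,\Delta(W^l)) = O(\sqrt{\Delta(W^l)/d})$, and Proposition~\ref{p:Delta} gives $\Delta(W^l) = O(\Delta(U^l))$. The threshold is $\ol{\Delta} = \Delta/(3\cdot 2^l) = \Theta(\Delta(U^l))$, so $\Delta(W^{(0)})/\ol{\Delta} = O(1)$ and the general bound collapses to
\[
 \dist(W^{(0)},W^{(T)}) \;\leq\; O\!\left(\sqrt{\frac{\Delta(U^l)}{d}}\right), \qquad s(W^{(T)}) \;\geq\; d - O\!\left(\sqrt{\frac{\Delta(U^l)}{d}}\right).
\]
No step looks technically difficult; the one bookkeeping point to be careful about is to verify that Proposition~\ref{p:Delta} indeed lets us replace $\Delta(W^l)$ by $\Delta(U^l)$ inside $p$ at the cost of an absolute constant, so that the ratio $\Delta(W^{(0)})/\ol{\Delta}$ really is $O(1)$ and not something growing.
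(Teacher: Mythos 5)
Your proposal is correct and follows essentially the same route as the paper: invoke Lemma~\ref{l:capUB}, Lemma~\ref{l:cap-unchanged}, and the hypothesized capacity lower bound for the size estimate; combine $\d s = -2\Delta$ with the first-hitting-time definition of $T$ to bound $T \leq p/(2\ol{\Delta})$; then run the Proposition~\ref{p:half} argument (Lemma~\ref{l:triangle}, Lemma~\ref{l:Delta'}, Cauchy--Schwarz) for the movement, and specialize using Step~\ref{step:size-norm}, Theorem~\ref{t:capacity}, Proposition~\ref{p:Delta}, and $\ol{\Delta}=\Delta/(3\cdot 2^l)$. The one bookkeeping point you flagged is indeed handled in the paper exactly as you anticipate, via $\Delta(W^{(0)})\leq O(\Delta(U^l))\leq O(\Delta/2^{l+1})$ so that $\Delta(W^{(0)})/\ol{\Delta}=O(1)$.
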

\begin{proof}
Using the capacity upper bound in Lemma~\ref{l:capUB}, the capacity lower bound in the assumption, and Lemma~\ref{l:cap-unchanged} that capacity is unchanged over time, we have the size lower bound
\[
s(W^{(T)}) \geq \capa(W^{(T)}) = \capa(W^{(0)}) \geq s(W^{(0)}) - p(d,n,\Delta(W^{(0)})).
\]
The assumption about $T$ is equivalent to $\Delta^{(t)} > \ol{\Delta}$ for $0 \leq t < T$.
It follows from Lemma~\ref{l:s'} that
\[
\d s(W^{(t)}) = -2\Delta(W^{(t)}) < -2\ol{\Delta} {\rm~for~all~} 0 \leq t < T.
\]
The size lower bound thus allows us to conclude that $T \leq p(d,n,\Delta(W^{(0)})/(2\ol{\Delta})$, as otherwise
\[
s(W^{(T)}) = s(W^{(0)}) + \int_0^T \d s(W^{(t)}) dt 
< s(W^{(0)}) - 2\ol{\Delta} T < s(W^{(0)}) - p(d,n,\Delta(W^{(0)})).
\]
Similar to the calculation in Proposition~\ref{p:half} and using the reduction from frame to operator in Definition~\ref{d:reduction},
\begin{eqnarray*}
\distance(W^{(0)},W^{(T)})
& \leq & \int_{0}^T \sqrt{\sum_{i=1}^n \norm{\d W_i^{(t)}}_F^2} dt
~=~ 2\int_0^T \sqrt{-\d \Delta(W^{(t)})} dt\\
& \leq & 2 \sqrt{ \int_0^T (-\d \Delta(W^{(t)})) dt \cdot \int_0^T 1 dt}
~\leq~ 2 \sqrt{\Delta(W^{(0)}) \cdot T}\\ 
& \leq & \sqrt{\frac{2 \Delta(W^{(0)}) \cdot p(d,n,\Delta(W^{(0)}))}{\ol{\Delta}}},
\end{eqnarray*}
where the first inequality is by Lemma~\ref{l:triangle},
the first equality is by Lemma~\ref{l:Delta'},
the second inequality is by Cauchy-Schwarz,
and the last inequality is by the bound on $T$ above.

In the specific setting in Step~\ref{step:movement2} and Step~\ref{step:size},
we have from Step~\ref{step:capacity} that
\[
\capa(W^{(0)}) \geq s(W^{(0)}) - O\Big(\sqrt{\frac{\Delta(W^{(0)})}{d}}\Big) {\rm~~so~that~~} p(d,n,\Delta(W^{(0)})) \leq O\Big(\sqrt{\frac{\Delta(W^{(0)})}{d}}\Big).
\]
As $\Delta(W^{(0)}) = \Delta(W^l) \leq O(\Delta(U^l))$ by Proposition~\ref{p:Delta}
and $s(W^{(0)}) = s(W^l) = d$ by Step~\ref{step:size-norm} in Procedure~\ref{proc:path},
we have the size lower bound
\[
s(W^{(T)}) 
\geq s(W^{(0)}) - p(d,n,\Delta(W^{(0)}))
\geq d - O\Big(\sqrt{\frac{\Delta(W^{(0)})}{d}}\Big)
\geq d - O\big(\sqrt{\frac{\Delta(U^l)}{d}}\big).
\]
The target $\ol{\Delta}$ in Step~\ref{step:dynamical} is 
\[\ol{\Delta}= \frac{\Delta}{3 \cdot 2^l} 
\quad {\rm and} \quad
\Delta(W^{(0)}) = \Delta(W^l) \leq O(\Delta(U^l)) \leq O(\frac{\Delta}{2^{l+1}}),
\] 
where the first inequality is by Proposition~\ref{p:Delta} and the second inequality is by Step~\ref{step:rescale} in the previous iteration.
Therefore, $\Delta(W^{(0)})/\ol{\Delta} = O(1)$ and this implies that
\[
\dist(W^{(0)},W^{(T)}) \leq 
O\Big(\frac{\Delta(W^{(0)}) \cdot p(d,n,\Delta(W^{(0)}))}{\ol{\Delta}}\Big)
\leq O\Big(\sqrt{\frac{\Delta(W^{(0)})}{d}}\Big) 
\leq O\Big(\sqrt{\frac{\Delta(U^l)}{d}}\Big),
\]
where the last inequality is by Proposition~\ref{p:Delta}.
\end{proof}

Finally, we justify Step~\ref{step:rescale}, which is similar to the proof in Lemma~\ref{l:postprocess}.

\begin{lemma} \label{l:rescale}
Let $W=\{w_1,\ldots,w_n\}$ where $w_i \in \R^d$ for $1 \leq i \leq n$ and $s:=s(W)$.
Let $U=\{u_1,\ldots,u_n\}$ where $u_i = \sqrt{\frac{d}{s}} \cdot w_i$ for $1 \leq i \leq n$.
Then 
\[s(U)=d \quad {\rm and} \quad 
\dist(U,W) = (\sqrt{d}-\sqrt{s})^2
\quad {\rm and} \quad
\Delta(U) =  \frac{d^2}{s^2} \Delta(W).
\]
In particular, given the specific setting in Step~\ref{step:rescale}, we have
\[
\dist(W^{(T)},U^{l+1}) \leq O\big(\sqrt{\frac{\Delta(U^l)}{d}}\big)
\quad {\rm and} \quad
\Delta(U^{l+1}) \leq \frac{\Delta}{2^{l+1}}.
\]
\end{lemma}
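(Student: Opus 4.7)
The three general identities all follow from routine direct computation from the definitions. The first, $s(U)=d$, is immediate since $s(U)=\sum_i\|u_i\|^2=(d/s)\sum_i\|w_i\|^2=(d/s)\cdot s=d$. For the squared distance, $u_i-w_i=(\sqrt{d/s}-1)w_i$, so
\[
\dist(U,W)=\sum_{i=1}^n\bigl(\sqrt{d/s}-1\bigr)^2\|w_i\|^2=\bigl(\sqrt{d/s}-1\bigr)^2\cdot s=(\sqrt{d}-\sqrt{s})^2.
\]
For $\Delta$, I would use the general form of Definition~\ref{d:frame-parameters} rather than the simplified version in Definition~\ref{d:normal-Delta}, since $W$ need not have $s(W)=d$. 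Observing that $\sum_i u_iu_i^T=(d/s)\sum_i w_iw_i^T$ and $\|u_i\|^2=(d/s)\|w_i\|^2$, and substituting $s(U)=d$, each of the two summands in $\Delta(U)$ pulls out a factor of $d^2$ while the corresponding summands in $\Delta(W)$ pull out a factor $s^2$ from the same bracket $(I_d-(d/s)\sum_iw_iw_i^T)$ and $(1-(n/s)\|w_i\|^2)$. Comparing gives $\Delta(U)=(d^2/s^2)\Delta(W)$.

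For the specific setting, apply these identities with $W=W^{(T)}$ and $s=s(W^{(T)})$. By Lemma~\ref{l:s'} the size is non-increasing, so $s(W^{(T)})\le s(W^{(0)})=s(W^l)=d$ using Step~\ref{step:size-norm}; combined with Step~\ref{step:size}, this gives $\eta:=d-s(W^{(T)})\in[0,\,O(\sqrt{\Delta(U^l)/d})]$. Then
\[
\dist(W^{(T)},U^{l+1})=(\sqrt{d}-\sqrt{s})^2=\frac{\eta^2}{(\sqrt{d}+\sqrt{s})^2}\le\frac{\eta^2}{d}\le O\!\left(\frac{\Delta(U^l)}{d^2}\right),
\]
which is far smaller than the claimed $O(\sqrt{\Delta(U^l)/d})$ whenever $\Delta(U^l)\le d$ (guaranteed by Step~\ref{step:global}).

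For the $\Delta$ bound, the ratio $\eta/d=O(\sqrt{\Delta(U^l)}/d^{3/2})$ is tiny under the global smallness assumption $\Delta\le\zeta^6\kappa^6/(10^{55}d^9)$ in Step~\ref{step:global}, so $d^2/s^2=(1-\eta/d)^{-2}\le 1+o(1)$, and in particular we may absorb it into the constant $3/2$. Combining with $\Delta(W^{(T)})\le\Delta/(3\cdot 2^l)$ from the termination condition of Step~\ref{step:dynamical}, we conclude
\[
\Delta(U^{l+1})=\frac{d^2}{s^2}\Delta(W^{(T)})\le\frac{3}{2}\cdot\frac{\Delta}{3\cdot 2^l}=\frac{\Delta}{2^{l+1}},
\]
as required. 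The only real subtlety anywhere in the proof is recognizing \emph{why} the threshold in Step~\ref{step:dynamical} is $\Delta/(3\cdot 2^l)$ rather than $\Delta/2^{l+1}$: the extra slack factor of $3/2$ is exactly what is needed to absorb the $(d/s)^2$ rescaling factor and still halve the error from one iteration to the next. Everything else is bookkeeping.
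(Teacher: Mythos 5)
Your proof is correct and follows essentially the same route as the paper: verify the three general identities by direct computation, then plug in the bounds $s(W^{(T)}) \geq d - O(\sqrt{\Delta(U^l)/d})$ and $\Delta(W^{(T)}) \leq \Delta/(3\cdot 2^l)$. The only cosmetic differences are that the paper establishes $\Delta(U) = (d^2/s^2)\Delta(W)$ via the homogeneity $\Delta(cW) = c^4\Delta(W)$ rather than by expanding the brackets directly, and the paper uses the inequality $1-\sqrt{1-x}\leq x$ where you use the algebraic identity $\sqrt{d}-\sqrt{s}=\eta/(\sqrt{d}+\sqrt{s})$; these are equivalent. Two small points in your favor: you explicitly cite the monotonicity of $s$ from Lemma~\ref{l:s'} to justify $s\leq d$ (which the paper leaves implicit but which is needed to upper-bound $(\sqrt{d}-\sqrt{s})^2$), and your distance bound $O(\Delta(U^l)/d^2)$ is tighter than the paper's stated $O(\sqrt{\Delta(U^l)/d})$, though you correctly note it implies the paper's.
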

\begin{proof}
It is clear from Definition~\ref{d:frame-parameters} that 
\[s(U) = \sum_{i=1}^n \norm{u_i}_2^2 = \frac{d}{s} \sum_{i=1}^n \norm{w_i}_2^2 = \frac{d}{s} \cdot s(W) = d.
\]
The squared distance between $U$ and $W$ is
\[
\dist(U, W) 
= \sum_{i=1}^n \norm{u_i-w_i}_2^2
= (\sqrt{\frac{d}{s}}-1)^2 \sum_{i=1}^n \norm{w_i}_2^2
= (\sqrt{\frac{d}{s}}-1)^2 \cdot s
= (\sqrt{d}-\sqrt{s})^2.
\]
It follows from Definition~\ref{d:frame-parameters} that 
\[\Delta(c W) = c^4 \Delta(W) \quad \implies \quad
\Delta(U) = \Delta( \sqrt{\frac{d}{s}} W) = \frac{d^2}{s^2} \Delta(W).
\]

In Step~\ref{step:rescale}, we have the size lower bound $s:=s(W^{(T)}) \geq d - O(\sqrt{\frac{\Delta(U^l)}{d}})$, and we rescale $W^{(T)}$ to $U^{l+1}$ such that $s(U^{l+1})=d$.
Therefore,
\[
\dist(U^{l+1},W^{(T)}) \leq \Big(\sqrt{d}-\sqrt{d-O\big(\sqrt{\frac{\Delta(U^l)}{d}}\big)}~\Big)^2
= d\Big(1-\sqrt{1-O\big(\sqrt{\frac{\Delta(U^l)}{d^3}}\big)}~\Big)^2
\leq O\big(\sqrt{\frac{\Delta(U^l)}{d}}\big),
\]
where the last inequality follows from the inequality $1-\sqrt{1-x}\leq x$ for $0 \leq x \leq 1$ and the assumption that $\Delta(U^l) \leq \Delta \ll 1$.
Finally, 
\[\frac{d^2}{s^2} 
\leq \frac{d^2}{\big(d-O\big(\sqrt{\frac{\Delta(U^l)}{d}}\big)\big)^2} 
= \Big(1-O\big(\sqrt{\frac{\Delta(U^l)}{d^3}}\big)\Big)^{-2}
\leq \Big(1 + O\big(\sqrt{\frac{\Delta(U^l)}{d^3}}\big)\Big)^2
\leq \frac{3}{2},
\]
where the second inequality is by $(1-x)^{-2} \leq 1+2x$ for $0 \leq x \leq 1/2$
and $O\big(\sqrt{\Delta(U^l)/d^3}\big) \ll 1/2$ by our assumption that $\Delta$ is small enough in Step~\ref{step:global} of Procedure~\ref{proc:path},
and the last inequality is also by the same assumption.
(To be precise, we should trace out the constants in the big-$O$ notation in Proposition~\ref{p:movement-size}.
In the last line of Theorem~\ref{t:capacity}, we see that the hidden constant in the capacity lower bound is reasonable. 
Hence, the hidden constants in Proposition~\ref{p:movement-size} are also reasonable, as they only depend on the capacity lower bound.
On the other hand, the constant in our assumption on $\Delta$ in Step~\ref{step:global} of Procedure~\ref{proc:path} is much smaller.)
Therefore, 
\[\Delta(U^{l+1}) \leq \frac{3}{2} \Delta(W^{(T)}) \leq \frac{3}{2} \cdot \frac{\Delta}{3 \cdot 2^l} \leq \frac{\Delta}{2^{l+1}},
\]
where the second inequality is by Step~\ref{step:dynamical} in Procedure~\ref{proc:path}.
\end{proof}

We have justified all steps in Procedure~\ref{proc:path},
and so we have the conclusions from Theorem~\ref{t:one-iteration}, 
Theorem~\ref{t:d^3 eps} and Theorem~\ref{t:d^7 eps}.

\section{Conclusions and Discussions}

We have proved that the bound in the Paulsen problem is independent of the number of vectors, and through the reduction in~\cite{projection} we have also proved the projection conjecture with the same bound.
We hope that our results and techniques will find applications in other problems,
in particular the dynamical system and the new method in proving capacity lower bound.

The following are some discussions about improving the results.

\begin{enumerate}
\item The $O(m^2 n \eps)$ bound in Theorem~\ref{t:Paulsen} holds in the more general operator setting.  
The current smoothed analysis only works in the frame setting.
It would be very nice if the smoothed analysis can also be extended to the operator setting (e.g. this is related to the Brascamp-Lieb constants).

\item There is a gap between the $O(d^{13/2} \eps)$ upper bound and the $\Omega(d \eps)$ lower bound.
We believe that the correct answer is $\Theta(d \eps)$.
There are several bottlenecks in improving the current proof.
One interesting intermediate step is to prove the $O(d \eps)$ bound in the case when $n$ is large enough and $\eps$ is small enough, for which we proved the bound $O(d^{5/2} \eps)$, where the bottleneck is in the perturbation process. 

\item Our proof shows the existence of an equal norm Parseval frame which is close to the input frame, but it does not provide an efficient algorithm to output such a frame.
It would be very nice to find a polynomial time algorithm to output such a frame, with the running time depends only on $\log(1/\eps)$.

\item The current approach relies heavily on the operator capacity lower bound, which is used to argue indirectly that $\Delta$ converges to zero in order to prove the squared distance bound.
To prove the operator capacity lower bound, we reduce to proving matrix capacity lower bound, which we prove by establishing a lower bound on the convergence rate $-\d \Delta$. 
This approach is rather indirect.

Consider the simpler matrix Paulsen problem.
Since we can directly lower bound $-\d \Delta$, it follows that $\Delta$ converges to zero quickly, and this implies the squared distance bound without using the concept of matrix capacity.

This naturally leads to the question whether we can directly analyze the perturbation process and prove a lower bound on $-\d \Delta$ in the operator/frame case.  If so, this will likely significantly simplifies and improves the current analysis.
The technical challenge is to identify the correct pseudorandom property in the operator/frame setting.
\end{enumerate}

\subsection*{Acknowledgement}

We thank Nikhil Srivastava for suggesting the problem and Vern Paulsen for telling us about the history and motivations of the problem.
We also thank Nick Harvey, Mohit Singh and Avi Wigderson for useful comments.

\bibliographystyle{plain}

\begin{thebibliography}{60}

\bibitem{ALOW}
Z. Allen-Zhu, Y. Li, R. Oliveira, A. Wigderson.
{\em Much faster algorithms for matrix scaling}.
In Proceedings of the 58th Annual IEEE Symposium on Foundations of Computer Science (FOCS), 2017.

\bibitem{Ball}
K. Ball.
{\em Volumes of sections of cubes and related problems}.
Geometric Aspects of Functional Analysis, 251--260, 1989.

\bibitem{Barthe}
F. Barthe.
{\em On a reverse form of the Brascamp-Lieb inequality}.
Inventiones mathematicae 134(2), 335--361, 1998.

\bibitem{BPY}
J.J. Benedetto, A.M. Powell, \"O, Yilmaz.
{\em Second order sigma-delta quantization of finite frame expansions}.
Applied and Computational Harmonic Analysis 20, 126--148, 2006.

\bibitem{Bjo}
B. Birnir.
{\em Dynamical Systems Theory}.
Course notes, UCSB, 2006.

\bibitem{road}
B. Bodmann, P.G. Casazza.
{\em The road to equal-norm Perseval frames}.
Journal of Functional Analysis 258(2), 397--420, 2010.

\bibitem{graphs}
B.G. Bodmann, V.I. Paulsen.
{\em Frames, graphs and erasures}.
Linear Algebra and its Applications 404, 118--146, 2005.

\bibitem{quantization}
B.G. Bodmann, V.I. Paulsen.
{\em Frame paths and error bounds for sigma-delta quantization}.
Applied and Computational Harmonic Analysis 22, 176--197, 2007.

\bibitem{BL}
H. Brascamp, E. Lieb.
{\em Best constants in Young's inequality, its converse and its generalization to more than three functions}.
Advances in Mathematics 20, 151--172, 1976.

\bibitem{projection}
J. Cahill, P.G. Casazza.
{\em The Paulsen Problem in Operator Theory}.
Operators and Matrices 7(1), 117--130, 2013.

\bibitem{auto}
P.G. Casazza, M. Fickus, D. Mixon.
{\em Auto-tuning unit norm frames}.
Applied and Computational Harmonic Analysis 32(1), 1--15, 2012.

\bibitem{survey}
P.G. Casazza.
{\em The Kadison-Singer and Paulsen problems in finite frame theory}.
In ``Finite frames: theory and applications'', 2013.

\bibitem{erasures}
P.G. Casazza, J. Kovacevi\'c.
{\em Equal-Norm Tight Frames with Erasures}.
Advances in Computational Mathematics 18, 387--430, 2003.

\bibitem{edited-book}
P.G. Casazza, G. Kutyniok.
{\em Finite frames: theory and applications}.
Birkh\"auser Basel, 2013.

\bibitem{Hilbert}
P.G. Casazza, R.G. Lynch.
{\em A brief introduction to Hilbert space frame theory and its applications}.
AMS Short Course, 2015, San Antonio (arxiv 1509.07347).

\bibitem{Christensen}
O. Christensen.
{\em An introduction to frames and Riesz bases}.
Birkh\"auser Boston, 2003.

\bibitem{Cohen}
M.B. Cohen, A. Madry, D. Tsipras, A. Vladu.
{\em Matrix scaling and balancing via box constrained Newton's method and interior point methods}.
In Proceedings of the 58th Annual IEEE Symposium on Foundations of Computer Science (FOCS), 2017.

\bibitem{DGOS}
Z. Dvir, A. Garg, R. Oliveira, J. Solymosi.
{\em Rank bounds for design matrices with block entries and geometric applications}.
In arXiv 1610.08923, 2016.

\bibitem{manifold}
K. Dykema, N. Strawn.
{\em Manifold structure of spaces of spherical tight frames}.
Int. J. Pure Appl. Math. 28, 217--256, 2006.

\bibitem{Forster}
J. Forster.
{\em A linear lower bound on the unbounded error probabilistic communication complexity}.
J Comput. Syst. Sci. (65), 612--625, 2002.

\bibitem{non-commutative}
A. Garg, L. Gurvits, R. Oliveira, A. Wigderson.
{\em A deterministic polynomial time algorithm for non-commutative rational identity testing}.
In Proceedings of the 57th Annual Symposium on Foundations of Computer Science (FOCS), 109--117, 2016.

\bibitem{operator}
A. Garg, L. Gurvits, R. Oliveira, A. Wigderson.
{\em Operator scaling: theory and applications}.
In arXiv 1511.03730, 2017.

\bibitem{Brascamp-Lieb}
A. Garg, L. Gurvits, R. Oliveira, A. Wigderson.
{\em Algorithmic and optimization aspects of Brascamp-Lieb inequalities, via operator scaling}.
In Proceedings of the 49th Annual ACM Symposium on Theory of Computing (STOC), 397--409, 2017.

\bibitem{GKK}
V.K. Goyal, J. Kovacevi\'c, J.A. Kelner.
{\em Quantized frame expansions with erasures}.
Applied and Computational Harmonic Analysis 10, 203--233, 2001.

\bibitem{gurvits}
L. Gurvits.
{\em Classical complexity and quantum entanglement}.
Journal of Computer and System Sciences 69(3), 448--484, 2004.

\bibitem{GS}
L. Gurvits, A. Samorodnitsky.
{\em A deterministic polynomial-time algorithm for approximating mixed discriminant and mixed volume}.
In Proceedings of the 32nd Annual ACM Symposium on Theory of Computing (STOC), 48--57, 2000.

\bibitem{GY}
L. Gurvits, P.N. Yianilos.
{\em The deflation-inflation method for certain semidefinite programming and maximum determinant completion problems}.
Technical Report, NECI, 1998.

\bibitem{HardtMoitra}
M. Hardt, A. Moitra.
{\em Algorithms and hardness for robust subspace recovery}.
Proceedings of the 26th Annual Conference on Learning Theory (COLT) 30, 354--375, 2013.

\bibitem{HolmesPaulsen}
R.B. Holmes, V.I. Paulsen.
{\em Optimal frames for erasures}.
Linear Algebra and its Applications 377, 31--51, 2004.

\bibitem{HJ12}
R.A. Horn, C.R. Johnson.
{\em Matrix analysis}.
Cambridge University Press, 2nd edition, 2012.

\bibitem{HJ91}
R.A. Horn, C.R. Johnson.
{\em Topics in matrix analysis}.
Cambridge University Press, 1991.

\bibitem{KK}
B. Kalantari, L. Khachiyan.
{\em On the complexity of nonnegative-matrix scaling}.
SIAM Journal on Matrix Analysis and Applications 18(2), 450--463, 1997.

\bibitem{LM00}
B. Laurent, P. Massart.
{\em Adaptive estimation of a quadratic functional by model selection}.
Annals of Statistics 28(5), 1302--1338, 2000.

\bibitem{LSW}
N. Linial, A. Samorodnitsky, A. Wigderson.
{\em A deterministic strongly polynomial algorithm for matrix scaling and approximate permanent}.
STOC, 644--652, 1998.

\bibitem{LV07}
L. Lov\'asz, S. Vempala.
{\em The geometry of logconcave functions and sampling algorithms}.
Random Structures and Algorithms 30(3), 307--358, 2007.

\bibitem{Mixon}
D.G. Mixon.
{\em Four open problems in frame theory}.
Frames and Algebraic and Combinatorial Geometry, 2015.

\bibitem{fingerprinting}
D.G. Mixon, C.J. Quinn, N. Kiyavash, M. Fickus.
{\em Fingerprinting with equiangular tight frames}.
IEEE Transactions on Information Theory 59, 1855--1865, 2013.

\bibitem{overcompleteness}
K.A. Okoudjou.
{\em Finite frame theory: a complete introduction to overcompleteness}.
Proceedings of Symposia in Applied Mathematics 73, 2016.

\bibitem{symmetric-quantum}
J.M. Renes, R. Blume-Kohout, A.J. Scott, C.M. Caves.
{\em Symmetric informationally complete quantum measurements}.
Journal of Mathematical Physics 45(6), 2171, 2004.

\bibitem{sinkhorn}
R. Sinkhorn.
{\em A relationship between arbitrary positive matrices and doubly stochastic matrices}.
The Annuals of Mathematical Statistics 35, 876--879, 1964.

\bibitem{Slavik}
A. Slav\'ik.
{\em Product integration, its History and Applications}.
Matfyzpress, 2007.

\bibitem{smoothed}
D.A. Spielman, S.H. Teng.
{\em Smoothed analysis: an attempt to explain the behavior of algorithms in practice}.
Communications of the ACM 52(10), 76--84, 2009.

\bibitem{Grassmannian}
T. Strohmer, R.W. Heath Jr..
{\em Grassmannian frames with applications to coding and communications}.
Applied and Computational Harmonic Analysis 14(3), 257--275, 2003.

\bibitem{existence-ETF}
M.A. Sustik, J.A. Tropp, I.S. Dhillon, R.W. Heath Jr.
{\em On the existence of equiangular tight frames}.
Linear Algebra and its Applications 426, 619--635, 2007.

\bibitem{Tropp}
J.A. Tropp, I.S. Dhillon, R.W. Heath Jr., T. Strohmer.
{\em Designing structured tight frames via an alternating projection method}.
IEEE Transactions on Information Theory 51(1), 188--209, 2005.

\bibitem{Vershynin}
R. Vershynin.
{\em Frame expansions with erasures: an approach through the noncommutative operator theory}.
Applied and Computational Harmonic Analysis 18, 167--176, 2005.

\bibitem{V12}
R. Vershynin.
{\em Introduction to the non-asymptotic analysis of random matrices}.
Compressed sensing, Cambridge University Press, 210--268, 2012.

\bibitem{Waldron}
S. Waldron.
{\em Group Frames}.
In ``Finite frames: theory and applications'', 2013.

\end{thebibliography}

\begin{appendix}

\section{List of Definitions}

\vspace*{-10mm}

\listoftheorems[ignoreall,show={definition}]

\section{Tight Example for Matrix Capacity Lower Bound} \label{a:tight}

We complete the details in Lemma~\ref{l:tight}.
Let $E = x k (k + 1)$ be the sum of the upper right submatrix and $F = y (k - 1) k$ be the sum of the lower left submatrix.
Then $E + F = s(A) = 1$.
Now we have
\[
r_i(A)
=
\left\{
\begin{array}{ll}
E / k & \text{for $i \le k$,} \\
F / (k - 1) & \text{for $i > k$.} \\
\end{array}
\right.
\]
And similarly,
\[
c_j(A)
=
\left\{
\begin{array}{ll}
F / k & \text{for $i \le k$,} \\
E / (k + 1) & \text{for $i > k$.} \\
\end{array}
\right.
\]
So
\begin{align*}
\Delta(A)
& = \frac1{2k - 1} \sum_{i = 1}^{2k - 1} (s - (2k - 1) r_i)^2 + \frac1{2k + 1} \sum_{j = 1}^{2k + 1} (s - (2k + 1) c_j)^2 \\
& = \frac1{2k - 1} \left( k (1 - \frac{(2k - 1) E}{k})^2 + (k - 1) (1 - \frac{(2k - 1)F}{k - 1})^2 \right) \\
& \hspace{1in}  + \frac1{2k + 1} \left( k (1 - \frac{(2k + 1) F}{k})^2 + (k + 1) (1 - \frac{(2k + 1) E}{k + 1})^2 \right) \\
& = \frac1{2k - 1} \left( k - 2(2k - 1) E + \frac{(2k - 1)^2 E^2}{k} + (k - 1) - 2(2k - 1) F + \frac{(2k - 1)^2 F^2}{k - 1} \right) \\
& \hspace{1in}  + \frac1{2k + 1} \left( k - 2(2k + 1) F + \frac{(2k + 1)^2 F^2}{k} + (k + 1) - 2(2k + 1) E + \frac{(2k + 1)^2 E^2}{k + 1} \right) \\
& = 1 - 2(E + F) + (2k - 1) (\frac{E^2}{k} + \frac{F^2}{k - 1}) \\
& \hspace{1in}  + 1 - 2(F + E) + (2k + 1) (\frac{F^2}k + \frac{E^2}{k + 1}) \\
& = -2 + (\frac{2k - 1}{k} + \frac{2k + 1}{k + 1}) E^2 + (\frac{2k - 1}{k - 1} + \frac{2k + 1}{k}) F^2 \\
& = -2 + \frac{(2k^2 + k - 1) + (2k^2 + k)}{k (k + 1)} E^2 + \frac{(2k^2 - k) + (2k^2 - k - 1)}{(k - 1) k} F^2 \\
& = -2 + \frac{4k^2 + 2k - 1}{k (k + 1)} E^2 + \frac{4k^2 - 2k - 1}{(k - 1) k} F^2.
\end{align*}
Now given $E + F = 1$, for positive $a$ and $b$, $a E^2 + b F^2$ attains its minimum $\opt = (a^{-1} + b^{-1})^{-1}$ when $E = \opt / a$ and $F = \opt / b$.
Taking corresponding value for $x$ and $y$, we have
\begin{align*}
\Delta(A)
& = -2 + (\frac{k (k + 1)}{4k^2 + 2k - 1} + \frac{(k - 1) k}{4k^2 - 2k - 1})^{-1} \\
& = -2 + (k \frac{(4k^3 + 2k^2 - 3k - 1) + (4k^3 - 2k^2 - 3k + 1)}{16k^4 - 12k^2 + 1})^{-1} \\
& = -2 + \frac{16k^4 - 12k^2 + 1}{8k^4 - 6k^2} \\
& = \frac{1}{8k^4 - 6k^2}.
\end{align*}
Since $m^2 n^2 = 16k^4 - 8k^2 + 1$, we have
\[
\Delta
= (2 + o(1)) \frac{1}{m^2 n^2}.
\]

\end{appendix}

\end{document}